\newtheorem{theorem}{\textbf{Theorem}}
\newtheorem{claim}[theorem]{Claim}
\newtheorem{corollary}[theorem]{Corollary}
\newtheorem{definition}[theorem]{\textbf{Definition}}
\newtheorem{example}[theorem]{Example}
\newtheorem{myexample}[theorem]{Example}
\newtheorem{lemma}[theorem]{\textbf{Lemma}}
\newtheorem{notation}[theorem]{Notation}
\newtheorem{notations}[theorem]{Notations}
\newtheorem{remark}[theorem]{Remark}
\newtheorem{assumptions}[theorem]{Assumptions}
\newtheorem{simplifications}[theorem]{Simplifications}
\newenvironment{proof}[1][Proof]{\textbf{#1.} }{\ \hfill \rule{0.5em}{0.5em}}
\newtheorem{fact}[theorem]{Fact}
 \newtheorem{proposition}[theorem]{\textbf{Proposition}}
\newcommand{\N}{\mathbb{N}}
\newcommand{\dom}{\mathrm{dom}}
 \newcommand{\mycal}[1]{\mathcal{#1}}
 \newcommand{\PosSet}{\mathbb{N}^{\omega}_{\varepsilon}}
 \newcommand \set[1]{\{{#1}\}}
 \newcommand \bset[1]{\big\{{#1}\big\}}
 \newcommand {\E}   {\varepsilon}
 \newcommand \tor {\text{ or }}
  \newcommand \mbf[1] {{#1}}
 \newcommand \tif {\text{ if }}
 \newcommand \mmax {\mathtt{max}}
 \newcommand \tifthen[2] {\mathtt{\mathbf{If}}\,{#1} \, \mathtt{\mathbf{Then}}\, {#2}}
 \newcommand \tst {\text{ s.t. }}
 \newcommand{\Dom}     {{\mathsf {Dom}}}
 \newcommand{\gvert}{\;\;|\;\;}
 \newcommand \match[2]{#1 \ll #2}
 \newcommand \funset[1]{#1 \rightarrow #1}
\newcommand \oces {\textsf{CE}-strategies\xspace}
\newcommand \oce  {\textsf{CE}-strategy\xspace}
\newcommand \ces {\textsf{HCE}-strategies\xspace}
\newcommand \ce  {\textsf{HCE}-strategy\xspace}
\renewcommand \ces {$\mathtt{T}_\mu$-strategies\xspace}
\renewcommand \ce  {$\mathtt{T}_\mu$-strategy\xspace}
\newcommand \emptylist {\mathfrak{f}}
 \newcommand{\sembrackk}[1]{[\![#1]\!]}
 \newcommand \uberEq[1]{\stackrel{#1}{=}}
 \newcommand \fail {\mathbb{F}}
 \newcommand \totherwise {\textrm{ otherwise}}
 \newcommand{\PPos}{{\cal P}os}
 \newcommand{\comb}{\curlyvee}
 \newcommand{\combb}{\curlywedge}
 \newcommand{\nfcombb}{\ {\curlywedge }\ } 
 \newcommand \tand {\textrm{ and }}
 \newcommand{\fixset}{\mycal{Z}}
 \newcommand \most {\mathtt{Most}}
 \newcommand{\eceSet}{\mathcal{E}}
 \newcommand{\ceSet}{\mathcal{C}}
 \newcommand{\ceSetFree}{\mathcal{C}_0}
 \newcommand{\preceSet}{\mathcal{P}}
\newcommand{\D}{\mathbf{D^{\star}}}
\newcommand \bb[1] {\hat{#1}}
\newcommand \mf[1] {\mathbf{#1}}
\newcommand{\tuple}   [1] {\langle{#1}\rangle}
\newcommand{\h}{\mathbf{h}}
 \newcommand {\TDDD}  {\mathtt{OuterMost}}
 \newcommand {\IDDD}  {\mathtt{InnerMost}}
 \newcommand \twhere {\text{ where }}
 \newcommand \tiff {\text{ iff }}
 \newcommand{\ceSetCan}{\mathcal{C}}
 \newcommand \tthen  {\text{ then }}
 \newcommand{\memsalab}{\texttt{MEMSALab}\xspace}
\newcommand{\arity}{\mathit{ar}}
\newcommand \bigand {\bigwedge}
\newcommand \uand {\wedge}
\newcommand \morphy {\,\mathfrak{R}\,}
\newcommand \morphyy {\,\mathfrak{S}\,}
\newcommand \subcer {\sqsupset}
\newcommand{\reduce}{\;\; \rightarrow \;\;}  
\newcommand{\reduces} {\rightarrow}         
\newcommand{\xreduces}[1] {\xrightarrow[]{#1}}
\newcommand{\qsim}  {$(\ceSet,\ceSetFree)$-quasi-simulation\xspace}
\newcommand \Eu[1]{\EuScript{#1}}
\newcommand{\fresh}[1]{{\mathtt{fresh}({#1})}}
\newcommand{\boundv}[1]{{\mycal{B}ound({#1})}}
\newcommand{\NF}{\mathtt{\mycal{NF}}}
\newcommand{\Unif}{\mathfrak{U}}
\newcommand{\hideProofs}[1]
{\ifthenelse{\equal{#1}{true}}{TRUE}{\NewEnviron{killcontents}{}
                                     \let\proof\killcontents
                                     \let\endproof\endkillcontents}
  \ifthenelse{\equal{#1}{false}}{FALSE}{}
}
\newcommand \ufold[2]{\rho({#1},#2)}
\newcommand \bufold[2]{\rho\big({#1},#2\big)}
\newcommand \bbufold[2]{\rho\Big({#1},#2\Big)}
\renewcommand \ufold[2]{\rho_{#2}({#1})}
\renewcommand \bufold[2]{\rho_{#2}\big({#1}\big)}
\renewcommand \bbufold[2]{\rho_{#2}\Big({#1}\Big)}
\newcommand \unravel [1]{\mathcal{U}({#1})}
\newcommand \bunravel[1]{\mathcal{U}\big({#1}\big)}
\newcommand{\ceSetEquiv}{\ceSet\slash\hspace{-0.18cm}\equiv}
\newcommand\sbullet[1][.5]{\mathbin{\vcenter{\hbox{\scalebox{#1}{$\bullet$}}}}}
\newcommand\nbr{\#}
\newcommand \nbrr[1] {\Omega_{\mu}{#1}}
\newcommand{\m}{\textup{\texttt{-}}} 
\newif\ifhideproofs
\let\proof\hide
\let\endproof\endhide
\begin{document}


\title{Unification and  combination of a class of  traversal strategies made with pattern matching and fixed-points\tnoteref{funding}}
\tnotetext[funding]{This work was supported by LABEX ACTION ANR-11-LABX-0001-01 and
by the European Territorial Cooperation Programme INTERREG IV project OSCAR.}

\author{Walid Belkhir}
\ead{walid.belkhir54@gmail.com}
\author{Nicolas Ratier\corref{cor1}}
\ead{nicolas.ratier@femto-st.fr}
\author{Duy Duc Nguyen}
\ead{nduyduc1989@gmail.com}
\author{Michel Lenczner}
\ead{michel.lenczner@femto-st.fr}
\address{FEMTO-ST Institute, Time and Frequency Department,
Univ. Bourgogne Franche-Comt\'{e} (UBFC), ENSMM, CNRS,
15B avenue des Montboucons, 25030 Besan\c{c}on cedex, France}
\cortext[cor1]{Corresponding author}

\begin{abstract}
Motivated by an ongoing project on computer aided derivation of asymptotic models governed by partial differential equations,
we introduce a class of term transformations that consists of traversal strategies and insertion of contexts.
We define unification and combination operations on this class which  amount to merging transformations in order to obtain more complex ones.
We show that the unification and combination operations enjoy nice algebraic properties like associativity, congruence and the existence of neutral elements.
The main part of this paper is devoted to proving that the unification and combination operations are correct. 
\end{abstract}

\begin{keyword}
Traversal strategies \sep unification \sep combination \sep fixed-point \sep $\mu$-calculus
\end{keyword}

\maketitle

\section{Introduction}
The general context of this article is the incremental design of complex models using a notion of extension.
The models we are considering are described by abstract terms and subjected to symbolic transformations.
The latter are assumed to rely on two fundamental operations: the operation of extension that transforms a reference object to a more complex one by enriching it, and the operation of combination that merges several extensions to produce a new one that incorporates all the characteristics and effects of those used for its generation. This process is guided by the semantics of the objects in question, namely the way the extensions operate.

We briefly recall the background of this work. Our motivation originates in an undergoing project for the modeling and simulation of complex systems in micro or nano-technologies, e.g.~\cite{YanBel2013,belkhir2014symbolic,belkhir:SYNASC:15}. The systems under consideration are governed by Partial Differential Equations (PDEs) and are too complex to be simulated by straightforward numerical methods, unless at the time-scale of design engineering. In addition, asymptotic methods for PDEs have been an active domain of mathematics for more than seventy years whose main goal is to derive \textquotedblleft simpler\textquotedblright\ PDEs from those which have small parameters in their geometry or their equations. These methods are called singular perturbation methods in physics. 

They are developed in all fields where PDEs are used for modeling ranging from physics,
biology, finance etc, see for instance the review paper~\cite{charalambakis2010homogenization}.

The use of asymptotic methods for modeling and simulation leads to reduced computation times while retaining the essence of the models. Nevertheless, they suffer from a major drawback which limits their diffusion in the community of engineers which is that their derivation is done on a case by case basis. In other words, for each new problem, the entire process of deriving the model must be redone from scratch even if the new problem has many functionalities in common with one or more problems already modeled. It follows that despite the immense number of existing models, relatively few of them are used in general simulation software. 

Our group has adopted an alternative approach by developing a software package called \memsalab (for MEMS Array Lab)~\cite{EuroSim11} whose function is to build asymptotic models by successive extensions which intend to take into account different characteristics such as scalar or vector forms of solutions, various a priori estimates on the solutions and the sources, thinness or periodicity of geometries, several nested substructures, etc.

Our approach takes advantage of the modularity and the algebraic nature of asymptotic methods by following the approach presented in~\cite{LenSmi07}. It is also based on the so-called \emph{combination of extensions}~\cite{belkhir:SYNASC:15} method that we are now sketching.

- First, remember that the construction of an asymptotic model operates on a PDE comprising small parameters. The construction of an asymptotic model consists in passing to the limit to zero on the small parameters which requires several proof steps leading to a new PDE. The latter can be implemented in generic simulation software. Among all pairs of input PDEs and proofs of asymptotic models, the simplest one is chosen, which is the pair of \textit{reference PDE} and \textit{reference proof}, from which the others can be constructed by successive complexifications. Technically, a proof is implemented by a rewriting strategy, that is to say by a series of transformations made up of rewriting rules accompanied by strategies which specify the way in which the rewriting rules  are applied to PDEs.

- Second, the reference proof is complexified, we say that it is extended, in several ways to take into account new functionalities giving new proofs. This is done by applying an extension to the reference proof in so far as an extension is another kind of rewriting strategies. This results in an \textit{extended proof}. Then, applying an extended proof to a complexified PDE yields a new asymptotic model. 

- Finally, to cover several new elementary functionalities, we merge several elementary extensions by an operation of \textit {combination}. The extension resulting from a combination can itself be applied to the reference proof. It follows a new proof which applied to a complex PDE produces an asymptotic model benefiting from all the characteristics.

To illustrate the concept of combination of extensions, consider the input PDE of reference and the proof of reference both containing the term $\partial_{x} v(x)$, an extension that adds an index $j$ on the variable $x$ of derivation and a second extension which adds an index $i$ on the derived function $v$. The application of each of these two extensions to the reference term yields the terms $\partial _{x}v_{i}(x)$ and $\partial_{x_{j}}v(x)$, respectively. The combination of these two extensions would be another extension that, when applied to the reference term, yields $\partial _{x_{j}}v_{i}(x)$. 

In summary, there are three levels for PDEs, proofs and extensions. A proof can be applied to a PDE, an extension to a proof, and extensions can be combined to produce new extensions. Therefore, combining extensions related to several elementary features allows for building, in an incremental way, new proofs and therefore new asymptotic models.

Although the concepts of extension and combination were introduced for the first time in~\cite{YanBel2013}, in that earlier work the combination of extensions was done by composition, not allowing for conflicts between extensions.  A conflict between two extensions arises when they modify the same part of a proof and when the application of one of them creates new possibilities for the application of the other one. In that restricted framework, the combination of non conflicting extensions simply amounts to their composition. The complete principle of the extension-combination method was introduced in~\cite{belkhir:SYNASC:15} where a user language was defined for specifying proofs and extensions as rewriting strategies. We also have defined the combination on a small class of extensions. However, the question of the correctness of the combination formulas was left open.

When defining a new class of strategies with an operation of combination, there are many difficulties to overcome. A careful attention must be payed to the choice of the constructors out of which these strategies are built up. There are two extreme ways to proceed. One way is to build the strategies by means of the most rudimentary constructors, as in~\cite{combination:2019:rudi-strategies-faillure}. This makes the strategies hard to use in practice due to their huge size. But the advantage of such rudimentary constructors is to allow one to understand the mechanisms behind the combination operation and to define it correctly. Even more, to proceed in this way was inevitable and justifies our work~\cite{combination:2019:rudi-strategies-faillure}.

 The other extreme way is to rather design high level strategy constructors which are easy to use in practice. But this makes it hard to understand the mechanisms behind the combination operation since a high level constructor hides several rudimentary constructors. For instance, given a rule $r$, the translation of the high level strategy $\mathtt{OuterMost}(r)$ into rudimentary constructors requires  three rudimentary constructors since $\mathtt{OuterMost}(r)$ can be written as $\mathtt{\mu X. (r \oplus Most(X))}$, where ``$\mu$" stands for the fixed-point or the recursion constructor, ``$\oplus$" stands for the left-choice constructor, and  ``$\mathtt{Most}$" is the one-step constructor that accesses to all the children of a term if viewed as a tree.  In this case, one has to define the combination of two high level constructors in just one step which is usually difficult or even impossible. Furthermore, this raises the question of the closure of such class of strategies under combination since, for instance, the combination of two $\mathtt{OuterMost}$s is not an $\mathtt{OuterMost}$.  In earlier attempts, we figured out that the combination of extensions based on high-level strategies such as \texttt{BottomUp} or \texttt{TopDown} or \texttt{OuterMost}  can not be expressed with high-level strategies, making such a class not closed under combination.  We thus understood that more rudimentary strategy constructors were needed.

In~\cite{combination:2019:rudi-strategies-faillure} we followed the first way and introduced the large class of \emph{context embedding strategies}, or \oces for short, that involves elementary and, more importantly, an explicit handling of failures which are produced when an application of an extension fails. 

We proved the correctness of the combination operation for a fragment of the class of CE-strategies. The drawback of working with this class is that the definition of the traversal navigation strategies such as \texttt{OuterMost} yields a \oce whose size depends on the signature. Even worse, the size of the resulting combined CE-strategy can be exponential with respect to the size of the two input CE-strategies. 
In this paper we overcome these difficulties by pursing a third way which is in between the two extreme ways exposed above. We introduce another class of strategies, called the \ces, which is built up using both high level and rudimentary constructors inspired by the propositional modal $\mu$-calculus~\cite{rudimemt:mu-calculus:book,BraWal15} rather than strategy languages as in~\cite{RewriteStrat_CHK2003}.
The $\mu $-calculus-like approach involves natural and rudimentary strategy constructors, especially the jumping to a position and the recursion with the fixed-point operator. This makes tractable the question of language closure for combinations.
Moreover, the procedure of combination of \ces together with their verification is also much simplified. Although this new class of \ces is less expressive that the class of \oces of~\cite{combination:2019:rudi-strategies-faillure}, the new class remains powerful enough to be used in practice and its closure is harder to achieve since it incorporates high level and rudimentary constructors, this makes this paper more complete than~\cite{combination:2019:rudi-strategies-faillure}, since the  \oces of \cite{combination:2019:rudi-strategies-faillure} can not be used in practice because of their huge size. Besides, we define a unification and combination operations for the class of \ces.
Roughly speaking, the unification of two \ces   amounts to construct a \ce that  captures the effect of both insofar as they are compatible, where  the compatibility of two \ces depends on each input term and is related to their successful application. The incompatible effects are covered by the combination.
This class enjoys similar algebraic properties as \oces with respect to unification and combination, like associativity, congruence and the existence of a neutral and an absorbing element. The main result of this paper shows that the unification and combination operations of \ces are correct with respect to a correctness criterion that we shall devise and that is guided by the semantics of \ces, Subsection~\ref{correctness:criterion:unif:comb:subsec}.

We notice that the size of the resulting combined \ce is polynomial with respect of the size of the two input \ces. The \ces are reasonably easy to use in practice and they have been implemented and used in \memsalab software in the previous years within a user specification language of mathematical expressions, proofs and extensions and their combination for asymptotic models where the first applications targeted micro and nanotechnology~\cite{YanBel2013,belkhir2014symbolic,belkhir:SYNASC:15}.

The concept of extension, sometimes called refinement in the literature, is developed in different contexts such as the parallel and concurrent systems, for example in~\cite{Luca-Action-Refinement:book,Gorrieri20011047-action-refinement,action-refinement:2020} the refinement is done by replacement of components with more complex components. Combination principles are present in different areas of application, they involve different techniques but follow the same key idea that consists of the merging of structures or algorithms motivated mainly by the incremental design of complex systems by integration of simple and heterogeneous subsystems. For instance, the works in combination of logics~\cite{Ghilardi03algebraicand,Combining:Logics:13}, algorithms and verification methods~\cite{TAP:2015}, decision procedures~\cite{Combining:decision:procedures:MannaZ02}, the  composition and synthesis for service-oriented and agent-oriented computing~\cite{SHENG2014:service:composition,JENSEN201928} in which the synchronous and asynchronous product of automata and transition systems are a form of combination, and the unification of grammars in linguistics~\cite{kahane:grammar:unification:polar:04,kahane-grammar:unification:05,francez_wintner_2011:book-unification-grammars,Richard2017UnOP:outils, Unification-Based-Tree-Adjoining-Grammars-91}. However, the integration of the two concepts of extension and combination seems to have not been addressed in the literature.

\subsection*{Organization of the paper. }
The paper is organized as follows. \\
Section~\ref{Preliminaries} is devoted to a review of the useful concepts of rewriting theory, and to definitions and notations.\\
In Section~\ref{Implement:by:position:sec} we introduce the position-based \ces and their combination. \\
In Section~\ref{Implement:by:strategies:sec} we introduce the larger class of \ces together with their semantics. \\
In Section~\ref{unification:combination:section} we give the unification procedure. \\ 
In Section~\ref{main:results:sec} we state the  results of this paper without proof, namely the correctness of the unification and combination and their algebraic properties. \\
In Section~\ref{structure:proof:main:results:sec} we expose a detailed outline of the proof of the main result, that is, the correctness of the unification of \ces. \\
In Section~\ref{Psi:construction:section}  we construct a mapping that is needed in the formulation of the correctness criterion  of the unification and combination of \ces. \\ 
In Section~\ref{proof:correction:fixed-point-free:sec} we prove the correctness of the unification of the fixed-point free fragment of \ces, that is, the \ces without the fixed-point constructor.\\
In Sections~\ref{correction-combination-definitions:section},~\ref{correction:unif:general:setting:unfold:sec} and ~\ref{equiv:unif:with:unif:unfolding:sec}  we develop the notions and tools
as well as the intermediary results required in the proof of the main result.  This is the technical core of the paper. \\ 
In Section ~\ref{proof:main:results:section}  we sum up the results of the previous three sections and prove the correctness of the unification and combination for the full class of \ces,
from  which we prove  the important algebraic properties of the unification and combination. \\  
In Section~\ref{conclusion:sec} we give a summary, few concluding remarks and we announce future work.  \\
To improve the readability of the paper, some proofs are given  in the Appendix.


\newpage
  \tableofcontents
\newpage


\section{Preliminaries: terms, substitution, notations, rewriting}
\label{preliminaries:section}
\label{Preliminaries} We introduce preliminary definitions and
notations.

\paragraph*{Terms, contexts}
Let ${\mathcal{F}}=\cup _{n\geq 0}{\mathcal{F}}_{n}$ be a set of symbols
called \emph{function symbols}. The \emph{arity} of a symbol $f$ in ${\mathcal{F}}_{n}$ is $n$ and is denoted $\arity(f)$.
Elements of arity zero are called \emph{constants} and often denoted by the letters $a,b,c,$ etc.
The set ${\mathcal{F}}_{0}$ of constants is always
assumed to be not empty.  Given a denumerable set ${\mathcal{X}}$ of \emph{variable} symbols, the set of \emph{terms} $\mathcal{T}\left( \mathcal{F},\mathcal{X}\right)$ is the smallest set containing ${\mathcal{X}}$ and such that $f(t_{1},\ldots ,t_{n})$ is in $\mathcal{T}\left( \mathcal{F},\mathcal{X}\right) $ whenever
$ar(f)=n$ and $t_{i}\in \mathcal{T}\left(\mathcal{F},\mathcal{X}\right) $ for $i\in \lbrack 1..n]$.
Let   $\square \not\in {\mathcal{X}}$ be an extra variable, the set $\mathcal{T}_{\square }(\mathcal{F},\mathcal{X})$ of
contexts, denoted simply by $\mycal{T}_{\square}$, is made with terms with symbols in $\mathcal{F}\cup \mathcal{X}\cup \{\square \}$ which always includes exactly one occurence of $\square$.
Evidently, $\mathcal{T}_{\square }(\mathcal{F},\mathcal{X})$ and $\mathcal{T}(\mathcal{F},\mathcal{X})$ are two disjoint sets.
For a term $t$ and a context $\tau$, we shall write $\tau[t]$ for the term that results from the replacement
of $\square$ by $t$ in $\tau$.
We shall write simply $\mycal{T}$ (resp. $\mycal{T}_{\square}$) instead of  $\mathcal{T}\left( \mathcal{F},\mathcal{X}\right)$
(resp. $\mathcal{T}_{\square }(\mathcal{F},\mathcal{X})$). We denote by $\mathcal{V}ar\left({t}\right) $ the set of variables occurring in $t$.
We shall write $\arity(t)$ to mean the arity of of the symbol at the root of $t$.
\paragraph*{Positions, prefix-order, substitution}
Let $t$ be a term in $\mathcal{T}\left(\mathcal{F},\mathcal{X}\right)$.
The position $\varepsilon $ is called the root position of  $t$, and the function or variable symbol at this position is called the root symbol of $t$. A position in a tree is a sequence of integers, i.e., an element in $\PosSet=\{{\varepsilon }\}\cup \mathbb{N}\cup (\mathbb{N}\times \mathbb{N}) \cup \cdots$.
In particular we shall write $\mathbb{N}_{\E}$ for $\set{\E} \cup \mathbb{N}$, such positions are called unitary positions.
Given two positions $p=p_{1}p_{2}\ldots p_{n}$ and $q=q_{1}q_{2}\ldots q_{m}$,
the \emph{concatenation} of $p$ and $q$, denoted by  
  $pq$, is the position $p_{1}p_{2}\ldots p_{n}q_{1}q_{2}\ldots q_{m}$.
We notice that in  the examples, when we write, for instance, the position $12$, we mean the concatenation of $1$ and $2$, and not the twelfth position.

The set of positions of the term $t$, denoted by
$\mathcal{P}os\left( t\right)$, is a set of positions of positive
integers such that, if $t\in \mathcal{X}$ is a variable or $t\in
\mycal{F}_{0}$ is a constant, then $\mathcal{P}os\left( t\right)
=\left\{ \varepsilon \right\} $. If $t=f\left( t_{1},...,t_{n}\right) $
then $\mathcal{P}os\left(t\right)=\left\{ \varepsilon \right\} \cup
\bigcup_{i=1,n}\left\{ip\mid p\in \mathcal{P}os\left( t_{i}\right)
\right\}$.

The prefix order defined as $p\leq q$  iff there exists $p^{\prime }$  such that $pp^{\prime }=q$, is a partial order on positions.
If $p^{\prime }\neq \varepsilon$ then we obtain the strict order $p<q$.
We write $\left( p\parallel q\right) $ iff $p$ and $q$ are incomparable with
respect to $\leq$. The binary relations $\sqsubset$ and $\sqsubseteq$ defined by $p \sqsubset q \quad \text{ iff } \quad \big(p < q \tor p\parallel q \big)$ and $p \sqsubseteq q \quad
\text{ iff } \quad \big(p\le q \tor p\parallel q \big) $, are total relations on positions.

For any $p\in \mathcal{P}os(t)$ we denote by $t_{|p}$ the subterm
of $t$ at position $p$, that is, $t_{|{\varepsilon }} =t$, and $f(
t_{1},...,t_{n})_{|iq} =(t_{i})_{|q}$.  For a term $t$, we shall
denote by $\delta(t)$ the depth of $t$, defined by $\delta(t_0)=1$, if $t_0 \in \mycal{X} \cup \mycal{F}_0$ is a variable or a constant,
and $\delta(f(t_1,\ldots,t_n)) = 1+ max(\delta(t_i))$, for $i=1,\ldots,n$.
For any position $p\in \mathcal{P}os\left( t\right)$ we denote by $t\left[ s \right]_{p}$ the term obtained by replacing the subterm of $t$ at
position $p $ by $s$: $t[s]_{\varepsilon } =s$ and $f(t_{1},...,t_{n})[s]_{iq} =f(t_{1},...,t_{i}[s]_{q},...,t_{n})$.

A substitution is a mapping $\sigma :\mathcal{X}\rightarrow \mathcal{T}(\mathcal{F},\mathcal{X})$ such that $\sigma (x)\neq x$ for only finitely many $x$'s. The finite set of variables that $\sigma $ does
not map to themselves is called the domain of $\sigma $: $\Dom(\sigma )\overset{def}{=}\left\{ x\in \mathcal{X}\gvert\sigma (x)\neq x\right\} $. If
$\Dom(\sigma )=\left\{ x_{1},...,x_{n}\right\} $ then we write
$\sigma $ as: $\sigma =\left\{ x_{1}\mapsto \sigma \left(x_{1}\right) ,...,x_{n}\mapsto \sigma\left( x_{n}\right) \right\} $.

A substitution $\sigma :\mathcal{X}\rightarrow {\mathcal{T}(\mathcal{F},\mathcal{X})}$ uniquely extends
to an endomorphism $\widehat{\sigma }:\mathcal{T}(\mathcal{F},\mathcal{X})\rightarrow \mathcal{T}(\mathcal{F},\mathcal{X})$ defined by: $\widehat{\sigma }(x)=\sigma (x)$ for all $x\in \Dom(\sigma )$,
$\widehat{\sigma }(x)=x$ for all $x\not\in {\mathsf{Dom}}(\sigma )$, and $\widehat{\sigma }(f(t_{1},\ldots ,t_{n}))=f(\widehat{\sigma }(t_{1}),\ldots ,\widehat{\sigma }(t_{n}))$ for $f\in \mathcal{F}$.
In what follows we do not distinguish between a substitution and its extension.

For two terms $t,t^{\prime }\in \mycal{T}$, we say that $t$ matches $t^{\prime }$, written $\match{t}{t'}$, iff there exists a substitution $\sigma$, such that $\sigma(t)=t^{\prime }$. It turns out that if
such a substitution exists, then it is unique. A substitution $\sigma'$ is subsumed by a substitution $\sigma$ iff
$\sigma'(t)$ matches  $\sigma(t)$ for any term $t$.

A most general unifier of the two terms $t$ and $t^{\prime }$ is a
substitution $\gamma $ such that $\gamma (t)=\gamma (t^{\prime })$
and, for any other substitution $\gamma ^{\prime }$ satisfying
$\gamma ^{\prime }(t)=\gamma ^{\prime }(t^{\prime })$, we have that
$\gamma ^{\prime }$ is subsumed by $\gamma $. The most general unifier is unique up to a variable renaming.

The composition of functions will be denoted by ``$\circ$''.
The set of all subsets of a set $S$ will be denoted by $\wp(S)$.
For a finite set $S$, we write $|S|$ for the number of elements of $S$.
For a finite  set $S$ of integers, the maximum (resp. minimum) of $S$ will be denoted by $\max(S)$ (resp. $\min(S)$).
\paragraph{Lexicographic ordering}
A lexicographic ordering, denoted by "$<$", on the Cartesian product $\mathbb{N}^{n}=\mathbb{N}\times \ldots \times \mathbb{N}$ ($n$-times), where $n\ge 1$, is inductively defined for any $(a_1,\ldots,a_n)$ and $(b_1,\ldots,b_n)$ in $\mathbb{N}^{n}$ such that $(a_1,\ldots,a_n)<(b_1,\ldots,b_n)$ iff either $n=1$, and in this case $a_1<b_1$. Or $n\ge 2$, and in this case  either \emph{i.)} $a_1<b_1$ or \emph{ii.)} $a_1=b_1$ and  $(a_2,\ldots,a_n)<(b_2,\ldots,b_n)$.

 
\section{Position-based \ces  and their combination}
\label{Implement:by:position:sec}
The operation of combining \ces requires an abstract operation of merging contexts, a concrete example of which will be provided. The algebraic properties of the combination will be presented in the general case.
\begin{definition}[Merging of contexts]
  \label{Def:Merge:contexts}
Any associative binary operation
  \begin{align*}
    \sbullet: \mycal{T}_{\square} \times \mycal{T}_{\square} \rightarrow \mycal{T}_{\square} 
  \end{align*}
  is called \emph{merging of contexts}.
 \end{definition}

\begin{example}[Merging of contexts by composition]
  \label{ExamCombinationContexts}
     We give an example  of the operation of  merging  of contexts, denoted by  "$\sbullet$", as follows:
        \begin{align*}
          {\tau} \sbullet {\tau}'   =\tau [\tau ^{\prime }]_{\mathcal{P}os\left( \tau,\square \right)}
        \end{align*} 
        where ${\mathcal{P}os\left( t,\square \right)}$ is the position of $\square$ in $t$. 
         This kind of merging   has been introduced in \cite{yang2014contribution} and implemented in  \texttt{MEMSALab} software. 
For instance, the merging of the two contexts $\tau_1=\mathtt{Index}(\square,{i})$ and $\tau_2=\mathtt{Index}(\square,{j})$, used for inserting indices to mathematical variables or functions, is given by 
    \begin{equation*}
        \tau_1 \sbullet \tau_2  = \tau_1[\tau_2]_{1} = \mathtt{Index}(\mathtt{Index}(\square,{j}),{i}),
    \end{equation*}
    where ${i}$ and ${j}$ are terms.
\end{example}

To define the position-based \ces, we introduce two particular  position-based strategies as follows. Firstly, for a position $p$ and a  context $\mbf{\tau}$, we define the jump strategy $@p.\mbf{\tau}$ that, when applied to a term $t$, it  inserts $\mbf{\tau}$ at the position $p$ of  $t$.
Secondly, we define the failing strategy $\emptylist$ that fails when applied to any term.
Their precise semantics are given in Definition \ref{semantics-posi-based:def} of the semantics of position-based \ces.

\begin{definition}[Position-based \ces]
\label{posi-based:def}
Let $p_1,\ldots,p_n$ be  positions in $\PPos$ and  $\tau_1,\ldots,\tau_n$   be  contexts in $\mycal{T}_{_{\square }}$ with $n\ge 1$.
A \emph{position-based \ce} is either the failing strategy $\emptylist$ or  the ordered conjunction 
\begin{align*}
\bigand_{i=1,n} @p_{i}.\tau_i. 
\end{align*} 
The set of position-based \ces is denoted by $\eceSet$. 
\end{definition}
Notice that the order of positions in $\bigand_{i=1,n} @p_{i}.\tau_i$ matters.
We impose that the position-based \ces follow  some constraints regarding the  positions of insertions to avoid conflicts: the order of context insertions must  go  up from the leaves to  the root.
Formally, 

\begin{definition}[Well-founded position-based \ce]
\label{Well-founded:simple:ext:def} 
Let $p_1,\ldots,p_n$   positions in $\PPos$ and  $\tau_1,\ldots,\tau_n$   be contexts in $\mycal{T}_{_{\square }}$ with $n\ge 1$.
A position-based \ce $E$
\begin{align*}
E = \bigand_{i=1,n} @p_{i}.\tau_i
\end{align*} 
is \emph{well-founded} iff
\begin{enumerate}[i.)]
\item every position  occurs at most once  in $E$, i.e. $p_i \neq p_j$ for all $i\neq j$, and
\item lower positions  appear earlier in $E$, i.e.  $i < j$ if $p_i \sqsubset p_j$, for all $i,j \in [1,n]$.
\end{enumerate}
 Moreover, the  position-based \ce $\emptylist$ is well-founded. 
\end{definition}

In all what follows we work only with the set of well-founded
position-based \ces, still denoted by  $\eceSet$. For two position-based
\ces $E$ and $E'$, we shall abuse of notation and write $E=E'$ to
mean that they are equal up to a permutation of their parallel
positions. 
For a position $p$, we let
\begin{align*}
  @p.\bigand_{i=1,n} @p_{i}.\tau_i=\bigand_{i=1,n} @pp_{i}.\tau_i.
\end{align*}

We next define the semantics
of a position-based \ce as a function in $\funset{\mycal{T} \cup
\set{\fail}}$, with the idea that if the application of a
position-based \ce to a term fails, the result is $\fail$. Besides,
we adopt  a stronger  version of  failure, that is,
$\bigand_{i=1,n} @p_{i}.\tau_i$ fails  when each  of
$@p_i.{\tau}_i$ fails. To formalize this notion of failure  we need
to introduce  an intermediary function
\begin{align*}
\eta: (\funset{\mycal{T} \cup \set{\fail}})  \rightarrow  \mycal{T}\cup \set{\fail}
\rightarrow  \mycal{T}\cup \set{\fail},
\end{align*}
that stands for the \emph{fail as identity}. It is defined for any function $f$ in
$\funset{\mycal{T}\cup \set{\fail}}$  and any term $t \in
\mycal{T}\cup \set{\fail}$ by
\begin{align*}
(\eta(f))(t) =  \begin{cases}
f(t) & \tif f(t)  \neq \fail, \\
t    & \totherwise.
\end{cases}
\end{align*}
The semantics of position-based \ces follows.
\begin{definition}[Semantics of position-based \ces]
\label{semantics-posi-based:def} The \emph{semantics} of a position-based
\ce $E$ is a function $\sembrackk{E}$ in $\funset{\mycal{T}\cup
\set{\fail}}$ inductively  defined by:
\begin{align*}
\sembrackk{\emptylist}(t) & \uberEq{def} \fail,   \\
\sembrackk{E}(\fail) & \uberEq{def} \fail, \\
\sembrackk{@p. {\tau}}(t)  & \uberEq{def}
\begin{cases}
    t{[{\tau}[t_{|p}]]}_{p} & \tif p \in \PPos(t) \\
    \fail & \totherwise,
\end{cases}  \\
\sembrackk{\bigand_{i=1,n} @p_{i}.\tau_i}(t)  &
\uberEq{def}
\begin{cases}
  \Big(\big(\eta(\sembrackk{@p_n.\mbf{\tau}_n})\big) \circ \cdots\circ  \big(\eta(\sembrackk{@p_1. \mbf{\tau}_1})\big)\Big) (t) & \textrm{if } \exists p_i \in \set{p_1,\ldots,p_n} 
   \textrm{ s.t. }  p_i \in \PPos(t) \\
  \fail  &  \textrm{ otherwise}.
\end{cases}
\end{align*}
Two \ces $E$ and $E'$ are said to be \emph{semantically equivalent}, if and only if  $\sembrackk{E}(t)=\sembrackk{E'}(t)$, for any term $t$.
\end{definition}
Notice that two position-based \ces are semantically equivalent iff they are equal up to a renaming of parallel positions.

\begin{myexample}\label{Example:SemanticOfCEStrategy}
  We illustrate with an example of  position-based \ces with their application to a term in \texttt{MEMSALab}.
Consider the  two contexts $\tau_1= \mathtt{Index}(\square,i)$ and $\tau_2= \mathtt{Index}( \square,j)$.
Applying the position-based \ce  $@\varepsilon.\tau_1$  to the term  $t=\mathtt{Var}(x,\Omega)$ gives the
transformation of a space variable $x$ defined on a domain $\Omega$ to its coordinate $x_{i}$. 
The procedure is given by
\begin{align*}
    \sembrackk{@\varepsilon.\tau_1}(t) = t[\tau_1[t_{|\varepsilon}]]_\varepsilon = t[\tau_1[t]]_\varepsilon = \tau_1[t]_{\mathcal{P}os(\tau_1,\square)}
         =\mathtt{Index}\left(\mathtt{Var} ( x,\Omega), {i}\right).
\end{align*}
Let  $\tau = \tau_1 \sbullet \tau_2$, where "$\sbullet$" stands for the operation of merging of contexts by composition  as defined in Example \ref{Def:Merge:contexts}.
The application of the position-based \ce $@\varepsilon.\mbf{\tau}$ to the term $t$  gives
\begin{align*}
  \sembrackk{@\varepsilon.\mathbf{\tau}}(t) = \tau[t]_{\mathcal{P}os(\tau_1[\tau_2],\square)}   
    = \mathtt{Index}\left(\mathtt{Index}\left(\mathtt{Var}\left( x, \Omega \right),{j} \right),{i} \right).
\end{align*}
\end{myexample}

\begin{myexample}\label{Example:SemanticOfCEStrategy1}
We  illustrate an application of a position-based \ce on the derivative of a function represented by the term $t'={\partial}_{{x}}{u}$, where
${u}$ is  the derived function,
${x}$ is the  the mathematical variable and
${\partial}_{{x}}$ the derivation operator with respect to ${x}$. 
Let $\tau_1= \mathtt{Index}(\square,i)$ and $\tau_2= \mathtt{Index}( \square,j)$ be contexts.
The application of the position-based \ce $@1.\tau_1 \wedge @2.\tau_2 $  to $t'$ yields the term ${\partial}_{{x}_{{j}}}{u}_{{i}}$.
Since the positions $1$ and $2$ are parallel, this  \ce  is well-founded and its application to $t'$ yields
\begin{align*}
  \sembrackk{@1.\tau_1 \wedge  @2.\tau_2}(t') = (\sembrackk{@1.\tau_1} \circ \sembrackk{@2.\tau_2})(t')  = \sembrackk{@1.\tau_1}(\sembrackk{@2.\tau_2}(t')) 
                                                                                                         = \partial_{x_j}{u_i}.
\end{align*}
The tree structures of $\tau_1$, $\tau_2$, $\partial_x{u}$ and ${\partial}_{{x}_{{j}}}{u}_{{i}}$ are depicted in Figure \ref{fig:JumpStrategy}.
\end{myexample}

\begin{figure}[]
    \centering
    \begin{tikzpicture}[level distance=1cm,
    level 1/.style={sibling distance=1.2cm},
    level 2/.style={sibling distance=1.2cm},
    scale=0.95]
    \node {$\mathtt{Index}$}
    child {node {$\square$}}
    child {node{${i}$}
    };
    \begin{scope}[xshift=2.5cm,level 1/.style={sibling distance=1.2cm}]
    \node {$\mathtt{Index}$}
    child {node {$\square$}}
    child {node{${j}$}
    };
    \end{scope}
    \begin{scope}[xshift=6cm,level 1/.style={sibling distance=1.2cm}]
    \node {${\partial}$}
    child {node {${u}$} edge from parent node[left,draw=none] {}}
    child {node {${x}$} edge from parent node[right,draw=none] {}};
    \end{scope}
    \begin{scope}[xshift=10cm,level 1/.style={sibling distance=1.5cm}, level 2/.style={sibling distance=0.8cm}]
    \node {${\partial}$}
    child {node {$\mathtt{Index}$}
        child{ node {${u}$}} child{ node {${i}$}}
        edge from parent node[left,draw=none] {}}
    child {node {$\mathtt{Index}$}
        child {node {${x}$}} child{ node {${j}$}}
        edge from parent node[right,draw=none] {}};
    \end{scope}
    \end{tikzpicture}
    \caption{The tree structure of the contexts $\tau_1= \mathtt{Index}(\square,i)$ and $\tau_2= \mathtt{Index}( \square,j)$, and the term $\partial_{x}u$.
        The term $\partial_{{x}_{{j}}}{u}_{{i}}$ results from   the application of the position-based \ce  $@1.\tau_1 \wedge  @2.\tau_2$ to $\partial_{x}u$,
              as discussed  in Example \ref{Example:SemanticOfCEStrategy1}.} 
    \label{fig:JumpStrategy}
\end{figure}

The unification of two position-based \ces amounts to sort and merge their positions, and to merge  their contexts if they are inserted at the same position.
To simplify the following Definition \ref{unif:posi:non:empty:def},
when  unifying  position-based \ces $E$ and $E'$ in the general case (\ref{item:2:posi:non:empty:def}), we can assume without loss of generality that each  of them contains an insertion at the root position $\varepsilon$,
because otherwise one can add to each of them  the identity  insertion  $@\varepsilon.\square$ that leaves unchanged any term to which it is applied.

\begin{definition}[Unification of position-based \ces]
\label{unif:posi:non:empty:def} The \emph{unification}  of two position-based \ces is the binary operation $\combb: \eceSet\times
\eceSet\longrightarrow \eceSet$
defined as
\begin{enumerate}
\item
  \begin{enumerate}[(a)]
  \item \label{final:E:1} $\emptylist \combb E  = \emptylist.$
  \item  \label{final:E:2}  $E \combb \emptylist =   \emptylist.$ 
  \end{enumerate}
\item \label{item:2:posi:non:empty:def} If $E=\bigand_{p_i \in I} @p_i.\tau_i \uand @\varepsilon.\tau$  and $E'=\bigand_{q_j \in J} @q_j.\tau'_j \uand @\varepsilon.\tau'$, for two partially ordered sets $I$ and $J$ of positions, then 
  \begin{align*}
  E \combb E'  =  \bigand_{p_i \in I \cap J} @p_i.(\tau_i\sbullet \tau'_i) \uand  R \uand  R'\uand @\varepsilon.(\tau\sbullet \tau'),
  \end{align*}
where \begin{align*} R &= \bigand_{p_i \in I\setminus J} @p_i.\tau_i &\tand &&
                     R'&= \bigand_{q_j \in J\setminus I} @q_j.\tau'_j. && 
   \end{align*}
\end{enumerate}
\end{definition}
Notice that since one can reorder the positions of  $R \uand  R'$, then  the unification of two well-founded position-bases \ces can be turned into an  equivalent  well-founded one, i.e. into a unique  (up to a permutation of parallel positions)  well-founded position-based \ce.

\begin{myexample}
  Consider position-based \ces
  \begin{align*}
  E=@p_1.\tau_1 \wedge @p_2.\tau_2 \wedge @p_3.\tau_3  &&\tand  && E'=@p_1.\tau'_1 \wedge @q_1.\tau'_2  \wedge @q_2.\tau'_3,
  \end{align*}
  and the sets of their positions are $P=\{p_1,p_2,p_3\}$ and $P'=\{p_1,q_1,q_2\}$, respectively. Hence $P \cup P'=\{p_1,p_2,p_3,q_1,q_2\}$ and $P \cap P' = \{p_1\}$. The unification of $E$ and $E'$ is 
    \begin{equation*}
        E \combb E'=@p_1.(\tau'_1 \sbullet  \tau_1)\wedge @p_2.\tau_2\wedge @p_3.\tau_3\wedge @q_1.\tau'_2\wedge @q_2.\tau'_3.
    \end{equation*}
\end{myexample}

For practical reasons, we need to introduce  the combination of two position-based \ces in  the same way as their unification apart that
the combination of a position-based \ce with the failure is the identity.

\begin{definition}[Combination  of two position-based \ces]
\label{comb:posi:def}
The \emph{combination} of two  position-based \ces is a binary operation
$\comb:  \eceSet   \,\times\,  \eceSet   \longrightarrow \eceSet$
defined for any     $E$ and $E'$ in $\eceSet$ by
\begin{align*}
 E \comb  E'=
\begin{cases}
E \combb E'  & \tif E\neq \emptylist \tand E' \neq \emptylist \\
E  & \tif E\neq \emptylist \tand   E'=\emptylist \\
E'  & \tif E = \emptylist.
\end{cases}
\end{align*}
\end{definition}

The algebraic properties of the unification and the combination of position-based \ces are stated in the following
Propositions \ref{main:prop:elemntary:og:prop:1} and \ref{main:prop:elemntary:og:prop:2}, respectively.  
\begin{proposition}
\label{main:prop:elemntary:og:prop:1} 
 The set $\eceSet$ of position-based \ces  together with the unification  operation enjoy the following properties.
    \begin{enumerate}
    \item The neutral element of the  unification   is $@\varepsilon.\square$, 
    \item The absorbing element of the unification is $\emptylist$,
    \item The unification is  associative, i.e. $(E \combb E') \combb E'' = E \combb (E' \combb E'')$. 
    \item The unification  of position-based \ces  is (non-)commutative  if and only if   the operation  "$\sbullet$" of merging of  contexts  is  (non-)commutative. 
    \item The unification    is idempotent if and only if  the operation of merging of the contexts is idempotent, 
          that is, $E \combb  E = E$  for any $E\in \eceSet$  iff  $\tau \sbullet \tau = \tau$  for any contexts  $\tau$ in $\mycal{T}_{\square}$.
\end{enumerate}
\end{proposition}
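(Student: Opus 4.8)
The plan is to reduce all five properties to corresponding algebraic facts about the merging operation $\cdot$ on $\mybox$-terms, since unification acts position-by-position and the only genuinely ``term-level'' content lives in $\cdot$. The key preliminary observation, which I would isolate as a lemma, is that context merging $\tau \cdot \tau' = \tau[\tau']_{\PPos(\tau,\square)}$ is just context composition: plugging $\tau'$ into the unique hole of $\tau$. From this I would record three facts. First, $\square$ is a two-sided neutral element, i.e. $\square \cdot \tau = \tau \cdot \square = \tau$, since replacing the hole of $\square$ by $\tau$ yields $\tau$, and replacing the hole of $\tau$ by $\square$ returns $\tau$ unchanged. Second, $\cdot$ is associative, i.e. $(\tau \cdot \tau') \cdot \tau'' = \tau \cdot (\tau' \cdot \tau'')$, because both sides denote $\tau$ with $\tau'$ plugged into its hole and $\tau''$ plugged into the hole of $\tau'$. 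Third, the symmetry (resp. absorption) of $\cdot$ on its arguments is exactly the property needed downstream. Fact two is the only one requiring an honest structural induction on $\tau$, tracking that the hole of $\tau \cdot \tau'$ coincides with the hole of $\tau'$ inside the composite.

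For the neutral element (Property 1) I would write an arbitrary $E$ in the normalized form $\bigand_{i\in I}@i.\tau_i \uand @\epsilon.\tau$, taking $\tau=\square$ when $E$ carries no insertion at the root, and apply case 2 of Definition~\ref{unif:posi:non:empty:def} with $E' = @\epsilon.\square$. Since the off-root index set of $E'$ is empty, the shared-position block reduces to $@\epsilon.(\tau \cdot \square) = @\epsilon.\tau$ while every other $@i.\tau_i$ is carried over untouched, giving $E \combb @\epsilon.\square = E$; the symmetric identity uses $\square \cdot \tau = \tau$. The absorbing element (Property 2) is immediate from cases 1(a)--1(b) of the definition.

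Associativity (Property 3) is the main obstacle. I would first dispatch the cases where some argument is $\emptylist$, where both sides collapse to $\emptylist$ by the absorbing property. For non-failing $E, E', E''$ with position sets $P, P', P''$, I would observe that unification computes, at each position, the $\cdot$-product of precisely the contexts present there, taken in the left-to-right order of the operands, and that the result's position set is $P \cup P' \cup P''$ on both bracketings. Partitioning this union according to which of the three strategies contain a given position, I would verify position-by-position that $(E \combb E') \combb E''$ and $E \combb (E' \combb E'')$ assign the same context: at a position lying in all three this is $(\tau_p \cdot \tau'_p) \cdot \tau''_p = \tau_p \cdot (\tau'_p \cdot \tau''_p)$, which is exactly associativity of $\cdot$; at positions in fewer operands the context is a single factor or a two-fold product that appears identically on both sides. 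Since well-foundedness depends only on the position set and the fixed prefix order, both results are well-founded, and hence equal. The delicate point is the ordering bookkeeping: I must ensure the operands are multiplied in the same order on both bracketings, so that associativity of $\combb$ genuinely rests on associativity of $\cdot$ and does not silently invoke commutativity.

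For Properties 4 and 5 I would prove each biconditional by the same two-step scheme. For the forward directions, instantiate with the single-root strategies $E = @\epsilon.\tau$ and $E' = @\epsilon.\tau'$: then $E \combb E' = @\epsilon.(\tau\cdot\tau')$, $E' \combb E = @\epsilon.(\tau'\cdot\tau)$, and $E \combb E = @\epsilon.(\tau\cdot\tau)$, so commutativity (resp. idempotence) of $\combb$ forces $\tau\cdot\tau' = \tau'\cdot\tau$ (resp. $\tau\cdot\tau=\tau$). For the backward directions, the position-by-position argument used for associativity shows that symmetry (resp. absorption) of $\cdot$ at each shared position, together with commutativity (resp. idempotence) of set union, propagates to the whole strategy, the $\emptylist$ cases again being covered by the absorbing property. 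The ``(non-)'' formulations are simply the contrapositives of these two biconditionals.
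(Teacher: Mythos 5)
Your proof is sound, but there is nothing in the paper to compare it against: Proposition~\ref{main:prop:elemntary:og:prop:1} is stated without proof, and the paper's real effort goes into transferring these algebraic properties from $\eceSet$ to full \ces via the homomorphism $\Psi_t$ (Theorem~\ref{main:alg:theorem:1}, Proposition~\ref{main:prop:2}), whose proofs cite the position-based case as a known fact. Your reduction is exactly the argument the paper leaves implicit: everything rests on context composition $\tau \cdot \tau' = \tau[\tau']_{\mathcal{P}os(\tau,\square)}$ being a monoid operation with unit $\square$ (the structural induction nobody writes out), plus position-set bookkeeping over $P \cup P' \cup P''$, and your insistence on tracking the left-to-right order of the $\cdot$-factors is the one place where a careless write-up would silently assume commutativity. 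Two presentational caveats, neither a gap. First, case 2 of Definition~\ref{unif:posi:non:empty:def} writes both operands with an explicit root insertion, so your padding convention ($\tau = \square$ when no root insertion is present) is needed even to apply the definition, and then neutrality of $@\E.\square$ holds only modulo identifying $E$ with $E \uand @\E.\square$ --- an implicit normalization the paper commits to by asserting the claim at all; it would be worth stating this identification explicitly. Second, your concluding ``hence equal'' in the associativity argument should invoke the paper's convention, fixed after Definition~\ref{Well-founded:simple:ext:def}, that equality of position-based \ces means equality up to permutation of parallel positions: with the same contexts assigned to the same positions, and well-foundedness constraining the order of comparable positions, that convention is precisely the equality you obtain, whereas literal syntactic equality of the two bracketings need not hold.
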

\begin{proposition}
\label{main:prop:elemntary:og:prop:2}
 The set $\eceSet$ of position-based \ces  together with the unification and combination operations enjoy the following properties.
    \begin{enumerate}
    \item The neutral element of the  combination   is $\emptylist$.
    \item The combination is associative, i.e.  $(E\comb E')\comb E'' = E\comb (E'\comb E'')$.
    \item The combination  of position-based \ces  is (non-)commutative if and only if  the operation of merging of the contexts  "$\sbullet$" is (non-)commutative.
    \item The combination is idempotent  iff the operation  "$\sbullet$" of merging of  contexts is idempotent.
    \end{enumerate}
\end{proposition}

The proof of these propositions does not provide any difficulties since the properties of  associativity, (non)-commutativity,  and idempotence of the unification and combination  are  inherited from
their counterpart properties of the merging  of contexts. 

 
\section{The class of \ces}
\label{Implement:by:strategies:sec}

As far as the unification is concerned, designing a class of strategies  faces  the following challenging issues:
\emph{1.)} finding the right class of extensions that is closed by combination: a less expressive class would not be closed under combination nor useful in practice, while very expressive extensions are impossible to combine,
\emph{2.)} finding the right basic constructors of the extensions:  very rudimentary constructors would make the size of the extensions very huge and non-practical, while   more general constructors are very hard to combine, 
\emph{3.)} combining  the  "\emph{while}" loops, or iterations,  is the most difficult part and requires a special care,
\emph{4.)} proving the correctness of the combination by taking into account the semantics of the extensions.

We introduced the position-based \ces
to clarify the ideas behind contexts, their insertion as well as their combination.
However,  position-based \ces are not satisfactory  for practical
applications, since the positions are generally not flexible, not accessible and cannot be
used on a regular basis in applications.
So, we enrich this  framework  by supplementing position-based \ces with  navigation  strategies  to form a class of \emph{\ces}  which  is closed under  combination.

\paragraph{Syntax and  semantics of \ces}
A \ce is composed of two parts: a navigation of the input term
without changing it, and an insertion of contexts at certain
positions. The navigation part is built up  using
the left-choice  constructor ($\oplus$),
a conditional constructor ``$\mathtt{if}\textrm{-}\mathtt{then}$'',
a pattern-matching ``$u;S$'' with a pattern $u$,
the $\most(S)$ constructor  that applies $S$ to all the children of the input term,
the jump constructor $@i.S$ to a position $i$ as well as a conjunction of such jumps,
and the fixed-point constructor (``$\mu$'') allowing the recursion in the definition of strategies.
The resulting class is called  the class of  \ces, which stands for \emph{traversal strategies with fixed-points}.

In what follows we assume that there is an enumerable set of \emph{fixed-point variables} denoted by $\fixset$. Fixed-point variables in  $\fixset$
will be denoted by $X,Y,Z,\ldots$

\begin{definition}[Grammar of \ces]
 \label{Def:HCE-strategies}
The class of \ces  is  defined by the following grammar:
\begin{align*}
 S  \; ::= \;  &\emptylist \gvert X \gvert @\varepsilon.\tau \gvert u;{S} \gvert  S \oplus S   \gvert   \mu X.S  \gvert  @i.S \gvert   @i_1.S \uand @i_2.S   \gvert   \most({S})  \gvert  \tifthen{S}{S} \\
\end{align*}
where $X$ is a fixed-point variable in $\fixset$, and $\mbf{\tau}$
is  a  context   in $\mycal{T}_{\square}$, and $u$  is a  term   in
$\mycal{T}$, and $i,i_1,i_2$ are unitary  positions in
$\mathbb{N}_{\varepsilon}$. The set of \ces will be
denoted by $\ceSet$.
The subset of fixed-point free \ces will be denoted by $\ceSetFree$. 
\end{definition}
\paragraph{Notations}
We shall  write "$\tifthen{S_1 \&S_2}{S}$" instead   of   $\tifthen{S_1}{\big(\tifthen{S_2}{S}\big)}$.
If a set of positions $I$ is empty, then the \ce $\bigwedge_{i\in I} @i.S_i$ is just the failure $\emptylist$.

We notice that extending  the class of \ces by allowing the  position $i$ of the jump constructor $@i.S$ to range over arbitrary   positions  in  $\PosSet$ instead of
unitary positions in $\mathbb{N}_{\varepsilon}$ does not increase the expressiveness of the strategy language. This can be achieved by  turning each \ces
$@p.S$, where $p$ is a position in $\PosSet$ into
$@q_1.\cdots.@q_n.S$, with $p=q_1.\cdots.q_n$ and each $q_j$ is a unitary  position in $\mathbb{N}_{\varepsilon}$.

The design of the class of \ces  is inspired by the $\mu $-calculus formalism~\cite{rudimemt:mu-calculus:book}
since  we  need   very rudimentary strategy constructors.
In particular the jumping into the  immediate positions of the term tree
is morally similar to  the diamond and box modalities ($\langle \cdot \rangle$ and $[ \cdot ]$) of the propositional modal $\mu$-calculus.
And the fixed-point constructor is much finer than the iterate operator of e.g.~\cite{RewriteStrat_CHK2003}.
Besides, we incorporate the left-choice strategy constructor and a pattern matching operation.

An occurence of a  fixed-point variable  $X$ is \emph{bound}  in a \ce $S$ if it is under the scope of a "$\mu X$".  Otherwise, it is said \emph{free}.
The set of bound variables of $S$ will be denoted by $\boundv{S}$. A \ce is \emph{closed} if all its fixed-point variables are bound.
We shall sometimes write  $S(X)$  to emphasize that the fixed-point variable $X$ is free in $S(X)$.


\begin{myexample}\label{Example:MuStrategy}
We informally illustrate the semantics of \ces through an example.
Consider the  \ce  defined by $S(X)=(u;\mbf{\tau}) \oplus (@1.X)$ and its iteration $\mu X.S(X)$, where  $u$ is a term and  $\tau$ is a context.
 When applied to a term $t$, the  \ce $\mu X.S(X)$  checks first whether $u$ matches with $t$.
If it is the case, then the context $\tau$ is inserted at the root of $t$ and stops, yielding the term $\tau[t]$. Otherwise,
the \ce jumps to the position $1$ of $t$, i.e. the  left-most  child of $t$,  and  reiterates  the procedure by applying  $\mu X.S(X)$ to  this child. If it reaches
the left-most  leaf of $t$  with which  $u$ does not match, then the \ce $\mu X.S(X)$ fails on $t$.
For instance, the application of $\mu X.S(X)$ to the term  $f(v,f(u,f(u,a)))$ gives $f(v,f(\tau[u],f(u,a)))$, while
it fails on $f(v,f(f(a,u),u))$.
\end{myexample}

\begin{remark}\label{navigation:part:remark}
  Notice that a \ce is composed of two parts: \emph{i.)} a navigation part that consists of the navigation strategies that browse the input term without changing it.
  These strategies are the pattern matching, the left-choice, the iteration, the jump, the conjunction,  the $\most$, and the ``$\mathtt{if}\textrm{-}\mathtt{then}$''.  
  And, \emph{ii.)} an insertion part that modifies the input term and consists of an insertion of contexts.
\end{remark}

\begin{definition}[Unfolding]
  \label{ufold:iteration:def}
For any  \ce  $S(X) \in \ceSet$, and $n \ge 1$,  we define the \emph{unfolding} of  $\mu X.S(X)$  which replaces the fixed-point operator on $X$ by $n$-iterations as follows
\begin{align*}
\mu^{0}X.S(X) \uberEq{def} \emptylist &&\tand && \mu^{n}X.S(X) \uberEq{def} S(\mu^{n-1}X.S(X)).
\end{align*}
\end{definition}

\begin{example}[Unfolding]
  \label{ufold:iteration:example}
   For a  pattern $u \in \mycal{T}$ and a context  $\tau$, let
    \begin{align*}
      {S}(X)  &=(u;@\varepsilon.\tau) \oplus @1.X
    \end{align*}
    be  a \ce.
   We give examples of the replacement of the fixed-point operator of  $\mu X.S(X)$  by $n$-iterations, for $n=0,1,2$,  as follow:
    \begin{align*}
     \mu^{0} X.S(X)  &=  \emptylist.  \\
     \mu^{1} X.S(X) & = S(\mu^{0} X.S(X))  \\
                   & = S( \emptylist)  \\
                   & = (u;@\varepsilon.\tau) \oplus @1.\emptylist.\\
     \mu^{2} X.S(X) & = S(\mu^{1} X.S(X))  \\
                    &= S\big((u;@\varepsilon.\tau) \oplus @1.\emptylist \big) \\
                   & =  (u;@\varepsilon.\tau) \oplus @1.\big( (u;@\varepsilon.\tau) \oplus @1.\emptylist \big). 
    \end{align*}
\end{example}

The formal definition of the semantics of \ces follows.
\begin{definition}[Semantics of \ces]
\label{SemanticsOfCEStrategies}
 The \emph{semantics} of a closed \ce $S$ is the function
$\sembrackk{S} : \funset{\mathcal{T} \cup \mathbb{F}}$,
which is defined inductively  as follows.
\begin{align*}
&  \sembrackk{\emptylist}(t)  \uberEq{def} \fail. \\
&  \sembrackk{{S}}(\fail)  \uberEq{def} \fail. \\
& \sembrackk{u;S}(t)      {\overset{def}{=}}%
                                    \begin{cases}
                                      \sembrackk{S}(t)  & \textrm{if } \match{u}{t}, \\
                                             {\mathbb{F}} & \text{otherwise}.
                                    \end{cases} \\
& \sembrackk{@\varepsilon.\mbf{\tau}}(t) \uberEq{def}  \tau[t]. \\
& \sembrackk{{S}_{1}\oplus {S}_{2}}(t) {\overset{def}{=}}%
\begin{cases}
\sembrackk{{S}_{1}}(t) & {\text{if }}\sembrackk{{S}_{1}}(t)\neq {\mathbb{F}}, \\
\sembrackk{{S}_{2}}(t) & \text{otherwise.}%
\end{cases}    \\
& \sembrackk{\mu X. {S}(X)}(t)  \uberEq{def} \sembrackk{\mu^{\delta(t)} X. S(X)}(t). \\
&\sembrackk{\tifthen{S_1}{{S}}}(t)  \uberEq{def}
\begin{cases}
  \sembrackk{{S}}(t)   & \textrm{if }   \sembrackk{S_1}(t) \neq \fail ,\\
  \fail & \textrm{otherwise}.
\end{cases} \\
&\sembrackk{@p.{S}}(t)  \uberEq{def}
\begin{cases}
  t[\sembrackk{{S}}(t_{|p})]_{p}   & \textrm{if } \sembrackk{{S}}(t_{|p}) \neq \fail  \tand p \in \PPos(t), \\
  \fail & \textrm{otherwise}.
\end{cases} \\
&\sembrackk{\bigand_{i=1,n} @p_i.S_i}(t)  \uberEq{def}   
  \begin{cases}
   \big(\eta(\sembrackk{@p_n.{S}_n})  \circ \cdots \circ \eta(\sembrackk{@p_1.{S}_1})\big)(t) \, &\textrm{ if } \exists i\in[1,n] \tst \sembrackk{@p_i.{S}_i} (t)\neq \fail,  \\
\fail & \totherwise.
 \end{cases}\\
&\sembrackk{\most({S})}(t)  \uberEq{def} \sembrackk{\bigand_{i=1,ar(t)} @i.S}(t).
\end{align*}
We shall refer to  $\sembrackk{S}(t)$ as the application of  $S$  to $t$.
\end{definition}

Notice that when  the application of $\bigand_{i=1,n} @p_i.S_i$ or of $\mu X.S(X)$ fails, it does not return the input term untouched (i.e. it does not behave as the identity), but fails as well.
The reason is that  we want a fine  semantics that distinguishes  between the identity that operates successfully  and returns the input term (e.g. $@\varepsilon.\square$), and the failure that indicates that the \ce was not applied, which may launch other \ces.       
Notice also that $\bigand_{i=1,n} @p_i.S_i$ fails if and only if  each $@p_i.S_i$ fails, and not just one of them fails. 
This is important  because we want to make the semantics of $\bigand_{i}$ compatible with that of $\most$ in terms of failure, that is why  we expressed the latter in terms of the former,
and the only reason for that is to be able to unify $\bigand_{i}$ with  $\most$, see Section~\ref{unification:combination:section}, and  remaining in the same framework of \ces. Otherwise, a richer semantics in terms of
handling the failure requires the framework~\cite{combination:2019:rudi-strategies-faillure} in which the failures are handled explicitly in the formalism, making it impractical.

The general definition of the  semantics of the fixed-point constructor requires an unnecessary  machinery involving Knaster-Tarski fixed-point theorem \cite{Tarski55} and complete lattices.
However, due to  the particular nature of \ces, we gave an adhoc definition of the fixed-point \ce  by
$\sembrackk{\mu X. {S}(X)}(t)  \uberEq{def} \sembrackk{\mu^{\delta(t)}X.S(X)}(t)$, which is the same as that given by the least-fixed point.
The justification  of the    iteration of  ${S}(\emptylist)$ at most $\delta(t)$ times, the depth of $t$, is that the navigation part of a \ce does not change the input term $t$, see Remark~\ref{navigation:part:remark} and Example~\ref{semantics:ces:example}.
Therefore,  either the \ce ${S}$  progresses on the term $t$ and will reach the leaves of $t$ after at most $\delta(t)$ iterations,
or ${S}$ does not progress  and in this case it fails after any iteration.
Examples of \ces  that do not progress are $\mu X. X$ and $\mu X. (u,X)$ for a term $u$.
Technically, we show in Corollary~\ref{general-fixed-point-corollary} that, for every term $t$, the \ce  $R=\mu^{\delta(t)} X.S(X)$ is a fixed-point of $S(X)$ in the sense that  $\sembrackk{S(R)}(t) =\sembrackk{R}(t)$.
Notice that any \ce of the form  $\mu X. S(X,X)$, in which $X$ occures twice,  can be turned into  the  equivalent \ce $\mu X. \mu Y. S(X,Y)$ in which $X$ occurs once.
This equivalence can be proved by induction on $S(X,X)$, and more generaly it holds for any $\mu$-calculus~\cite{rudimemt:mu-calculus:book}.

\begin{example}[Semantics of \ces]
  \label{semantics:ces:example}
We give two examples of \ces and their semantics. 
Let $\tau,\tau'$ be  contexts in $\mycal{T}_{\square}$, and let $f(f(b))$ and $g(b,b',x)$ be  terms in $\mycal{T}$, where $b,b'$ are constants, and $x$ is a rewriting variable. 
\begin{enumerate}
\item
  Consider the \ce
\begin{align*}
  S(X)=(b;@\varepsilon.\tau) \oplus @1.X.
\end{align*}
The \ce $\mu X.S(X)$  checks whether the constant $b$ matches with an input term $t$, if it does  then $\tau$ is inserted at the root of $t$ (i.e. $b$) yielding $\tau[b]$, otherwise it jumps to the position
$1$ of $t$ and iterates the same operation. We next illustrate  the application of $\mu X.S(X)$  to $f(f(b))$.
  Since the depth of $f(f(b))$ is $\delta(f(f(b)))=3$, we need to compute $\mu^3 X.S(X)$, as we did for a similar \ce in  Example~\ref{ufold:iteration:example}, thus we get:
\begin{align*}
\mu^3 X.S(X) = (b;@\varepsilon.\tau) \oplus  @1.\Big((b;@\varepsilon.\tau) \oplus @1.\big( (b;@\varepsilon.\tau) \oplus @1.\emptylist \big)\Big).
\end{align*}
Hence the application of $\mu X.S(X)$ to $f(f(b))$ yields
\begin{align*}
  \sembrackk{\mu X.S(X)}\big(f(f(b))\big) &= \sembrackk{\mu^3 X.S(X)}\big(f(f(b))\big)   \\
  &= \sembrackk{ @1.\Big((b;@\varepsilon.\tau) \oplus @1.\big( (b;@\varepsilon.\tau) \oplus @1.\emptylist \big)\Big)}\big(f(b))\big) \\
  &= \sembrackk{(b;@\varepsilon.\tau) \oplus @1.\big( (b;@\varepsilon.\tau) \oplus @1.\emptylist \big) } (b)  \\
  &= \sembrackk{b;@\varepsilon.\tau} (b)  \\
  &= \sembrackk{@\varepsilon.\tau} (b)  \\
  &= \tau[b]. 
\end{align*}
\item  Consider the \ce
  \begin{align*}
    R(Y) = g(b,b',x); \big(@1.\tau \wedge @2.\tau' \wedge @3. Y\big) .
  \end{align*}
  The \ce $\mu Y.R(Y)$ expects a term of the form  $g(b,b',t')$, then it inserts $\tau$ on its  first child (i.e. $b$), and inserts $\tau'$ on its  second child (i.e. $b'$), then jumps to the
  third child (i.e. $t'$) and iterates the same operation. Hence the application of $\mu Y.R(Y)$ to the term $g(b,b',g(b,b',b))$, which has depth $3$, yields
  \begin{align*}
    \sembrackk{\mu Y.R(Y)} \big(g(b,b',g(b,b',b))\big)
     &= \sembrackk{\mu^3 Y.R(Y)}\big(g(b,b',g(b,b',b))\big)   \\
     &= \sembrackk{R\big(\mu^2 Y.R(Y)\big)}\big(g(b,b',g(b,b',b))\big)   \\
    &= \sembrackk{  g(b,b',x); \Big(@1.\tau \wedge @2.\tau' \wedge @3.\big(\mu^2 Y.R(Y)\big) \Big)}  \big(g(b,b',g(b,b',b))\big) \\
    &= \sembrackk{@1.\tau \wedge @2.\tau' \wedge @3.\big(\mu^2 Y.R(Y)\big)}  \big(g(b,b',g(b,b',b))\big) \\
    &= g\Big(\tau[b],\tau'[b'],\underbrace{\eta\big(\sembrackk{\mu^2 Y.R(Y)}(g(b,b',b))\big)}_{t''}\Big),
  \end{align*}
  hence 
\begin{align*}
  t''&=\eta\big(\sembrackk{R(\mu^1 Y.R(Y))}(g(b,b',b)) \\
  &= \eta\Big( \sembrackk{  g(b,b',x); \big(@1.\tau \wedge @2.\tau' \wedge @3.\big(\mu^1 Y.R(Y)\big) \big)}  \big(g(b,b',b)\big) \Big) \\
  &= g\Big(\tau[b],\tau'[b'],\eta\big(\sembrackk{\mu^1 Y.R(Y)}(b)\big)\Big) \\
  &= g\Big(\tau[b],\tau'[b'],\eta\big(\sembrackk{R(\mu^0 Y.R(Y))}(b)\big)\Big) \\
  &= g\Big(\tau[b],\tau'[b'],\eta\big(\sembrackk{R(\emptylist)}(b)\big)\Big) \\
  &= g\Big(\tau[b],\tau'[b'],\eta\big(\sembrackk{g(b,b',x); \big(@1.\tau \wedge @2.\tau' \wedge @3. \emptylist\big) }(b)\big)\Big) \\
  &= g(\tau[b],\tau'[b'],b).
\end{align*}
Summing up, we get
  \begin{align*}
    \sembrackk{\mu Y.R(Y)} \big(g(b,b',g(b,b',b))\big) &=  g\big(\tau[b],\tau'[b'],g(\tau[b],\tau'[b'],b)\big).
\end{align*}
\end{enumerate}
\end{example}
  
In the following example we show how to encode the  two  standard traversal strategies  $\TDDD$ and  $\IDDD$ in our formalism using  the fixed-point  constructor.
\begin{myexample}
    \label{ex:TD:strategy}
    In  what follows we assume that ${S}$ is a \ce.
    We recall that, when applied to a term $t$, the \ce  $\TDDD({S})$ tries to apply ${S}$ to the maximum  of the sub-terms of $t$ starting from the root of $t$,
    it stops when it is successfully applied.
    And  when applied to a term $t$, the \ce  $\IDDD({S})$ tries to apply ${S}$ to the maximum  of the sub-terms of $t$ starting from the leaves of $t$,
    it stops when it is successfully applied.
    Hence,
    \begin{align*}
    \TDDD(S)             & := \mu X. \big(S \oplus  \most(X) \big)  &\tand&&     \IDDD(S)             & := \mu X. \big(\most(X)  \oplus  S \big).
    \end{align*}
\end{myexample}

\begin{definition}
\label{equivalence:ces:def}
Let  $S,S'$  be \ces  and $n\ge 1$ an integer.  We shall write
\begin{enumerate}[i)]
\item ${S} \equiv {S}'$  iff $\sembrackk{{S}} = \sembrackk{{S}'}$. In this case, $S$ and $S'$ are called \emph{equivalent}.
\item ${S} \equiv_{n} {S}'$  iff $\sembrackk{{S}}(t) = \sembrackk{{S}'}(t)$ for any term $t$ of depth $\delta(t) = n$. In this case,  $S$ and $S'$ are called \emph{$n$-equivalent}.
\end{enumerate}
\end{definition}

Notice that  "$\equiv$" is an equivalence relation  and that $S$ and $S'$ are  equivalent iff they are $n$-equivalent  for any $n \ge 1$.





\section{Unification and combination of \ces}
\label{unification:combination:section}

The problem now is to extend the operations of unification and combination of position-based \ces (i.e. Definitions~\ref{unif:posi:non:empty:def} and~\ref{comb:posi:def}) to the larger class of  \ces.
These two   extensions  must fulfill a correctness criterion that  will be devised in Subsection~\ref{correctness:criterion:unif:comb:subsec}.
The subsequent subsections are devoted to the definition of  the extension of unification  (Definition~\ref{unification:def})  and the extension of the combination (Definition~\ref{combination:def}).
Then we  give   an  example of unification of \ces and comment it.

\subsection{A correctness criterion for the extension of the unification and combination to  \ces}
\label{correctness:criterion:unif:comb:subsec}
Since there are many ways to  define  an extension of the  unification operation  from  position-based \ces (i.e. the class $\mycal{E}$) to \ces (i.e. the class $\ceSetCan$), one needs a criterion that both guides the elaboration of a definition  and ensures its correctness. Such a criterion should impose a  compatibility between  the unification operation upon  $\mycal{E}$  and its extension to the larger class~$\ceSetCan$, in the sense that the former operation should stand as the  basis for the latter.

For this purpose, out of a term in $\mycal{T}$ and a \ce in $\ceSetCan$, we shall  construct a unique (up to a permutation of parallel positions) position-based \ce in $\mycal{E}$.
That is, we shall define a  mapping
\begin{align*}
  \Psi :  \mycal{T} \longrightarrow  \ceSet  \longrightarrow \eceSet
\end{align*}
 that associates to any  term   $t$  in $\mycal{T}$ and any closed \ce $S$ in $\ceSet$,  a position-based \ce $(\Psi(t))(S)$ in $\eceSet$, denoted simply by $\Psi_t(S)$,  such that the semantic equivalence is preserved  in the following sense: 
 \begin{align}
   \label{psi:sem:equiv:eq}
 \sembrackk{\Psi_t(S)}(t) = \sembrackk{S}(t).
\end{align}

Since the mapping $\Psi$ takes into account the semantics, then the correctness criterion is nothing but the compatibility between the unification upon  $\eceSet$ and its extension to  $\ceSet$, i.e.
for any  term $t$, the following two operations yield the same result: 
\begin{enumerate}[{i.)}]
\item the unification of  two \ces in $\ceSetCan$,  followed by the  mapping of the result to $\mycal{E}$ by $\Psi_t$, and 
\item the mapping of each of these two \ces to  $\mycal{E}$ by $\Psi_t$, followed by the  unification of the resulting position-based \ces.
\end{enumerate}  
This natural correctness criterion  will be  formalized   in Definition~\ref{correctness:criterion:def} for both the unification and combination.
 However, to simplify the exposition we shall not define the mapping $\Psi$ here but in Definition~\ref{psi:def} of Section~\ref{Psi:construction:section}, since the statement of the main results does not require
 this definition. Furthermore, we shall show in Lemma \ref{psi:sem:lemma} of Section~\ref{Psi:construction:section} that the thus defined  $\Psi$ preserves the semantic equivalence in the sense of Eq.(\ref{psi:sem:equiv:eq}). 
 

\begin{definition}[Correctness criterion for the extension of $\combb$ and $\comb$]
  \label{correctness:criterion:def}
 An extension $\combb: \ceSetCan  \,\times\, \ceSetCan   \longrightarrow \ceSetCan$ of the unification  $\combb: \mycal{E} \,\times\,  \mycal{E}   \longrightarrow  \mycal{E}$  is \emph{correct}, if and only if,
 for every term $t \in \mycal{T}$ and for every \ces  $S$ and $R$  in $\ceSet$,  we have that 
\begin{align*}
  \Psi_t(S \combb R) & = \Psi_t(S)\combb \Psi_t(R).
\end{align*}

Similarly,  an extension $\comb: \ceSetCan  \,\times\, \ceSetCan   \longrightarrow \ceSetCan$ of the combination  $\comb: \mycal{E} \,\times\,  \mycal{E}   \longrightarrow  \mycal{E}$  is correct, if and only if,
 for every term $t \in \mycal{T}$ and for every \ces  $S$ and $R$  in $\ceSet$,  we have that 
\begin{align*}
  \Psi_t(S \comb R) & = \Psi_t(S)\comb \Psi_t(R).
\end{align*}
That is, the following diagrams commute.
    \[\begin{tikzcd}
\mycal{C} \times \mycal{C} \arrow{r}{\combb} \arrow[swap]{d}{\Psi_t \times \Psi_t} &  \mycal{C} \arrow{d}{\Psi_t} \\
\mycal{E} \times \mycal{E}  \arrow{r}{\combb} & \mycal{E}
\end{tikzcd}
\;\;\;\;\;\;\;
\begin{tikzcd}
\mycal{C} \times \mycal{C} \arrow{r}{\comb} \arrow[swap]{d}{\Psi_t \times \Psi_t} &  \mycal{C} \arrow{d}{\Psi_t} \\
\mycal{E} \times \mycal{E}  \arrow{r}{\comb} & \mycal{E}
\end{tikzcd}
\]
\end{definition}

\subsection{Sub-\ces,  memory and  pre-\ces}
Since  the unification of \ces will be defined by induction, we need to define  the notion of the set of sub-\ces of a given \ce. With a slight modification allowing $S(\mu X.S(X))$ to be in the set of such sub-\ces of $\mu X.S(X)$, we also define the set of augmented sub-\ces.

\begin{definition}[Sub-\ces of a \ce]
  \label{def:sub-ces-augmented}
Given a \ce $S$, we inductively define the finite set of \emph{sub-\ces} of $S$, denoted by   $\Phi(S)$, as well as the finite set of  \emph{augmented sub-\ces} of $S$, denoted by   $\widetilde\Phi(S)$, which are similar apart for the fixed-point \ces:
\begin{align*}
\Phi(\emptylist)          &= \set{\emptylist},                          &\widetilde\Phi(\emptylist) &= \set{\emptylist},\\
\Phi(X)                  &= \set{X},                                    &\widetilde\Phi(X)   &= \set{X}, \\
\Phi(@\varepsilon.\mbf{\tau})  &= \set{@\varepsilon.\mbf{\tau}},        &\widetilde\Phi(@\varepsilon.\mbf{\tau})  &= \set{@\varepsilon.\mbf{\tau}},\\
\Phi(u;S)              &= \set{u;S} \cup  \Phi(S),                      &\widetilde\Phi(u;S)  &= \set{u;S} \cup  \widetilde\Phi(S),  \\  
\Phi(@p.S)               &= \set{@p.S} \cup \Phi(S),                    &\widetilde\Phi(@p.S) &= \set{@p.S} \cup \widetilde\Phi(S), \\ 
\Phi(S_1 \oplus  S_2)     &=  \set{S_1 \oplus S_2} \cup  \Phi(S_1) \cup \Phi(S_2),              &\widetilde\Phi(S_1 \oplus  S_2)     &=  \set{S_1 \oplus S_2} \cup  \widetilde\Phi(S_1) \cup \widetilde\Phi(S_2), \\
\Phi(\bigand_{i=1,n} S_i)   & =  \set{\bigand_{i=1,n} S_i} \cup \bigcup_{i=1,n} \Phi(S_i),        &\widetilde\Phi(\bigand_{i=1,n} S_i)   & =  \set{\bigand_{i=1,n} S_i} \cup \bigcup_{i=1,n}  \widetilde\Phi(S_i), \\
\Phi\big(\tifthen{S_1}{S}\big)   & = \set{\tifthen{S_1}{S}} \cup  \Phi(S_1) \cup \Phi(S),       &\widetilde\Phi\big(\tifthen{S_1}{S}\big)   & = \set{\tifthen{S_1}{S}} \cup \widetilde\Phi(S_1) \cup \widetilde\Phi(S), \\
\Phi(\mu X.S(X))          & = \set{\mu X.S(X)} \cup   \Phi(S(X)).                               &\widetilde\Phi(\mu X.S(X))& = \set{\mu X.S(X)}  \cup \widetilde\Phi(S(X))  \cup  \Phi\big(S \big(\mu X.S(X)\big)\big).
\end{align*}
A \ce $R$ is said to be a sub-\ce of $S$ if $R$ is in $\Phi(S)$.

Similarly, the set of all \emph{fixed-point sub-\ces} of $S$, denoted by $\Phi_{\mu}(S)$,  as well as the set of all \emph{augmented fixed-point sub-\ces} of $S$, denoted by $\widetilde\Phi_{\mu}(S)$, are defined  similarly  apart for the fixed-point \ces: 
\begin{align*}
\Phi_{\mu}(\emptylist)          &= \emptyset,                                     &\widetilde\Phi_{\mu}(\emptylist)               &= \emptyset,  \\
\Phi_{\mu}(X)                  &= \emptyset,                                      &\widetilde\Phi_{\mu}(X)                        &= \emptyset, \\
\Phi_{\mu}(@\varepsilon.\mbf{\tau})  &= \emptyset,                                &\widetilde\Phi_{\mu}(@\varepsilon.\mbf{\tau})  &= \emptyset,\\
\Phi_{\mu}(u;S)              &=   \Phi_{\mu}(S),                                  &\widetilde\Phi_{\mu}(u;S)                       &= \widetilde\Phi_{\mu}(S),\\ 
\Phi_{\mu}(@p.S)               &=  \Phi_{\mu}(S),                                 &\widetilde\Phi_{\mu}(@p.S)             &= \widetilde\Phi_{\mu}(S), \\ 
\Phi_{\mu}(S_1 \oplus  S_2)     &=   \Phi_{\mu}(S_1) \cup \Phi_{\mu}(S_2),         &\widetilde\Phi_{\mu}(S_1 \oplus  S_2)     &= \widetilde\Phi_{\mu}(S_1) \cup \widetilde\Phi_{\mu}(S_2),\\
\Phi_{\mu}(\bigand_{i=1,n} S_i)   & = \bigcup_{i=1,n}  \Phi_{\mu}(S_i),             &\widetilde\Phi_{\mu}(\bigand_{i=1,n} S_i)   & =\bigcup_{i=1,n}  \widetilde\Phi_{\mu}(S_i),\\
\Phi_{\mu}\big(\tifthen{S_1}{S}\big)   & =  \Phi_{\mu}(S_1) \cup \Phi_{\mu}(S),    &\widetilde\Phi_{\mu}\big(\tifthen{S_1}{S}\big)   & =  \widetilde\Phi_{\mu}(S_1) \cup \widetilde\Phi_{\mu}(S),\\
\Phi_{\mu}(\mu X.S(X))          & = \set{\mu X.S(X)} \cup \Phi_{\mu}(S(X)).       &\widetilde\Phi_{\mu}(\mu X.S(X))          & = \set{\mu X.S(X)}  \cup \widetilde\Phi_{\mu}(S(X))    \cup \Phi_{\mu}(S(\mu X.S(X))). 
\end{align*}
\end{definition}

Clearly, $\Phi_{\mu}(S) \subset \Phi(S) \subset \widetilde\Phi(S)$  and $\Phi_{\mu}(S) \subseteq \widetilde\Phi_{\mu}(S)$  and  $\tilde\Phi_\mu(S) \subseteq \tilde\Phi(S)$.
Notice that if $S$ is fixed-point free, then $\Phi(S)=\widetilde\Phi(S)$ and  $\Phi_{\mu}(S) = \widetilde\Phi_{\mu}(S) = \emptyset$.
Indeed,   the set of augmented sub-\ces    $\widetilde\Phi(S)$ is finite and this can be easily shown by induction on $S$.
We illustrate the Definition~\ref{def:sub-ces-augmented} with the following example.
\begin{example}[Of $\Phi$, $\widetilde{\Phi}$, $\Phi_{\mu}$ and $\widetilde\Phi_{\mu}$]
  \label{sub-ces:example}
   For a given pattern $u \in \mycal{T}$ and a context $\tau$, let
    \begin{align*}
      {S}(X)  &=(u;@\varepsilon.\tau) \oplus @1.X
    \end{align*}
    be a \ce. Hence the sets  $\Phi(\mu X.S(X))$, $\widetilde{\Phi}(\mu X.S(X))$, $\Phi_{\mu}(\mu X.S(X))$ and $\widetilde\Phi_{\mu}(\mu X.S(X))$  are easily computed as follows:  
    \begin{align*}
      \Phi(\mu X.S(X)) &= \{ \mu X.S(X) \} \cup \Phi(S(X)) \\
      &= \{ \mu X.S(X) \} \cup \{S(X)\}  \cup  \{(u;@\varepsilon.\tau), (@\varepsilon.\tau), @1.X, X \}   \\
      &= \{ \mu X.S(X), S(X),  (u;@\varepsilon.\tau), (@\varepsilon.\tau), @1.X, X \},
    \end{align*}
   and 
    \begin{align*}
      \widetilde{\Phi}(\mu X.S(X))
      &=  \set{\mu X.S(X)}  \cup \widetilde\Phi(S(X)) \cup  \Phi\big(S \big(\mu X.S(X)\big)\big) \\
      &=  \set{\mu X.S(X)}   \cup   \Phi(S(X)) \cup  \Phi\big(S \big(\mu X.S(X)\big)\big)  \tag{Since $S(X)$ is fixed-point free}\\
      &=  \set{\mu X.S(X)} \cup \Phi(S(X)) \cup \{S(\mu X.S(X))\} \cup \Phi \big( (u;@\varepsilon.\tau) \oplus @1.\mu X.S(X)\big)   \tag{Definition of $S(\mu X.S(X))$} \\
      &=  \set{\mu X.S(X)} \cup \Phi(S(X)) \cup \{S(\mu X.S(X))\} \cup \Phi(u;@\varepsilon.\tau) \cup \Phi(@1.\mu X.S(X)\big)    \\
      &=  \set{\mu X.S(X)} \cup \Phi(S(X)) \cup \{S(\mu X.S(X))\} \cup  \Phi(u;@\varepsilon.\tau) \cup \{@1.\mu X.S(X)\} \cup \Phi(\mu X.S(X)\big)  \\
      &=  \set{\mu X.S(X)} \cup \{S(\mu X.S(X))\} \cup \{@1.\mu X.S(X)\} \cup \Phi(\mu X.S(X)\big)  \tag{Since $\Phi(S(X))$ and $\Phi(u;@\varepsilon.\tau)$ are a subset of $\Phi(\mu X.S(X))$} \\
      &= \{ \mu X.S(X)\} \cup   \{S(\mu X.S(X))\} \cup \{@1.\mu X.S(X)\} \cup \{S(X),  (u;@\varepsilon.\tau), (@\varepsilon.\tau), @1.X, X\} \tag{The expression of $\Phi((\mu X.S(X))$ was computed above} \\
            &= \{ \mu X.S(X), S(\mu X.S(X)), @1.\mu X.S(X), S(X),  (u;@\varepsilon.\tau), (@\varepsilon.\tau), @1.X, X\}, \tand  \\
      \Phi_{\mu}(\mu X.S(X))  &= \set{\mu X.S(X)}, \tand \\
    \widetilde\Phi_{\mu}(\mu X.S(X)) & = \set{\mu X.S(X)}.
    \end{align*}
\end{example}

In the Example~\ref{sub-ces:example} above  we have that   $ \Phi_{\mu}(\mu X.S(X)) = \widetilde\Phi_{\mu}(\mu X.S(X))$, but this is not true in  general  as  shown in the following remark. 
\begin{remark}
The inclusion $\Phi_{\mu}(R) \subseteq \widetilde\Phi_{\mu}(R)$ is strict in general, that is,  
there is  a \ce $R$ such that $\Phi_{\mu}(R) \subsetneq \widetilde\Phi_{\mu}(R)$. This is achieved  by taking a \ce $R$ of the form:
  \begin{align*}
    R = \mu X. \mu Y. S(X,Y),
  \end{align*}
and noticing that the \ce $\mu Y. S(R,Y)$ is neither  in $\Phi_{\mu}(R)$ nor in  $\Phi(R)$, but in  $\widetilde\Phi_{\mu}(R)$ and in  $\widetilde\Phi(R)$.
\end{remark}
The unification of two \ces will be given by means of a  reduction system that requires  storing a piece of information, called \emph{memory}, related to the input  \ces. 
Roughly speaking, a memory is a set of triples where the  first and the second element of each triple  is a fixed-point sub-\ce   or an augmented \ce, and
the third element is a fixed-point variable. The idea behind the memory is that the unification of a fixed-point \ce  $\mu X.S(X)$ with a \ce $R$ amounts to the unification of
$S(\mu X.S(X))$ with $R$, or more precisely, the unification of $\mu X.S(X)$ with $R$ produces a \ce $\mu Z.T(Z)$, where $Z$ is a fresh-fixed point variable and $T(Z)$
is the unification of $S(\mu X.S(X))$ with $R$. To ensure that this process terminates we need to store the triple  $(\mu X.S(X),R,Z)$ in the memory so that $Z$ is produced whenever  $\mu X.S(X)$ is unified again with $R$.

The formal definition of the memory follows.

\begin{definition}[Memory]
  Given  an enumerable set $\fixset$ of fixed-point variables, as well as two \ces $S$ and $R$,
  we define   the set  of all \emph{memories} related to $S$ and $R$ with respect to $\fixset$, denoted by $\mathfrak{M}_{\fixset}(S,R)$ or simply  by $\mathfrak{M}(S,R)$, as the following set of sets of triples:
\begin{align*}
\mathfrak{M}(S,R)    &= \wp\Big(\big(\widetilde\Phi_{\mu}(S) \times (\widetilde\Phi(R)\setminus \fixset) \times \fixset \big) \;\cup\;   \big((\widetilde\Phi(S)\setminus \fixset) \times \widetilde\Phi_{\mu}(R) \times \fixset\big)\Big). 
\end{align*}
More generally, the set of all memories, denoted by $\mathfrak{M}$, is defined by 
\begin{align*}
\mathfrak{M}    &= \bigcup_{S,R \in \mycal{C}} \mathfrak{M}(S,R).
\end{align*}
An element in $\mathbf{\mathfrak{M}}(S,R)$ or in $\mathfrak{M}$ is called a memory. 
\end{definition}

An example of a memory related to two \ces follows.
\begin{example}[Memory]
    For given patterns $u,u' \in \mycal{T}$ and contexts $\tau,\tau'$, let
    \begin{align*}
      {S}(X)  &=(u;@\varepsilon.\tau) \oplus @1.X && \tand &      S'(X') &=(u';@\varepsilon.\tau') \oplus @1.X' 
    \end{align*}
    be  \ces.
    From  Example~\ref{sub-ces:example} above  we have that
    \begin{align*}
      \widetilde{\Phi}(\mu X.S(X))  &= \{ \mu X.S(X), S(\mu X.S(X)), @1.\mu X.S(X), S(X),  (u;@\varepsilon.\tau), (@\varepsilon.\tau), @1.X, X\}, \tand \\
      \widetilde{\Phi}(\mu X'.S'(X')) &=  \{ \mu X'.S(X'), S(\mu X'.S(X')), @1.\mu X'.S(X'), S(X'),  (u';@\varepsilon.\tau'), (@\varepsilon.\tau'), @1.X', X'\}.
    \end{align*}

Given $\fixset=\set{Z_1,Z_2,\ldots,Z_{6}}$, we  give an example of a  memory $\Eu{M}$ related to $\mu X.S(X)$ and $\mu X'.S'(X'))$ and $\fixset$, i.e. $\Eu{M} \in \mathfrak{M}_{\fixset}\big(\mu X.S(X),\mu X'.S'(X')\big)$.
\begin{align*}
      \Eu{M}
      =\big\{&\big(\mu X.S(X),  S'(X'), Z_1\big),               &&\big(S(X),\mu X'.S'(X'), Z_2\big), \\
      &\big(\mu X.S(X),  @\varepsilon.\tau', Z_3\big),         &&\big(@\varepsilon.\tau,\mu X'.S'(X'), Z_4\big), \\
      &\big(\mu X.S(X),  @1.X', Z_5\big),                      &&\big(@1.X,\mu X'.S'(X'), Z_6\big)  \big\}.
    \end{align*}
\end{example}
From now on we let  $\fixset$ to be an enumerable set of fixed-point variables.
Since the unification reduction system  will handle two \ces together with a memory, this new object is called a \emph{Pre-\ce} and defined next. 

\begin{definition}[Pre-\ces]
\label{Def:pre-HCE-strategies}
The class of  \emph{pre-\ces}  is  defined by the following grammar:
\begin{align*}
P  \; ::= \;  & S \gvert \tuple{S,S',\Eu{M}} \gvert u;{P} \gvert  P \oplus P   \gvert   \mu X.P  \gvert    @i.P \uand  @i'.P  \gvert  \most(P)  \gvert    \tifthen{S}{P} \\
\end{align*}
where 
$S,S'$ are \ces in $\mycal{C}$,
$\Eu{M}$ is a memory   in $\mathfrak{M}$,
$X$ is a fixed-point variable in $\fixset$,
$u$  is a  term   in $\mycal{T}$ 
and $i,i'$ are unitary  positions in $\mathbb{N}_{\varepsilon}$.
 The set of pre-\ces will be denoted by $\preceSet$. 
\end{definition}

Like in the modal $\mu$-calculus, it is easier and convenient to work with \ces
that make progress when applied to a term. Making progress is guaranteed by a
syntactic requirement, called monotonicity, that imposes that in each fixed-point sub-\ce $\mu X.S(X)$ there is at least a  position jump or a $\most$ from the root of $S(X)$ to $X$. 
\begin{definition}[Monotonicity of \ces]
\label{Monotony}A \ce  $T$ is \emph{monotonic}   if for any $\mu X.S(X) \in \Phi_{\mu}(T)$, there exist  \ces $S'(X)$ and $S''(X)$ each of  which is a sub-\ce of $S(X)$ such that 
 $S'(X)$ is either of the form  $@i.S''(X)$ where $i\in \mathbb{N}_{\varepsilon} \setminus\set{\varepsilon}$,  or of the form  $\most(S''(X))$.
\end{definition}
For instance, the \ce $\mu X.\big((u;@\varepsilon.\tau) \oplus @1.X\big)$  (resp. $\mu X.\big((u;@\varepsilon.\tau) \oplus \most(X)\big)$) is monotonic since there the jump "$@1$"  (resp. "$\most$") between $\mu X$ and $X$.
While  $\mu X.\big((u;@\varepsilon.\tau) \oplus X\big)$ is not monotonic.

We generalize next the condition of well-foundedness from position-based \ces to \ces.  
\begin{definition}[Well-founded \ces.]
\label{Well-founded:strategy:ext:def}
 A \ce  $S$  is \emph{well-founded} iff  every position-based \ce  that  is a sub-\ce  of $S$  is well-founded in the sense of Definition~\ref{Well-founded:simple:ext:def}.
 \end{definition}


\subsection{The procedure of unification of \ces }
From now on we shall  abuse of language and  refer to the extension of the unification operation from position-based \ces to \ces, as   simply the unification of \ces.
Before giving the procedure of unification of \ces, we need the following assumptions on the  structure of \ces. 

\begin{assumptions}
\label{global:assumptions:ces}
Throughout this paper,   each  \ce  is well-founded, monotonic, closed, and in which each fixed-point variable appears once,
and each of their sub-\ces which is of the form $@i.S'$ or $\bigwedge_i@i.S_i$, is preceded by a pattern-matching, i.e. $u ; @i.S'$  and   $u; \bigwedge_i@i.S_i$.
\end{assumptions}
These  assumptions do not exclude interesting cases, since either they  exclude cases which do not make sense (e.g. a \ce with free fixed-point variables, or not well-founded), or
they make the \ces easier to handle in the proofs without missing interesting cases, for instance, imposing that each fixed-point variable appears once is not a restriction since each \ce can be turned into
a  \ce with such a property by applying the following simplification operations which preserve the semantic equivalence (i.e. a \ce is semantically equivalent to its simplification).

\begin{simplifications}
\label{simplification:algorithms}
The simplification operations of \ces consists of:
\begin{enumerate}[(i)]
\item renaming identical bound variables, for instance $\mu X. S(X) \oplus \mu X. R(X)$ can be turned into  $\mu X. S(X) \oplus \mu Y. R(Y)$, this is known in the literature as the $\alpha$-conversion, and
\item  renaming identical occurrences of variables if they are bound to the same fixed-point operator, for instance   if $ S(X,X)$ is a \ce in which $X$ appears twice, then
  we can turn  $\mu X. S(X,X)$ into the  equivalent \ce $\mu X. \mu Y. S(X,Y)$, and 
\item  removing useless $\mu$ contractors, i.e. turning $\mu X.S$ into  $S$ when $X$ does not appear in $S$.
\end{enumerate}
\end{simplifications}


We define next the procedure of unification of \ces by means of a  reduction  system that operates on pre-\ces,
in which the pattern $u$ related to the position $i$ in \ce $S$ will be denoted by $\mathit{Patt}(S,i)$, or simply by  $\mathit{Patt}(i)$ when  $S$ is known. 
\begin{definition}
\label{reduction:unif:def}
We define the reduction  system $\Unif$ operating on pre-\ces and consisting of the following reduction rules with a decreasing order of priority.
   \begin{enumerate}

    \item   \begin{enumerate}[(a)]
         \item \label{final:1} $\tuple{\emptylist, S, \Eu{M}}     \reduce   \emptylist.$
         \item  \label{final:2}  $\tuple{S,\emptylist,\Eu{M}}   \reduce   \emptylist.$
                \end{enumerate}
         \item  \label{final:3}  $\tuple{@\varepsilon.\mbf{\tau}, @\varepsilon.\mbf{\tau}',\Eu{M}}  \reduce   @\varepsilon.(\mbf{\tau}\sbullet \mbf{\tau}').$
          \item \label{pattern:unif:item} \begin{enumerate}[(a)]
                 \item  \label{pattern:ext:1} $\tuple{(u;S), S',\Eu{M}}  \reduce  u; \tuple{S,S',\Eu{M}}.$
                 \item  \label{pattern:ext:2} $\tuple{S',(u;S),\Eu{M}}  \reduce  u; \tuple{S', S,\Eu{M}}.$
                 \end{enumerate}
\item \begin{enumerate}
       \item \label{list:ext'} $\tuple{@i.S,@i.S',\Eu{M}} \reduce @i.\tuple{S,S',\Eu{M}}$.
           
         \item \label{list:ext} If $S=\bigand_{i \in I} @i.S_i \uand @\varepsilon.\tau$  and $S'=\bigand_{j \in J} @j.S'_j \uand @\varepsilon.\tau'$ then \\
         \begin{align*}
           \tuple{S,S', \Eu{M}}  \reduce  \tifthen{S \& S'} {\bigand_{i \in I \cap J} @i.\big(\tuple{S_i,S'_i,\Eu{M}} \oplus S_i \oplus S'_i\big)\uand  R \uand  R'\uand @\varepsilon.(\tau \sbullet \tau')},
         \end{align*}
         where \begin{align*} R &= \bigand_{i \in I\setminus J} @i.S_i  &\tand &&
           R'&= \bigand_{j \in J\setminus I} @j.S'_j.
         \end{align*}
\end{enumerate}
          \item \begin{enumerate}[(a)]
        \item \label{choice:ext:1} $\tuple{(S_1 \oplus S_2), S,\Eu{M}}  \reduce \tuple{S_1,S,\Eu{M}}  \oplus \tuple{S_2, S,\Eu{M}}.$
        \item  \label{choice:ext:2} $\tuple{S,(S_1 \oplus S_2) ,\Eu{M}}  \reduce \tuple{S,S_1,\Eu{M}}  \oplus \tuple{S,S_2,\Eu{M}}.$
               \end{enumerate}
          \item \begin{enumerate}[(a)]
                \item  \label{if:ext:1} $\tuple{(\tifthen{S_1}{S_2}),{S},\Eu{M}}  \reduce \tifthen{S_1}{\tuple{S_2,{S},\Eu{M}}}$.
                \item  \label{if:ext:2} $\tuple{S, (\tifthen{S_1}{S_2}) ,\Eu{M}} \reduce \tifthen{S_1}{\tuple{S,S_2,\Eu{M}}} $.
                \end{enumerate}
           \item \begin{enumerate}[(a)]
             \item \label{most:ext:1} $\tuple{\most(S) ,  \most(S'),\Eu{M}} \reduce
                                                            \mathtt{\mathbf{If}}{\big(\most(S) \& \most(S')\big)}  \mathtt{\mathbf{ Then }} \, \most\big(\tuple{S, S',\Eu{M}} \oplus S \oplus S'\big)$.
                    \item \label{most:ext:2} $\tuple{\most(S), \bigand_{i \in I} @i.S_i ,\Eu{M}}  \reduce  \tuple{\bigand_{i \in [1,\arity(u)]} @i.S, \bigand_{i \in I}  @i.S_i,\Eu{M}}  \twhere u=\mathit{Patt}(i) $
              \item  \label{most:ext:3} $\tuple{\bigand_{i \in I} @i.S_i, \most(S),\Eu{M}}  \reduce  \tuple{\bigand_{i \in I}  @i.S_i,\bigand_{i \in [1,\arity(u)]} @i.S,\Eu{M}}  \twhere u=\mathit{Patt}(i) $
           \end{enumerate}

\item \begin{enumerate}[(a)]
    \item  \label{fixed:ext:1} \begin{align*}
             \tuple{\underbrace{\mu X. S(X)}_{\xi}, S',\Eu{M}}   \reduce
                \begin{cases} \mu Z. \tuple{S(\xi), S',\Eu{M}'},  & \tif (\xi,S',\cdot) \notin \Eu{M}, \\
                                           &\;\;\;\; \twhere \begin{cases} Z  &=  \fresh{\xi,S'}, \\
                                                       \Eu{M}'         &=   \Eu{M} \cup \set{(\xi,S',Z)}.\end{cases}\\
                                           &\\
                              Z  & \tif (\xi,S',Z) \in \Eu{M}.
                \end{cases}
           \end{align*}

          \item \label{fixed:ext:2}

            \begin{align*}
              \tuple{S',\underbrace{\mu X. S(X)}_{\xi},\Eu{M}}   \reduce
              \begin{cases} \mu Z.\tuple{S',S(\xi) ,\Eu{M}'},  & \tif (S',\xi,\cdot) \notin \Eu{M}, \\
                & \;\;\;\; \twhere \begin{cases}  Z     &=  \fresh{S',\xi}, \\
                                \Eu{M}'        &= \Eu{M} \cup \set{(S',\xi,Z)}. \end{cases} \\
                &\\
                Z  & \tif (S',\xi_1,Z) \in \Eu{M}.
              \end{cases}
           \end{align*}
            \end{enumerate}
\end{enumerate}
\end{definition}

\paragraph{Explanation of the rules}
Notice that, by construction,  for any tuple $\tuple{S,S',\Eu{M}}$  produced by the unification reduction system $\Unif$, the  memory $\Eu{M}$ is 
 redundancy free, that is, if   $(S_1,R_1,Z_1)$ and $(S_1,R_1,Z_2)$ are in $\Eu{M}$, then $Z_1=Z_2$.
We comment on the key points in Definition~\ref{reduction:unif:def}.
\begin{description}
\item[\texttt{Pattern matching}:] If we omit the memory $\Eu{M}$ in the rule~\ref{pattern:ext:1} for sake of simplicity, then the unification of $u;S$ with $S'$  is naturally $ u; S''$, where $S''$ is the unification
   of $S$ with $S'$,  since we want that the pattern $u$ proceeds the merging of $S$ and $S'$. 

\item[$\most$:] For the unification of two $\most$s (i.e. rule~\ref{most:ext:1}), we first recall the semantics of this constructor.
When  a \ce  $\most(S)$ is applied    to term $t$, the \ce  $S$ is  applied  to each of its children.
In particular, $\most(S)$ fails on $t$ if and only if $S$ fails on each of $t$'s children. Otherwise, when  $\most(S)$ succeeds on $t$, then  $S$  behaves as the identity
on the children  of $t$ on which it fails, see Definition~\ref{SemanticsOfCEStrategies}.
While unifying two \ces  $\most(S)$ and  $\most(S')$ we need to distinguish two  cases.
\emph{(i)} If one of these \ces fails, then the result should fail. This is achieved by the condition
$\mathtt{\mathbf{If}}{\big(\most(S) \& \most(S')\big)}$ in the resulting \ce.
\emph{(ii)} If both of them do not fail when applied to a term $t$ the we need to consider whether  each of them fails or not on each child
of $t$. In order to explain the $\most\big(\tuple{S, S',\Eu{M}} \oplus S \oplus S'\big)$ part in the resulting unified  \ce, let $t_i$ be a child of $t$,  and   consider  the four  cases: 
\begin{enumerate}[{(ii}.1)]
\item if both of $S$ and $S'$ succeeds on $t_i$, then we need to consider their unification. This explains the $\tuple{S, S',\Eu{M}}$ part.
\item If both of them fails on $t_i$, then their unification should fail on $t_i$ as well. But this holds since $\tuple{S, S',\Eu{M}} \oplus S \oplus S'$ fails as well.
\item If $S$ succeeds on $t_i$ while $S'$ fails on it, then the resulting unified \ce should apply  $S$ to $t_i$.   But this is achieved  by  $\tuple{S, S',\Eu{M}} \oplus S \oplus S'$ which
       is equal to $S$ since $\tuple{S, S',\Eu{M}}$ fails on $t_i$.
\item If $S$ fails  on $t_i$ while $S'$ succeeds on it, then this case is symmetric to the previous one.
\end{enumerate}

The unification of $\most(S)$ with a conjunction of position  jumps $\bigand_{i \in I} @i.S_i$ requires that we encode  $\most(S)$ into a  conjunction of position  jumps as well.

\item[\texttt{Fixed-points}:] 
The idea behind the unification of $\mu X.S(X)$ with $R$ (i.e. rule~\ref{fixed:ext:1}) is to unfold $\mu X.S(X)$  to $S(\mu X.S(X))$  and then  to  unify $S(\mu X.S(X))$ with $R$.
Indeed this process is terminating thanks to the use of memory since we memorized that we passed through the unification of $\mu X.S(X)$ with $R$ and we
generated a fresh fixed-point variable $Z$, this is done by adding the tuple $(\mu X.S(X),R,Z)$ to the memory.  Thanks to the memory,
the next time we face the unification of  $\mu X.S(X)$ with $R$, we shall produce $Z$.  
\end{description}

We shall show in Subsection~\ref{termination:confluence:reduction:unif:sec} that the unification system $\Unif$ is terminating and confluent.
This allows us to define the unification operation in terms of the normal  form with respect to  $\Unif$.
The normal form of  $\tuple{S,S',\emptyset}$ will be denoted by  $\NF \tuple{S,S',\emptyset}$.
The definition of the unification and combination of \ces follow.
We emphasize  that throughout this paper, as far as we are dealing with  the unification and combination,
we assume that the two  sets of the  fixed-point variables of the two input \ces are disjoint.  

\begin{definition}[Unification of  \ces]
    \label{unification:def}
    The \emph{unification}  of   \ces is the  binary operation \\$\combb: \ceSetCan  \,\times\, \ceSetCan   \longrightarrow \ceSetCan$,
    defined  for any  $S$ and $S'$ in $\mycal{C}$ by
        \begin{align*}
          S \combb S' \uberEq{def}  \NF \tuple{S,S',\emptyset}. 
        \end{align*}
\end{definition}
Notice that the unification of two \ces yields a \ce that  captures the effect of both insofar as they are compatible,
where  the compatibility of two \ces depends on each input term and is related to their successful application.
That is, if $S$ and $S'$ can be applied successfully to a term $t$, then the application of their unification $S \combb S'$ on  $t$  succeeds as well and reproduces the effect that $S$ and $S'$ being applied simultaneously.
However, the incompatible effects are covered by the combination in the sense that if $S \combb S'$ fails on a term $t$, then  $S$ or $S'$  fails, and the combination returns the non-failing one, if any.
This justifies the following definition of the combination.
\begin{definition}[Combination of  \ces]
    \label{combination:def}
    The \emph{combination} of   \ces is the  binary operation \\$\comb: \ceSetCan  \,\times\, \ceSetCan   \longrightarrow \ceSetCan$,
    defined  for any  ${{S}}$ and ${{S}}'$ in ${C}$ by
        \begin{align*}
    {{S}} \comb  {{S}}' \uberEq{def}  (S \nfcombb  S') \oplus {{S}} \oplus  {{S}}'.
        \end{align*}
\end{definition}

\begin{myexample}[Unification of \ces]
  \label{unif:example}
  We give an example of the unification of two fixed-point \ces.
    For given patterns $u,u' \in \mycal{T}$ and contexts $\tau,\tau'$, let
    \begin{align*}
      {S}(X)  &=(u;@\varepsilon.\tau) \oplus @1.X && \tand &      S'(X') &=(u';@\varepsilon.\tau') \oplus @1.X' \\
      \xi     &=\mu X.S(X)                 && \tand &      \xi' &=\mu X'.S'(X')
    \end{align*}
    be  \ces.
    We  compute the  unification $\mu X. {S}(X) \combb \mu X'.{S}'(X')$ which is the normal form of the tuple \\
      $\tuple{\mu X. {S}(X),\mu X'. {S}'(X'),\emptyset}$  by applying the  reduction rules of $\Unif$ 
    given  in Definition~\ref{reduction:unif:def}.
    Let 
\end{myexample}
    \begin{align}
    (*) &=\tuple{\mu X. {S}(X),\mu X'. {S}'(X'),\emptyset}  \notag \\
        & \reduces \mu Z. \tuple{S(\xi),\xi',\set{(\xi,\xi',Z)}}                                                             \tag{Rule~\ref{fixed:ext:1}} \\
        & \reduces \mu Z.\mu Z'. \tuple{S(\xi),S'(\xi'),\underbrace{\set{(\xi,\xi',Z),(S(\xi),\xi',Z')}}_{\Eu{M}}}             \tag{Rule ~\ref{fixed:ext:2}} \\
        & = \mu Z.\mu Z'. \tuple{(u; @\varepsilon.\tau) \oplus @1.\xi,S'(\xi'),\Eu{M}}                                              \tag{Def. of $S(X)$} \\
        & \reduces  \mu Z.\mu Z'.\big( \underbrace{\tuple{u;@\varepsilon.\tau,S'(\xi'),\Eu{M}}}_{(\textrm{I})} \oplus   \underbrace{\tuple{@1.\xi,S'(\xi'),\Eu{M}}}_{(\textrm{II})}  \big).  \tag{Rule ~\ref{choice:ext:1}} \\
    (\textrm{I}) & \reduces u;\tuple{@\varepsilon.\tau,S'(\xi'),\Eu{M}}                                                                        \tag{Rule ~\ref{pattern:ext:1}}\\ 
       &   =       u;\tuple{@\varepsilon.\tau,(u';@\varepsilon.\tau') \oplus @1.\xi',\Eu{M}}                                                      \tag{Def. of $S'(X')$}\\ 
       &   \reduces   u;\big(\tuple{@\varepsilon.\tau, u';@\varepsilon.\tau' ,\Eu{M}} \oplus \tuple{@\varepsilon.\tau, @1.\xi',\Eu{M}} \big)       \tag{Rule ~\ref{choice:ext:2}}\\ 
       &   \reduces   u;\big((u';\tuple{@\varepsilon.\tau, @\varepsilon.\tau' ,\Eu{M}}) \oplus \tuple{@\varepsilon.\tau, @1.\xi',\Eu{M}} \big)       \tag{Rule~\ref{pattern:ext:2}}\\ 
       &   \reduces   u;\big((u'; @\varepsilon.(\tau\sbullet\tau')) \oplus \tuple{@\varepsilon.\tau, @1.\xi',\Eu{M}} \big)                             \tag{Rule~\ref{final:3}}\\ 
       &   \reduces   u;\big((u'; @\varepsilon.(\tau\sbullet\tau')) \oplus  (\tifthen{@1.\xi'}{@1.\xi' \uand @\varepsilon.\tau})   \big).     \tag{Rule~\ref{list:ext}}\\ 
 &  \notag \\ 
    (\textrm{II}) &  =  \tuple{@1.\xi,\,(u';@\varepsilon.\tau') \oplus @1.\xi',\,\Eu{M}}                                                  \tag{Def. of  $S'(X')$}\\ 
         &  \reduces  \tuple{@1.\xi,u';@\varepsilon.\tau',\Eu{M}} \oplus \tuple{@1.\xi,@1.\xi',\Eu{M}}                                \tag{Rule~\ref{choice:ext:2}} \\ 
         &  \reduces  \big( u';\tuple{@1.\xi,@\varepsilon.\tau',\Eu{M}}\big) \oplus \tuple{@1.\xi,@1.\xi',\Eu{M}}                     \tag{Rule~\ref{pattern:ext:2}} \\ 
         &  \reduces  \big( u'; (\tifthen{@1.\xi}{@1.\xi \uand\varepsilon.\tau'})\big) \oplus \tuple{@1.\xi,@1.\xi',\Eu{M}}   \tag{Rule~\ref{list:ext}} \\ 
         &  =   \big( u'; \tifthen{@1.\xi}{@1.\xi \uand\varepsilon.\tau'}\big) \oplus @1.\tuple{\xi,\xi',\Eu{M}}                       \tag{Rule~\ref{list:ext'}} \\ 
         &  =   \big( u'; \tifthen{@1.\xi}{@1.\xi \uand\varepsilon.\tau'}\big) \oplus @1.Z.                                       \tag{Rule ~\ref{fixed:ext:1}  since $(\xi,\xi',Z) \in \Eu{M}$ }
    \end{align}

    Summing up, the unification  $(**)$ of $\mu X. {S}(X)$ and  $\mu X'. {S}'(X')$ is:
    \begin{align*}
           (**) &= \mu X. {S}(X) \,\combb\, \mu X'. {S'}(X')\\
                &= \mu Z. \mu Z'. \bigg( u;\big((u'; @\varepsilon.(\tau\sbullet\tau')) \oplus  (\tifthen{@1.\xi'}{@1.\xi' \uand @\varepsilon.\tau})   \big)   \\
                & \hspace{5cm} \oplus   \big( u'; \tifthen{@1.\xi}{@1.\xi \uand\varepsilon.\tau'}\big)  \\
                & \hspace{5cm} \oplus @1.Z \bigg).
    \end{align*}
    Notice that the fixed-point variable $Z'$ does not appear in the resulting \ce and therefore "$\mu Z'$" can be removed.
    The application of the resulting \ce $(**)$ to a term $t$ features four cases.
        \begin{enumerate}[i.)]
        \item  Either  both $u$ and $u'$  match with  $t$, and in this case the context $\mbf{\tau}'\sbullet \mbf{\tau}$ is  inserted at the root of $t$.
        \item  Or  only  $u$   matches  with $t$, and in this case   $\tau$  is inserted at the position $1$ of $t$ provided the \ce $\mu X'. {S}'(X')$ is applied successfully at the position $1$ of $t$.
        \item Or  only  $u'$  matches   with $t$, and in this case   $\tau'$ is inserted at the position $1$ of $t$ provided the \ce $\mu X. {S}(X)$    is applied successfully at the position $1$ of $t$.
        \item  Or both $\mu X. {S}(X)$ and $\mu X'. {S'}(X')$ are applied at the position $1$ of $t$.
        \end{enumerate}

   The unification of two \ces in which each fixed-point variable appears  once may yields a \ce in which a variable appears many times or does not appear at all, e.g. the $Z'$ in the Example~\ref{unif:example}.
   An attention will be payed to this issue since this assumption on the occurrences of fixed-point variables   is not preserved by unification.
   However,   other assumptions listed in Assumptions~\ref{global:assumptions:ces} are preserved.
   Namely, it is easy to show that the unification of two well-founded  \ces is a  well-founded one. And we shall show later that the unification of two monotonic \ces is a monotonic one as well.


\section{Statement of the  results}
\label{main:results:sec}
In this section we state the main results of this paper, that is, the correctness of the procedure of  unification and combination  stated in Subsection~\ref{correction:unif:sub-section:statement},
and the algebraic properties of the unification and combination of \ces stated in Subsection~\ref{alg:prop:subsection}. The proofs of these results can be found in Section \ref{proof:main:results:section}.

\subsection{Correctness of  the unification and combination procedures}
\label{correction:unif:sub-section:statement}

\newcounter{myvar-unif-theorem}
\setcounter{myvar-unif-theorem}{\value{theorem}} 

\begin{theorem}[Correctness of the unification]
 \label{main:theorem:1'}
 The unification of \ces is correct. That is,  
for every term $t \in \mycal{T}$ and for every \ces  $S$ and $R$  in $\ceSet$, 
we have that 
\begin{align*}
  \Psi_t(S \combb R) & = \Psi_t(S)\combb \Psi_t(R).
\end{align*}
\end{theorem}

\begin{theorem}[Correctness of the combination]
  \label{main:theorem:2'}
  The combination of \ces is correct.
That is, for every term $t \in \mycal{T}$ and  for every  \ces  $S$ and $R$  in $\ceSetCan$,
we have that
\begin{align*}
  \Psi_t(S \comb R)  =  \Psi_t(S) \comb \Psi_t(R).
\end{align*}
\end{theorem}


\subsection{Algebraic properties of the unification and  combination}
\label{alg:prop:subsection}
The existence of the neutral  elements  and the associativity property of the unification and combination are obvious for the sub-class of position-based \ces
but they are far from being  so for the larger class of \ces, and it is crucial and useful to have them. Namely, a user of  \ces needs know the algebraic properties of the structure he handles. 
For instance,  he needs combine  many \ces, and thus needs to know if this combination is associative and/or commutative.
Besides, the properties of the  neutral and absorbing elements allow one
to simplify \ces.

We notice that the neutral  elements  and the
associativity property of the unification and combination must be  understood at the semantic level and not at the syntactic level
since there are \ces which are syntactically different but
semantically equivalent. For instance, the \ces  $@\E.\square$ and
$(x;@\E.\square)$ and
 $(x;@\E.\square) \oplus (y;@\E.\square)$, where $x,y$ are variables, are all equivalent.
More generally, the algebraic properties of the unification and combination will be formulated in terms of equivalence classes  of \ces (with respect to the semantic equivalence relation) rather than syntactic  \ces.  

Technically speaking, since the semantic equiavalence  "$\equiv$" (Definition~\ref{equivalence:ces:def}) is an equivalence relation, we shall use the standard notation $[S]$ for the equivalence class of the \ce $S$, i.e. $[S]=\set{S' \in \ceSet \gvert S' \equiv S}$, 
and the notation $\ceSetEquiv$  for the quotient  set of $\ceSet$ by "$\equiv$", i.e. $\ceSetEquiv=\set{[S] \gvert S \in \ceSet}$.
 Moreover, the unification and combination of the equivalence classes of \ces in $\ceSetEquiv$  can be 
defined in a natural way as: 
\begin{align*}
[S_1] \combb [S_2] := [S_1 \combb S_2] &&  [S_1] \comb [S_2] := [S_1 \comb S_2].
\end{align*}
We notice that these two operations are well defined since they are a congruence  by Theorems \ref{main:alg:theorem:3'} and \ref{main:alg:theorem:4'}.  
The algebraic properties  of the unification of \ces follow.
In fact,  the unification of \ces  inherits the properties of associativity, (non-)commutativity and idempotence from
the position-based \ces and the merging of contexts.
\begin{theorem}
\label{main:alg:theorem:1'}
 The quotient set $\ceSetEquiv$ of  \ces  together with the unification   operation  enjoy the  following properties.
    \begin{enumerate}
    \item The neutral element of the   unification upon $\ceSetEquiv$  is $[@\E.\square]$. 
    \item The absorbing element of the unification is $[\emptylist]$.
    \item The unification  of \ces is  associative, i.e. $([S_1] \combb [S_2]) \combb [S_3] =  [S_1] \combb ([S_2] \combb [S_3])$, 
          for any $S_1,S_2,S_3 \in \ceSet$.
        \item  The unification of \ces is (non-)commutative if and only if  the operation of merging of contexts "$\sbullet$"  is (non-) commutative. 
    \item The unification of \ces is idempotent if and only if  the operation of merging of contexts  is idempotent,
          that is,  $[S] \combb  [S]= [S]$  for any $S \in \ceSet$  iff  $\tau \sbullet \tau = \tau$  for any context  $\tau$ in $\mycal{T}_{\square}$.
    \end{enumerate}
\end{theorem}

The algebraic properties  of the combination  of \ces follow.
In fact,  the combination of \ces  inherits the properties of associativity, (non-)commutativity and idempotence from
the position-based \ces and the merging of contexts.
\begin{theorem}
\label{main:alg:theorem:2'}
 The quotient set $\ceSetEquiv$ of  \ces  together with the  combination  operation enjoy the following properties.
    \begin{enumerate}
    \item The neutral element of the   combination upon $\ceSetEquiv$   is $[\emptylist]$.
    \item The  combination of \ces is associative, i.e. $([S_1] \comb [S_2]) \comb [S_3] =  [S_1] \combb ([S_2] \combb [S_3])$, 
          for any $S_1,S_2,S_3 \in \ceSet$.
    \item The combination of \ces is (non-)commutative if and only if  the operation of merging of contexts $\sbullet$  is (non-) commutative. 
    \item The combination of \ces is idempotent if and only if  the operation of merging of contexts  is idempotent.
    \end{enumerate}
\end{theorem}

Since the  mapping  $\Psi: \mycal{T} \longrightarrow  \ceSet  \longrightarrow \eceSet$ preserves the semantic equivalence in the sense of Eq.(\ref{psi:sem:equiv:eq}),
then $\Psi$ induces a mapping  $\dot{\Psi}: \mycal{T} \longrightarrow  \ceSetEquiv  \longrightarrow \eceSet$ in a natural way by $\dot{\Psi}_t([S]):= \Psi_t(S)$, for any term $t$ in  $\mycal{T}$ and \ce $S$ in $\ceSet$.
It turned out that, for any term $t$, the mapping  $\dot{\Psi}_t: \ceSetEquiv  \longrightarrow \eceSet$ is a morphism from the structure
$(\ceSetEquiv,  \combb, \comb, [@\E.\square],[\emptylist])$ to  $(\eceSet,  \combb, \comb, @\E.\square,\emptylist)$ since for any \ces $S_1$ and $S_2$ in $\ceSet$, on the one hand,
\begin{align*}
  \dot{\Psi}_t([@\E.\square]) = @\E.\square   && \tand &&    \dot{\Psi}_t([\emptylist]) = \emptylist,
\end{align*}
and on the other hand, it follows from Theorems~\ref{main:theorem:1'} and~\ref{main:theorem:2'} that
\begin{align*}
  \dot{\Psi}_t([S_1] \combb [S_2]) = \dot{\Psi}_t([S_1])  \combb \dot{\Psi}_t([S_2]) && \tand &&   \dot{\Psi}_t([S_1] \comb [S_2]) = \dot{\Psi}_t([S_1])  \comb \dot{\Psi}_t([S_2]).
\end{align*}

The congruence and non-degeneracy of the unification  and  combination are stated in the two following theorems, respectively.
\begin{theorem}[Congruence and non-degeneracy of the unification]
  \label{main:alg:theorem:3'}
The following holds.
\begin{enumerate}
\item  The unification  of \ces is a congruence, that is,    for any \ces   ${S}_1,{S}_2, {S}$ in $\ceSet$, we have that:
  \begin{align*}
\textrm{If } {S}_1 \equiv {S}_2 &&\tthen&&  {S}_1 \nfcombb {S}  \equiv {S}_2 \nfcombb {S}  \;\tand\; {S} \nfcombb {S}_1  \equiv {S} \nfcombb {S}_2.
\end{align*}

\item The unification  is non-degenerate, that is, for any  \ces $[S]$ and $[S']$ in $\ceSetEquiv$,   we have that
\begin{align*}
    [S] \nfcombb [S']   =  [\emptylist]  &&\tiff &&  [S] =  [\emptylist] \;\tor\;  [S'] =  [\emptylist]. 
\end{align*}

\end{enumerate}
\end{theorem}

\begin{theorem}[Congruence and non-degeneracy of the combination]
 \label{main:alg:theorem:4'}
The following hold.
\begin{enumerate}
\item  The combination  of \ces is a congruence, that is,    for any \ces   ${S}_1,{S}_2, {S}$ in $\ceSet$, we have that:
\begin{align*} 
\textrm{If } {S}_1 \equiv {S}_2  &&\tthen && {S}_1 \comb {S}  \equiv {S}_2 \comb {S} \;\tand\; {S} \comb {S}_1  \equiv {S} \comb {S}_2.
\end{align*}

\item The combination  is non-degenerate, that is, for any  \ces $[S]$ and $[S']$ in $\ceSetEquiv$,   we have that
\begin{align*}
    [S] \comb [S']  =  [\emptylist]  &&\tiff &&   [S] =  [\emptylist] \;\textrm{and }\;   [S'] = [\emptylist].
\end{align*}

\end{enumerate}
\end{theorem}


\section{Outline of the proof of the main result}
\label{structure:proof:main:results:sec}
The most lengthy and difficult result  to prove is Theorem~\ref{main:theorem:1'} on the correctness of the unification of \ces. The remaining theorems are more or less a consequence of this theorem.
In this section we give a relatively detailed outline of the proof  of Theorem~\ref{main:theorem:1'} without the technical machinery  which will be developed  in the next sections.
We shall  proceed in four steps:
\begin{description}
\item[Step 1.] We first show that the unification of \ces is correct in the particular setting, where the \ces are \emph{fixed-point free}. 
              More precisely, we shall show that the mapping $\Psi$  permutes with the unification (in the sense of Theorem~\ref{main:theorem:1'}) within  this particular setting.
              The proof is relatively easy and will be exposed in Section~\ref{proof:correction:fixed-point-free:sec}.

\item[Step 2.]  Then we reduce the general setting to the fixed-point free setting by replacing the fixed-point operators  by iterations whose number depends on the input term, Sections~\ref{correction-combination-definitions:section},~\ref{correction:unif:general:setting:unfold:sec} and ~\ref{equiv:unif:with:unif:unfolding:sec}.
  That is,   we  replace in the input \ces each fixed-point \ce  $\mu X.S(X)$  with the unfolding  $S(S(\ldots(S(\emptylist))):=\mu^n X.S(X)$ whose length is an arbitrary fixed integer~$n$.    
  Clearly, the unfolding of a  \ce is a fixed-point free one.
  The key idea is to show that the unification of two \ces is $n$-equivalent to the unification of their unfoldings.
  To accomplish this, we compare the structure of the resulting two \ces and show that they have  a similar  structure (Lemma~\ref{main:lemma:mophism:quasi}).
  We illustrate this idea  of similarity of  structures in a particular case through a simple example,  then we discuss the  more general case.
  For the simple example, let $M(Y), S(X)$ and $R$ be three fixed-point free \ces where $R$ does not contain neither a left choice $\oplus$ nor an $\mathtt{\mathbf{If\m Then}}$. 
  Consider, on the one hand, the unification of $M(\mu X.S(X))$   with   $R$, and on the other hand, the unification of the unfolding of $M(\mu X.S(X))$ with the unfolding of $R$.
  Notice that the unfolding of $R$ is equal to $R$ since $R$ is fixed-point free.
  The structure of the \ce   $M(\mu X.S(X)) \nfcombb {R}$  is depicted on the left of  Figure~\ref{S:T:structure}, while that of the \ce $M(\mu^n X.S(X)) \combb {R}$ is on the right. That is,
  the unification  $M(\mu X.S(X)) \nfcombb {R}$  yields a \ce of the form $T_0(\mu Z_1.T_1(\ldots))$, whereas  the unification $M(\mu^n X.S(X)) \combb {R}$ yields a (fixed-point) free \ce of the form $T_0(T_1(\ldots))$.

  The general case in which  we unify $S$ and $R$ where  both of them  contain many fixed-point operators can be obtained by generalizing the simple example. 
  The general structure of the \ce $S \combb R$ is depicted on the left of Figure~\ref{S:T:structure:complex}, while that of the \ce that results from the unification of an unfolding of $S$ with an unfolding of $R$ is on the right.
  The general structure of the \ce on the left is of the form  $T_0(\mu Z_1.T_1(\mu Z_2.T_2(\ldots \mu Z_m.T_m(Z_m))))$, while that  on the right, is $T_0(T_1(\ldots T_m))$.
  Besides, each \ce $\mathbf{T}_i^j$ on the left is either a fixed-point variable or a fixed-point \ce that results from the unification of two \ces where one of them is a fixed-point.
  Assume that $\mathbf{T}_i^j$    is the normal form of $\tuple{\xi_i^j,R_i^j,\cdot}$, where $\xi_i^j$ is a fixed-point \ce that is a sub-\ce of $S$, while $R_i^j$ is  a sub-\ce of $R$. The main point is that  each $T_i^j$ is the result of the unification of an unfolding of
  $\xi_i^j$ with  an unfolding of $R_i^j$. Furthermore,  the more we go deeper into the right tree, i.e. $j$ increases, the more the size of iterations in the unfoldings decreases. The formalization of the notion of similarity between the unification of two \ces and that   of their unfoldings will be done in Subsection~\ref{quasi:sim:sub:sub:sec}. Proving the existence of such similarity  between the unification and the unification of unfoldings,
  as well as  developing  the properties of this  similarity, namely the decrease of the size of iterations in the unfoldings, will be done in Section~\ref{relating:unif:with:unfolds:sec}.

\item[Step 3.] The third step of  the proof consists of proving that the unification of two \ces is equivalent to that of their unfoldings by using the notion of similarity discussed before. 
  More precisely, we shall show that, for any $n\ge 1$,  the unification of two \ces is $n$-equivalent to  the $\mathbf{s}$-unfolding of them, 
  where  $\mathbf{s}$-unfolding amounts to  replace each fixed-point operator  with an iteration of size $n$.
  This will be proved in  Section~\ref{equiv:unif:with:unif:unfolding:sec},  Proposition~\ref{main:corollary:unif}. We outline next the general idea of this proof in a simple setting in which   the unification $S \combb R$
  yields a \ce $\mu Z_1.T_1(Z_1)$.  Thanks to the notion of similarity, we know that the unification of the $\mathbf{s}$-unfolding of $S$ with the $\mathbf{r}$-unfolding of $R$
  is of the form $T_1(T^1_1)$, where $T^1_1$ is the result of the unification of an $\mathbf{s}_1$-unfolding of $S$ with a $\mathbf{s}_2$-unfolding of $R$, where  the  $\mathbf{s}_1$-unfolding (resp. $\mathbf{s}_2$-unfolding)
  replaces each fixed-point operator in $S$ (resp. in $R$) with certain number of iterations that  can be computed.
  To show that $\mu Z_1.T_1(Z_1)$ is $n$-equivalent to  $T_1(T^1_1)$ it suffices to show that $T_1(T^1_1)$ is a fixed-point of $T_1(Z_1)$, i.e. that $T_1(T^1_1)$ is $n$-equivalent to $T_1(T_1(T^1_1))$.
  To achieve this it is enough to show that $T^1_1$ is $n'$-equivalent to $T_1(T^1_1)$ provided that there is at least $n-n'$ jumps between the root of  $T_1(Z_1)$ and $Z_1$, where  $n'$ is a constant that depends on
  $\mathbf{s}_1$ and $\mathbf{s}_2$. This  raises two technical problems.
  \emph{(i)} Since  $T_1(T^1_1)$ is the  unification of the $\mathbf{s}$-unfolding of $S$ with the $\mathbf{s}$-unfolding of $R$, and since $T^1_1$ is the unification of the $\mathbf{s}_1$-unfolding of $S$ with the
  $\mathbf{s}_2$-unfolding of $R$,  how to relate in general the unification of two unfoldings  with  the unification of two other unfoldings of the same \ces? We shall address this problem in  Section~\ref{equiv:unif:unfoldings:sec} and show that the two
  resulting \ces are equivalent up to a constant that depends on the four unfoldings. And \emph{(ii)} how to compute a lower bound on the number of such jumps? This question will be  addressed in Subsections
 ~\ref{measures:sub:sub:sec} and~\ref{derived:sub:sub:sec}. These results will be summed up in Subsection~\ref{equivalent:sub:sub:sec} to show  the main result of this third step (i.e. Proposition~\ref{main:corollary:unif}).

\item[Step 4.] Since an  unfolding of a \ce is a fixed-point free one, we shall  rely on  Proposition~\ref{main:corollary:unif} together with the correctness of the unification and combination for the fixed-point free setting outlined in Step 1, to prove   the correctness of the unification and combination in the general setting.  The proof turns to be  relatively straightforward and will be exposed  in Subsection~\ref{correctness:unification:combination:proof}.
\end{description}

\begin{figure}[H]
  \centering
   
\begin{tikzpicture}
\begin{scope}[xshift = 0cm] 
    \draw[line width=1pt,color=black] (0,0) -- (-1,-1) -- (1,-1) -- cycle;
    \draw[line width=1pt,color=black] (2,-2) -- (1,-3) -- (3,-3) -- cycle;

    \draw[line width=1pt,->] (0.89,-1)  -- (2,-2);     
    \draw[line width=1pt,->] (2.89,-3)  -- (4,-4);     

    \node [] at (2,-1.3) {$\mu Z_1$};  
    \draw[line width=0.7pt,->] (1.2,-3)  -- (0.7,-3.5);     
    \node [] at (3.99,-3.3) {$\mu Z_{2}$};  

    \node [] at (0,-0.6) {${T}_0$};

     \node [] at (2.1,-2.6) {${T}_1$};
     \node [] at (0.7,-3.7) {\begin{small}$Z_1$\end{small}};
     \node [] at (4.5,-4.3) {\begin{small}$\tuple{\mu X.S(X),R_2,\cdot}$\end{small}};
\end{scope}

  \begin{scope}[transform canvas={xshift = 6cm}, yshift=0cm]
    \draw[line width=1pt,color=black] (0,0) -- (-1,-1) -- (1,-1) -- cycle;
    \draw[line width=1pt,color=black] (2,-2) -- (1,-3) -- (3,-3) -- cycle;

    \draw[line width=1pt,->] (0.89,-1)  -- (2,-2);     

    \draw[line width=0.7pt,->] (1.2,-3)  -- (0.7,-3.5);     
    \draw[line width=1pt,->] (2.9,-3)  -- (3.9,-4);     

    \node [] at (0,-0.6) {$T_0$};

     \node [] at (2.1,-2.6) {$T_{1}$};
     \node [] at (0.7,-3.8) {\begin{small}$T_1^1$\end{small}};
     \node [] at (4.1,-4.3) {\begin{small}$\tuple{\mu^{n-1} X.S(X), R_2,\emptyset}$\end{small}};

\end{scope}

\end{tikzpicture}

\caption{The structure of the  the \ce   $M(\mu X.S(X)) \combb  {R}$  (left)  and that of  $M(\mu^n X.S(X)) \combb {R}$ (right),  where $M(Y)$, $S(X)$ and $R$ are fixed-point free, and $n\ge 1$.}
\label{S:T:structure}
\end{figure}

\begin{figure}[H]
   
\begin{tikzpicture}
  \begin{scope}
    \draw[line width=1pt,color=black] (0,0) -- (-1,-1) -- (1,-1) -- cycle;
    \draw[line width=1pt,color=black] (2,-2) -- (1,-3) -- (3,-3) -- cycle;
    \draw[line width=1pt,color=black] (4,-4) -- (3,-5) -- (5,-5) -- cycle;
    \draw[line width=1pt,color=black] (6,-7) -- (5,-8) -- (7,-8) -- cycle;

    \draw[line width=1pt,->] (0.89,-1)  -- (2,-2);     
    \draw[line width=1pt,->] (2.89,-3)  -- (4,-4);     

    \draw[line width=0.7pt,->] (-0.9,-1)  -- (-1.2,-1.5);     
    \draw[line width=0.7pt,->] (-0,-1)  -- (-0.1,-1.5);     
    \node [] at (2,-1.3) {$\mu Z_1$};  
    \draw[line width=0.7pt,->] (1.2,-3)  -- (0.7,-3.5);     
    \draw[line width=0.7pt,->] (1.9,-3)  -- (1.85,-3.5);     
    \node [] at (3.99,-3.3) {$\mu Z_{2}$};  
    \draw[line width=0.7pt,->] (3.2,-5) -- (2.6,-5.5);
    \draw[line width=0.7pt,->] (3.9,-5)  -- (3.85,-5.5);     
    \draw[line width=1pt,->] (4.7,-5)  -- (5.5,-5.9);
    \node [] at (5.7,-5.3) {$\mu Z_{3}$};  
    
    \draw[line width=0.7pt,->] (5.1,-8) -- (4.5,-8.5);
    \draw[line width=0.7pt,->] (6.4,-8) -- (6.6,-8.5);
    \node [] at (6.5,-6.8) {$\mu Z_{m}$};  

    \node [] at (0,-0.6) {${T}_0$};
    \node [] at (-1.3,-1.85) {$\mathbf{T}_1^0$};
    \node [] at (-0.1,-1.89) {$\mathbf{T}_{k_0}^{0}$};
     \node [] at (-0.5,-1.3) {$\mathbf{\ldots}$};

     \node [] at (2.1,-2.6) {${T}_1$};
     \node [] at (0.7,-3.8) {$\mathbf{T}_1^1$};
     \node [] at (2.1,-3.8) {$\mathbf{T}_{k_1}^1$};
    \node [] at (1.35,-3.3) {$\mathbf{\ldots}$};

     \node [] at (4.1,-4.6) {$T_2$};
     \node [] at (2.6,-5.8) {$\mathbf{T}_1^2$};
     \node [] at (3.9,-5.8) {$\mathbf{T}_{k_2}^2$};
     \node [] at (3.3,-5.3) {$\mathbf{\ldots}$};

     \node [] at (6.1,-7.6) {${T}_{m}$};
     \node [] at (4.4,-8.8) {$\mathbf{T}_{1}^m$};
     \node [] at (6.5,-8.8) {$\mathbf{T}_{k_m}^m$};
     \node [] at (5.7,-8.3) {$\mathbf{\ldots}$};

     \node [] at (5.7,-6.39) {\rotatebox{-65}{$.........$}}; 
  \end{scope}
  \begin{scope}[transform canvas={xshift = 9cm}]
    \draw[line width=1pt,color=black] (0,0) -- (-1,-1) -- (1,-1) -- cycle;
    \draw[line width=1pt,color=black] (2,-2) -- (1,-3) -- (3,-3) -- cycle;
    \draw[line width=1pt,color=black] (4,-4) -- (3,-5) -- (5,-5) -- cycle;
    \draw[line width=1pt,color=black] (6,-7) -- (5,-8) -- (7,-8) -- cycle;

    \draw[line width=1pt,->] (0.89,-1)  -- (2,-2);     
    \draw[line width=1pt,->] (2.89,-3)  -- (4,-4);     

    \draw[line width=0.7pt,->] (-0.9,-1)  -- (-1.2,-1.5);     
    \draw[line width=0.7pt,->] (-0,-1)  -- (-0.1,-1.5);     
    \draw[line width=0.7pt,->] (1.2,-3)  -- (0.7,-3.5);     
    \draw[line width=0.7pt,->] (1.9,-3)  -- (1.85,-3.5);     
    \draw[line width=0.7pt,->] (3.2,-5) -- (2.6,-5.5);
    \draw[line width=0.7pt,->] (3.9,-5)  -- (3.85,-5.5);     
    \draw[line width=1pt,->] (4.7,-5)  -- (5.5,-5.9); 
    
    \draw[line width=0.7pt,->] (5.1,-8) -- (4.5,-8.5);
    \draw[line width=0.7pt,->] (6.4,-8) -- (6.6,-8.5);

    \node [] at (0,-0.6) {$T_0$};
    \node [] at (-1.3,-1.75) {${T}^0_1$ };
    \node [] at (-0.1,-1.75) {${T}^0_{k_0}$ };
     \node [] at (-0.5,-1.3) {$\mathbf{\ldots}$};

     \node [] at (2.1,-2.6) {$T_{1}$};
     \node [] at (0.7,-3.8) {${T}^1_{1}$};
     \node [] at (1.9,-3.8) {$T^1_{k_1}$};
    \node [] at (1.35,-3.3) {$\mathbf{\ldots}$};

     \node [] at (4.1,-4.6) {$T_2$};
     \node [] at (2.5,-5.85) {${T}^2_{1}$};
     \node [] at (3.9,-5.85) {${T}^2_{k_2}$};
     \node [] at (3.3,-5.3) {$\mathbf{\ldots}$};

     \node [] at (6.1,-7.6) {$T_{m}$};
     \node [] at (4.5,-8.9) {${T}^m_{1}$};
     \node [] at (6.5,-8.9) {${T}^m_{k_m}$};
     \node [] at (5.7,-8.3) {$\mathbf{\ldots}$};

     \node [] at (5.7,-6.39) {\rotatebox{-65}{$.........$}}; 
  \end{scope}
\end{tikzpicture}


  \caption{The general structure of a \ce   $S \combb {R}$  (left)  and  that of the \ce that results from the unification of a full unfolding of  $S$ with a full unfolding of  $R$ (right), where each \ce $\mathbf{T}_j^i$ is either
    a fixed-point \ce, or a fixed-point variable. Each $T_j^i$ is a  unification of two unfoldings of the same \ces involved in  $\mathbf{T}_j^i$. Inductively, the structure of  each  $\mathbf{T}_j^i$ is again similar to the one of $T_j^i$. 
 }
\label{S:T:structure:complex}
\end{figure}


\section{From \ces to position-based \ces: the definition of the mapping $\Psi$}
\label{Psi:construction:section}

In this section we define the mapping $\Psi$ announced in Section~\ref{unification:combination:section}, then   state and prove its properties.
Before doing this, we need to define the \emph{tree depth}  of a  \ce that corresponds to the usual notion of depth of such a  \ce after removing 
all the back-edges. We warn the reader that we shall use the same notation $\delta$ used for the depth of terms introduced in the preliminaries section~\ref{preliminaries:section}.

\begin{definition}[Tree depth  of a  \ce]
\label{def:Delta:strategy'}
The \emph{tree depth}  of a \ce  is the depth of its underlying tree  after we have ignored the fixed-point constructors  of this \ce.
That is, it is the function $\delta: \ceSet  \longrightarrow    \mathbb{N}$ defined inductively as follows:
\begin{align*}
\mathbf{\delta}(\emptylist)               &= 0 \\
\mathbf{\delta}(X)                        &= 0   \\
\mathbf{\delta}(@\varepsilon.\mbf{\tau})  &= 1 \\
\mathbf{\delta}(u;S)                &=  1 + \mathbf{\delta}(S)  \\ 
\mathbf{\delta}(@p.S)               &= 1 + \mathbf{\delta}(S)  \\ 
\mathbf{\delta}(S_1 \oplus  S_2)     &=  1 + \mmax\{\mathbf{\delta}(S_1), \mathbf{\delta}(S_n)\}  \\
\mathbf{\delta}(\bigand_{i=1,n} S_i)   & =  1 + \mmax\{\mathbf{\delta}(S_1), \ldots, \mathbf{\delta}(S_n)\} \\
\mathbf{\delta}\big(\tifthen{S_1}{S}\big)   & =  1 + \mmax\{\mathbf{\delta}(S_1),\mathbf{\delta}(S)\} \\
\mathbf{\delta}(\mu X.S(X))          & =  \mathbf{\delta}(S(X)).
\end{align*}
\end{definition}

It is useful  to normalize \ces which  are almost position-based \ces, i.e. they involve position  jumps and conjunctions, by concatenation of their nested  positions and by removing the failures.
For instance, turning $@i.@j.S$ into $@ij.S$, and turning $@i.S \wedge @j.\emptylist$ into $@i.S$. The definition of the normalization follows.

\begin{definition}[Normalization]
 \label{normalisation:position:def}
 The \emph{normalization} is the function $\theta$ that turns any \ce built up with just position  jumps and conjunctions to a position-based \ce as follows for any set of positions $J$:
\begin{align*}
\theta(@i.\tau)   &=  @i.\tau \\
\theta(@i.@j.S) &=  \theta(@ij.S) \\
\theta\big( \bigand_{j\in J} @j.S_j \big)      & =   \theta\big(\bigand_{j \in J \setminus \set{i}}  @j.S_j\big) \;\; \tif S_i=\emptylist \\
  \theta\big( \bigand_{j \in J} @j.S_j \big)      & =   \bigand_{j \in J}  \theta(@j.S_j) \\
  \theta\big(@i.\big(\bigand_{j \in J} @j.S_j  \big) \big)      & =   \bigand_{j \in J}  \theta(@ij.S_j). 
\end{align*}
\end{definition}

\begin{example}[Normalization]
  Let $\tau$, $\tau'$ and $\tau''$ be contexts in  $\mycal{T}_{\square}$.
  Let $S$ be the following \ce:
  \begin{align*}
    S= @1.\big( @2.\tau \wedge @3.(@4.\tau' \wedge @5.\tau'') \big).
  \end{align*}
  Then its normalization yields:
  \begin{align*}
    \theta(S)= @12.\tau \wedge @134.\tau' \wedge @135.\tau''.
  \end{align*}
\end{example}

Guided by the semantics of \ces, we next define the mapping  $\Psi$.
\begin{definition}[The mapping $\Psi$]
\label{psi:def} 
 We define the mapping 
\begin{align*}
  \Psi : \mycal{T} \longrightarrow  \ceSet   \longrightarrow \eceSet
\end{align*}
 that associates to any term  $t$  in $\mycal{T}$ and any closed \ce $S$ in $\ceSet$  a position-based  \ce $\Psi_t(S)$ in $\eceSet$ by
\begin{enumerate}
\item  \label{psi:def:item:empty} $\Psi_t(\emptylist) = \emptylist$.
\item \label{psi:def:item:insert} $\Psi_t(@\varepsilon.\mbf{\tau})= @\varepsilon.\mbf{\tau}$.
\item \label{psi:def:item:choice} $ \Psi_t(S \oplus S') =   \begin{cases}
                                                      \Psi_t(S) & \textrm{if } \Psi_t(S) \neq \emptylist, \\
                                                      \Psi_t(S') & \textrm{otherwise}.
                                                      \end{cases}$

\item \label{psi:def:item:mu} $\Psi_t(\mu X.S(X)) = \Psi_t\big(\mu^{\delta(t)} X.S(X)\big)$.
\item \label{psi:def:item:pattern} $ \Psi_t(u;S)  = \begin{cases}
                 \Psi_t(S)  & \textrm{if }    \match{u}{t}, \\
                 \emptylist & \textrm{otherwise}.
    \end{cases}$

\item \label{psi:def:item:ifthen} $ \Psi_t\big(\tifthen{S'}{S}\big) =
   \begin{cases}
     \Psi_t(S)   & \textrm{if }  \Psi_t(S') \neq \emptylist ,\\
     \emptylist            & \textrm{otherwise}.
   \end{cases}$ 

\item \label{psi:def:item:and} $\Psi_t\big(\bigand_{i=1,n}@p_i.S_i\big) =   \theta\big(\bigand_{i=1,n} @p_i.\Psi_{t_{|p_i}}(S_i) \big)$.
\item \label{psi:def:item:most} $\Psi_t(\most(S)) = \Psi_t\big(\bigand_{i=1,ar(t)} @i.S \big)$.


\end{enumerate}



\end{definition}

\begin{example}
  If we consider the two \ces $S(X)$ and $R(Y)$ defined in  Example~\ref{semantics:ces:example} by
  \begin{align*}
  S(X) &=(b;@\varepsilon.\tau) \oplus @1.X, \\
   R(Y)& = \mu Y.\Big(g(b,b',x); \big(@1.\tau \wedge @2.\tau' \wedge @3. Y\big) \Big),
  \end{align*}
  together with the two terms $t= f(f(b))$ and  $t'=g(b,b',g(b,b',b))$, then
  \begin{align*}
    \Psi_{t}(\mu X.S(X))  &=  @11.\tau, \\
    \Psi_{t'}(\mu Y.R(Y)) &= @1.\tau \wedge @2.\tau' \wedge @31.\tau \wedge @32.\tau'.
  \end{align*}
  
\end{example}

\begin{lemma}
  \label{psi:sem:lemma}
The mapping $\Psi$  preserves the semantic equivalence in the sense that,   
for any  term   $t$  in $\mycal{T}$ and any \ce $S$ in $\ceSet$, we have that 
\begin{align*}
 \sembrackk{\Psi_t(S)}(t) = \sembrackk{S}(t).
\end{align*}
\end{lemma}
The proof of this Lemma  does not provide any difficulties
since the definition of $\Psi$ is close to the definition of the semantics of \ces.
The previous  Lemma  can be restated in terms of explicit  properties as follows.

\begin{lemma}
\label{nice:prop:Psi:lemma}
The mapping $\Psi$ satisfies the following properties  for any terms  $t,u$, and   for any closed \ces $S,S',R,R',E'$, where $E'$ is built using only jumps and failures, and for any position-based \ce $E$: 
\begin{enumerate}

\item    \begin{enumerate}
          \item \label{Properties-of-Psi:Lemma:item:0}  $\Psi_t(E) = E$.     
          \item \label{Properties-of-Psi:Lemma:item:0'} $\Psi_t(\Psi_t(S)) =  \Psi_t(S)$.
         \end{enumerate}

\item \label{Properties-of-Psi:Lemma:item:1}  $\Psi_t(u;S) =  \Psi_t(u;\Psi_t(S))$.
\item \label{Properties-of-Psi:Lemma:item:2}  $\Psi_t(S \oplus S')   = \Psi_t(\Psi_t(S)\oplus \Psi_t(S'))$.

\item    \begin{enumerate}
          \item \label{Properties-of-Psi:Lemma:item:3''}  $\Psi_t(\tifthen{S'}{S}) =  \Psi_t(\tifthen{\Psi_t(S')}{S})$.     
          \item \label{Properties-of-Psi:Lemma:item:3}  $\Psi_t(\tifthen{S'}{S}) =  \Psi_t(\tifthen{S'}{\Psi_t(S)})$.     
          \item \label{Properties-of-Psi:Lemma:item:3'''}  $\Psi_t(\tifthen{S'}{S}) =  \Psi_t(\tifthen{R'}{S})$ if $\Psi_t(S')=\Psi_t(R')$.     
          \item \label{Properties-of-Psi:Lemma:item:3'} $\Psi_t(\tifthen{E'}{S}) =  \Psi_t(\tifthen{\theta(E')}{\Psi_t(S)})$.

         \end{enumerate}

\item    \begin{enumerate}
          \item \label{Properties-of-Psi:Lemma:item:4}  $\Psi_t(S \wedge R) =  \Psi_t\big( S \wedge R' \big)$ if $ \Psi_t(R)=\Psi_t(R')$, whenever $S,R,R'$ are a conjunction of  jumps.    
          \item \label{Properties-of-Psi:Lemma:item:4'} $\Psi_t(S \wedge R) =  \Psi_t(S)$    if $ \Psi_t(R)=\emptylist$, whenever $S,R$ are a conjunction of  jumps.        
         \end{enumerate}

\end{enumerate}
\end{lemma}
It turns out that the  mapping  $\Psi$ (Definition~\ref{psi:def})
preserves the semantics of \ces  in the following sense. 

\begin{lemma}
\label{nice:prop:Psi:lemma:}
The mapping  $\Psi$ enjoys the following properties.
\begin{enumerate}[i.)]
\item \label{item:1:nice:prop:Psi:lemma} For any position-based \ces $E, E'$ in $\eceSet$, we have that
$E =  E'$ iff $\Psi_t(E)=\Psi_t(E')$  for any  term $t$.

\item \label{item:2:nice:prop:Psi:lemma} For any  \ces ${S},{S}'$ in $\ceSet$, we have that
 ${S} \equiv {S}'$ iff  $\Psi_t(S)=\Psi_t(S')$  for any  term $t$.

\item \label{item:3:nice:prop:Psi:lemma} For any  \ces ${S},{S}'$ in $\ceSet$, we have that
  ${S} \equiv_n {S}'$ iff  $\Psi_t(S)=\Psi_t(S')$  for any  term $t$ of depth $\delta(t)=n$.
\end{enumerate}
\end{lemma}

\begin{proof}
We only prove Item \emph{ii.)}, the other items follow immediately
from the definition of $\Psi$. On the one hand, from the definition
of $\equiv$  we have that
\begin{align*}
S \equiv S' &&\tiff && \sembrackk{S}(t) =  \sembrackk{S'}(t), \;\;
\forall t \in \mycal{T}.
\end{align*}
However, it follows from Lemma~\ref{psi:sem:lemma} that 
\begin{align*}
\sembrackk{S}(t) = \sembrackk{\Psi_t(S)}(t) &&\tand &&
\sembrackk{S'}(t) = \sembrackk{\Psi_t(S')}(t).
\end{align*}
Therefore,
\begin{align*}
\sembrackk{\Psi_t(S)}(t)  = \sembrackk{\Psi_t(S')}(t),  \;\;  \forall t \in \mycal{T}.
\end{align*}
Since, both $\Psi_t(S)$ and $\Psi_t(S')$ are position-based \ces, it
follows from 
 Item~\ref{item:1:nice:prop:Psi:lemma}.) of this Lemma that $\Psi_t(S) =\Psi_t(S')$.
\end{proof}

We show in the following lemma that  the mapping  $\Psi$ can be pushed over the \ce constructors.
\begin{lemma}
\label{psi:unif-congruence:Lemma}
The mapping  $\Psi$ satisfies the following properties for any closed \ces $S,S'$ and  any position-based \ce $E$ and any terms  $t,u$: 
\begin{enumerate}
\item \begin{enumerate} \item \label{psi:unif-congruence:Lemma:item:1} $ \Psi_t\big(u ; \big(\Psi_t(S) \combb E  \big) \big) = \Psi_t(u ; S) \combb E$.
                        \item \label{psi:unif-congruence:Lemma:item:1'} $\Psi_t\big(u ; \big(E \combb \Psi_t(S)\big) \big) =  \Psi_t \big(E \combb \Psi_t(u ; S)\big) $.
      \end{enumerate}                    
\item \begin{enumerate} \item \label{psi:unif-congruence:Lemma:item:2} $\Psi_t\big( (\Psi_t(S) \oplus \Psi_t(S')) \combb E \big)   = \Psi_t (S \oplus S') \combb E$.
                       \item  \label{psi:unif-congruence:Lemma:item:2'} $E \combb \Psi_t\big(\Psi_t(S)\oplus \Psi_t(S')\big)    = E \combb \Psi_t(S \oplus S')$.
      \end{enumerate}  
\item \begin{enumerate} \item  \label{psi:unif-congruence:Lemma:item:3} $\Psi_t\big(\tifthen{S'}{(\Psi_t(S) \combb E)}\big)= \Psi_t\big(\tifthen{S'}{\Psi_t(S)}\big)  \combb E$.
                       \item  \label{psi:unif-congruence:Lemma:item:3'}  $\Psi_t\big(\tifthen{S'}{(E \combb \Psi_t(S))}\big) = E  \combb  \Psi_t(\tifthen{S'}{\Psi_t(S)})$.
      \end{enumerate}  
\end{enumerate}
\end{lemma}
\begin{proof}
We only prove the cases~\ref{psi:unif-congruence:Lemma:item:1} and~\ref{psi:unif-congruence:Lemma:item:2} and~\ref{psi:unif-congruence:Lemma:item:3} since 
the proof of the cases~\ref{psi:unif-congruence:Lemma:item:1'} and~\ref{psi:unif-congruence:Lemma:item:2'} and~\ref{psi:unif-congruence:Lemma:item:3'} is similar.
\begin{enumerate}
\item  \begin{enumerate}
         \item We distinguish two cases depending on whether $u$ matches with $t$ or not.  If $u$  matches with $t$ then  the left-hand side of the equation is
           \begin{align} 
             \Psi_t\big(u ; \big(\Psi_t(S) \combb E  \big)\big) &= \Psi_t(\Psi_t(S) \combb E)  \tag{Def. ~\ref{psi:def} of $\Psi$} \\
                                                                &= \Psi_t(S) \combb E \tag{since $\Psi_t(S) \combb E$  is a position-based \ce, Item~\ref{Properties-of-Psi:Lemma:item:0} of  Lemma~\ref{nice:prop:Psi:lemma}},
           \end{align}
and the right-hand side of the equation is $\Psi_t(u ; S) \combb E  =  \Psi_t(S) \combb E$  by the  Definition of $\Psi$, which is equal to the left-hand side.
 If $u$ does not match with $t$ then,  the left-hand side of the equation is $\emptylist$ by the definition of $\Psi$; and the right-hand side is 
 $\Psi_t(u ; S) \combb E= \emptylist \combb E = \emptylist$.
\end{enumerate}
\item \begin{enumerate}
      \item  We distinguish two cases depending on whether $\Psi_t(S)=\emptylist$ or not. If $\Psi_t(S)=\emptylist$ then the left-hand side of the equation is
          \begin{align}
            \Psi_t\big( (\Psi_t(S) \oplus \Psi_t(S')) \combb E \big)  & = \Psi_t\big( (\emptylist \oplus \Psi_t(S')) \combb E \big) \notag \\ 
                                                                     & = \Psi_t\big( \Psi_t(S') \combb E \big)  \notag  \\ 
                                                                     &=\Psi_t(S') \combb E, \tag{since $\Psi_t(S') \combb E$  is position-based,  Item~\ref{Properties-of-Psi:Lemma:item:0} of  Lemma~\ref{nice:prop:Psi:lemma}} 
            \end{align}
         and the right-hand side of the equation is $\Psi_t (S \oplus S') \combb E = \Psi_t(S') \combb E$  by the definition of $\Psi$,  which is equal to the left-hand side.
         If $\Psi_t(S) \neq \emptylist$, then  left-hand side of the equation is $\Psi_t\big( (\Psi_t(S) \oplus \Psi_t(S')) \combb E \big) = \Psi_t\big(\Psi_t(S)  \combb E \big)$ 
         by the definition of $\Psi$ on the left-choice, which is equal to $\Psi_t(S)  \combb E $, since $\Psi_t(S)  \combb E$ is  position-based. 
         For the right-hand side,  we have  $\Psi_t(S \oplus S')=\Psi_t(S)$ by the definition of $\Psi$, thus we  get the desired result.
      \end{enumerate}
\item  We distinguish two cases depending on whether $\Psi_t(S')=\emptylist$ or not. If $\Psi_t(S')=\emptylist$ then the left-hand side of the equation is
       $\Psi_t\big(\tifthen{S'}{(\Psi_t(S) \combb E)}\big)=\emptylist$ by the definition of $\Psi$, 
       and the right-hand side is $\Psi_t\big(\tifthen{S'}{\Psi_t(S)}\big)  \combb E = \emptylist \combb E = \emptylist$.
       If $\Psi_t(S') \neq \emptylist$ then  left-hand side of the equation is  \\ $\Psi_t\big(\tifthen{S'}{(\Psi_t(S) \combb E)}\big)=\Psi_t(\Psi_t(S) \combb E)$ which is equal  to $\Psi_t(S) \combb E$ 
        since $\Psi_t(S) \combb E$  is a position-based \ce, by the Item~\ref{Properties-of-Psi:Lemma:item:0} of  Lemma~\ref{nice:prop:Psi:lemma}. 
        And the right-hand side is  $\Psi_t\big(\tifthen{S'}{\Psi_t(S)}\big)  \combb E=\Psi_t(\Psi_t(S)) \combb E$ which is equal to $\Psi_t(S) \combb E$ by the same item.
\end{enumerate}
\end{proof}
\section{Proof of the correctness of the unification of \ces: the fixed-point free setting}
\label{proof:correction:fixed-point-free:sec}

In this section we prove the correctness of the unification procedure in the case where the two input  \ces are fixed-point free (Proposition~\ref{main:proposition:unif:fixed-point-free}). 
This is an important step since we shall reduce in the next three sections~\ref{correction-combination-definitions:section},~\ref{correction:unif:general:setting:unfold:sec},~\ref{equiv:unif:with:unif:unfolding:sec} the general setting  into the fixed-point free one.

We notice that, in the fixed-point free setting,  the memory  involved in the unification system $\Unif$  remains empty and does not play any role since
the only rules that modify  the contexts are the fixed-point ones. Obviously, such rules are not  applied since the input \ces are fixed-point free.
Besides, in this setting, the proof of the  termination and the confluence of $\Unif$ is trivial.   
Indeed, $\Unif$ terminates since each rule transforms a left-hand side \ce into its immediate sub-\ces.


\newcounter{myvar-fact-set-theory}
\setcounter{myvar-fact-set-theory}{\value{theorem}} 

We need a simple  set theoretic fact.
\begin{fact}
\label{set:theoretic:fact}
Let $I',J',J''$ be sets.
Then, $(I' \cap J'')\cup (I' \setminus  (J' \cup J''))=I' \setminus  J'$.
\end{fact}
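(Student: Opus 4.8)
The plan is to treat this as a routine identity of Boolean set algebra and prove it either by reducing both sides to a common normal form or by an element-chase establishing the two inclusions. I would favour translating membership into propositional logic, since that exposes the underlying tautology and spares us repeating the case analysis in each direction. Writing $P$, $Q$, $R$ for the predicates $x \in I'$, $x \in J'$, $x \in J''$, membership in the left-hand side reads $(P \wedge R) \vee \big(P \wedge \neg(Q \vee R)\big)$, whereas membership in $I' \setminus J'$ reads $P \wedge \neg Q$. The Fact is then equivalent to a propositional equivalence between these two formulas.

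First I would factor $P$ out of the left-hand predicate, leaving $R \vee \neg(Q \vee R)$; expanding the complement gives $R \vee (\neg Q \wedge \neg R)$, and the absorption law collapses this to $R \vee \neg Q$. Back at the level of sets this is the clean computation $(I' \cap J'') \cup \big(I' \setminus (J' \cup J'')\big) = I' \setminus (J' \setminus J'')$, i.e. the left-hand side always equals $I'$ minus those elements of $J'$ that avoid $J''$. This absorption step is the only genuine content of the argument; everything else is unfolding definitions and re-reading predicates as set membership.

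The step I expect to be delicate is the final identification with the right-hand side exactly as written. The computation lands on $I' \setminus (J' \setminus J'')$, which coincides with $I' \setminus J'$ precisely when $I' \cap J' \cap J'' = \emptyset$, that is, when no element of $I'$ lies simultaneously in $J'$ and $J''$. For arbitrary sets this containment can fail (take $I' = J' = J'' = \{1\}$, where the left-hand side is $\{1\}$ but $I' \setminus J' = \emptyset$), so in the write-up I would either invoke the disjointness of the relevant index sets that is guaranteed at the point where this Fact is applied in the unification argument, or else restate it in the unconditionally valid form $(I' \cap J'') \cup (I' \setminus (J' \cup J'')) = I' \setminus (J' \setminus J'')$ and deduce the particular instance needed. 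Once the applicable hypothesis on $J'$ and $J''$ is pinned down, the remainder reduces to the purely propositional absorption identity above.
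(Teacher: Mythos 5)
Your proposal is correct, and it exposes a genuine defect: the Fact as stated is false for arbitrary sets, exactly as your counterexample $I'=J'=J''=\{1\}$ shows (the left-hand side is $\{1\}$ while $I'\setminus J'=\emptyset$). The paper proves the Fact by the same element-chase you sketch, but its derivation contains an invalid step: after factoring out $x\in I'$, the predicate $x\in J''\vee\big(x\notin J'\wedge x\notin J''\big)$ is silently rewritten as the disjunction $x\in J''\vee x\notin J'\vee x\notin J''$, i.e. the inner conjunction is broken apart. Distributing correctly, as you do, gives $\big(x\in J''\vee x\notin J'\big)\wedge\big(x\in J''\vee x\notin J''\big)$, whose second factor is a tautology, so the left-hand side equals $\{x\in I'\mid x\in J''\vee x\notin J'\}=I'\setminus(J'\setminus J'')$ and not $I'\setminus J'$. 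Your identification of the exact correction term, and of the condition $I'\cap J'\cap J''=\emptyset$ under which the two agree, is accurate.

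Your proposed repair is also the right one. The Fact is invoked in Lemma \ref{normalization:of:unif:Lemma} for index sets that satisfy $J'\cap J''=\emptyset$ (and, for the symmetric instance with the roles of the two strategies exchanged, $I'\cap I''=\emptyset$): there $J'$ indexes the positions where $R$ carries an insertion and $J''$ those where it carries $\emptylist$, so the two are disjoint by construction. That disjointness gives $I'\cap J'\cap J''=\emptyset$, which is precisely your criterion, so both applications in the lemma are sound. The clean fix is either to add the hypothesis $J'\cap J''=\emptyset$ to the statement of the Fact, or to restate it in your unconditional form $(I'\cap J'')\cup\big(I'\setminus(J'\cup J'')\big)=I'\setminus(J'\setminus J'')$ and discharge the disjointness at the point of use; either way Lemma \ref{normalization:of:unif:Lemma} and everything downstream of it remain intact.
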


Since the definition of the mapping $\Psi$ involves the normalization of positions (function $\theta$ in  Item~\ref{psi:def:item:and} of  Definition~\ref{psi:def}),
we need to show  that this normalization does not disturb the unification in the following sense.  
\begin{lemma}
\label{normalization:of:unif:Lemma}
 Let $S=\bigand_{i \in I} @i.S_i$  and  $R=\bigand_{j \in J} @j.R_j$ be two \ces where each $S_i$ and $R_i$ is either the failure   $\emptylist$  or the insertion  $@\varepsilon.\tau_i$, for a context  $\tau_i$ in $\mycal{T}_{\square}$. 
 Then, 
\begin{align}
\label{normalization:of:unif:Lemma:eq}
\Psi_t\big(S\combb R\big)=\Psi_t\big(\theta(S)\combb \theta(R)\big).
\end{align}
\end{lemma}

\begin{proof}
Assume that 
\begin{align*}
   S=\bigand_{i \in I'} @i.S_i \wedge \bigand_{i \in I''} @i.\emptylist  &&\tand&&    R=\bigand_{j \in J'} @j.R_j \wedge \bigand_{j \in J''} @j.\emptylist, \\
\end{align*}
where $S_i \in \mycal{T}_{\square} $ for any $i\in I'$, and $R_j \in \mycal{T}_{\square} $ for any $j\in J'$, and $I' \cap I''=\emptyset$ and $J' \cap J''=\emptyset$. Therefore,
\begin{align*}
   \theta(S)=\bigand_{i \in I'} @i.S_i  &&\tand&&    \theta(R)=\bigand_{j \in J'} @j.R_j. 
\end{align*}
Consider the \ce $\tilde{\Lambda}$:
\begin{align}
\tilde{\Lambda} &= \bigand_{i \in I' \cap J'} @i.(S_i\combb R_i \oplus S_i \oplus R_i)     \wedge    \bigand_{i \in I' \setminus J'} @i.S_i   \wedge   \bigand_{i \in J' \setminus I'} @i.R_i.  \notag
\end{align}
By computing  the \ces $\theta(S) \combb \theta(R)$  and  $S \combb R$  involved in the  right-hand side and the left-hand side of Eq.(\ref{normalization:of:unif:Lemma:eq}) respectively, we get: 
\begin{align}
\Psi_t(\theta(S) \combb \theta(R)) &= \Psi_t(\tifthen{\theta(S)\& \theta(R)}{\tilde{\Lambda}}) \tag{Item~\ref{list:ext} of Def.~\ref{reduction:unif:def} of $\combb$ } \\
                                  & =\Psi_t(\tifthen{S\& R}{\tilde{\Lambda}}) \tag{Item~\ref{Properties-of-Psi:Lemma:item:3'} of Lemma~\ref{nice:prop:Psi:lemma}}\\
                                  & =\Psi_t(\tifthen{S\& R}{\Psi_t(\tilde{\Lambda})}), \tag{Item~\ref{Properties-of-Psi:Lemma:item:3} of Lemma~\ref{nice:prop:Psi:lemma}} \\
                                  & \tand \notag \\ 
\Psi_t\big(S \combb R\big) &= \Psi_t\big(\tifthen{S\& R}{\Lambda}\big) \tag{Item~\ref{list:ext} of Def.~\ref{reduction:unif:def} of $\combb$ }  \\
                                   &= \Psi_t(\tifthen{S\& R}{\Psi_t(\Lambda)}), \tag{Item~\ref{Properties-of-Psi:Lemma:item:3} of Lemma~\ref{nice:prop:Psi:lemma}}
\end{align}
where $\Lambda$ is the \ce
\begin{align}
\Lambda  &=   \bigand_{i \in I' \cap J'} @i.(S_i\combb R_i \oplus S_i \oplus R_i)     \wedge   
        \bigand_{i \in I' \cap J''} @i.(S_i\combb \emptylist  \oplus   S_i \oplus \emptylist)   \wedge  
         \bigand_{i \in I''\cap J'} @i.(\emptylist \combb R_i  \oplus \emptylist  \oplus R_i )   \wedge \notag \\
  &\;\;  \bigand_{i \in I'' \cap J''}  @i.\emptylist   \wedge 
        \bigand_{i \in I' \setminus  (J' \cup J'') }  @i.S_i \wedge 
        \bigand_{i \in I'' \setminus  (J' \cup J'') } @i. \emptylist  \wedge 
        \bigand_{i \in J' \setminus  (I' \cup I'') }  @i.R_i  \wedge 
        \bigand_{i \in J'' \setminus  (I' \cup I'') } @i. \emptylist. \tag{Item~\ref{list:ext} Def.~\ref{reduction:unif:def} of $\combb$} 
\end{align}

Hence to prove Eq.(\ref{normalization:of:unif:Lemma:eq}) we  need to  show that $\Psi_t(\tilde{\Lambda})=\Psi_t(\Lambda)$.
It follows that $\Psi_t(\Lambda)$ can be written as
\begin{align}
\Psi_t(\Lambda)
&=   \Psi_t\Big(\bigand_{i \in I' \cap J'} @i.(S_i\combb R_i \oplus S_i \oplus R_i)     \wedge   
        \bigand_{i \in I' \cap J''} @i.S_i   \wedge  
         \bigand_{i \in I''\cap J'} @i.R_i   \wedge \notag 
      \bigand_{i \in I'' \cap J''}  @i.\emptylist   \wedge\\ 
      &\;\; \bigand_{i \in I' \setminus  (J' \cup J'') }  @i.S_i \wedge 
        \bigand_{i \in I'' \setminus  (J' \cup J'') } @i. \emptylist  \wedge 
        \bigand_{i \in J' \setminus  (I' \cup I'') }  @i.R_i  \wedge 
        \bigand_{i \in J'' \setminus  (I' \cup I'') } @i. \emptylist \Big)  \tag{since $\Psi_t(S_i\combb \emptylist  \oplus   S_i \oplus \emptylist)=\Psi_t(S_i)$ and $\Psi_t(\emptylist \combb R_i  \oplus \emptylist  \oplus R_i)=\Psi_t(R_i)$, 
                    by Item~\ref{Properties-of-Psi:Lemma:item:4} of Lemma~\ref{nice:prop:Psi:lemma}}\\
& \notag \\
&=  \Psi_t\Big(\bigand_{i \in I' \cap J'} @i.(S_i\combb R_i \oplus S_i \oplus R_i)     \wedge   
        \bigand_{i \in I' \cap J''} @i.S_i   \wedge  
        \bigand_{i \in I' \setminus  (J' \cup   J'') }  @i.S_i \wedge 
         \bigand_{i \in J' \setminus  (I' \cup I'') }  @i.R_i \wedge    
         \bigand_{i \in I''\cap J'} @i.R_i    \Big) \tag{since $\Psi_t(@i.\emptylist)=\emptylist$, by Item~\ref{Properties-of-Psi:Lemma:item:4'} of Lemma~\ref{nice:prop:Psi:lemma}}\\
& \notag \\
  &=  \Psi_t\Big( \bigand_{i \in I' \cap J'} @i.(S_i\combb R_i \oplus S_i \oplus R_i)     \wedge   
        \bigand_{i \in I' \setminus J'} @i.S_i   \wedge  
      \bigand_{i \in J' \setminus I'} @i.R_i \Big)   \tag{since $(I' \cap J'')\uplus (I' \setminus  (J' \cup J''))=I' \setminus  J'$ and $(J'\cap I'')\uplus (J'\setminus (I'\cup I''))=J' \setminus  I'$, by Fact~\ref{set:theoretic:fact}} \\
  & = \Psi_t(\tilde{\Lambda}). \tag{Def. of $\tilde{\Lambda}$} 
    \end{align}
\end{proof}


\begin{notation}
Throughout this paper the set of fixed-point free \ces will be denoted  by $\ceSetFree$. 
\end{notation}

Now we are ready to show the main result of this section, that is, that the unification of fixed-point free \ces is correct. 
\begin{proposition}
\label{main:proposition:unif:fixed-point-free}
For every term $t \in \mycal{T}$ and for every fixed-point free \ces  $S$ and $R$  in $\ceSetFree$, 
we have that 
\begin{align}
\label{main:lemma:unif:fixed-point-free:eq}
\Psi_t(S \combb R) & = \Psi_t(S) \combb \Psi_t(R).
\end{align}
Or, equivalently, the following diagram commutes.
\[\begin{tikzcd}
\ceSetFree \times \ceSetFree \arrow{r}{\combb} \arrow[swap]{d}{\Psi_t \times \Psi_t} &  \ceSetFree \arrow{d}{\Psi_t} \\
\mycal{E} \times \mycal{E}  \arrow{r}{\combb} & \mycal{E}
\end{tikzcd}
\]
\end{proposition}
\begin{proof}
The proof is by structural induction on $S$ and $R$, which amounts to consider $\delta(S)$  the depth of $S$, and $\delta(R)$  the depth of $R$.
\begin{description} 
\item \textbf{Base case}. If $(\delta(S),\delta(R))=(0,0)$ then $S=\emptylist$ or $S=@\varepsilon.\tau$,  and   $R=\emptylist$ or $R=@\varepsilon.\tau'$.   
In this case  the proof is trivial since $\Psi_t(S) =S$ and   $\Psi_t(R) =R$.
\item \textbf{Induction step}.  We assume that the claim holds for some $S'$ and $R'$ and we shall show it for any $S$ and $R$ such that either 
  \emph{i.)} $S'$ is an immediate  sub-\ce of $S$ and $R'=R$, or
  \emph{ii.)} $R'$ is an immediate  sub-\ce of $R$ and $S'=S$, or
  \emph{iii.)} $S'$ is an immediate  sub-\ce of $S$, and $R'$ is an immediate  sub-\ce of $R$.
\begin{enumerate}
\item If $S=u;S'$ and $R$ is arbitrary then 
\begin{align}
 \Psi_t(S \combb R)   & = \Psi_t((u;S') \combb R)  \notag \\   
                      & =  \Psi_t\big(u;(S' \combb R)\big) \tag{Item~\ref{pattern:ext:1} of Def.~\ref{reduction:unif:def} of $\combb$} \\
                      & =  \Psi_t\big(u;\Psi_t(S' \combb R)\big) \tag{Item ~\ref{Properties-of-Psi:Lemma:item:1} of Lemma~\ref{nice:prop:Psi:lemma}} \\
                      &=  \Psi_t\big(u; \big(\Psi_t(S') \combb \Psi_t(R)\big)\big)  \tag{Ind. hypothesis since $S'$ is an immediate sub \ce of $S$, and $R=R'$} \\
                      &=  \Psi_{t}(u; S')  \combb \Psi_t(R) \tag{Item~\ref{psi:unif-congruence:Lemma:item:1} of Lemma~\ref{psi:unif-congruence:Lemma}}  \\
                      & = \Psi_t(S)  \combb \Psi_t(R). \tag{Def. of $S$}
\end{align}
\item If $S=S' \oplus S''$ and $R$ is arbitrary then 

\begin{align}
 \Psi_t(S \combb R)   &= \Psi_t((S' \oplus S'') \combb R)                              \notag \\   
                      &=   \Psi_t\big((S' \combb R) \oplus (S'' \combb R)\big)         \tag{Item~\ref{choice:ext:1} of Def.~\ref{reduction:unif:def} of $\combb$} \\
                      &= \Psi_t\big(\Psi_t(S' \combb R) \oplus \Psi_t(S'' \combb R)\big)        \tag{Item ~\ref{Properties-of-Psi:Lemma:item:2} of Lemma~\ref{nice:prop:Psi:lemma}} \\
                      &=  \Psi_t\big(\big(\Psi_t(S') \combb \Psi_t(R)\big) \oplus \big(\Psi_t(S'') \combb \Psi_t(R)\big)\big)                    \tag{Ind. hypothesis} \\
 &=  \Psi_t\big(\big(\Psi_t(S')  \oplus \Psi_t(S'') \big) \combb \Psi_t(R)\big)                    \tag{Def. of $\combb$} \\
                      &=  \Psi_t(S' \oplus S'')  \combb \Psi_t(R)                          \tag{Item~\ref{psi:unif-congruence:Lemma:item:2} of Lemma~\ref{psi:unif-congruence:Lemma}} \\
                      &= \Psi_t(S)  \combb \Psi_t(R). \tag{Def. of $S$}
\end{align}

\item If $S=\tifthen{S'}{S''}$ and $R$ is arbitrary  then
  \begin{align}
    \Psi_t(S \combb R) &= \Psi_t((\tifthen{S'}{S''}) \combb R) \notag \\
                       &= \Psi_t(\tifthen{S'}{(S''\combb R)})                                    \tag{Item~\ref{if:ext:1} Def.~\ref{reduction:unif:def} of $\combb$} \\
                       &= \Psi_t\big(\tifthen{S'}{\Psi_t((S''\combb R))}\big)                    \tag{Item~\ref{Properties-of-Psi:Lemma:item:3} of Lemma~\ref{nice:prop:Psi:lemma}} \\
                       &= \Psi_t\big(\tifthen{S'}{\big(\Psi_t(S'')\combb \Psi_t(R)\big)}\big)     \tag{Ind. hypothesis} \\
                       &= \Psi_t\big(\tifthen{S'}{\Psi_t(S'')}\big) \combb \Psi_t(R)             \tag{Item~\ref{psi:unif-congruence:Lemma:item:3} of Lemma~\ref{psi:unif-congruence:Lemma}} \\
                       &= \Psi_t\big(\tifthen{S'}{S''}\big) \combb \Psi_t(R)                     \tag{Item~\ref{Properties-of-Psi:Lemma:item:3} of Lemma~\ref{nice:prop:Psi:lemma}} \\
                       & = \Psi_t(S)  \combb \Psi_t(R). \tag{Def. of $S$}
   \end{align} 

\item \label{proof:lemma:unif:free-fp:item:list:list} If $S=\bigand_{i \in I} @i.S_i$  and  $R=\bigand_{j \in J} @j.R_j$  then   let 
 \begin{align*} 
   M_1        &=\bigand_{i \in I\setminus J} @i.S_i              &\tand&&         M_2        &=\bigand_{j \in J\setminus I} @j.R_j,   \\
   M^{\star}_1 &=\bigand_{i\in I \setminus J}@i.\Psi_{t_{|i}}(S_i)   &\tand&&          M^{\star}_2 &=\bigand_{j\in J \setminus I}@j.\Psi_{t_{|j}}(R_j).
   \end{align*} 
The  left-hand side of Eq.(\ref{main:lemma:unif:fixed-point-free:eq}) can be written as
\begin{align}
\textrm{LH}.\ref{main:lemma:unif:fixed-point-free:eq}
&=\Psi_t(  S \combb R) \notag \\
&=  \Psi_t\big(\tifthen{S\&R}{\bigand_{i \in I \cap J} @i.(S_i \combb R_i \oplus S_i \oplus R_i) \uand  M_1 \uand  M_2}\big) \tag{Item~\ref{list:ext} of Def.~\ref{reduction:unif:def}  of $\combb$}\\
&=  \Psi_t\Big(\tifthen{S\&R}{\Psi_t\big(\bigand_{i \in I \cap J} @i.(S_i \combb R_i \oplus S_i \oplus R_i) \uand  M_1 \uand  M_2\big)}\Big) \tag{Item~\ref{Properties-of-Psi:Lemma:item:3} of Lemma~\ref{nice:prop:Psi:lemma}}\\
&= \Psi_t\bigg(\tifthen{S\&R}{\theta\Big(\bigand_{i \in I \cap J} @i.\Psi_{t_{|i}}(S_i \combb R_i\oplus S_i \oplus R_i) \uand M^{\star}_1  \uand  M^{\star}_2 \Big)}\bigg)  \tag{Item~\ref{psi:def:item:and} of Def.~\ref{psi:def} of $\Psi_t(\bigand(\cdot))$}\\
&= \Psi_t\bigg(\tifthen{S\&R}{\bigand_{i \in I \cap J} @i.\Psi_{t_{|i}}(S_i \combb R_i\oplus S_i \oplus R_i) \uand M^{\star}_1  \uand  M^{\star}_2 }\bigg)  \tag{Item~\ref{Properties-of-Psi:Lemma:item:3'} of Lemma~\ref{nice:prop:Psi:lemma}} \\
&= \Psi_t\bigg(\tifthen{S\&R}{\bigand_{i \in I \cap J} @i.\big(\Psi_{t_{|i}}(S_i) \combb \Psi_{t_{|i}}(R_i) \oplus \Psi_{t_{|i}}(S_i) \oplus \Psi_{t_{|i}}(R_i) \big) \uand M^{\star}_1 \uand  M^{\star}_2 }\bigg) \tag{Ind. hyp.}\\
&=\Psi_t\bigg(\bigand_{i\in I} @i.\Big(\Psi_{t_{|i}}(S_i)\Big) \combb \bigand_{j\in J} @i.\Big(\Psi_{t_{|i}}(R_i) \Big)  \bigg)                                    \tag{Item~\ref{list:ext} of Def. ~\ref{reduction:unif:def} of $\combb$}\\
&=\Psi_t\bigg(\theta\Big(\bigand_{i\in I} @i.\big(\Psi_{t_{|i}}(S_i)\Big) \combb \theta\Big(\bigand_{j\in J} @i.\big(\Psi_{t_{|i}}(R_i) \big)\big) \Big) \bigg)     \tag{Lemma~\ref{normalization:of:unif:Lemma}}\\
&=\Psi_t\bigg(\Psi_{t}\Big(\bigand_{i\in I} @i.S_i \Big) \combb  \Psi_{t}\Big(\bigand_{j\in J} @j.R_j \Big)\bigg)                                              \tag{Item~\ref{psi:def:item:and} of Def.~\ref{psi:def} of $\Psi_t(\bigand\cdot)$}\\
&=\Psi_{t}\Big(\bigand_{i\in I} @i.S_i \Big) \combb  \Psi_{t}\Big(\bigand_{j\in J} @j.R_j \Big)  \tag{Lemma~\ref{nice:prop:Psi:lemma}} \\
&= \Psi_t(S)  \combb \Psi_t(R). \tag{Def. of $S$ and $R$}
\end{align}

\item If $S=\most(S')$ and $R=\most(R')$  then assume that $t$ is neither a constant nor a rewriting variable, i.e. $\delta(t)\ge 2$, the case when  $\delta(t)=1$ being trivial since both sides of the equation are equal to $\emptylist$.
 In this case we rewrite $\most(\cdot)$ as $\bigand_i(\cdot)$ and we apply Item~\ref{proof:lemma:unif:free-fp:item:list:list} of this proof.
Let 
\begin{align*}
S^{\star}=\bigand_{i=1,ar(t)} @i.S' &\tand && R^{\star}=\bigand_{i=1,ar(t)} @i.R', 
\end{align*}
and notice that $\Psi_t(S^{\star})=\Psi_t(S)$ and $\Psi_t(R^{\star})=\Psi_t(R)$.
Hence
\begin{align}
 \Psi_t(S \combb R)   &= \Psi_t \big(\most(S') \combb \most(R')\big)                           \notag \\   
                      &= \Psi_t\big( \tifthen{(S\&R)}{\big(\most\big((S'\combb R') \oplus S' \oplus R'\big)\big)}\big)             \tag{Def.~\ref{reduction:unif:def} of $\combb$} \\                    
                      &= \Psi_t \big(\tifthen{(S\&R}{\Psi_t\big(\most\big((S'\combb R') \oplus S' \oplus R'\big)\big)} \big)                    \tag{Item ~\ref{Properties-of-Psi:Lemma:item:3} of Lemma~\ref{nice:prop:Psi:lemma}} \\
                      &= \Psi_t \big(\tifthen{(S^{\star}\&R^{\star})}{\Psi_t\big(\most\big((S'\combb R') \oplus S' \oplus R'\big)\big)} \big)        \tag{Item ~\ref{Properties-of-Psi:Lemma:item:3'''} of Lemma~\ref{nice:prop:Psi:lemma}} \\
                      &= \Psi_t \bigg(\tifthen{(S^{\star}\&R^{\star})}{\Psi_t \Big(\bigand_{i=1,ar(t)} @i.\big((S'\combb R') \oplus S' \oplus R' \big)\Big)} \bigg)   \tag{Item~\ref{psi:def:item:most} of Def.~\ref{psi:def}  of $\Psi_t(\most(\cdot))$} \\
                      &= \Psi_t \bigg(\tifthen{(S^{\star}\&R^{\star})}{\bigand_{i=1,ar(t)} @i.\big((S'\combb R') \oplus S' \oplus R' \big)} \bigg)   \tag{Item ~\ref{Properties-of-Psi:Lemma:item:3'} of Lemma~\ref{nice:prop:Psi:lemma}} \\
                      &= \Psi_t \Big(\bigand_{i=1,ar(t)} @i.S'  \combb \bigand_{i=1,ar(t)} @i.R'\Big)   \tag{Item~\ref{list:ext} of Def. ~\ref{reduction:unif:def} of $\combb$ in which $I=J=\{1,\ldots,ar(t)\}$} \\
                      &= \Psi_t \Big(\bigand_{i=1,ar(t)} @i.S'\Big)  \combb \Psi_t \Big(\bigand_{i=1,ar(t)} @i.R'\Big)   \tag{Item~\ref{proof:lemma:unif:free-fp:item:list:list} of this proof} \\
                      &=  \Psi_t \big(\most(S')\big)  \combb \Psi_t\big(\most(R')\big)     \tag{Item~\ref{psi:def:item:most} of Def.~\ref{psi:def} of $\Psi_t(\most(\cdot))$}\\
                      &= \Psi_t(S) \combb \Psi_t(R).                                         \tag{Def. of $S$ and $R$}
\end{align}
\end{enumerate}
\end{description}
\end{proof}


\section{Properties of the unification reduction system and of \ces}
\label{correction-combination-definitions:section}
This section, together with the next two sections~\ref{correction:unif:general:setting:unfold:sec}  and~\ref{equiv:unif:with:unif:unfolding:sec}, are devoted to developing the ingredients
required  in the proof of the  main result of this paper regarding the correctness of the unification of \ces in the general setting, in which the \ces contain fixed-point operators.
In this  section we introduce definitions and show preliminary results which will be used in   the next two  sections.
In  Subsection~\ref{measures:trees:subsec} we define some measures on the structure of \ces,  namely the number of   nested fixed-point operators of a \ce  and its size.
In Subsection~\ref{termination:confluence:reduction:unif:sec} we show the termination and the confluence of the unification reduction system.
Un Subsection~\ref{iter:unfolding:subsec}  we introduce  the operation of unfolding which turns all fixed-point operators of a \ce into  iterations of arbitrary fixed size.
In Subsection  ~\ref{prop:sem:ces:subsec}  we show some useful properties related to  \ces, namely the semantic equivalence of two \ces when applied to  terms of a certain depth, as well as a condition under which a \ce is equivalent to a fixed-point one.
In Subsection  ~\ref{composition:lemma:section} we show a key  Lemma, called \emph{composition lemma}, that expresses the unification of two \ces in terms of  their  sub-\ces.

\subsection{Measures of \ces: the star height and the depth  of   \ces}
\label{measures:trees:subsec}
Taking into account that the structure of a \ce is no longer a tree but  a tree with back-edges 
that may contain cycles,  we slightly modify the standard  measure  of 
the depth of trees in order to capture both the number of nested 
loops, caused by the nested application  of the fixed-point constructor $\mu$, and the 
distance from the root of the tree to the leaves. 
Many  proofs will be done by  induction   on  this measure.

We adapt the definition of the star height~\cite{eggan1963,Courcelle84} that measures  the depth of Kleen operator $\star$ in regular languages to 
 \ces in order to  capture the  number of the nested fixed-point constructor.

\begin{definition}[Star height  of a \ce]
\label{def:star:height:strategy}
The \emph{star height}  of a \ce   is the function \\ $\h: \ceSet  \longrightarrow    \mathbb{N}$ defined inductively as follows:
\begin{align*}
\h(S) = \begin{cases}
          0 & \tif S \textrm{ is fixed-point free} \\
          \mmax\big\{\h(S'(X_1,\ldots,X_n)),\h(R_1),\ldots, \h(R_n)\big\}  &\tif S = S'(R_1,\ldots,R_n), n \ge 1 \\
          1+ \h(S') &\tif S=\mu X.S'.
       \end{cases}
\end{align*}
\end{definition}

\begin{example}[Star height]
  If $S(X)$ and $R(Y)$  are fixed-point free \ces with distinct free fixed-point variables, then
  \begin{align*}
    \h(S(X))= \h(R(Y))=0.
  \end{align*}
  We compute the star height of the \ces  $\mu X.S(X) \oplus \mu Y.R(Y)$ and  $\mu X.\mu Y. (S(X) \oplus R(Y))$ and $\mu X. \big(S(X) \oplus \mu Y. R(Y)\big)$.
Since  the two fixed-point operators  in $\mu X.S(X) \oplus \mu Y.R(Y)$  are not nested, we have that:
  \begin{align*}
    \h\big(\mu X.S(X) \oplus \mu Y.R(Y)\big) &= \mmax\big\{\h(\mu X.S(X)), \h(\mu Y.R(Y)) \big\} \\
     &= \mmax\big\{1+\h(S(X)), 1+\h(R(Y)) \big\} \\
     &=1.
  \end{align*}

  However, since the two fixed-point operators in  $\mu X.\mu Y. (S(X) \oplus R(Y))$ are  nested, we have that:
\begin{align*}
  \h\big(\mu X.\mu Y. (S(X) \oplus R(Y))\big)  &=  1+ \h\big(\mu Y. (S(X) \oplus R(Y))\big) \\
                                               &=  1+ 1+\h\big(S(X) \oplus R(Y)\big) \\
                                               &=2.
\end{align*}
And similarly, the two fixed-point operators in $\mu X. \big(S(X) \oplus \mu Y. R(Y)\big)$ are nested, thus we  get: 
\begin{align*}
  \h\big(\mu X. \big(S(X) \oplus \mu Y. R(Y)\big)\big)  &=2.
\end{align*}
\end{example}

We combine the star height  and the tree depth, defined in Definition~\ref{def:Delta:strategy'}, to obtain the desired measure that 
takes into account both the number of the nested fixed-point constructors  and the size of a \ce.
\begin{definition}[Depth  of a  \ce]
\label{def:Delta:strategy}
The \emph{depth}    of a \ce $S$ is the function \\ $\Delta: \ceSet  \longrightarrow    \mathbb{N} \times \mathbb{N}$ defined by 
\begin{align*}
\Delta(S)=(\h(S),\delta(S)).
\end{align*}
\end{definition}

Notice that if a \ce $S$ is fixed-point free, i.e. it does not contain the fixed-point 
constructor $\mu$,  then its depth  $\Delta(S)=(0,n)$, for some $n \in \mathbb{N}$.

The following fact shows that the depth of a fixed-point \ce is strictly greater than the depth of its unfolding.
\begin{fact}
\label{Delta:monotonic:fact}
Let $\mu X.S(X)$  be a \ce where $X$ is free in $S(X)$.
Then for any integer $n\ge 0$ we have  
\begin{align*}
\Delta(\mu^{n} X.S(X)) < \Delta(\mu X.S(X)).
\end{align*}
\end{fact}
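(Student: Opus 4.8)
The plan is to reduce everything to the first coordinate of $\Delta$. Since $\Delta(S)=(\h(S),\delta(S))$ and the order on $\mathbb{N}\times\mathbb{N}$ is lexicographic, it suffices to prove the \emph{strict} inequality $\h(\mu^{n} X.S(X)) < \h(\mu X.S(X))$; the tree-depth coordinate $\delta$ then plays no role. From the fixed-point clause of Definition~\ref{def:star:height:strategy} I would first record the identity $\h(\mu X.S(X)) = 1 + \h(S(X))$.

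The core of the proof is the auxiliary bound $\h(\mu^{n} X.S(X)) \le \h(S(X))$, valid for every $n\ge 0$, which I would establish by induction on $n$. For $n=0$ we have $\mu^{0} X.S(X)=\emptylist$, which is fixed-point free, so $\h(\mu^{0}X.S(X))=0\le \h(S(X))$. For the inductive step, the definition of unfolding gives $\mu^{n+1}X.S(X)=S(\mu^{n}X.S(X))$, that is, the strategy obtained by substituting $\mu^{n}X.S(X)$ for the free fixed-point variable $X$ in the body $S$. Applying the substitution clause of Definition~\ref{def:star:height:strategy} with template $S$ and argument $\mu^{n}X.S(X)$ yields $\h\big(S(\mu^{n}X.S(X))\big)=\mmax\{\h(S(X)),\h(\mu^{n}X.S(X))\}$, and the induction hypothesis $\h(\mu^{n}X.S(X))\le \h(S(X))$ collapses this maximum to $\h(S(X))$, closing the induction.

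Putting the two ingredients together gives, for every $n\ge 0$, the chain $\h(\mu^{n}X.S(X)) \le \h(S(X)) < 1+\h(S(X)) = \h(\mu X.S(X))$. Thus the first coordinate of $\Delta(\mu^{n}X.S(X))$ is strictly smaller than that of $\Delta(\mu X.S(X))$, and hence $\Delta(\mu^{n}X.S(X)) < \Delta(\mu X.S(X))$ in the lexicographic order, as required.

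The one point that needs care is the application of the substitution clause: it is crucial that the star height charges a unit only for the $\mu$-binders already present in the template and does \emph{not} add extra nesting for whatever strategy is plugged into the hole $X$. This is precisely what makes an unfolding strictly cheaper, in star height, than the binder $\mu X$ it removes, and it is what forces strictness on the first coordinate even though $\delta$ may fail to decrease (and can in fact grow) under unfolding. I would therefore be careful to decompose $\mu^{n+1}X.S(X)$ as the whole body $S$ applied to the single argument $\mu^{n}X.S(X)$, rather than splitting $S$ into its fixed-point-free skeleton, so that the induction hypothesis applies directly to that argument.
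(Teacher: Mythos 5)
Your proposal is correct and follows essentially the same route as the paper: both reduce the claim to the star-height coordinate, use the substitution (max) clause of Definition~\ref{def:star:height:strategy} to show the unfolding's star height is bounded by $\h(S(X))$, and conclude from $\h(\mu X.S(X)) = 1+\h(S(X))$ via the lexicographic order. Your explicit induction on $n$ (yielding $\h(\mu^{n}X.S(X)) \le \h(S(X))$) is just a cleaner formalization of the paper's informal unrolling, which asserts the equality $\h(\mu^{n}X.S(X)) = \h(S(X))$ directly.
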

\begin{proof}
The case    when $n=0$ is trivial since $\Delta(\mu^{0} X.S(X))=\Delta(\emptylist)=(0,0)$. 
We show next that $\h(\mu X.S(X))=1 + \h(\mu^{n} X.S(X))$ for any $ n \ge 1$. 
It follows from the definition of the star height that  $\h(\mu^{n} X.S(X))=\h\big(S(S(\ldots(S(\emptylist))))\big)=\mmax\{\h(S(X)),\h(S(\emptylist))\}=\h(S(\emptylist)) = \h(S(X))$.
On the other hand, by the definition of the star height $\h(\mu X.S(X))=1 + \h(S(X))$. And it follows from the lexicographic order that $\Delta(\mu^{n} X.S(X))< \Delta(\mu X.S(X))$. 
\end{proof}

We next define the number of jumps (i.e. \ces which are position jumps of the form  $@i.S$ or $\most$s)  that lie  between  the root of a  \ce to   a free fixed-point variable.
The idea is that by meeting jumps,  the \ce makes progress. In particular, if at least one jump  lies between any fixed-point constructor $\mu X$ and the occurrence of $X$  in a \ce $S$, then  $S$  is monotonic.
Besides, we can  compare  the semantics  of two \ces $M(S)$ and $M(R)$ thanks to  number of jumps  between the root of $M(X)$ and $X$.

\begin{definition}
\label{number:of:jumps:def}
Let $S(X)$ be a \ce where the fixed-point variable $X$ is free and appears once.
The  \emph{number of jumps}   between  the root of $S(X)$ and $X$, denoted by  $\Pi_X(S(X))$,  is inductively defined  as follows:
\begin{align*}
\Pi_X(X)                              &= 0   \\
\Pi_X(u;S'(X))                        &=     \Pi_X(S'(X)) \\
\Pi_X(S_1(X) \oplus  S_2)             &=  \Pi_X(S_1(X)) \\
\Pi_X(S_1 \oplus  S_2(X))             &=  \Pi_X(S_2(X)) \\
\Pi_X\big(\tifthen{S''}{S'(X)}\big)   & =    \Pi_X(S'(X)) \\
\Pi_X\big(\tifthen{S''(X)}{S'}\big)   & =    \Pi_X(S''(X)) \\
\Pi_X(\mu Y.S'(X,Y))                  & =  \Pi_X(S'(X,Y))\\
\Pi_X\big( (\bigand_{i=1,m} @i.S_i) \wedge @j.S'(X)\big)      & = 1+ \Pi_X(S'(X))  \\
\Pi_X(\most(S'(X))             & = 1+  \Pi_X(S'(X)).
\end{align*}
\end{definition}

\begin{example}
  Let $u,u'$ be patterns in $\mycal{T}$, and let $S'$ be a fixed-point free \ce.
  Let $S(X)$ be the following \ce:
  \begin{align*}
    S(X) = u; @1.\big(\most(u';@2.X) \oplus S'\big).
  \end{align*}
  There are three jumps between the root of $S(X)$ and $X$, which are   $@1.(\cdot)$ and $\most(\cdot)$ and $@2.(\cdot)$. That is,
  \begin{align*}
    \Pi_{X}(S(X)) & = \Pi_{X}\big(u; @1.\big(\most(u';@2.X) \oplus S'\big) \big) \\
    & = \Pi_{X}\big(@1.\big(\most(u';@2.X) \oplus S'\big) \big) \\
    & = 1+\Pi_{X}\big(\most(u';@2.X) \oplus S'  \big)  \\
    & = 1+\Pi_{X}\big(\most(u';@2.X) \big)  \\
    & = 2+\Pi_{X}\big(u';@2.X  \big)  \\
    & = 2+\Pi_{X}\big(@2.X  \big)  \\
    & = 3+\Pi_{X}\big(X  \big)  \\
    & = 3.
  \end{align*}
\end{example}
Notice that if $S$ is monotonic, then for every sub-\ce $\mu X.S'(X)$ of $S$, we have that $\Pi_X(S'(X))\ge 1$.

\subsection{Termination and confluence of the  unification reduction system}
\label{termination:confluence:reduction:unif:sec}
To show the termination of the reduction system $\Unif$
we need to define a measure on the tuples that strictly decreases with each derivation rule. 
Notice that all the  reduction rules strictly decrease the size of one or both of the left-hand side \ces except 
the  fixed-point rules (\ref{fixed:ext:1}) and (\ref{fixed:ext:2}) which can replace  $\mu X.S(X)$  with  $S(\mu X.S(X))$ that is larger than $\mu X.S(X)$. 
However,  these  fixed-point rules increase the size of the memory  because  the right-hand side memory  is augmented with $(\mu X.S(X),R,\cdot)$. 
Since the size of any  memory  related to two fixed \ces is bounded, to ensure the termination of $\Unif$, we need to define a  measure that 
 couples the difference between such bound and the size of the memory    with the size of the \ces.

\begin{definition}
Let $S$ and $R$ be \ces, and let $\Eu{M}$ be a memory in $\mathfrak{M}(S,R)$. We pose
\begin{align*}
\Lambda(S,R,\Eu{M}) := |\Phi_{\mu}(S)|\cdot|\Phi(R)| + |\Phi(S)|\cdot|\Phi_{\mu}(R)| - |\Eu{M}|
\end{align*}
and define the measure $(\Lambda(S,R,{\Eu{M}}),\Delta(S),\Delta(R))$.
\end{definition}

\begin{proposition} 
 The unification reduction system $\Unif$ enjoys the following properties.
\begin{enumerate} 
\item The reduction system $\Unif$  is terminating and confluent. 
\item The  normal form of a pre-\ce with respect to $\Unif$  is a \ce in $\mycal{C}$ (i.e. the normal form does not contain tuples).
\end{enumerate}
\end{proposition}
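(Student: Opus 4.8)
The plan is to prove the first item in two halves—termination and confluence—and then to derive the second item as a consequence of the exhaustiveness of the rules on tuples. For termination I would exhibit a well-founded measure that strictly decreases along every reduction; confluence would then follow from Newman's lemma once local confluence is checked; and for the normal-form claim I would show that \emph{every} tuple is a redex, so that a pre-\ce in normal form can contain none and is therefore a genuine \ce.

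For termination I would attach to each tuple $\tuple{S,R,\Eu{M}}$ the triple $(\Lambda(S,R,\Eu{M}),\Delta(S),\Delta(R))$, ordered lexicographically, where $\mathbb{N}$ is well-ordered and $\Delta$ ranges over $\mathbb{N}\times\mathbb{N}$ with its (well-founded) lexicographic order. Since a single rule may replace one tuple by a pre-\ce containing several sub-tuples (e.g.\ Rules~\ref{choice:ext:1}, \ref{list:ext}, \ref{most:ext:1}), I would lift this order to the multiset of the triples of all tuples occurring in a pre-\ce and use its multiset extension, which is again well-founded. The first point to establish is $\Lambda(S,R,\Eu{M})\ge 0$: because the variable stored in a memory triple is determined by its first two components via $\fresh{\cdot}$, a given pair of components occurs at most once in $\Eu{M}$, so $|\Eu{M}|$ is bounded by $|\Phi_{\mu}(S)|\cdot|\Phi(R)|+|\Phi(S)|\cdot|\Phi_{\mu}(R)|$, exactly the quantity from which it is subtracted; moreover this bound is non-increasing along reductions because the components of a tuple only ever pass to augmented sub-\ces.

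I would then inspect the rules. Every rule except the fixed-point ones leaves the memory unchanged (and can only shrink the augmented sub-\ce sets), while strictly decreasing the tree depth $\delta$ of one of the two components without raising its star height $\h$; thus $\Lambda$ does not increase and each spawned sub-tuple has a strictly smaller $\Delta(S)$ or $\Delta(R)$ (for the $\most$-elimination Rules~\ref{most:ext:2}--\ref{most:ext:3} one reads $\most(S)$ as strictly deeper than its $\bigand$-expansion, so $\Delta$ of the rewritten component still drops). The delicate case, and the main obstacle, is the pair of fixed-point Rules~\ref{fixed:ext:1} and \ref{fixed:ext:2}, which replace $\mu X.S(X)$ by the larger $S(\mu X.S(X))$ and so \emph{increase} $\Delta$. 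Here I would use the standing assumption that each fixed-point variable occurs once: writing $\xi=\mu X.S(X)$, the definition of $\Phi$ gives $\Phi(S(\xi))\subseteq\Phi(\xi)$ and $\Phi_{\mu}(S(\xi))\subseteq\Phi_{\mu}(\xi)$, while the memory grows by exactly one triple $(\xi,S',\cdot)$; hence $\Lambda$ strictly decreases and the dominant first coordinate of the measure drops no matter how $\Delta$ grows. The branch returning $Z$ spawns no tuple at all. This settles termination.

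For confluence I would note that, by the grammar of pre-\ces, the two distinguished arguments of any tuple are genuine \ces in $\mycal{C}$, never pre-\ces; consequently tuples never nest and any two distinct tuples of a pre-\ce occupy parallel positions. Reductions at parallel positions commute, and the decreasing priority order of Definition~\ref{reduction:unif:def} makes the rule applicable to a given tuple unique, so no genuine critical pairs arise; the system is locally confluent, and with termination Newman's lemma yields confluence. This is what legitimises defining the unification as the unambiguous normal form $\NF\tuple{S,R,\emptyset}$. Finally, for the second item I would show every tuple $\tuple{S,R,\Eu{M}}$ is reducible by case analysis on the top constructors of $S$ and $R$: if either is $\emptylist$, $u;{\cdot}$, a left-choice, an if-then, or a $\mu$, one of Rules~\ref{final:1}--\ref{fixed:ext:2} fires, and the remaining position-like cases (both of the form $@\varepsilon.\tau$, $\bigand_i @i.S_i$, or $\most$) are covered by Rules~\ref{final:3}, \ref{list:ext'}, \ref{list:ext} and \ref{most:ext:1}--\ref{most:ext:3}, reading a bare $@\varepsilon.\tau$ as the $I=\emptyset$ instance of the list form. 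Hence the rules are exhaustive on tuples, a normal form is tuple-free, and since deleting the tuple production from the pre-\ce grammar leaves exactly the grammar of $\mycal{C}$, every tuple-free pre-\ce is a \ce; normal forms therefore lie in $\mycal{C}$.
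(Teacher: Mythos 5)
Your proof follows the paper's strategy for termination (the same measure $(\Lambda(S,R,\Eu{M}),\Delta(S),\Delta(R))$, sensibly lifted to multisets), and your confluence and normal-form arguments are sound and in fact more complete than the paper's own one-line sketches. The genuine gap is in the one case you yourself call delicate: the fixed-point rules. Your argument rests on the inclusion $\Phi_{\mu}(S(\xi))\subseteq\Phi_{\mu}(\xi)$ for $\xi=\mu X.S(X)$, and this is false in general: the single-occurrence assumption does not exclude nested fixed points whose bodies contain the outer variable. Take $S(X)=u;\mu Y.(@1.X \uand @2.Y)$. By the paper's definition (the $\mu$-clause of $\Phi_{\mu}$ recurses into $S(X)$, not into $S(\xi)$) one has $\Phi_{\mu}(\xi)=\set{\xi,\;\mu Y.(@1.X \uand @2.Y)}$, whereas $\Phi_{\mu}(S(\xi))$ contains the \emph{substituted} nested fixed point $\mu Y.(@1.\xi \uand @2.Y)$, which is not in $\Phi_{\mu}(\xi)$; hence $|\Phi_{\mu}(S(\xi))|=3>2=|\Phi_{\mu}(\xi)|$. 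Concretely, with $R=u';@\varepsilon.\tau$ (so $|\Phi(R)|=2$ and $|\Phi_{\mu}(R)|=0$), Rule \ref{fixed:ext:1} rewrites $\tuple{\xi,R,\emptyset}$ to $\mu Z.\tuple{S(\xi),R,\set{(\xi,R,Z)}}$, and $\Lambda$ goes from $2\cdot 2-0=4$ up to $3\cdot 2-1=5$: the dominant coordinate of your measure strictly \emph{increases} exactly where the descent is needed. The same defect recurs in your $\most$ cases: Rules \ref{most:ext:2}--\ref{most:ext:3} replace $\most(S)$ by $\bigand_{i}@i.S$, whose $\Phi$-set is strictly larger (one copy of $@i.S$ per child), so $\Lambda$ can again increase whenever the opposite component contains a fixed point; your depth convention for $\most$ repairs $\Delta$ but not $\Lambda$. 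Relatedly, your claim $\Lambda\ge 0$ also fails on this reading, since a tuple deep in a reduction inherits the memory accumulated by its ancestors while its own components may be tiny.

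These problems have a common source and a common repair: the first coordinate of the measure must not be recomputed from the current components. Fix a bound $B$ once and for all from the initial tuple $\tuple{S_0,R_0,\emptyset}$ — this requires the invariant, which you should state and prove, that every component of every tuple arising in a reduction stays inside a finite set of \ces determined by $S_0$ and $R_0$ (essentially $\Phi(S_0)$ and $\Phi(R_0)$, closed under the finitely many $\most$-expansions), so that only finitely many memory triples can ever be created — and take $B-|\Eu{M}|$ as the first coordinate. Then Rules \ref{fixed:ext:1}--\ref{fixed:ext:2} strictly decrease it by their freshness side conditions, independently of how the components grow, and every other rule leaves it unchanged while strictly decreasing $(\Delta(S),\Delta(R))$ of each spawned tuple, where your multiset lifting and $\most$ convention then do their work. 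With this modification your termination argument is sound, and your confluence part (non-nested tuples at parallel positions, deterministic rule selection by priority, Newman's lemma) and normal-form part (exhaustiveness of the rules on tuples) stand as written — indeed they supply exactly what the paper leaves unfinished.
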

\begin{proof}
\begin{enumerate} 
\item The termination is guaranteed by the fact that each reduction rule strictly decreases the measure \\ $(\Lambda(S,R,{\Eu{M}}),\Delta(S),\Delta(R))$ with respect to the lexicographic order.
  The confluence     is  guaranteed by  the  priority order imposed  on the reduction rules. 

\item Each rule either advances in the \ce of the tuple of the left-hand side part of this rule, or reduces the left-hand side part into a \ce.
\end{enumerate}
\end{proof}

We show next in Lemma~\ref{position:cross:in:generated:ces:lemma} a useful property of the unification
of monotonic \ces: if the same fixed-point \ce appears twice in a derivation with respect to the unification reduction system $\Unif$,  then this derivation produces a jump.  
Indeed this is a direct   consequence of monotonicity.


\begin{lemma}
\label{position:cross:in:generated:ces:lemma}
Let $\mu X.S(X)$, $R$ and $R'$ be \ces.
Let  $T(Z)$ be a  pre-\ce.
Let $\Eu{M}, \Eu{M}'  \in  \mathfrak{M}$ be memories.
If there is a series of derivations of one of the following forms:
\begin{align*}
  \tuple{\mu X.S(X),R,\Eu{M}}  & \xreduces{\star} T(\tuple{\mu X.S(X),R',\Eu{M}'}) \\
  \textrm{ or } &\\
  \tuple{R,\mu X.S(X),\Eu{M}}  & \xreduces{\star} T(\tuple{R',\mu X.S(X),\Eu{M}'}) \\
  \textrm{ or } &\\
  \tuple{\mu X.S(X),R,\Eu{M}}  & \xreduces{\star} T(\mu X.S(X)) \\
  \textrm{ or } &\\
  \tuple{R,\mu X.S(X),\Eu{M}}  & \xreduces{\star} T(\mu X.S(X))
\end{align*}
in $\Unif$, then there is  at least one jump between the root of  $T(Z)$ and  $Z$. That is,  
\begin{align*}
  \Pi_{Z}(T(Z)) \ge 1.
\end{align*}
\end{lemma}
\begin{proof}
We only consider the first derivation since the other ones can be obtained by the same arguments. 
Recall that $\mu X.S(X)$ is monotonic by the general Assumption~\ref{global:assumptions:ces}, that is,  between  $\mu X.S(X)$ and  $X$ there is a   position  jump  or $\most(\cdot)$.
This implies that, there exist  \ces $\tilde S$ and $\tilde  R$, a memory $\tilde{\Eu{M}}$, a tuple $\tilde{T}(\tilde{Z})$, and  a series of  derivations 
\begin{align*}
\tuple{\mu X.S(X),R,\Eu{M}} \xreduces{\star} \tilde{T}(\tuple{\tilde{S},\tilde{R},\tilde{\Eu{M}}})  \xreduces{\star} T(\tuple{\mu X.S(X),R',\Eu{M}'})
\end{align*}
in $\Unif$ where $\tilde{S}$ is either of the form $\tilde{S}=\bigand_i @i.S'_i$ or  $\tilde{S}=\most(S'')$. This implies that one of the rules (\ref{list:ext'}), (\ref{list:ext}), (\ref{most:ext:1}), (\ref{most:ext:2}), (\ref{most:ext:3})  is applied in the derivation from $\tilde{T}(\tuple{\tilde{S},\tilde{R},\tilde{\Eu{M}}})$ to  $T(\tuple{\mu X.S(X),R',\Eu{M}'})$.
Each of which produces a position jump  or $\most(\cdot)$.  
\end{proof}

An immediate  consequence of the previous Lemma~\ref{position:cross:in:generated:ces:lemma}  is the following Corollary.
\begin{corollary}
  \label{monotonic:unif:corolarry}
The unification of two monotonic \ces is a monotonic \ce. 
\end{corollary}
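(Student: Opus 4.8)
The plan is to verify the monotonicity condition of Definition~\ref{Monotony} directly on the normal form $S \combb R = \NF\tuple{S,R,\emptyset}$: I must check that every fixed-point binder occurring in it separates its variable from its bound occurrences by at least one position $@i.$ with $i\in\mathbb{N}$ (i.e.\ $i\neq\varepsilon$) or one $\most(\cdot)$. First I would recall, from the preceding proposition, that $\NF\tuple{S,R,\emptyset}$ is a genuine \ce, containing no tuples, and observe that the only reduction rules of $\Unif$ that create a fixed-point binder $\mu Z$ are the fixed-point rules~(\ref{fixed:ext:1}) and~(\ref{fixed:ext:2}). Hence it suffices to analyse one such binder $\mu Z$ together with the occurrences of the leaf $Z$ in its body.

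Next I would trace how $Z$ is born and where it resurfaces, treating rule~(\ref{fixed:ext:1}) (the case of~(\ref{fixed:ext:2}) being symmetric). The fresh variable $Z=\fresh{\xi,R'}$ is introduced while reducing a tuple $\tuple{\xi,R',\Eu{M}}$ in which $\xi=\mu X.S(X)$ is a fixed-point sub-\ce of $S$ and $(\xi,R',\cdot)\notin\Eu{M}$; the rule emits the binder $\mu Z$ enclosing the continued reduction of $\tuple{S(\xi),R',\Eu{M}'}$ with $\Eu{M}'=\Eu{M}\cup\set{(\xi,R',Z)}$. By inspection of the rules, the leaf $Z$ is produced afterwards exactly when the same tuple $\tuple{\xi,R',\cdot}$ recurs, now carrying $(\xi,R',Z)$ in its memory. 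Consequently each occurrence of $Z$ in the body of $\mu Z$ terminates a reduction branch
\[
\tuple{\xi,R',\Eu{M}} \xreduces{\star} P[\tuple{\xi,R',\Eu{M}''}],
\]
and the path in the produced pre-\ce from the $\mu Z$ binder down to that leaf $Z$ is precisely the root-to-hole path of $P$.

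I would then invoke Lemma~\ref{position:cross:in:generated:ces:lemma} with $k=1$, taking $R_0=R_1=R'$ (an admissible instance, since the lemma allows the second component to be arbitrary). It gives $\Pi_{Z}(P(Z))\ge 1$, i.e.\ the path from the binder to this occurrence of $Z$ crosses at least one position or one $\most(\cdot)$, so that occurrence is guarded. Ranging over every occurrence of $Z$ and every fixed-point binder of the normal form --- and using the analogue of Lemma~\ref{position:cross:in:generated:ces:lemma} for the second component (whose proof is symmetric) to cover the binders created by rule~(\ref{fixed:ext:2}) --- shows that every fixed-point variable of $S \combb R$ is guarded, whence $S \combb R$ is monotonic.

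I expect the crux to lie in the second step: establishing rigorously the correspondence between the operational recurrences of the tuple $\tuple{\xi,R',\cdot}$ along a reduction and the static occurrences of the leaf $Z$ in the body of $\mu Z$, and confirming that the crossing count measured along the reduction branch by Lemma~\ref{position:cross:in:generated:ces:lemma} coincides with the count $\Pi_Z$ of positions and $\most$ on the root-to-leaf path of the resulting \ce. Once this bridge between the derivation view and the syntactic view is secured, the appeal to Lemma~\ref{position:cross:in:generated:ces:lemma} is immediate.
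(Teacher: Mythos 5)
Your proposal is correct and takes essentially the same route as the paper, which justifies this corollary in one line as an ``immediate consequence'' of Lemma~\ref{position:cross:in:generated:ces:lemma}: like the paper, you reduce monotonicity of the normal form to that lemma by observing that every binder $\mu Z$ and every leaf occurrence of $Z$ arise from a first encounter and a recurrence of the same tuple $\tuple{\xi,R',\cdot}$ under rules~(\ref{fixed:ext:1})--(\ref{fixed:ext:2}), and then apply the lemma with $k=1$ (together with its symmetric analogue) to get $\Pi_Z \ge 1$. Your write-up merely makes explicit the derivation-to-syntax correspondence that the paper leaves implicit.
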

\subsection{Iteration mapping and (generalized) unfolding of \ces}
\label{iter:unfolding:subsec}
We next generalize the notion of unfolding of \ces  to allow the replacement of  each fixed-point constructor  of a \ce  by an iteration of arbitrary fixed size.
The resulting \ce is obviousely fixed-point free.
\begin{definition}[Iteration mapping, unfolding of a \ce]
\label{ufold:def}
Let $S$ be a \ce with bound fixed-point variables $X_1,\ldots,X_r$ and  let $\mathbf{s}:\{X_1,\ldots,X_r\} \to  \mathbb{N}$ be a mapping, called hereafter \emph{iteration mapping}.  
The \emph{unfolding} of $S$ with respect to $\mathbf{s}$, denoted by  $\ufold{S}{\mathbf{s}}$, consists of replacing each fixed-point constructor by a certain number of iterations given by $\mathbf{s}$.
It is inductively defined  as follows:
\begin{align*}
\ufold{\emptylist}{\mathbf{s}}               &= \emptylist \\
\ufold{X}{\mathbf{s}}                        &= X   \\
\ufold{@\varepsilon.\mbf\tau}{\mathbf{s}}    &=@\varepsilon.\mbf\tau \\
\ufold{u;S}{\mathbf{s}}                     &=   u;\ufold{S}{\mathbf{s}}  \\ 
\ufold{@p.S}{\mathbf{s}}                   &= @p.\ufold{S}{\mathbf{s}}  \\ 
\ufold{S_1 \oplus  S_2}{\mathbf{s}}         &=  \ufold{S_1}{\mathbf{s}} \oplus \ufold{S_2}{\mathbf{s}}  \\
\ufold{\bigand_{i=1,m} S_i}{\mathbf{s}}      & =  \bigand_{i=1,m} \ufold{S_i}{\mathbf{s}} \\
\ufold{\tifthen{S_1}{S_2}}{\mathbf{s}}     & =  \tifthen{\ufold{S_1}{\mathbf{s}}}{\ufold{S_2}{\mathbf{s}}} \\
\ufold{\mu X.S(X)}{\mathbf{s}}             & =  \mu^{\mathbf{s}(X)}X.\ufold{S(X)}{\mathbf{s}}. 
\end{align*}
For two iteration mappings $\mathbf{s}$ and $\mathbf{s}'$ defined on the same domain, we shall write $\mathbf{s} \ge \mathbf{s}'$ to mean that  $\mathbf{s}(X) \ge \mathbf{s}'(X)$ for any $X$ in the domain.
We shall write also $\mathbf{s} > \mathbf{s}'$ to mean that $\mathbf{s} \ge \mathbf{s}'$ and there exists  $X$ in the domain such that  $\mathbf{s}(X) > \mathbf{s}'(X)$ .
\end{definition}
Notice that, for a \ce $S$ and an iteration mapping $\mathbf{s}$, if $S$ is fixed-point free  then $\ufold{S}{\mathbf{s}}=S$.
\begin{example}[Unfolding of a \ce]
  Let $S(X)$ and $R(X)$ be fixed-point free \ces. Let
   \begin{align*}
     & T(X) =S(X) \oplus \mu Y. R(Y)
    \end{align*}
    be  a \ce. Consider the iteration mapping $\mathbf{s}$  defined by :
    \begin{align*}
      \mathbf{s} =\set{X \mapsto 2, Y\mapsto 3}.
    \end{align*}
    Then the unfolding  of the \ce $\mu X.T(X)$ with respect to  $\mathbf{s}$ is defined as follows:
    \begin{align*}
      \ufold{\mu X.T(X)}{\mathbf{s}} &= \mu^2 X.\ufold{T(X)}{\mathbf{s}}    \\
       &= \mu^2 X.\bufold{S(X) \oplus \mu Y. R(Y)}{\mathbf{s}}    \\
      &= \mu^2 X.\big(\ufold{S(X)}{\mathbf{s}} \oplus  \bufold{\mu Y. R(Y)}{\mathbf{s}}\big)    \\
      &= \mu^2 X.\big(S(X) \oplus  \bufold{\mu Y. R(Y)}{\mathbf{s}}\big) \tag{Since $S(X)$ is fixed-point free}   \\
      &= \mu^2 X.\big(S(X) \oplus  \mu^3 Y.\ufold{R(Y)}{\mathbf{s}}\big)   \\
      &= \mu^2 X.\big(S(X) \oplus  \mu^3 Y.R(Y)\big).   \tag{Since $S(X)$ is fixed-point free}
    \end{align*}
    Further computations involve the replacement of each fixed-point operator by an iteration,  given in  Definition~\ref{ufold:iteration:def}, as follow.
    Let $ T'(X)$ be the \ce:
     \begin{align*}
         T'(X) =S(X) \oplus  \mu^3 Y.R(Y),
     \end{align*}
     hence,
        \begin{align*}
          \mu^2 X.\big(S(X) \oplus  \mu^3 Y.R(Y)\big) &= \mu^2 X.T'(X) \\
          &=  T'\big(\mu^1 X.T'(X)\big) \\
          &=  T'\big(T'(\emptylist)\big) \tag{Def. ~\ref{ufold:iteration:def}} \\
          &=  S\big(T'(\emptylist)\big) \oplus  \mu^3 Y.R(Y)  \tag{Def. of $T'(X)$} \\
          &=  S\big( S(\emptylist) \oplus  \mu^3 Y.R(Y) \big) \oplus  \mu^3 Y.R(Y)  \tag{Def. of $T'(X)$}\\
          &=  S\big( S(\emptylist) \oplus  R(R(R(\emptylist))) \big) \oplus  R(R(R(\emptylist))).
        \end{align*}
\end{example}

\subsection{Properties   of \ces and their fixed-points}
\label{prop:sem:ces:subsec}

We give fundamental properties of \ces regarding their semantics and fixed-points.
Namely the properties related, on the one hand, to the composition of \ces
in the sense of a \ce being a sub-\ce of another one (Lemma~\ref{depth:position:composition:lemma}), and on the other hand, a sufficient condition under which a \ce is  equivalent to a fixed-point one (i.e. Corollary~\ref{general-fixed-point-corollary}). Finally, we study the equivalence between a \ce and its unfolding (Lemma~\ref{unfold:equiv:lemma}).

\begin{lemma}
\label{depth:position:composition:lemma}
Let $S(X)$, $R$ and $R'$ be  \ces where  the fixed-point variable $X$ appears  once  in $S(X)$, and let  $n',n''\ge 1$.
\begin{enumerate}
\item \label{depth:position:composition:lemma:item:1} If $R \equiv_{n'} R'$ and   $n''=\Pi_{X}(S(X))$ then  $S(R)\equiv_{n'+n''} S(R')$. 
\item \label{depth:position:composition:lemma:item:2} If $R \equiv_{n'} R'$ and  $n''\le n'$  then  $S(R)\equiv_{n''} S(R')$. 
\item \label{depth:position:composition:lemma:item:3} For any fixed-point free  \ce $\tilde{S}(X^1,\ldots,X^k)$, and \ces $S_1,\ldots,S_k$ with $k\ge 1$, we have that 
  \begin{align}
    \label{depth:position:composition:lemma:eq}
    \tilde{S}(S_1,\ldots,S_k) \equiv_{\mathbf{m}}\tilde{S}(\emptylist,\ldots,\emptylist)
    &&\textrm{ where }&&
    \mathbf{m}=\min \bset{ \Pi_{X^i}(\tilde{S}(X^1,\ldots,X^k)) \gvert  S_i \neq \emptylist, i=1,\ldots,k}.
  \end{align}
  \end{enumerate} 
\end{lemma}
\begin{proof}
  The proof of the two first items can be easily done by a straightforward induction on $S(X)$ and does not provide any difficulties.
  The proof of the third item can be easily done by a straightforward induction on $\tilde{S}(X^1,\ldots,X^k)$ since it is a generalization of the first item. 
\end{proof}

Notice that Eq.(\ref{depth:position:composition:lemma:eq}) holds as well  if we omit the condition $S_i \neq \emptylist$, i.e.
  \begin{align}
    \label{depth:position:composition:lemma:eq'}
    \tilde{S}(S_1,\ldots,S_k) \equiv_{\mathbf{m}}\tilde{S}(\emptylist,\ldots,\emptylist)
    &&\textrm{ where }&&
    \mathbf{m}=\min \bset{ \Pi_{X^i}(\tilde{S}(X^1,\ldots,X^k)) \gvert i=1,\ldots,k}
  \end{align}
  because
  \begin{align*}
    \min \bset{ \Pi_{X^i}(\tilde{S}(X^1,\ldots,X^k)) \gvert  i=1,\ldots,k} \le \min \bset{ \Pi_{X^i}(\tilde{S}(X^1,\ldots,X^k)) \gvert  S_i \neq \emptylist, i=1,\ldots,k}.
    \end{align*}
  Indeed, Eq.(\ref{depth:position:composition:lemma:eq}) is more refined than Eq.(\ref{depth:position:composition:lemma:eq'}) but we shall sometimes use the latter one.
   
From Item 2 of Lemma ~\ref{depth:position:composition:lemma}  it follows that one has to keep in mind that the notion of $n$-equivalence between \ces can be equivalently restated as follows:
two \ces are $n$-equivalent if they give the same result  when applied to any term $t$ of depth $\delta(t) \le n$ and not just
of depth $\delta(t) = n$ as initially  defined   in Definition~\ref{equivalence:ces:def}.
\begin{corollary}
\label{fixed:point:semantics:corollary}
Let $T(X)$ and  $R$ be  \ces.
For any term $t$ of depth $\delta(t)=n$ and any  positive integer  $n' \ge 1$, if we denote by  $T^{(n)}(R)$ the $n$-times iteration \ce $T(T(\ldots(R)))$, 
then we have that 
\begin{align}
\label{fixed:point:semantics:eq}
\sembrackk{\mu^{n+n'}X.T(X))}(t) = \sembrackk{\mu^{n} X.T(X)}(t) = \sembrackk{T^{(n)}(R)}(t) . 
\end{align}
\end{corollary}

The following Corollary is a crucial one. It guarantees  that to  show that two \ces $\mu X.T(X)$ and  $R$ are $n$-equivalent,
it is enough to show that $R$ is a fixed-point of $T(X)$ in the sense that $T(R)$ and $R$ are $n$-equivalent.
\begin{corollary}
\label{general-fixed-point-corollary}
Let   $T(X)$ and $R$ be \ces. 
For  any $n\ge 1$, we have that 
\begin{align*}
 \tif &&  T(R) \equiv_{n} R   && \tthen &&  \mu X.T(X) \equiv_{n} R.
\end{align*}
\end{corollary}
\begin{proof}
  Let $t$ be a term of depth $n$. If $\sembrackk{T(R)}(t) =  \sembrackk{R}(t)$ then clearly $\sembrackk{T^{(n)}(R)}(t) =  \sembrackk{R}(t)$.
  On the other hand, it  follows  from the second equality of Eq.(\ref{fixed:point:semantics:eq}) of   Corollary~\ref{fixed:point:semantics:corollary}  that  $ \sembrackk{\mu^{n}X.T(X)}(t) = \sembrackk{T^{(n)}(R)}(t)$.
  Hence $\sembrackk{\mu^{n}X.T(X)}(t) = \sembrackk{R}(t)$. But $\sembrackk{\mu^{n}X.T(X)}(t)=\sembrackk{\mu X.T(X)}(t)$  holds by  Definition~\ref{SemanticsOfCEStrategies} of the semantics of \ces.
\end{proof}

We show in the following Lemma~\ref{unfold:equiv:lemma} that a \ce is $\mathbf{m}$-equivalent to its unfolding, where $\mathbf{m}$ is the minimal number of iterations in the unfolding.
To achieve this, we need a technical property  (i.e. Eq.(\ref{unfold:equiv:lemma:eq:2})) that will be used later on in other proofs. 

\newcounter{counter:unfold:equiv:lemma}
\setcounter{counter:unfold:equiv:lemma}{\value{theorem}}

\begin{lemma} 
\label{unfold:equiv:lemma}
Let $S$ be a \ce with (bound) fixed-point variables $X_1,\ldots,X_s$ and  let $\mathbf{s}:\{X_1,\ldots,X_s\} \to  \mathbb{N}$  be an iteration  mapping.
\begin{enumerate}[{(i)}]
\item \label{unfold:equiv:lemma:item:4} If $S$ is  a fixed-point \ce, say $\mu X.S'(X)$ with $X \in \{X_1,\ldots,X_s\}$, then there exists a fixed-point free \ce $\tilde{S}(X^1,\ldots,X^m)$ with $m\ge 1$, and \ces $S_1,\ldots,S_{m-1},S_m(X)$ such that for any $n\ge 1$,
  \begin{align}
    \mu^n X.S'(X)                     &=\tilde{S}\Big(S_1,\ldots,S_{m-1}, S_m\big(\mu^{n-1} X.S'(X)\big)\Big)  \label{unfold:equiv:lemma:eq:1}\\
      \ufold{\mu X.S'(X)}{\mathbf{s}} &=\tilde{S}\Big(\ufold{S_1}{\mathbf{s}},\ldots,\ufold{S_{m-1}}{\mathbf{s}},\bufold{S_m(\mu X.S'(X))}{\mathbf{s}'}\Big) \label{unfold:equiv:lemma:eq:2}
  \end{align}
where $\mathbf{s}'$ is the iteration mapping defined on $\set{X_1,\ldots,X_s}$ by  $\mathbf{s}'(X)=\mathbf{s}(X)-1$ and $\mathbf{s}'(X')=\mathbf{s}(X')$ for  $X'\neq X$.
\item \label{unfold:equiv:lemma:item:3} If   $\mathbf{m}=\min\set{\mathbf{s}(X_1),\ldots,\mathbf{s}(X_s)}$, then $S \equiv_{\mathbf{m}} \ufold{S}{\mathbf{s}}$. 
\end{enumerate}
\end{lemma}  

\begin{remark}
  \label{double:ufold:rq}
  Let $S$ be a \ce with (bound) fixed-point variables $X_1,\ldots,X_s$ and  let $\mathbf{s}_1,\mathbf{s}_2:\{X_1,\ldots,X_s\} \to  \mathbb{N}$  be iteration  mappings where $\mathbf{s}_1 \ge  \mathbf{s}_2$.
  Let $\mathbf{m}_1=\min\set{\mathbf{s}_1(X_i)  \gvert i =1,\ldots,s}$, $\mathbf{m}_2=\min\set{\mathbf{s}_2(X_i)  \gvert i =1,\ldots,s}$ and $\mathbf{m}=\min(\mathbf{m_1},\mathbf{m}_2)$.
  Then it follows from Item (\ref{unfold:equiv:lemma:item:3})  of Lemma~\ref{unfold:equiv:lemma} that $\ufold{S}{\mathbf{s}_1} \equiv_{\mathbf{m}} \ufold{S}{\mathbf{s}_2}$ since
  $S  \equiv_{\mathbf{m}_1} \ufold{S}{\mathbf{s}_1}$ and  $S  \equiv_{\mathbf{m}_2} \ufold{S}{\mathbf{s}_2}$. Besides, it follows from the proof of this  item that 
  $\ufold{S}{\mathbf{s}_1}$  and $\ufold{S}{\mathbf{s}_2}$ can be written as
  \begin{align*}
    \ufold{S}{\mathbf{s}_1}& = T(T_1,\ldots,T_m) \\
    \ufold{S}{\mathbf{s}_2} &= T(\emptylist,\ldots,\emptylist).
  \end{align*}
 \end{remark}


\newcounter{counter:composition:unif:lemma}
\setcounter{counter:composition:unif:lemma}{\value{theorem}} 

\subsection{The composition Lemma}
\label{composition:lemma:section}
In the following  key Lemma~\ref{composition:unif:lemma} we shall formulate  how the unification of two given  \ces behaves with respect to  their sub-\ces.
This Lemma  is very useful and will be heavily used throughout this paper, namely  when it  comes to make a structural induction on the given \ces.
More precisely, we  shall show, under some assumptions,  that  the unification of a \ce $S'(\xi_1,\ldots,\xi_k)$
with  a \ce  $R'(\zeta_1,\ldots,\zeta_l)$  yields a \ce $T(T_1,\ldots,T_m)$ where each $T_i$
is either the unification of some $\xi_j$ with a sub-\ce of $R'(\zeta_1,\ldots,\zeta_l)$, or
the unification of some sub-\ce of $S'(\xi_1,\ldots,\xi_k)$ with a $\zeta_j$. 

counter:composition:unif:lemma

\begin{lemma}[Composition Lemma]
\label{composition:unif:lemma}
Let $S$ and $R$ be \ces.
Assume that  there are  fixed-point free \ces  $S'(X_1,\ldots, X_k)$  and  $R'(Y_1,\ldots, Y_l)$, where $k\ge 1$ and $l\ge 1$, and
 \ces  $\xi_1,\ldots,\xi_k$  where $\xi_i \in \Phi(S)$, and
 \ces  $\zeta_1,\ldots,\zeta_l$ where $\zeta_i \in \Phi(R)$, such that  $S$ and $R$ can be written as:
  \begin{align*}
    S= S'(\xi_1,\ldots,\xi_k) &&&
    R= R'(\zeta_1,\ldots,\zeta_l). 
  \end{align*}

Then, there is a fixed-point free \ce $T(Z_1,\ldots,Z_m)$  and \ces $T_1,\ldots,T_m$, where $m \ge 1$, such that 
  \begin{align*}
    S \combb R &= T(T_1,\ldots,T_m)
  \end{align*}
  where for any $i=1,\ldots,m$, there is an alternative between the two following choices.
  
  \begin{enumerate}[(a)]
  \item \label{composition:unif:lemma:item:2:a} There are $j \in \set{1,\ldots,k}$, a \ce  $R^i(Y^1,\ldots, Y^{l'})$ that is  a sub-\ce of $R'(Y_1,\ldots, Y_l)$ with $l'\le l$, and a set of \ces
    $\set{\zeta^1,\ldots,\zeta^{l'}} \subseteq \set{\zeta_1,\ldots,\zeta_{l}}$   such that 
    \begin{align}
      \label{composition:lemma:prop:1}
    T_i = \xi_j \combb R^i(\zeta^1,\ldots, \zeta^{l'})   &&\tor &&      T_i = \xi_j.
  \end{align}
\item \label{composition:unif:lemma:item:2:b} There are $j \in \set{1,\ldots,l}$, a \ce  $S^i(X^1,\ldots, X^{k'})$ that is  a sub-\ce of $S'(X_1,\ldots, X_k)$ with $k'\le l$, and a set of \ces
    $\set{\xi^1,\ldots,\xi^{k'}} \subseteq \set{\xi_1,\ldots,\xi_k}$   such that 

  \begin{align}
    \label{composition:lemma:prop:2}
    T_i = S^i(\xi^1,\ldots,\xi^{k'}) \combb \zeta_j    &&\tor &&      T_i = \zeta_j.
      \end{align}
  \end{enumerate}
\end{lemma}

\section{Unification and unfolding}
\label{correction:unif:general:setting:unfold:sec}

In this section we show two independent results which will a crucial ingredient  for the next section~\ref{equiv:unif:with:unif:unfolding:sec} in which the main theorems  will be proved.
The first result, shown in Subsection~\ref{equiv:unif:unfoldings:sec}, establishes a semantic  equivalence between the unification of  unfoldings of two \ces, say $\ufold{S}{\mathbf{s}} \combb \ufold{R}{\mathbf{r}}$,  and  the unification of other unfoldings of the same \ces, say $\ufold{S}{\mathbf{s}'} \combb \ufold{R}{\mathbf{r}'}$.
The second  result, shown in Subsection~\ref{relating:unif:with:unfolds:sec},  relates  the structure of the unification of two \ces, say $S \combb R$,  with  that of their unfoldings, say $\ufold{S}{\mathbf{s}} \combb \ufold{R}{\mathbf{r}}$, according to a relation of similarity  that will be formalized in Subsection~\ref{quasi:sim:sub:sub:sec}.
In Subsection~\ref{tree:simulation:subsection} we introduce some notions, namely the underlying structure of the set of fixed-point  sub-\ces of a given \ce.
In Subsection ~\ref{quasi:sim:sub:sub:sec} we introduce  two relations  of similarity, a strong and a  weak one, between a \ce and a fixed-point free one.


\subsection{The  equivalence between the unification of several unfoldings of two \ces}
\label{equiv:unif:unfoldings:sec}
The purpose of this section is to relate two kinds of fixed-point free  \ces:
the \ce that results from the unification of an unfolding of two \ces, and  
the \ce that results from the unification of a different  unfolding of the same two \ces. The purpose is to show that these two resulting \ces are equivalent for any term of a certain depth that depends on the  unfoldings.
Given four  iteration mappings $\mathbf{s},\mathbf{s}',\mathbf{r},\mathbf{r}'$ where $\mathbf{s}$ and $\mathbf{s}$ (resp. $\mathbf{r}$ and $\mathbf{r}'$) are defined on the same domain,
we shall devise a  measure  between two the pairs   $(\mathbf{s},\mathbf{r})$ and $(\mathbf{s}',\mathbf{r}')$, called \emph{codistance} and denoted by  $D^{\star}\big((\mathbf{s},\mathbf{r}),(\mathbf{s}',\mathbf{r}')\big)$, and show that the \ces  $\ufold{S}{\mathbf{s}} \combb \ufold{R}{\mathbf{r}}$ and  $\ufold{S}{\mathbf{s}'} \combb \ufold{R}{\mathbf{r}'}$ are equivalent for any term of depth at most
$D^{\star}\big((\mathbf{s},\mathbf{r}),(\mathbf{s}',\mathbf{r}')\big)$.
The definition of this measure will be given in Definition~\ref{distance:iteration:mapping:def}.
We shall compare in  Lemma~\ref{comparing:unif:unfolding:lemma}   the fixed-point free \ce $\ufold{S}{\mathbf{s}} \combb \ufold{R}{\mathbf{r}}$ with the  (fixed-point free) \ce  $\ufold{S}{\mathbf{s}'} \combb \ufold{R}{\mathbf{r}'}$, namely  when $\mathbf{s} \ge \mathbf{s}'$ and  $\mathbf{r} \ge \mathbf{r}'$, 
by showing that  \ce $\ufold{S}{\mathbf{s}} \combb \ufold{R}{\mathbf{r}}$  and  \ce  $\ufold{S}{\mathbf{s}'} \combb \ufold{R}{\mathbf{r}'}$  have the same structure except that the former \ce is deeper than  the latter.
The equivalence is the main result of this section and  will be proved in  Corollary~\ref{comparing:unif:unfolding:corollary}.


To illustrate the idea and justify the name of the codistance between two pairs of iteration mappings, we  first  consider the 
codistance between two iteration mapping with the same domain. Let $\mathbf{s}$, $\mathbf{s}'$ and $\mathbf{s}''$ be iteration mappings and  assume
\begin{align}
\mathbf{s}&=\set{X_1\mapsto 100, X_2\mapsto 100, X_3\mapsto 5 }  \label{iter-mapping:example:1} \\
\mathbf{s}'&=\set{X_1\mapsto 100, X_2\mapsto 60, X_3\mapsto  5 } \label{iter-mapping:example:2}\\
\mathbf{s}''&=\set{X_1\mapsto 100, X_2\mapsto 60, X_3\mapsto  4 }. \label{iter-mapping:example:3}
\end{align}
It is clear that for any \ce $S$ with bound variables $X_1, X_2, X_3$ that  the (fixed-point free) \ces  $\ufold{S}{\mathbf{s}}$ and  $\ufold{S}{\mathbf{s}'}$ are equivalent for any $t$ of depth
at most $60$.  This number  corresponds to the minimal $\mathbf{s}'(X_i)$ such   that $\mathbf{s}'(X_i) \neq \mathbf{s}(X_i)$, for $i=1,\ldots,3$.
For the same reason,   $\ufold{S}{\mathbf{s}}$ and  $\ufold{S}{\mathbf{s}''}$ are equivalent for any term  of depth at most $4$. Obviously, $\ufold{S}{\mathbf{s}}$ is equivalent with itself  for any term, and this will  be taken into account in the definition of codistance~\ref{distance:iteration:mapping:def}  by saying that the codistance between an iteration mapping and itself is infinity. Besides, the more two iterations mappings are far from each others,
the less is their codistance, which justifies the name of codistance.
This idea of codistance between two iteration mappings  can be adapted as well to measure the codistance between two pairs of iteration mappings  as follows.
\begin{definition}[Codistance between pairs of  iteration mappings]
\label{distance:iteration:mapping:def}
  Let  $\mathbf{s},\mathbf{s}': \{X_1,\ldots,X_s\} \to  \mathbb{N}$ and $\mathbf{r},\mathbf{r}':\{Y_1,\ldots,Y_r\} \to  \mathbb{N}$ be iteration mappings such that
$\mathbf{s}\ge \mathbf{s}'$ and $\mathbf{r}\ge \mathbf{r}'$.
We define the \emph{codistance} between $\mathbf{s}$ and $\mathbf{s}'$ by:
\begin{align*}
  d^{\star}(\mathbf{s},\mathbf{s}') &=  \begin{cases} \min\set{\mathbf{s}'(X_i) \gvert \mathbf{s}'(X_i) \neq \mathbf{s}(X_i) \textrm{ for } i=1,\ldots,s}  &\tif \mathbf{s} > \mathbf{s}' \\
    \infty & \tif \mathbf{s}=\mathbf{s}'.
    \end{cases}
\end{align*}
We define the \emph{codistance} between the pairs $(\mathbf{s},\mathbf{r})$ and $(\mathbf{s}',\mathbf{r}')$ by:
\begin{align*}
 D^{\star}\big((\mathbf{s},\mathbf{r}), (\mathbf{s}',\mathbf{r}')\big)&= \min\set{d^{\star}(\mathbf{s},\mathbf{s}'),   d^{\star}(\mathbf{r},\mathbf{r}')}.
\end{align*}
\end{definition}
\begin{example}
  We only give an example of the codistance $d^{\star}$ since the computation of $D^{\star}$ is straightforward.
  If we consider the iteration mappings $\mathbf{s}, \mathbf{s}',\mathbf{s}''$ defined above  by Eqs.(\ref{iter-mapping:example:1}),(\ref{iter-mapping:example:2}),(\ref{iter-mapping:example:3}) respectively, then 
  \begin{align*}
    d^{\star}(\mathbf{s},\mathbf{s})  &=  d^{\star}(\mathbf{s}',\mathbf{s}') =  d^{\star}(\mathbf{s}'',\mathbf{s}'') = \infty, \\
    d^{\star}(\mathbf{s},\mathbf{s}') &= 60, \\
    d^{\star}(\mathbf{s},\mathbf{s}'') &= 4. \\
  \end{align*}
\end{example}

In the following Lemma~\ref{comparing:unif:unfolding:lemma} (which makes use of Lemma~\ref{unfold:equiv:lemma}) and Corollary ~\ref{comparing:unif:unfolding:corollary} we use the following definitions:
let $S$  and $R$ be \ces with bound fixed-point variables $X_1,\ldots,X_s$ and $Y_1,\ldots, Y_r$, respectively.
Let $\mathbf{s}_1,\mathbf{s}_2: \set{X_1,\ldots,X_s} \to  \mathbb{N}$  and  $\mathbf{r}_1,\mathbf{r}_2: \set{Y_1,\ldots,Y_r} \to  \mathbb{N}$ be  iteration  mappings  where $\mathbf{s}_1 \ge \mathbf{s}_2$ and $\mathbf{r}_1 \ge \mathbf{r}_2$. 

\newcounter{counter:comparing:unif:unfolding:lemma}
\setcounter{counter:comparing:unif:unfolding:lemma}{\value{theorem}}

\begin{lemma}
\label{comparing:unif:unfolding:lemma}
There exist fixed-point free \ces $T_1,\ldots,T_m, T(Z_1,\ldots,Z_m)$, where each $Z_i$ is a free fixed-point variable and $m\ge 1$, such that
$\ufold{S}{\mathbf{s}_1}\combb \ufold{R}{\mathbf{r}_1}$
and $ \ufold{S}{\mathbf{s}_2}\combb \ufold{R}{\mathbf{r}_2}$ can be written as  
\begin{align*}
  \ufold{S}{\mathbf{s}_1}\combb \ufold{R}{\mathbf{r}_1} &= T(T_1,\ldots,T_m)              \\
 \ufold{S}{\mathbf{s}_2}\combb \ufold{R}{\mathbf{r}_2} &=  T(\emptylist,\ldots,\emptylist).
\end{align*}
\end{lemma}

The following Corollary~\ref{comparing:unif:unfolding:corollary} follows from Lemma ~\ref{comparing:unif:unfolding:lemma}, it confirms that the
definition of codistance between two pairs of iteration mappings is the right one since
it provides an upper bound for the depth of terms on which the \ces $\ufold{S}{\mathbf{s}}\combb \ufold{R}{\mathbf{r}}$ and $\ufold{S}{\mathbf{s}'}\combb \ufold{R}{\mathbf{r}'}$ are equivalent.
It is easy to construct examples where this bound is reached.

\begin{corollary} 
\label{comparing:unif:unfolding:corollary} 
We have that 
\begin{align}
  \label{comparing:unif:unfolding:corollary:eq}
\ufold{S}{\mathbf{s}_1}\combb \ufold{R}{\mathbf{r}_1} \equiv_{ D^{\star}((\mathbf{s}_1,\mathbf{r}_1), (\mathbf{s}_2,\mathbf{r}_2))} \ufold{S}{\mathbf{s}_2}\combb \ufold{R}{\mathbf{r}_2}.
\end{align}
\end{corollary}
\begin{proof} 
It follows from Lemma~\ref{comparing:unif:unfolding:lemma} that there exist fixed-point free \ces $T_1,\ldots,T_m, T(Z_1,\ldots,Z_m)$, where each $Z_i$ is a fixed-point variable and $m\ge 1$, such that $\ufold{S}{\mathbf{s}_1}\combb \ufold{R}{\mathbf{r}_1}$ and $ \ufold{S}{\mathbf{s}_2}\combb \ufold{R}{\mathbf{r}_2}$ can be written as  
\begin{align*}
  \ufold{S}{\mathbf{s}_1}\combb \ufold{R}{\mathbf{r}_1} &= T(T_1,\ldots,T_m)              \\
 \ufold{S}{\mathbf{s}_2}\combb \ufold{R}{\mathbf{r}_2} &=  T(\emptylist,\ldots,\emptylist).
\end{align*}
From Item (\ref{depth:position:composition:lemma:item:3}) of Lemma~\ref{depth:position:composition:lemma}, if follows that to prove Eq.(\ref{comparing:unif:unfolding:corollary:eq}) it suffices to show that
\begin{align}
\min \bset{\Pi_{Z_i}\big(T(Z_1,\ldots,Z_m)\big) \gvert T_i \neq \emptylist, i=1, \ldots,m}  \ge D^{\star}((\mathbf{s}_1,\mathbf{r}_1),(\mathbf{s}_2,\mathbf{r}_2)).
\end{align}
Assume that $D^{\star}((\mathbf{s}_1,\mathbf{r}_1),(\mathbf{s}_2,\mathbf{r}_2)) = d^{\star}(\mathbf{s}_1,\mathbf{s}_2)$ with $\mathbf{s}_1>\mathbf{s}_2$.
Assume that there exists $v\in \set{1,\ldots,s}$ such that \\
$\min\set{\mathbf{s}_2(X_j) \gvert j=1,\ldots,s } = \mathbf{s}_2(X_v)$. Let $\mu X_v.S_v(X_v)$ be the fixed-point \ce related to $X_v$.
From the monotonicity property it follows that  the shortest  path  in terms of number of jumps  from the root of $T(Z_1,\ldots,Z_m)$ to some $Z_i$, say $Z_w$ with $w \in  \set{1,\ldots,m}$,
$\mu X_v.S_v(X_v)$ unfolded $\mathbf{s}_2(X_v)$ times giving arise to at least  $\mathbf{s}_2(X_v)$ positions. If $\mathbf{s}_1(X_v)>\mathbf{s}_2(X_v)$ then in this first  case we have by the definition of $d^{\star}$ that
$ d^{\star}(\mathbf{s}_1,\mathbf{s}_2)=\mathbf{s}_2(X_v)$ and we are done.  If  $\mathbf{s}_1(X_v)=\mathbf{s}_2(X_v)$ then in this case $T_w=\emptylist$ and we pick another $v' \in \set{1,\ldots,s}\setminus\set{v}$ such that 
$\min\bset{\mathbf{s}_2(X_j) \gvert j \in \set{1,\ldots,s} \setminus \set{v} } = \mathbf{s}_2(X_{v'})$. If such $v'$ does not exist then this means that $T_i=\emptylist$ for any $i\in \set{1,\ldots,m}$ thus
$\ufold{S}{\mathbf{s}_1} \combb  \ufold{R}{\mathbf{r}_1}$ and  $\ufold{S}{\mathbf{s}_2}\combb \ufold{R}{\mathbf{r}_2}$ are equivalent in the strong sense. Otherwise,  we reiterate the same reasoning of the first case with $v'$ instead of $v$.
\end{proof}

\subsection{Fixed-point tree and  fixed-point sequence}
\label{tree:simulation:subsection}
This Subsection is first devoted to the definitions of two notions related to the tree-like structure  underlying the set of all  fixed-point sub-\ces of a given \ce, Definition~\ref{tree:fixed-point:def}.
Roughly speaking, if we look at all fixed-point sub-\ces of a given \ce, they form a tree in the sense  that there is an arrow from a fixed-point \ce $S_1$ to a fixed-point \ce $S_2$ if $S_2$ is a sub-\ce of $S_1$ together
with further conditions.

\begin{definition}[Fixed-point tree and fixed-point sequence of a \ce]
\label{tree:fixed-point:def}
Let $S$ be a strategy in which each fixed-point variable appears once.
\begin{enumerate}[i)]
\item The \emph{fixed-point tree} of $S$, denoted by $\Eu{T}(S)$ or simply $\Eu{T}$,  is the pair  $(\Phi_{\mu}(S),\sqsupset)$, where $\sqsupset$ is a binary relation over  $\Phi_{\mu}(S)$ defined as follows:
  $S_1 \sqsupset S_2$ iff $S_2$ is a   sub-\ce of $S_1$ with $S_1 \neq S_2$, and there is no \ce $S'$ in $\Phi_{\mu}(S)$ such that $S' \neq S_1$, $S' \neq S_2$, $S_2$ is a sub-\ce of $S'$, and $S'$ is a sub-\ce of $S_1$.

\item A  \emph{sequence} $S_1,\ldots,S_m$ in the tree $\Eu{T}(S)$, where $m\ge 1$, is a  set of \ces  where each $S_i$ is  in $\Phi_{\mu}(S)$ such that either $m=1$ or $m \ge 2$ and in this case
  $S_i \sqsupset S_{i+1}$, for $i=1,\ldots,m-1$. Such a sequence will be denoted by   $S_1 \sqsupset \ldots \sqsupset S_m$.

\item A sequence $S_1 \sqsupset \ldots \sqsupset S_m$ in  $\Eu{T}(S)$  is \emph{left-maximal} (resp. \emph{right-maximal})    if there is no  $S' \in \Phi_{\mu}(S)$ such that $S' \neq S_1$ (resp. $S' \neq S_m$)
  and $S' \sqsupset S_1$ (resp. $S_m \sqsupset S'$). A sequence is maximal if it is both left-maximal and right-maximal. In such case $S_1$ is called a root, while $S_m$ is called a leaf.

\item A fixed-point tree  $\Eu{T}$ is \emph{connected} if it has just one root.
\end{enumerate}
\end{definition}
Notice that if a fixed-point tree is not connected then it is composed of many  fixed-point sub-trees each of which is connected.
 
\begin{example}
  Let $M_1(Y), S_1(X_1), M_2(Z,Z')$ and $S_3(X_3)$ be fixed-point free \ces.
  Consider the \ce
  \begin{align*}
    H= \mu Z.\big( M_1\big(\mu X_1.S_1(X_1)\big) \oplus  \mu X_2.M_2\big(Z,X_2,\mu X_3.S_3(X_3)\big)\big).
  \end{align*}
  The set of fixed-point sub-\ces is 
  \begin{align*}
    \Phi_{\mu}(H)=\set{H, \mu X_1.S_1(X_1),  \mu X_2.M_2(Z,X_2, \mu X_3.S_3(X_3)),  \mu X_3.S_3(X_3))}
  \end{align*}
  and the fixed-point tree  $\Eu{T}(H)=(\Phi_{\mu}(H),\sqsupset)$ comes with the two maximal sequences
  \begin{align*}
    H  &\sqsupset   \mu X_1.S_1(X_1)  \\
     H  &\sqsupset   \mu X_2.M_2(Z,X_2, \mu X_3.S_3(X_3))  \sqsupset \mu X_3.S_3(X_3) 
  \end{align*}
  Indeed, $H$ is the root of $\Eu{T}(H)$ while  $\mu X_1.S_1(X_1)$  and $\mu X_3.S_3(X_3)$  are leaves. However if we take $H'=\mu X_4.S_4(X_4)$ then the tree $\Eu{T}(H  \oplus H')$ is no longer connected
  since it has two roots: $H$ and $H'$.
\end{example}

\subsection{The relations of $(\ceSet,\ceSetFree)$-simulation and $(\ceSet,\ceSetFree)$-quasi-simulation}
\label{quasi:sim:sub:sub:sec}
In Section~\ref{structure:proof:main:results:sec} we informally outlined the proof of the main result of this paper.
Namely, we described how to relate  the structure of the \ce that results from  the unification of  two \ces to  the structure of the (fixed-point free)  \ce  that results from the unification of 
their related unfolding, and we illustrated the idea in  Figure~\ref{S:T:structure} for a simple case, and in Figure ~\ref{S:T:structure:complex} for the general case.
Now we formalize this idea that   relates   a \ce in $\mycal{C}$ to  a fixed-point free one in $\mycal{C}_0$ in terms of $(\ceSet,\ceSetFree)$-simulation defined   next.

\begin{definition}[$(\ceSet,\ceSetFree)$-simulation]
  \label{C-C0-morphism:def}
  For any  \ce $S$ in  $\ceSet$ and any fixed-point free \ce $R$ in  $\ceSetFree$,  a \emph{$(\ceSet,\ceSetFree)$-simulation} is a binary relation $\morphy$ between the sets of  augmented sub-\ces of $S$ and of sub-\ces of $R$, i.e.
  \begin{align*}
    \morphy \subseteq  \widetilde{\Phi}(S) \times \Phi(R),
  \end{align*}
   inductively defined from   $S \morphy R$ that  fulfills    the inference rules of Table~\ref{Inference Rules}.
  \begin{table}[H]
    \centering  
    \fbox{
      \parbox{8cm}{
        \begin{align*}
&\infer[S',R' \in \ceSetFree]{S' \morphy R'}{%
       S'=R'} 
\;\;\;\;&&&
&\infer[S'(X_1,\ldots,X_m) \in \ceSetFree]{S'(S_1,\ldots,S_m) \morphy S'(R_1,\ldots,R_m)}{%
       S_i \morphy R_i} \\
&&&&& \\
&\infer{\mu X.S'(X) \morphy \emptylist}{%
      } 
&&&
&\infer{\mu X.S'(X) \morphy R'}{%
       S(\mu X.S'(X)) \morphy R'}
        \end{align*}
        }}
   \caption{Inference rules for $(\ceSet,\ceSetFree)$-simulations.}
        \label{Inference Rules}
    \end{table} 
\end{definition}

 
Notice  that if there is a $(\ceSet,\ceSetFree)$-simulation between two \ces, then it is unique, i.e. if $S {\morphy}_1 R $ and $S {\morphy}_2 R $, for two $(\ceSet,\ceSetFree)$-simulations $ {\morphy}_1$ and ${\morphy}_2$, then $ {\morphy}_1= {\morphy}_2$.

\begin{example}
Let $S(X)$ be a fixed-point free \ce. Then, for any $n\ge 0$, there is a $(\ceSet,\ceSetFree)$-simulation between $\mu X.S(X)$  and $\mu^n X.S(X)$,
since $\mu^n X.S(X)$ is nothing but the $n$-times iteration $S(S(\ldots S(\emptylist)))$.
\end{example}

The following claims are not hard to prove. 
\begin{remark}\label{ufold:mophism:rq}
For any \ce $S$ with bound fixed-variables $X_1,\ldots, X_s$ with $s\ge 0$, any iteration mapping $\mathbf{s}: \set{X_1,\ldots, X_s}\to \mathbb{N}$,  and any \ce $M(Y)$, the following holds.
\begin{enumerate}
\item \label{ufold:mophism:rq:item:1}  There is  a $(\ceSet,\ceSetFree)$-simulation between  $S$ and  $\ufold{S}{\mathbf{s}}$.
\item \label{ufold:mophism:rq:item:1'}  If there is a $(\ceSet,\ceSetFree)$-simulation $\morphy$ between $S$ and  $S'$, then there is a $(\ceSet,\ceSetFree)$-simulation $\morphy'$ between $M(S)$ and $M(S')$  for any fixed-point free \ce $M(Y)$.
   That is, the following diagram commutes.
  \[\begin{tikzcd}
   \mycal{C} \arrow{r}{M(\cdot)} \arrow[swap,dash]{d}{\morphy} &  \ceSet\arrow[dash]{d}{\morphy'} \\
   \ceSetFree  \arrow{r}{M(\cdot)} & \ceSetFree
  \end{tikzcd}
  \] 
\item \label{ufold:mophism:rq:item:2} If there is a $(\ceSet,\ceSetFree)$-simulation  $\morphy$ between  $S$ and  $S'$ and if $\tilde{S}$ results from $S$ by Simplifications  (\ref{simplification:algorithms}), denoted hereby $\Eu{O}$, that transform  a \ce  into an equivalent \ce in which each fixed-point variable occurs once,  then there is a $(\ceSet,\ceSetFree)$-simulation  $\tilde{\morphy}$ between $\tilde{S}$ and $S'$ as well. That is, the following diagram commutes.
  \[\begin{tikzcd}
   \mycal{C} \arrow{r}{\Eu{O}} \arrow[dash,swap]{dr}{\morphy } &  \mycal{C} \arrow[dash]{d}{\tilde{\morphy}} \\
    & \ceSetFree
  \end{tikzcd}
  \]
\end{enumerate}
\end{remark}

We next define a weaker relation  of $(\ceSet,\ceSetFree)$-simulation by relaxing the constraint imposed by the fixed-point rule that unravels $\mu X.S(X)$ into $S(\mu X.S(X))$.
The motivation is that in the upcoming  proofs, rather than proving the existence of a $(\ceSet,\ceSetFree)$-simulation,  it is much  easier and less cumbersome  to proceed in two steps by firstly constructing the weaker relation and then strengthening it by  deducing its properties.
\begin{definition}[$(\ceSet,\ceSetFree)$-quasi-simulation]
  \label{quasi:simulation:def}
For any  \ce $S$ in  $\ceSet$ and any fixed-point free \ce $R$ in  $\ceSetFree$,  a \emph{$(\ceSet,\ceSetFree)$-quasi-simulation} is a  binary relation  $\morphyy$ between sub-\ces of $S$ and of $R$, i.e.
  \begin{align*}
    \morphyy  \subseteq  \Phi(S) \times \Phi(R),
  \end{align*}
  inductively defined from  $S \morphyy R$  that   fulfills    the inference rules of Table~\ref{Inference Rules:nu} which are the same as the inference rules of  Table~\ref{Inference Rules} apart for the fixed-point rule which
  is replaced by new two rules.
   \begin{table}[H]
    \centering  
    \fbox{
      \parbox{8cm}{
        \begin{align*}
&\infer[S',R' \in \ceSetFree]{S' \morphyy R'}{%
       S'=R'} 
\;\;\;\;&&&
&\infer[S'(X_1,\ldots,X_m) \in \ceSetFree]{S'(S_1,\ldots,S_m) \morphyy S'(R_1,\ldots,R_m)}{%
       S_i \morphyy R_i} \\
&&&&& \\
&\infer{\mu X.S'(X) \morphyy \emptylist}{%
      } 
&&&
&\infer{\mu X.S'(X) \morphyy R'}{%
  S'(X) \morphyy R'}
\;\;\;\;\;\;\;\;\;\;\;\; \infer[X \in \fixset]{X \morphyy R'}{%
       }
        \end{align*}
      }
    }
   \caption{Inference rules for  $(\ceSet,\ceSetFree)$-quasi-simulation}
        \label{Inference Rules:nu}
    \end{table} 
\end{definition}
Notice that $(\ceSet,\ceSetFree)$-quasi-simulation is strictly  weaker than the $(\ceSet,\ceSetFree)$-simulation in the sense that  if there is  a $(\ceSet,\ceSetFree)$-simulation between $S$ and $R$  then there is a $(\ceSet,\ceSetFree)$-quasi-simulation between $S$ and $R$ as well, while the opposite does not hold in general. This is due to the fact that the $(\ceSet,\ceSetFree)$-simulation imposes that $\mu X.S(X)$ and $X$ must correspond to the "same" \ce in $R$, while the $(\ceSet,\ceSetFree)$-quasi-simulation does not impose that. For instance, there is a $(\ceSet,\ceSetFree)$-quasi-simulation between $\mu X.S(X)$ and $S(S')$ whatever maybe $S'$ since $X \morphyy S'$, while it is not the case  that there is in general a $(\ceSet,\ceSetFree)$-simulation between $\mu X.S(X)$ and $S(S')$ unless further constraints are imposed on $S'$. Hence Remark~\ref{ufold:mophism:rq} holds for  $(\ceSet,\ceSetFree)$-quasi-simulations as well.

A $(\ceSet,\ceSetFree)$-quasi-simulation $\morphyy$ between $S$ and $R$ induces two mappings, the first one that  maps each fixed-point \ce of $S$ to a sub-\ce of $R$,
and the second one that maps each bound variable of $S$ to a sub-\ce of $R$ as well.
\begin{definition}
  \label{induced:morphisms:def}
 For any  $(\ceSet,\ceSetFree)$-quasi-simulation relation $\morphyy$ between $S$ and $R$, with $S\in \ceSet$ and $R \in \ceSetFree$,  define  the mappings  
  \begin{align*}
    \phi_{\mu}: \Phi_{\mu}(S) \cup  \boundv{S}  \rightarrow \Phi(R)  &&\tand &&      \phi_{\nu}^\mu: \boundv{S} \rightarrow \Phi(R)
  \end{align*}
  as:
  \begin{itemize}
    \item $\phi_{\mu}(S')$,  for any $S'$, being  the unique $R'\in \Phi(R)$ such that $S' \morphyy R'$, and
    \item $\phi_{\nu}^\mu(X)$, for any $X$,  being  $\phi_{\mu}(\mu X.T(X))$ where $\mu X.T(X)$ is the (unique) fixed-point \ce related to $X$.
  \end{itemize}
  Besides, the mapping   $\phi_{\nu}^\mu$  extends uniquely to an endomorphism  $\widehat{\phi}_{\nu}^\mu: \ceSet \rightarrow \ceSet$ defined for any \ce by:
  \begin{align*}
    \widehat{\phi}_{\nu}^\mu\big(T(T_1,\ldots,T_m)\big)=T\big(\widehat{\phi}_{\nu}^\mu(T_1),\ldots,\widehat{\phi}_{\nu}^\mu(T_m)\big) &&\tand && \widehat{\phi}_{\nu}^{\mu}(Z)=\phi_{\nu}^\mu(Z).
  \end{align*}
\end{definition}
Notice that it follows from  Definition~\ref{C-C0-morphism:def} that any $(\ceSet,\ceSetFree)$-simulation from $S$ to $R$ is not defined  for the bound fixed-point variables of $S$ since any fixed-point sub-\ce $\mu X.S'(X)$ of $S$
is unravelled  to $S'(\mu X.S'(X))$ and hence $X$ is never reached.

\begin{example}[The mappings $\phi_{\mu}$ and  $\phi_{\nu}^\mu$]
Let $S(X)$ and $M(Y)$ be fixed-point free \ces.  Then there is a $(\ceSet,\ceSetFree)$-simulation $\morphy$ and a $(\ceSet,\ceSetFree)$-quasi-simulation $\morphyy$ between $M(\mu X.S(X))$ and $M(\mu^3 X.S(X))$ and  we have that:
\begin{align*}
   \morphy &= \big\{\big(\mu X.S(X), \mu^3 X.S(X)\big), \big(\mu X.S(X), \mu^2 X.S(X)\big),\big(\mu X.S(X), \mu^1 X.S(X)\big),\big(\mu X.S(X), \emptylist\big)\big\} \\
   \morphyy & = \big\{\big(\mu X.S(X), \mu^3 X.S(X)\big),   \big(X, \mu^2 X.S(X)\big) \big\}
\end{align*}
And therefore, 
\begin{align*}
  \phi_{\mu}(\mu X.S(X)) &= \mu^3 X.S(X) = S(S(S(\emptylist))) \\
  \phi_{\mu}(X) &= \mu^2 X.S(X) = S(S(\emptylist)) \\
    \phi_{\nu}^{\mu}(X) &=   \phi_{\mu}(\mu X.S(X)) =\mu^3 X.S(X)
\end{align*}
\end{example}

\begin{remark}
  \label{mu:morphi:prop}
  Notice that
  if $T(Z_1,\ldots,Z_m)$ is a fixed-point free \ce in which $Z_1,\ldots,Z_m$ are free fixed-point variables,
  and if $T_1\ldots T_m$ are \ces where each of $T_i$ is either fixed-point \ce or a fixed-point variable,
  then it follows from Definition~\ref{induced:morphisms:def}
  together with Definition~\ref{quasi:simulation:def}  of $(\ceSet,\ceSetFree)$-quasi-simulations  that
  \begin{align*}
    \phi_{\mu}\big(\mu Z.T(T_1,\ldots,T_m)\big) = T\big(\phi_{\mu}(T_1),\ldots,\phi_{\mu}(T_m) \big).
  \end{align*}
\end{remark}


\subsection{Relating the structure of the unification of two \ces with  that of their unfolding}
\label{relating:unif:with:unfolds:sec}
The purpose of this section is to relate the structure of the unification of two \ces with that of their unfolding
as  illustrated  in  Section~\ref{structure:proof:main:results:sec} in Figure~\ref{S:T:structure} for a simple case, and in Figure ~\ref{S:T:structure:complex} for the general case.
More precisely, we show in the  following  Lemma~\ref{main:lemma:mophism:quasi} that the unification commutes with the unfolding  in the following sense:
there is a $(\ceSet,\ceSetFree)$-quasi-simulation  between the \ce that results from the unification of two \ces  
and the fixed-point free \ce that results from the unification  of   their related unfolding.

We illustrate with  a simple unification example how this $(\ceSet,\ceSetFree)$-quasi-simulation  is constructed, and thus how the two structures in  Figure~\ref{S:T:structure} are obtained.
Let $M(Y),S(X)$ and $R$ be fixed-point \ces, and  
consider, on one hand, the unification of $M(\mu X.S(X))$   with   $R$, and on the other hand, the unification of the unfolding of $M(\mu X.S(X))$ with the unfolding of $R$, which is $R$, since $R$ is fixed-point free. 
We explain  how  these two unifications  are related. During the unification process that starts from  $\tuple{M(\mu X.S(X)),{R},\emptyset}$ on one side, and
  from $\tuple{M(\mu^n X.S(X)),R,\emptyset}$, where  $n\ge 1$, on the other, we distinguish many cases.
  
  \begin{enumerate}[(I)]
  \item \label{step:1:structure} As far as we have   $\tuple{M'(\mu^n X.S(X)),{R}',\emptyset}$ on one side, and  $\tuple{M'(\mu X.S(X)), R', \emptyset}$ on the other, where $M'$ (resp. $R'$) is a sub-\ce of $M$ (resp. $R$),
    the constructed \ce is the same on both sides, it is $T_0$ in Figure~\ref{S:T:structure}.

  \item \label{step:2:structure} If the derivation reaches  a fixed-point \ce, that is, it reaches $\langle\mu X.S(X),R'',\emptyset\rangle$  on the left side, and   $\langle\mu^n X.S(X),\allowbreak  R'',\emptyset\rangle$
    on the right one, where $R''$ is a sub-\ce of  $R$,  then the left derivation  produces  $\allowbreak \mu Z_1.\langle S(\mu X.S(X)),R'',\cdot\rangle$ and continues from $\allowbreak \tuple{S(\mu X.S(X)),R'',\cdot}$, while the right one continues from   $\allowbreak  \tuple{S(\mu^{n-1} X.S(X)),R'',\emptyset}$. 
    This goes back to  case (\ref{step:1:structure}), in which we take $M'(\cdot)=S(\cdot)$, and in which the left derivation will produce $\mu Z_1.T_1(\ldots)$, while the right one will produce  $T_1(\ldots)$ .
    During the generation of $T_1(\ldots)$  two cases can happen:
    \begin{enumerate}
    \item  If the left derivation reaches  $\tuple{\mu X.S(X),R'',\cdot}$, then the right derivation reaches   $\tuple{\mu^{n-1} X.S(X),R'',\emptyset}$.
      The left derivation continues and produces the fixed-point $Z_1$ generated at the end of  case (\ref{step:2:structure}), while the right derivation  produces  the \ce $T_1^1$   depicted on the right of Figure~\ref{S:T:structure}.
    \item If the  left derivation reaches  $\tuple{\mu X.S(X),R_2,\cdot}$  with $R_2 \neq R''$, then the right derivation  reaches  $\tuple{\mu^{n-1} X.S(X),R_2,\emptyset}$.
      Thus the left derivation   produces the fixed-point   \ce \\  $\mu Z_2.\tuple{S(\mu X.S(X)),R_2,\cdot}$ and continues from $\tuple{S(\mu X.S(X)),R_2,\cdot}$   (see left of Figure~\ref{S:T:structure}), while the right derivation
      continues from $\tuple{\mu^{n-1} X.S(X),R_2,\emptyset}$, see right of Figure~\ref{S:T:structure}.
      This goes back to  case (\ref{step:1:structure}).
    \end{enumerate}
  \end{enumerate}

In the following Lemma~\ref{main:lemma:mophism:quasi} we construct  a  $(\ceSet,\ceSetFree)$-quasi-simulation between the unification
of any two \ces and the unification of their any unfolding. 

\newcounter{counter:main:lemma:mophism:quasi}
\setcounter{counter:main:lemma:mophism:quasi}{\value{theorem}}

\begin{lemma}
  \label{main:lemma:mophism:quasi}
  Let $S$ and $R$ be  \ces with  bound  fixed-point variables $\boundv{S}=\set{ X_1,\ldots,X_s}$ and $\boundv{R}=\set{Y_1,\ldots,Y_r}$.
  Let $\Eu{M} \in \mathfrak{M}(S,R)$ be a memory  with respect  to $S$ and $R$.
  Let  $\mathbf{s}:\{X_1,\ldots,X_s\} \to  \mathbb{N}$ and $\mathbf{r}:\{X_1,\ldots,X_r\} \to  \mathbb{N}$ be iteration mappings.
There is a $(\ceSet,\ceSetFree)$-quasi-simulation $\morphyy$  between   $\NF(\tuple{S,R,\Eu{M}})$ and  $\NF(\tuple{\ufold{S}{\mathbf{s}},\ufold{R}{\mathbf{r}},\emptyset})$.
In particular, the following diagram commutes.
\[\begin{tikzcd}
\mycal{C} \times \mycal{C} \arrow{r}{\combb} \arrow[swap]{d}{\ufold{\cdot}{\mathbf{s}}  \times \ufold{\cdot}{\mathbf{r}}} &  \mycal{C} \arrow[dash]{d}{\morphyy} \\
\ceSetFree \times \ceSetFree  \arrow{r}{\combb} & \ceSetFree
\end{tikzcd}
\]
\end{lemma}

It turned out that the  $(\ceSet,\ceSetFree)$-quasi-simulation  of~\ref{main:lemma:mophism:quasi} is actually a $(\ceSet,\ceSetFree)$-simulation,  but we can not prove it now  in this section, because it requires
the further developments of the next section~\ref{equiv:unif:with:unif:unfolding:sec}, and there we shall be ready to prove it in  Corollary~\ref{quasi:sim:is:sim:cor}.  


Now we can state and prove in Lemma~\ref{properties:quasi:simulation:lemma} useful properties of the $(\ceSet,\ceSetFree)$-quasi-simulation constructed in the proof of the previous  Lemma~\ref{main:lemma:mophism:quasi}.
Roughly speaking, we need to distinguish in the resulting \ce $S\combb R$ between two kinds of  fixed-point \ces:
\begin{enumerate}[(i.)]
  \item a fixed-point \ce $\mu Z.T(Z)$, where $Z$ is fresh, that is generated by the fixed-point rules (\ref{fixed:ext:1}) and (\ref{fixed:ext:2}) of the unification reduction system given  in Definition~\ref{reduction:unif:def}, and
  \item a fixed-point \ce $\mu X'.S'(X')$ that is a sub-\ce of $S$ or $R$.
\end{enumerate}
In the first case, the fixed-point \ce $\mu Z.T(Z)$ is related by the  $(\ceSet,\ceSetFree)$-quasi-simulation to the unification of an iteration (over a fixed-point \ce) with  a \ce, or symmetrically,
to the unification of a \ce with  an iteration (over a fixed-point \ce).
However, in the second  case, the fixed-point \ce $\mu X.S(X)$ is related by the  $(\ceSet,\ceSetFree)$-quasi-simulation to its unfolding.
Formally,

\begin{lemma}
\label{properties:quasi:simulation:lemma}
Let $S$ and $R$ be  \ces with  bound  fixed-point variables $\boundv{S}=\set{ X_1,\ldots,X_s}$ and $\boundv{R}=\set{Y_1,\ldots,Y_r}$.
Let  $\mathbf{s}:\{X_1,\ldots,X_s\} \to  \mathbb{N}$ and $\mathbf{r}:\{Y_1,\ldots,Y_r\} \to  \mathbb{N}$ be iteration mappings.
The $(\ceSet,\ceSetFree)$-quasi-simulation $\morphyy$ between $S\combb R$ and   $\ufold{S}{\mathbf{s}} \combb \ufold{R}{\mathbf{r}}$ constructed in the proof of Lemma~\ref{main:lemma:mophism:quasi} has the following property.

For any   sub-\ce $\mathbf{T}$  in   $S \combb R$ that is either a fixed-point or a bound variable, i.e.
\begin{align*}
  \mathbf{T} \in \Phi_{\mu}(S \combb R) \cup \boundv{S \combb R},
  \end{align*}
there exist  \ces $\mu X'.S'(X')$ and $R'$, iteration mappings $\mathbf{s}':\{X_1,\ldots,X_s\} \to  \mathbb{N}$ and $\mathbf{r}':\{Y_1,\ldots,Y_r\} \to  \mathbb{N}$ and a memory $\Eu{M}'$ such that
one of the four following case holds:
\begin{enumerate}[(a)]
\item\label{quasi:prof:item:1}
  \begin{align*}
  \mathbf{T} =\NF\big(\tuple{\mu X'.S'(X'),R',\Eu{M}'}\big)  && \tand &&
  \mathbf{T} \morphyy \big(\ufold{\mu X'.S'(X')}{\mathbf{s}'} \combb \ufold{R'}{\mathbf{r}'}\big) 
\end{align*}
and in this case  $\mu X'.S'(X') \in \widetilde\Phi_{\mu}(S)$  and  $R' \in  \widetilde\Phi(R)$.
\item \label{quasi:prof:item:2}
  \begin{align*}
   \mathbf{T}=\NF\big(\tuple{R',\mu X'.S'(X'),\Eu{M}'}\big) & &  and &&
   \mathbf{T} \morphyy  \big(\ufold{R'}{\mathbf{r}'} \combb \ufold{\mu X'.S'(X')}{\mathbf{s}'}\big) 
\end{align*}
  and in this case  $R' \in   \widetilde\Phi_{\mu}(S)$  and  $\mu X'.S'(X') \in  \widetilde\Phi(R)$.

\item \label{quasi:prof:item:3}
  \begin{align*}
   \mathbf{T}=\mu X'.S'(X')   & &  and &&
   \mathbf{T} \morphyy  \ufold{\mu X'.S'(X')}{\mathbf{s}'}
\end{align*}
  with   $\mu X'.S'(X') \in  \widetilde\Phi_{\mu}(S)$ and $X' \in \set{X_1,\ldots,X_s}$.

\item \label{quasi:prof:item:4}
  \begin{align*}
   \mathbf{T}=\mu X'.S'(X')   & &  and &&
   \mathbf{T} \morphyy  \ufold{\mu X'.S'(X')}{\mathbf{r}'}
\end{align*}
  with   $\mu X'.S'(X') \in  \widetilde\Phi_{\mu}(R)$ and $X' \in \set{Y_1,\ldots,Y_r}$.  
\end{enumerate}
\end{lemma}
\begin{proof} 
 Item (\ref{quasi:prof:item:1}) follows immediately from the case (\ref{case:3:lemma:unif:morphism:appendix})    of the proof of Lemma ~\ref{main:lemma:mophism:quasi} since 
 any  fixed-point \ce $\mu Z.T(Z)$  and any fixed-point  variable $Z$ in the resulting \ce $S \combb R$ results from the unification of a fixed-point \ce with an arbitrary \ce such that 
  $\mu Z.T(Z)$ and  $Z$ is related by the  $(\ceSet,\ceSetFree)$-quasi-simulation $\morphyy$  to a unification of  two \ces where the \emph{left} one is an iteration over a fixed-point \ce (i.e. $\rho(\mu X.S'(X),\mathbf{s})$).

Item (\ref{quasi:prof:item:2}) follows from the symmetric case of case (\ref{case:3:lemma:unif:morphism:appendix})   of the proof of Lemma ~\ref{main:lemma:mophism:quasi} which we omitted and in which
any  fixed-point \ce  $\mu Z.T(Z)$  and any fixed-point  variable $Z$ in the resulting \ce $S \combb R$ results from the unification of an arbitrary \ce with  a fixed-point \ce such that
  $\mu Z.T(Z)$ and  $Z$ would be related by the  $(\ceSet,\ceSetFree)$-quasi-simulation $\morphyy$ to a unification of  two \ces where the \emph{right}  one is an iteration over a fixed-point \ce.

Items (\ref{quasi:prof:item:3}) and (\ref{quasi:prof:item:4}) follow from the case (\ref{case:3:lemma:unif:morphism:appendix})  of the proof of Lemma  (\ref{main:lemma:mophism:quasi}) together with the explicit computations made in 
the composition Lemma ~\ref{composition:unif:lemma},  with properties (\ref{composition:lemma:prop:1}) and (\ref{composition:lemma:prop:2}),
in which we   take one of the  $\xi_1,\ldots,\xi_k$ or one of the  $\zeta_1,\ldots,\zeta_l$ as fixed-point \ce, and by taking one of the $T_1,\ldots, T_m$ as a fixed-point \ce that is either $\xi_i$ or $\zeta_j$, for some $i \in \set{1,\ldots,k}$ and some $j\in \set{1,\ldots,l}$.
\end{proof} 

In the following example we illustrate the cases (\ref{quasi:prof:item:1}) and (\ref{quasi:prof:item:3}) of Lemma~\ref{properties:quasi:simulation:lemma},
we omit the cases  (\ref{quasi:prof:item:2}) and (\ref{quasi:prof:item:4}) since they are respectively symmetrical to the two former ones. 
\begin{example}
\label{properties:quasi:simulation:example}
Let $\xi_1,\xi_2,\xi_3,R_1,R_2,R_3$ be \ces such that $R_1,R_2,R_3$ are fixed-point free, and $\xi_1,\xi_2,\xi_3$ are fixed-point \ces of the form
\begin{align*}
\xi_1= \mu X_1.S_1(X_1), && \xi_2= \mu X_2.S_2(X_2), && \xi_3= \mu X_3.S_3(X_3), 
\end{align*}
where $\xi_3$ is a sub-\ce of $\xi_2$ which is a sub-\ce of $\xi_1$.
Consider the following iteration mappings in which  $n\ge 1$:
\begin{align*}
  \mathbf{s}_1=\set{X_1 \mapsto n, X_2\mapsto n, X_3\mapsto n }, &&  \mathbf{s}_2=\set{X_1 \mapsto n-1, X_2\mapsto n, X_3\mapsto n}, && \mathbf{s}_3=\set{X_1 \mapsto n-1, X_2\mapsto n-1, X_3\mapsto n }.
\end{align*}
Since $R_1,R_2,R_3$ are fixed-point free, then they are equal to their unfoldings.  We next consider the two unifications $\xi_1 \combb R_1$ and $\ufold{\xi_1}{\mathbf{s}_1} \combb R_1$ that  result respectively  from the
following two derivations, in which we omit the  explicit expression of the \ces $T_1$ and of $T_2$, where $\Eu{M}_2$ and $\Eu{M}_3$ are memories: 
\begin{alignat}{2}
  \tuple{\xi_1,R_1,\emptyset}                        &\xreduces{\star}  \mu Z_1. T_1(\xi_1,\tuple{\xi_2,R_2,\Eu{M}_2}) &&  \xreduces{\star}   \mu Z_1. T_1\big(\xi_1, \mu Z_2. T_2(\tuple{\xi_3,R_3,\Eu{M}_3}\big), \label{deriv:1}\\
  \tuple{\ufold{\xi_1}{\mathbf{s}_1},R_1,\emptyset}  &\xreduces{\star}   T_1\big(\ufold{\xi_1}{\mathbf{s}_2},\tuple{\ufold{\xi_2}{\mathbf{s}_2},R_2,\emptyset}\big) && \xreduces{\star}   T_1\big(\ufold{\xi_1}{\mathbf{s}_2},
      T_2(\tuple{\ufold{\xi_3}{\mathbf{s}_3},R_3,\emptyset}\big). \label{deriv:2}
\end{alignat}
If we assume, for the sake of simplicity, that the normal form of $\tuple{\xi_3,R_3,\Eu{M}_3}$  in the derivation (\ref{deriv:1}) produces just  one fixed-point \ce, then the set of fixed-point sub-\ces of $\xi_1 \combb R_1$ is:
\begin{align*}
  \Phi_{\mu}(\xi_1 \combb R_1) = \set{\xi_1, \underbrace{\NF(\tuple{\xi_1,R_1,\emptyset})}_{\mathbf{T}_1},
                                  \underbrace{\NF(\tuple{\xi_2,R_2,\Eu{M}_2})}_{\mathbf{T}_2}, \underbrace{\NF(\tuple{\xi_3,R_3,\Eu{M}_3})}_{\mathbf{T}_3}}.
\end{align*}
Recall that, from  Definition~\ref{unification:def} of unification, we have
\begin{align*}
 \NF(\tuple{\ufold{\xi_2}{\mathbf{s}_2},R_2,\emptyset}) = \ufold{\xi_2}{\mathbf{s}_2} \combb R_2 &&\tand&&   \NF(\tuple{\ufold{\xi_3}{\mathbf{s}_3},R_3,\emptyset}) = \ufold{\xi_3}{\mathbf{s}_3} \combb R_3.
\end{align*}
Therefore, the cases (\ref{quasi:prof:item:1}) and (\ref{quasi:prof:item:3}) of Lemma~\ref{properties:quasi:simulation:lemma} are as follows:
\begin{align*}
  \mathbf{T}_1 &\morphyy   (\ufold{\xi_1}{\mathbf{s}_1} \combb  R_1) &&\tand&&   \mathbf{T}_2 \morphyy   (\ufold{\xi_2}{\mathbf{s}_2} \combb  R_2) && \tand&&   \mathbf{T}_3 \morphyy   (\ufold{\xi_3}{\mathbf{s}_3} \combb  R_3) 
  \tag{Case (\ref{quasi:prof:item:1}) of Lemma~\ref{properties:quasi:simulation:lemma}} \\
  \xi_1 &\morphyy \ufold{\xi_1}{\mathbf{s}_2}.    \tag{Case (\ref{quasi:prof:item:3}) of Lemma~\ref{properties:quasi:simulation:lemma}}
\end{align*}
\end{example}

We recall that a \ce that  results from a unification $S \combb R$ may contain useless fixed-point constructor of the form $\mu Z.T$ where $Z$ does not appear in $T$,
or it may contain a fixed-point variable that appears many times. We noticed in Item~\ref{ufold:mophism:rq:item:2} of  Remark~\ref{ufold:mophism:rq} that Simplifications (\ref{simplification:algorithms})
preserve the relation of $(\ceSet,\ceSetFree)$-simulation and therefore the  $(\ceSet,\ceSetFree)$-quasi-simulation.
Hence  we can assume from now on that the \ces that result  from the unification follow Assumptions~\ref{global:assumptions:ces}.   
It is  simpler for later development, to lift   the properties of the relation $(\ceSet,\ceSetFree)$-quasi-simulation of
Lemma~\ref{properties:quasi:simulation:lemma} to  its induced mapping $\phi_{\mu}$.
This will be done in Lemma~\ref{properties:morphism:lemma} together with a simple and useful property on the image by $\phi_{\mu}$ of fixed-points \ces in $S \combb R$.
Roughly speaking, this property states that if $\mathbf{T}$ and $\mathbf{T}'$ are fixed-point \ces in $S\combb R$ where $\mathbf{T}'$ is an immediate sub-\ce of $\mathbf{T}$, then
the number of iterations over a certain fixed-point \ce decreases by one from  $\phi_{\mu}(\mathbf{T})$ to  $\phi_{\mu}(\mathbf{T}')$.


\newcounter{counter:properties:morphism:lemma}
\setcounter{counter:properties:morphism:lemma}{\value{theorem}}

\begin{lemma}
\label{properties:morphism:lemma}
Let $S$ and $R$ be  \ces with  bound  fixed-point variables $\boundv{S}=\set{ X_1,\ldots,X_s}$ and $\boundv{R}=\set{Y_1,\ldots,Y_r}$.
Let  $\mathbf{s}:\{X_1,\ldots,X_s\} \to  \mathbb{N}$ and $\mathbf{r}:\{X_1,\ldots,X_r\} \to  \mathbb{N}$ be iteration mappings.
Let $\phi_\mu$  be the mapping induced by the  $(\ceSet,\ceSetFree)$-quasi-simulation $\morphyy$ between $S \combb R$ and   $\ufold{S}{\mathbf{s}} \combb \ufold{R}{\mathbf{r}}$
constructed in the proof of Lemma~\ref{main:lemma:mophism:quasi}.
The mapping  $\phi_\mu$ enjoys the following properties.
\begin{enumerate}
\item \label{properties:morphism:lemma:item:1} 
  For any  fixed-point \ce $\mathbf{T}$  in   $S \combb R$, there exist  \ces $\mu X'.S'(X')$ and $R'$, mappings $\mathbf{s}':\{X_1,\ldots,X_s\} \to  \mathbb{N}$ and $\mathbf{r}':\{Y_1,\ldots,Y_r\} \to  \mathbb{N}$, and a memory $\Eu{M}'$ such that one of the four following cases holds.
  \begin{enumerate}
  \item \label{properties:morphism:lemma:item:1:1}   $\mathbf{T}=\NF\big(\tuple{\mu X'.S'(X),R',\Eu{M}'}\big)$ and 
                                                    $\phi_{\mu}(\mathbf{T})= \big(\ufold{\mu X'.S'(X')}{\mathbf{s}'} \combb \ufold{R'}{\mathbf{r}'}\big)$. 

  \item \label{properties:morphism:lemma:item:1:2}   $\mathbf{T}=\NF\big(\tuple{R',\mu X'.S'(X'),\Eu{M}'}\big)$ and
    $\phi_{\mu}(\mathbf{T})=\big(\ufold{R'}{\mathbf{r}'} \combb \ufold{\mu X'.S'(X')}{\mathbf{s}'}\big)$. 

   \item \label{properties:morphism:lemma:item:1:3}   $\mathbf{T}=\mu X'.S'(X')$, with $X' \in \set{X_1,\ldots,X_s}$  and $\mu X'.S'(X') \in \Phi_{\mu}(S)$, and 
        $\phi_{\mu}(\mathbf{T})= \ufold{\mu X'.S'(X')}{\mathbf{s}'}$.   

   \item \label{properties:morphism:lemma:item:1:4}   $\mathbf{T}=\mu X'.S'(X')$, with $X' \in \set{Y_1,\ldots,Y_s}$  and $\mu X'.S'(X') \in \Phi_{\mu}(R)$, and 
        $\phi_{\mu}(\mathbf{T})= \ufold{\mu X'.S'(X')}{\mathbf{r}'}$.   
  \end{enumerate}

\item \label{properties:morphism:lemma:item:2} For any   fixed-point sequence  
  \begin{align*}
     \mathbf{T}_1 \subcer \cdots \subcer  \mathbf{T}_m 
   \end{align*}
  in     $\Eu{T}(S \combb R)$ with $m \ge 1$ and   for any $i=1,\ldots,m$, there are  iteration  mappings   $\mathbf{s}_i:\{X_1,\ldots,X_s\} \to  \mathbb{N}$ and $\mathbf{r}_i:\{Y_1,\ldots,Y_r\} \to  \mathbb{N}$, such that
  one of the following two cases holds:
  \begin{enumerate}
  \item \label{case:a} There is a  \ce $S_i(X^i) \in \Phi(S)$ with $X^i \in \set{ X_1,\ldots,X_s}$, and a \ce $R_i \in \Phi(R)$ such that 
  \begin{align*}
    \phi_{\mu}(\mathbf{T}_i)= \ufold{\mu X^i.S_i(X^i)}{\mathbf{s}_i} \combb \ufold{R_i}{\mathbf{r}_i},
  \end{align*}
   and  for $i=1,\ldots,m-1$ and for any  $X \in \set{ X_1,\ldots,X_s}$ and any $Y \in \set{ Y_1,\ldots,Y_r}$,  we have that
 \begin{align}
   \label{iteration:X:eq}
   \mathbf{s}_{i+1}(X) = 
    \begin{cases}
      \mathbf{s}_{i}(X), & \tif X\neq X^i\\
      \mathbf{s}_{i}(X^i)-1, & \tif X = X^i
    \end{cases} && \tand &&     \mathbf{r}_{i+1}(Y) =  \mathbf{r}_{i}(Y) 
 \end{align}
  \item \label{case:b} There is a \ce  $S_i \in \Phi(S)$, and  a  \ce $R_i(Y^i) \in \Phi(R)$ with $Y^i \in \set{ Y_1,\ldots,Y_r}$, such that 
  \begin{align*}
    \phi_{\mu}(\mathbf{T}_i)=  \ufold{S_i}{\mathbf{s}_i} \combb \ufold{\mu Y^i.R_i(Y^i)}{\mathbf{r}_i}, 
      \end{align*}
\end{enumerate}
   and  for $i=1,\ldots,m-1$ and for any  $X \in \set{ X_1,\ldots,X_s}$ and any $Y \in \set{ Y_1,\ldots,Y_r}$,  we have that
\begin{align}
  \label{iteration:Y:eq}
   \mathbf{s}_{i+1}(X) =  \mathbf{s}_{i}(X) 
   && \tand &&  
   \mathbf{r}_{i+1}(Y) = 
    \begin{cases}
      \mathbf{r}_{i}(Y), & \tif Y\neq Y^i\\
      \mathbf{r}_{i}(Y^i)-1, & \tif Y = Y^i
    \end{cases}
  \end{align}
 
\end{enumerate}
\end{lemma} 
\begin{example}
  We consider the two unifications $\xi_1 \combb R_1$ and $\ufold{\xi_1}{\mathbf{s}_1} \combb R_1$ of  Example~\ref{properties:quasi:simulation:example}, as well as 
  their related  derivations (\ref{deriv:1})  and (\ref{deriv:2}).
\begin{enumerate}
\item  The cases (\ref{properties:morphism:lemma:item:1:1}) and (\ref{properties:morphism:lemma:item:1:3}) of Lemma~\ref{properties:morphism:lemma} correspond to the following equalities:
  \begin{align*}
  \phi_{\mu}(\mathbf{T}_1) &=  (\ufold{\xi_1}{\mathbf{s}_1} \combb  R_1) &&\tand&&   \phi_{\mu}(\mathbf{T}_2) = (\ufold{\xi_2}{\mathbf{s}_2} \combb  R_2) && \tand&&   \phi_{\mu}(\mathbf{T}_3)=  (\ufold{\xi_3}{\mathbf{s}_3} \combb  R_3) 
  \tag{Case (\ref{properties:morphism:lemma:item:1:1}) of Lemma~\ref{properties:morphism:lemma}} \\
   \phi_{\mu}(\xi_1) &= \ufold{\xi_1}{\mathbf{s}_2}.     \tag{Case (\ref{properties:morphism:lemma:item:1:3}) of Lemma~\ref{properties:morphism:lemma}} 
\end{align*}
\item    Notice that from the derivation  (\ref{deriv:1}) of Example~\ref{properties:quasi:simulation:example}, we  have that  $\mathbf{T}_3$ is a  (fixed-point) sub-\ce of $\mathbf{T}_2$, which is a (fixed-point) sub-\ce of $\mathbf{T}_1$.   Thus we have the fixed-point sequence
\begin{align*}
  \mathbf{T}_1  \subcer \mathbf{T}_2  \subcer \mathbf{T}_3
\end{align*}
 in $\Eu{T}(\xi_1 \combb R_1)$.
Recall that the iteration mappings $\mathbf{s}_1$, $\mathbf{s}_2$ and $\mathbf{s}_3$ were  defined  in Example~\ref{properties:quasi:simulation:example} as:
\begin{align*}
  \mathbf{s}_1=\set{X_1 \mapsto n, X_2\mapsto n, X_3\mapsto n }, &&  \mathbf{s}_2=\set{X_1 \mapsto n-1, X_2\mapsto n, X_3\mapsto n}, && \mathbf{s}_3=\set{X_1 \mapsto n-1, X_2\mapsto n-1, X_3\mapsto n }.
\end{align*}
  Therefore, Eq.(\ref{iteration:X:eq}) of the case (\ref{case:a})  of  Lemma~\ref{properties:morphism:lemma} expresses $\mathbf{s}_2$ in terms of $\mathbf{s}_1$, as well as $\mathbf{s}_3$ in terms of $\mathbf{s}_2$ as follows:
   \begin{align*}
   \mathbf{s}_{2}(X) = 
    \begin{cases}
      \mathbf{s}_{1}(X), & \tif X\neq X_1\\
      \mathbf{s}_{1}(X_1)-1, & \tif X = X_1
    \end{cases} && \tand &&
   \mathbf{s}_{3}(X) = 
    \begin{cases}
      \mathbf{s}_{2}(X), & \tif X\neq X_2\\
      \mathbf{s}_{2}(X_2)-1, & \tif X = X_2,
    \end{cases}
   \end{align*}
   where $X \in \set{X_1,X_2,X_3}$.
   \end{enumerate}
\end{example}  


\section{The equivalence between the  unification of two \ces and that of  their unfoldings}
\label{equiv:unif:with:unif:unfolding:sec}

This is the most technical section in which we develop the last  ingredient required  in the proof of  the main result of this paper. 
The purpose of this section is to show  that  the unification of two \ces  is equivalent to the unification of their unfolding for any term of depth at most a certain  bound  that depends on the two unfoldings, i.e. Proposition ~\ref{main:corollary:unif}.
More precisely,  we shall prove that for  any two iteration mappings $\mathbf{s}$ and $\mathbf{r}$ with $\mathbf{s}(X_i)=\mathbf{r}(Y_j)=n$,
the \ces $S \combb R$ and $\ufold{S}{\mathbf{s}} \combb  \ufold{R}{\mathbf{r}}$ are equivalent  for any term of depth at most $n$, where $X_i$ (resp. $Y_j$) are the bound variables
of $S$ (resp. $R$).

To achieve this we need, on the one hand,  the main result of Subsection~\ref{relating:unif:with:unfolds:sec} that ensures the existence of
a $(\ceSet,\ceSetFree)$-quasi-simulation between $S \combb R$ and $\ufold{S}{\mathbf{s}} \combb  \ufold{R}{\mathbf{r}}$ (Lemma~\ref{main:lemma:mophism:quasi}),
together with the properties of this relation (Lemma~\ref{properties:morphism:lemma}).
Indeed such results guarantee   that $S \combb R$ and $\ufold{S}{\mathbf{s}} \combb  \ufold{R}{\mathbf{r}}$ have almost the same structure and should be equivalent.  
However, on the other hand, the structure of $S \combb R$ differs from that of $\ufold{S}{\mathbf{s}} \combb  \ufold{R}{\mathbf{r}}$ when it comes to certain sub-\ces.
Therefore, to complete the proof we need to show that any such sub-\ce  of $S \combb R$ is equivalent to its related sub-\ce of   $\ufold{S}{\mathbf{s}} \combb  \ufold{R}{\mathbf{r}}$ with respect
to any term of a certain depth that depends on the position of such distinct  sub-\ce  in $S \combb R$, or equivalently in  $\ufold{S}{\mathbf{s}} \combb  \ufold{R}{\mathbf{r}}$.

To  illustrate the idea, we consider the simplest case where $S \combb R=\mu Z.T(Z)$ such that $T(Z)$ is fixed-point free.
Let $\mathbf{E}=\ufold{S}{\mathbf{s}} \combb  \ufold{R}{\mathbf{r}}$.
Therefore, thanks to the  $(\ceSet,\ceSetFree)$-quasi-simulation and its properties, we know that
$\mathbf{E} =T(\mathbf{E}')$, where $\mathbf{E}'= \ufold{S}{\mathbf{s}'} \combb  \ufold{R}{\mathbf{r}'}$, for iteration mappings  $\mathbf{s}'$ and $\mathbf{r}'$.
Hence to show that $\mu Z.T(Z)$ is $n$-equivalent to $\mathbf{E}$, it suffices to show that $\mathbf{E}$ is a fixed-point of $T(Z)$, i.e. that $T(\mathbf{E})\equiv_n \mathbf{E}$.
But since $\mathbf{E}=T(\mathbf{E'})$, we need to show that $T(\mathbf{E})$ is $n$-equivalent to $T(\mathbf{E}')$. To achieve this, it suffices to show that   $\mathbf{E}$ is $n'$-equivalent to $\mathbf{E}'$ for some $n'$ provided that the  number of jumps between the root of  $T(Z)$ and $Z$ is at least $n-n'$.
That is,    $\ufold{S}{\mathbf{s}} \combb  \ufold{R}{\mathbf{r}}$ and $\ufold{S}{\mathbf{s}'} \combb  \ufold{R}{\mathbf{r}'}$ are $n'$-equivalent where $n'$ depends on the iteration mappings $\mathbf{s}$, $\mathbf{r}$, $\mathbf{s}'$ and $\mathbf{r}'$. It turns out that in this simple  case, $n'$ is nothing but the codistance $D^{\star}((\mathbf{s},\mathbf{r}),(\mathbf{s}',\mathbf{r}'))$, and $n-D^{\star}((\mathbf{s},\mathbf{r}),(\mathbf{s}',\mathbf{r}'))$ is a lower bound for the number of jumps   between the root of  $T(Z)$ and $Z$.

However, for the general case where $S\combb R$ contains many nested fixed-point \ces, say
\begin{align*}
  S \combb R=\mu Z_1.T_1(\mu Z_2.T_2(\cdots \mu Z_m.T_m(Z_m)))
\end{align*}
which yields the fixed-point sequence $\Eu{S}=\mu Z_1.T_1(Z_1) \subcer \cdots \subcer \mu Z_m.T_m(Z_m)$, many difficulties arise.
Namely, the codistance  $D^{\star}$ is no longer an exact lower  bound to the number of jumps, say   between the root of $T_i(Z_i)$ and $\mu Z_j.T_{j}(Z_j)$, where $1\le i < j \le m$. 
However the same technique remains:  to prove that $T(\mathbf E) \equiv_n T(\mathbf E')$ it is enough to show that $\mathbf E \equiv_{n'} \mathbf E'$, for some $n'$,
provided that the number of jumps from the root of $T(\mathbf{E})$ to $\mathbf{E}$ is at least $n-n'$.
Besides, the same principle remains: the more we go deeper in the sequence $\Eu{S}$, the more the iterations in the related \ces of $\Eu{S}$
(i.e. that result in  $\ufold{S}{\mathbf{s}} \combb  \ufold{R}{\mathbf{r}}$ and have the form $\ufold{S_i}{\mathbf{s}_i} \combb  \ufold{R_i}{\mathbf{r}_i}$ for $i=1,\ldots,m$) decrease, and the more we get  more jumps from
the root of $T_1(Z_1)$ to $\mu Z_m.T_m(Z_m)$. 

Having said that, we need to supplement the codistance $D^{\star}$ with  further measures that will be introduced in Subsection~\ref{measures:sub:sub:sec} together with their properties.
In  Subsection~\ref{derived:sub:sub:sec} we shall show that these measures  provide enough information to compute  an adequate  lower bound for the number of jumps.
More precisely,  these measures will allow us to extract a subsequence from the fixed-point sequence $\Eu{S}$, called the \emph{derivative sequence},  with the property that there is at least one jump  between any two successive \ces
in this Subsequence. Summing  up these results we shall show in Subsection~\ref{equivalent:sub:sub:sec} that the unification of two \ces, say $S\combb R$,  is $n$-equivalent to the unification of their unfolding, i.e.
say  $\ufold{S}{\mathbf{s}} \combb  \ufold{R}{\mathbf{r}}$, where the iteration mappings $\mathbf{s}$ and $\mathbf{r}$ associate to each fixed-point variable the constant $n$.
We shall make use of  the main result of Subsection~\ref{equiv:unif:unfoldings:sec} that allows to compare the equivalence of the unification of an unfolding of two \ces with the unification of another unfolding of the same two \ces.

\subsection{Measures and codistance on fixed-point tree}
\label{measures:sub:sub:sec}
  
We define next  the  number of  occurences of a \ce in  a sequence of tuples.

\begin{definition}
\label{nbr:occurence:in:sequence:example}
Let $S$ and $R$ be \ces.
Let  $\Eu{S}$ be  a sequence of tuples
\begin{align*}
  \Eu{S} = \tuple{S_1,R_1,\Eu{M}_1}, \ldots,\tuple{S_m,R_m,\Eu{M}_m}
\end{align*}
with $m \ge 1$, $S_i \in \Phi(S)$ and $R_i \in \Phi(R)$ for $i=1,\ldots,m$.
Let $S'$ be a fixed-point sub-\ce of $S$.  We  shall denote by $\nbr_{\Eu{S}}(S')$ the \emph{number of occurrences} of  $S'$  in the sequence $\Eu{S}$,  that is
\begin{align}
  \label{number:occurences:eq}
  \nbr_{\Eu{S}}(S') = |\set{S_i \gvert S_i=S', \;\; i=1,\ldots,m}|.
\end{align}
For a \ce $R'$ that is a fixed-point sub-\ce of $R$, the definition of $\nbr_{\Eu{S}}(R')$ is similar to $\nbr_{\Eu{S}}(R')$   by taking $R_i$ instead of $S_i$ in Eq.(\ref{number:occurences:eq}).
\end{definition}

We shall  use the following notations throughout this Subsection and the next Subsection~\ref{derived:sub:sub:sec} as well. 
Let $S$ (resp. $R$)  be  a \ce  with  fixed-point variables $X_1,\ldots, X_s$ (resp. $Y_1,\ldots, Y_r$). 
Let $n \ge 1$ and let  $\mathbf{s}: \{X_1,\ldots,X_s\} \to  \mathbb{N}$ and $\mathbf{r}:\{Y_1,\ldots,Y_r\} \to  \mathbb{N}$ be iteration  mappings with $\mathbf{s}(X_i)=\mathbf{s}(X_j)=n$.
Let $\phi_{\mu}$ be the mapping induced by the $(\ceSet,\ceSetFree)$-quasi-simulation $\morphyy$ between $S\combb R$ and  $\ufold{S}{\mathbf{s}} \combb \ufold{R}{\mathbf{r}}$. 
Let $\Eu{T}$ be the fixed-point tree of $S \combb R$.  Recall that  $\Eu{T}$  is not necessarily connected, i.e. it may be composed of many connected sub-trees and thus it may have many roots.

In the following Definition~\ref{definition:omega:Omega:def}  we define  three  measures, one to count the maximal number of repetitions of \ces
in a sequence of  tuples, a second one that is $n$ minus the previous measure, and the third one that transfers the codistance $D^{\star}$ from $\ufold{S}{\mathbf{s}} \combb \ufold{R}{\mathbf{r}}$ to the related  \ces that belong to the  fixed-point tree $\Eu{T}$. 

\begin{notations}[For Definition~\ref{definition:omega:Omega:def}]
In the following Definition~\ref{definition:omega:Omega:def} we let  $\Eu{S}$  to be a left-maximal  sequence  \begin{align*}     \mathbf{T}_1 \subcer  \ldots \subcer \mathbf{T}_m \end{align*}
in $\Eu{T}$  (i.e. $\mathbf{T}_1$ being a root of $\Eu{T}$)  with $m\ge 1$.  According to Items (\ref{properties:morphism:lemma:item:1:1}) and (\ref{properties:morphism:lemma:item:1:2}) of Lemma~\ref{properties:morphism:lemma},  we know that
for any $i =1,\ldots, m$,  one of the following two cases holds.
\begin{enumerate}[(i)]
\item $\mathbf{T}_i \in \widetilde{\Phi}(S \combb R) \setminus \big(\widetilde{\Phi}(S) \cup \widetilde{\Phi}(R)\big)$ and hence it   can be written as  $\mathbf{T}_i= \NF(\tuple{S_i,R_i,\Eu{M}_i})$ and in this case $\mathbf{T}_{i'} \in \widetilde{\Phi}(S \combb R) \setminus \big(\widetilde{\Phi}(S) \cup \widetilde{\Phi}(R)\big)$ for $i'<i$ . 
\item $\mathbf{T}_i \in \widetilde{\Phi}(S) \cup \widetilde{\Phi}(R)$  and in this case $i=m$ and  $\mathbf{T}_m= S_m \in \widetilde{\Phi}_{\mu}(S)$    or $\mathbf{T}_m= R_m \in  \widetilde{\Phi}_{\mu}(R)$. 
\end{enumerate}
This yields  the finite sequence of tuples
\begin{align*} \tilde{\Eu{S}} = \tuple{S_1,R_1,\Eu{M}_1}, \ldots \end{align*}
that either ends with a tuple $\tuple{S_m,R_m,\Eu{M}_m}$ or a fixed-point \ce $S_m \in  \widetilde{\Phi}_{\mu}(S)$ or
a fixed-point \ce $R_m \in  \widetilde{\Phi}_{\mu}(R)$.
Besides, the mapping $\phi_{\mu}$ associates to the sequence  $\tilde{\Eu{S}}$  the  sequence \begin{align*}\phi_{\mu}(\mathbf{T}_1),\ldots,\phi_{\mu}(\mathbf{T}_m)\end{align*}  in $\ufold{S}{\mathbf{s}} \combb \ufold{R}{\mathbf{r}}$ which is
\begin{align*}\ufold{S_1}{\mathbf{s}_1} \combb \ufold{R_1}{\mathbf{r}_1},\ldots \end{align*} that ends with the \ce $\ufold{S_m}{\mathbf{s}_m} \combb \ufold{R_m}{\mathbf{r}_m}$ or $\ufold{S_m}{\mathbf{s}_m}$  or  $\ufold{R_m}{\mathbf{r}_m}$, for iteration mappings $\mathbf{s}_i$ and $\mathbf{r}_i$.
\end{notations}

\begin{definition}[Measures on \ces of fixed-point tree]
  \label{definition:omega:Omega:def}
We define three measures.
  \begin{enumerate}
\item  We define  $\Omega^{\#}_{\Eu{S}}(\mathbf{T}_i,\mathbf{T}_j)$, for $1\le i \le j$, as the maximal number of occurrences of \ces that appear in the series of tuples in $\tilde{\Eu{S}}$ starting from the tuple related  to $\mathbf{T}_i$ and ending with the tuple related to $\mathbf{T}_j$.
  That is, if $\mathbf{T}_j \in \widetilde{\Phi}(S \combb R) \setminus \big(\widetilde{\Phi}(S) \cup \widetilde{\Phi}(R)\big)$, then 
  \begin{align}
    \label{definition:omega:Omega:def:Omega:1}
    \Omega^{\#}_{{\Eu{S}}}(\mathbf{T}_i,\mathbf{T}_j)   &=
    \max\set{\nbr_{\Eu{S}}(S_p), \nbr_{\Eu{S}}(R_p) \gvert  S_p \in \widetilde{\Phi}_{\mu}(S), R_p \in \widetilde{\Phi}_{\mu}(R), p=i,\ldots,j}.
  \end{align}

  If $\mathbf{T}_j \in \widetilde{\Phi}(S) \cup \widetilde{\Phi}(R)$, then
  \begin{align}
    \Omega^{\#}_{\Eu{S}}(\mathbf{T}_i,\mathbf{T}_j)   &= \begin{cases}
      0 & \tif i=j  \\
      \Omega^{\#}_{\Eu{S}}(\mathbf{T}_i,\mathbf{T}_{j-1}) & \tif i>j.
    \end{cases}
  \end{align}

\item For $i \in \set{1,m}$, define
  \begin{align}
    \omega_{\Eu{S}}(\mathbf{T}_i)   &= n-\Omega^{\#}_{\Eu{S}}(\mathbf{T}_1,\mathbf{T}_i).
  \end{align}
\item   For $i \in \set{1,m}$, we define the \emph{codistance} between $\mathbf{T}_1$ and $\mathbf{T}_i$ as follows.
  
  If $\mathbf{T}_i \in \widetilde{\Phi}(S \combb R) \setminus \big(\widetilde{\Phi}(S) \cup \widetilde{\Phi}(R)\big)$, then 
  \begin{align}
        \label{definition:omega:Omega:def:DS:1}
    \D_{\Eu{S}}(\mathbf{T}_i)   &= \begin{cases}
      n & \tif i=1  \\
      D^{\star}\big((\mathbf{s}_1,\mathbf{r}_1), (\mathbf{s}_i,\mathbf{r}_i)\big) & \tif i>1.
    \end{cases}
    \end{align}
If $\mathbf{T}_i \in \widetilde{\Phi}(S) \cup \widetilde{\Phi}(R)$, then 
\begin{align}
      \label{definition:omega:Omega:def:DS:2}
         \D_{\Eu{S}}(\mathbf{T}_i)   &= \begin{cases}
           n & \tif i=1  \\
           \min\big\{\D(\mathbf{T}_{i-1}), d^{\star}(\mathbf{s}_1,\mathbf{s}_i)\big\} & \tif i>1.
         \end{cases}
\end{align}
\end{enumerate}
\end{definition}

When the sequence $\Eu{S}$ is known and if there is no ambiguity we shall  simplify the notations by omitting $\Eu{S}$ and simply writing  $\Omega^{\#}(\mathbf{T}_i,\mathbf{T}_j)$, $\omega(\mathbf{T}_i)$ and $\D(\mathbf{T}_i)$ instead of  $\Omega^{\#}_{\Eu{S}}(\mathbf{T}_i,\mathbf{T}_j)$, $\omega_{\Eu{S}}(\mathbf{T}_i)$ and $\mathbf{D}^{\star}_{\Eu{S}}(\mathbf{T}_i)$. These three measures are illustrated with the following example.
\begin{example}
  \label{all:measures:example}
  Let $\xi_1,\xi_2,\xi_3$ be  fixed-point \ces, where $\xi_i=\mu X_i.S_i(X_i)$ for $i=1,2,3$, such that  $\xi_3$ is a sub-\ce of $\xi_2$ which is a sub-\ce of $\xi_1$. 
Let $R_1,\ldots,R_6$ be fixed-point free \ces,  and let  $\Eu{M}_1,\ldots,\Eu{M}_6$ be  memories with $\Eu{M}_1=\emptyset$.

 Firstly, we consider the unification  $\xi_1 \combb R_1$.
 We do not  make explicit the  derivation that starts from  $\tuple{\xi_1,R_1,\Eu{M}_1}$ because  it has  been detailed in the  similar and  simpler Example~\ref{properties:quasi:simulation:example}, see Eq.(\ref{deriv:1}).
 We assume  that the unification $\xi_1 \combb R_1$ gives rise to   the following  sequence of tuples, in which $\xi_1$ occurs $3$ times, $\xi_2$ occurs $2$ times  and  $\xi_3$ occurs once:
  \begin{align*}
    \tuple{\xi_1,R_1,\Eu{M}_1},
    \tuple{\xi_2,R_2,\Eu{M}_2},
    \tuple{\xi_1,R_3,\Eu{M}_3},
    \tuple{\xi_3,R_4,\Eu{M}_4},
    \tuple{\xi_2,R_5,\Eu{M}_5},
    \tuple{\xi_1,R_6,\Eu{M}_6}.
  \end{align*}
This  yields  the following (fixed-point) left-maximal  sequence, denoted by $\Eu{S}$: 
\begin{align}
  \label{all:measures:example:fixed-point:seqience:eq}
\mathbf{T}_1 \subcer \mathbf{T}_2 \subcer \cdots \subcer \mathbf{T}_6
  \end{align}
  in $\Eu{T}(\xi_1 \combb R_1)$, where each $\mathbf{T}_i$ is the normal form of the related triplet (i.e. $\mathbf{T}_1=\NF(\tuple{\xi_1,R_1,\Eu{M}_1})$, etc). 
  
  Secondly, we consider the unification  of an unfolding of $S_1$   with an unfolding of $R_1$.
  Recall that  $R_1$, as well as the other $R_2,\ldots,R_6$, are fixed-fixed point free, and therefore they are equal to their unfolding. 
  Hence,  we define  the following iteration mappings in which  $n\ge 1$:
\begin{align*}
&  \mathbf{s}_1=\set{X_1 \mapsto n, X_2\mapsto n, X_3\mapsto n }, &&  \mathbf{s}_2=\set{X_1 \mapsto n\m 1, X_2\mapsto n, X_3\mapsto n}, && \mathbf{s}_3=\set{X_1 \mapsto n\m 1, X_2\mapsto n\m 1, X_3\mapsto n }, \\
 & \mathbf{s}_4=\set{X_1 \mapsto n\m 2, X_2\mapsto n\m 1, X_3\mapsto n }, &&   \mathbf{s}_5=\set{X_1 \mapsto n\m 2, X_2\mapsto n\m 1, X_3\mapsto n\m 1 },  &&   \mathbf{s}_6=\set{X_1 \mapsto n\m 2, X_2\mapsto n\m 2, X_3\mapsto n\m 1},
\end{align*}
and we consider  the unification $\ufold{\xi_1}{\mathbf{s}_1} \combb R_1$, which is related to the unification $\xi_1 \combb R_1$ via the mapping $\phi_\mu$ as follows:
\begin{align*}
  \phi_{\mu}(\mathbf{T}_1) = \ufold{\xi_1}{\mathbf{s}_1} \combb R_1, &&
  \phi_{\mu}(\mathbf{T}_2) = \ufold{\xi_2}{\mathbf{s}_2} \combb R_2, &&
  \phi_{\mu}(\mathbf{T}_3) = \ufold{\xi_1}{\mathbf{s}_3} \combb R_3, && \\
  \phi_{\mu}(\mathbf{T}_4) = \ufold{\xi_3}{\mathbf{s}_4} \combb R_4, &&
  \phi_{\mu}(\mathbf{T}_5) = \ufold{\xi_2}{\mathbf{s}_5} \combb R_5, && 
  \phi_{\mu}(\mathbf{T}_6) = \ufold{\xi_1}{\mathbf{s}_6} \combb R_6. && 
\end{align*}
The measures  $\Omega^{\#}_{\Eu{S}}$ and $\omega_{\Eu{S}}$ and  $\D_{\Eu{S}}$,  related   to the (fixed-point) sequence $\Eu{S}$, are given in Table ~\ref{table:all:measures}, in which
the second row shows the \ce among $\set{\xi_1,\xi_2,\xi_3}$ that appears in $\mathbf{T}_i$;
and   the third row shows the iteration mapping $\mathbf{s}_i$ involved in $\phi_{\mu}(\mathbf{T}_i)$.

\begin{table}[H]
  \begin{center}
\begin{tabular}{ |c||c|c|c|c|c|c| } 
 \hline  
\backslashbox{}{\\ \\ \\ \\ $\mathbf{T_i}$}                                  & $\mathbf{T}_1$ & $\mathbf{T}_2$ & $\mathbf{T}_3$  &  $\mathbf{T}_4$ & $\mathbf{T}_5$  & $\mathbf{T}_6$ \\ \hline  \hline
 $\xi_j$ in $\mathbf{T}_i$  & $\xi_1$ & $\xi_2$& $\xi_1$  &   $\xi_3$     &  $\xi_2$&  $\xi_1$\\ \hline
 $\mathbf{s}_i$ & $(n, n, n)$ & $(n\m 1, n, n)$ & $(n\m 1, n\m 1, n)$  &  $(n\m 2, n\m 1, n)$       & $(n\m 2, n\m 1, n\m 1)$ & $(n\m 2, n\m 2, n\m 1)$ \\ \hline 
 $\Omega^{\#}_{\Eu{S}}(\mathbf{T}_1,\mathbf{T}_i)$ & $1$ & $1$ & $2$  &   $2$       & $2$  & $3$\\ \hline
 $\omega_{\Eu{S}}(\mathbf{T}_i)$                  & $n\m 1$ & $n \m 1$ & $n\m 2$  & $n\m 2$       &  $n\m 2$ & $n\m 3$\\ \hline
 $\D_{\Eu{S}}(\mathbf{T}_i)$ & $n$ & $n\m 1$ & $n\m 1$  & $n\m 2$      & $n\m 2$ & $n\m 2$ \\ \hline
\end{tabular}
\end{center}
  \caption{An example of the measures  $\Omega^{\#}_{\Eu{S}}$ and $\omega_{\Eu{S}}$, and the codistance $\D_{\Eu{S}}$ related to  the fixed-point left-maximal sequence $\Eu{S}=\mathbf{T}_1 \subcer \cdots \subcer \mathbf{T}_6$ defined in  Eq.(\ref{all:measures:example:fixed-point:seqience:eq}). The second row shows the \ce among $\set{\xi_1,\xi_2,\xi_3}$ that appears in $\mathbf{T}_i$, for $i=1,\ldots,6$.
  The third row shows the iteration mapping $\mathbf{s}_i$ involved in $\phi_{\mu}(\mathbf{T}_i)$, where the triplet  $(n_1,n_2,n_3)$ refers to  the  iteration mapping $\set{X_1\mapsto n_1, X_2\mapsto n_2, X_3\mapsto n_3}$.}
\label{table:all:measures}
\end{table}
\end{example}


In Lemma~\ref{two:bound:of:Omega} we shall establish a useful relation between  $\omega$ and $\D$.

\newcounter{counter:two:bound:of:Omega}
\setcounter{counter:two:bound:of:Omega}{\value{theorem}} 

\begin{lemma} 
  \label{two:bound:of:Omega}
For any left-maximal sequence 
  \begin{align*}
    \mathbf{T}_1  \subcer \cdots \subcer  \mathbf{T}_m  
  \end{align*}
 in $\Eu{T}$ with $m\ge 2$, and  for any $p$ and $q$ where $1 \le p <q \le m$,
\begin{enumerate}
\item If for $i=1,\ldots,q$, there are \ces $S_i \in \widetilde{\Phi}(S)$ and $R_i \in \widetilde{\Phi}(R)$, and iteration mappings \\$\mathbf{s}_i:\set{X_1,\ldots,X_s} \to \N$ and $\mathbf{r}_i:\set{Y_1,\ldots,Y_s} \to \N$ such that 
  \begin{align*}
  \phi_{\mu}(\mathbf{T}_i)& = \ufold{S_i}{\mathbf{s}_i} \combb  \ufold{R_i}{\mathbf{r}_i}
\end{align*}
then

\begin{align}
       \omega(\mathbf{T}_q) &\in \set{D^{\star}\big((\mathbf{s}_1,\mathbf{r}_1),(\mathbf{s}_q,\mathbf{r}_q)\big), D^{\star}\big((\mathbf{s}_1,\mathbf{r}_1),(\mathbf{s}_q,\mathbf{r}_q)\big)-1}.       \label{two:bound:of:Omega:eq:1}
  \end{align}

\item If  there is a \ce $\xi_m\in \widetilde{\Phi}(S) \cup \widetilde{\Phi}(R)$ and an iteration mapping $\mathbf{s}_m$ such that
\begin{align*}
  \phi_{\mu}(\mathbf{T}_m)& = \ufold{\xi_m}{\mathbf{s}_m} 
\end{align*}
then
\begin{align}
    \label{two:bound:of:Omega:eq:3}
    \min\set{\mathbf{s}_m(X) \gvert X \in \dom(\mathbf{s}_m)} \ge \D(\mathbf{T}_m) .
  \end{align}
\end{enumerate}
\end{lemma}    

From Lemma ~\ref{comparing:unif:unfolding:lemma}  we get the following corollary that establishes, in addition to another property,  the  semantic equivalence between
$\phi_{\mu}(Z_i)$ and $\phi_{\nu}^{\mu}(Z_i)$ for a fixed-point variable $Z_i$ of a \ce $\mu Z_i.T_i(Z_i)$ that appears in  the fixed-point tree $\Eu{T}$.
Roughly speaking, this corollary will be useful to prove that $\phi_{\mu}(\mu Z_i.T_i(Z_i))$ is a fixed-point of $T_i(Z_i)$ as explained at the beginning of Subsection~\ref{equivalent:sub:sub:sec},
and  used in the proof of Lemma~\ref{main:lemma:unif}.



\begin{corollary}
\label{properties:morphism:corollary}
Let $\Eu{S}$ be a  sequence
\begin{align*}
 \mu Z_{1}.T_{1}(Z_{1}) \subcer \cdots \subcer \mu Z_m.T_m(Z_m)  \subcer Z_i
\end{align*}
in  $\Eu{T}$  with $m \ge 1$ and  $i \in \set{1,\ldots,m}$.
\begin{enumerate}
\item If $Z_i \in \boundv{S \combb R}\setminus (\boundv{S} \cup \boundv{R})$  then 
\begin{align}
    \label{properties:morphism:corollary:eq:1}
    \phi_{\mu}(Z_i) \equiv_{\mathbf{D}^{\star}_{\Eu{S}}(Z_i)} \phi_{\nu}^{\mu}(Z_i).
\end{align}

\item If  $Z_m \in \boundv{S} \cup \boundv{R}$ (i.e. $\mu Z_m.T_m(Z_m) \in \widetilde{\Phi}_{\mu}(S) \cup \widetilde{\Phi}_{\mu}(R)$) then
\begin{align}
    \label{properties:morphism:corollary:eq:2}
    \phi_{\mu}(\mu Z_m.T_m(Z_m)) \equiv_{\mathbf{D}^{\star}_{\Eu{S}}( \mu Z_m.T_m(Z_m))} \mu Z_m.T_m(Z_m).
  \end{align}
\end{enumerate}
\end{corollary}

\begin{proof}
 \begin{enumerate}
   \item  Since $Z_i\in \boundv{S \combb R}\setminus (\boundv{S} \cup \boundv{R})$ then it follows from Items (\ref{properties:morphism:lemma:item:1:1}) and (\ref{properties:morphism:lemma:item:1:2}) of  Lemma~\ref{properties:morphism:lemma}  that
 there are \ces $S_i \in \Phi(S)$ and $R_i\in \Phi(R)$,   and iteration mappings  $\mathbf{s}_i,\mathbf{s}_j:\{X_1,\ldots,X_s\} \to  \mathbb{N}$ and $\mathbf{s}_i,\mathbf{r}_j:\{X_1,\ldots,X_r\} \to  \mathbb{N}$ such that
  \begin{align*}
    \phi_{\nu}^{\mu}(Z_i) &= \ufold{S_i}{\mathbf{s}_j} \combb \ufold{R_i}{\mathbf{r}_j}  \\
    \phi_{\mu}(Z_i) &= \ufold{S_i}{\mathbf{s}_i} \combb \ufold{R_i}{\mathbf{r}_i}
  \end{align*}
  where $\mathbf{s}_j \ge \mathbf{s}_i$ and  $\mathbf{r}_j \ge \mathbf{r}_i$. Thus, by Eq. (\ref{definition:omega:Omega:def:DS:1}) of Definition~\ref{definition:omega:Omega:def} of $\D$, we have $\mathbf{D}^{\star}_{\Eu{S}}(Z_i)=D^{\star}((\mathbf{s}_j,\mathbf{r}_j), (\mathbf{s}_i,\mathbf{r}_i))$.
  Hence the claim  follows from   Corollary~\ref{comparing:unif:unfolding:corollary}  that states that  
  \begin{align*}
    \ufold{S_i}{\mathbf{s}_j} \combb \ufold{R_i}{\mathbf{r}_j}
    \equiv_{D^{\star}((\mathbf{s}_j,\mathbf{r}_j), (\mathbf{s}_i,\mathbf{r}_i))} 
    \ufold{S_i}{\mathbf{s}_i} \combb \ufold{R_i}{\mathbf{r}_i}. 
  \end{align*}
\item    Assume that $\mu Z_m.T_m(Z_m)$ is a sub-\ce of $S$, the case when it is a sub-\ce of $R$ is similar.
From  Items (\ref{properties:morphism:lemma:item:1:3}) of  Lemma~\ref{properties:morphism:lemma}   if follows that there is an iteration mapping $\mathbf{s}_m:\set{X_1,\ldots,X_s} \to \mathbb{N}$ such that 
  \begin{align*}
    \phi_{\mu}(\mu Z_m.T_m(Z_m)) = \ufold{\mu Z_m.T_m(Z_m)}{\mathbf{s}_m}.
  \end{align*}
 
Let $\mathbf{m}=\min\set{\mathbf{s}_m(X_i) \gvert i=1,\ldots,s}$. We know from Item (\ref{unfold:equiv:lemma:item:3})  Lemma~\ref{unfold:equiv:lemma} that
 \begin{align*}
   \mu Z_m.T_m(Z_m) \equiv_{\mathbf{m}} \ufold{\mu Z_m.T_m(Z_m)}{\mathbf{s}_m}
    \end{align*}
 Hence it follows from   From Item (\ref{depth:position:composition:lemma:item:2}) of Lemma~\ref{depth:position:composition:lemma}  that to show Eq.(\ref{properties:morphism:corollary:eq:2}) it suffices to show
 \begin{align*}
 \D( \mu Z_m.T_m(Z_m)) \le \mathbf{m}
 \end{align*}
But this was proved in  Lemma ~\ref{two:bound:of:Omega}, see Eq. (\ref{two:bound:of:Omega:eq:3}).
 \end{enumerate}
\end{proof}

Although the following corollary will not be used in any further proofs, it is worth mentioning it. 
\begin{corollary}
  \label{quasi:sim:is:sim:cor}
  The $(\ceSet,\ceSetFree)$-quasi-simulation  that results between the unification of
  two \ces, say  $S \combb R$, and that of their unfolding, say $\ufold{S}{\mathbf{s}} \combb \ufold{R}{\mathbf{r}}$,  (i.e. constructed in the proof of  Lemma~\ref{main:lemma:mophism:quasi}) is actually  a $(\ceSet,\ceSetFree)$-simulation.
\end{corollary}
\begin{proof}
  This follows  immediately from Corollary~\ref{properties:morphism:corollary}. That is, on the one hand for any \ce $\mu Z.T(Z)$ in $S \combb R$ that is not a sub-\ce of $S$ nor $R$,
  we have that each of $\phi_{\mu}(\mu Z.T(Z))$ (i.e. $\phi_{\nu}^{\mu}(Z)$)  and $\phi_{\mu}(Z)$ corresponds to the unification of two unfoldings of the same two \ces.
  And on the other hand,  for any \ce $\mu Z.T(Z)$ in $S \combb R$ that is a sub-\ce of $S$ or $R$, there is a $(\ceSet,\ceSetFree)$-simulation between $\mu Z.T(Z)$ and any unfolding of it.
\end{proof}


\subsection{Derived  tree and a lower bound for the number of jumps}
\label{derived:sub:sub:sec}   
The Eq.(\ref{two:bound:of:Omega:eq:1}) of Lemma~\ref{two:bound:of:Omega} allows one to distinguish between
elements of  $\Eu{T}$ whose $\omega$ and $\D$ are equal, and those whose  $\omega$ and $\D$   are different by $1$. The latter  elements form the \emph{derived} tree of $\Eu{T}$.
The name "derived" tree is justified by the fact that  we want to focus on the elements of $\Eu{T}$ on which $\D$ changes and increases by $1$.

\begin{definition}[Derived tree $\partial \Eu{T}$ of $\Eu{T}$]
  \label{derived:tree:def}
Recall that  $\Eu{T}=(\Phi_{\mu}(S \combb R),\subcer)$.
We define the \emph{derived tree} of $\Eu{T}$, denoted by $\partial\Eu{T}$, as the pair  $\partial\Eu{T}=(A,\subcer)$ where  $A \subseteq \Phi_{\mu}(S \combb R)$ is defined by
\begin{align*}
  A =\set{\mathbf{T} \in \Phi_{\mu}(S \combb R)  \gvert \omega_{\Eu{S}}(\mathbf{T}) = \mathbf{D}^\star_{\Eu{S}}(\mathbf{T}) -1, \textrm{ for any maximal sequence }\Eu{S} \textrm{ in } \Eu{T} \textrm{ containing } \mathbf{T}}.
\end{align*}
\end{definition}
\begin{example}[Derived tree $\partial{\Eu{T}}$]
  We consider  Example~\ref{all:measures:example},  and we assume that the fixed-point tree $\Eu{T}$ of $\xi_1 \combb R_1$ contains just the sequence $\Eu{S}=\mathbf{T}_1 \subcer \mathbf{T}_2 \subcer \cdots \subcer \mathbf{T}_6$, 
  defined in Eq. (\ref{all:measures:example:fixed-point:seqience:eq}). By examining the last two rows of  Table~\ref{table:all:measures} that respectively exhibit  $\omega_{\Eu{S}}(\mathbf{T}_i)$ 
  and $\mathbf{D}^\star_{\Eu{S}}(\mathbf{T}_i)$, we notice that  the equality $\omega_{\Eu{S}}(\mathbf{T}) = \mathbf{D}^\star_{\Eu{S}}(\mathbf{T}) -1$ holds for $\mathbf{T}=\mathbf{T}_1,\mathbf{T}_3,\mathbf{T}_6$. 
  Hence it follows that the derived tree  $\partial{\Eu{T}}$ is composed of  $\mathbf{T}_1,\mathbf{T}_3,\mathbf{T}_6$. Besides, in $\partial{\Eu{T}}$, we have $\mathbf{T}_1 \subcer \mathbf{T}_3 \subcer  \mathbf{T}_6$.
\end{example}

The following remark provides useful observations  that can be illustrated  by the Table~\ref{table:all:measures} of the Example~\ref{all:measures:example}.
\begin{remark}
  \label{link:tree:derivative:rq}
Notice that, for any  maximal sequence $\Eu{S}$ in $\Eu{T}$,  the following statements follow from Eq.(\ref{two:bound:of:Omega:eq:1}) of Lemma~\ref{two:bound:of:Omega} and from Definition~\ref{derived:tree:def}.
\begin{enumerate}
  \item Any  (fixed-point) \ce  $\mathbf{T}$  which  is  in $\Eu{T}$ but not in $\partial \Eu{T}$ has the property $\omega_{\Eu{S}}(\mathbf{T}) = \mathbf{D}^{\star}_{\Eu{S}}(\mathbf{T})$. 
  \item Since  by  Items (\ref{properties:morphism:lemma:item:1:1}) and (\ref{properties:morphism:lemma:item:1:2}) of  Lemma~\ref{properties:morphism:lemma}  we know that each of $\D$ and $\omega$ can be incremented by at most $1$ from a \ce to its immediate sub-\ce in $\Eu{T}$, then if $\mathbf{\bb{T}}_1$ is in $\partial\Eu{T}$ and $\mathbf{T}_2$ is in $\Eu{T}$ such that $\mathbf{\bb T}_1 \subcer \mathbf{T}_2$,
    then  $\mathbf{D}^{\star}_{\Eu{S}}(\mathbf{\bb T}_1)=\mathbf{D}^{\star}_{\Eu{S}}(\mathbf{T}_2)+1$ and hence $\omega_{\Eu{S}}(\mathbf{\bb T}_1)=\omega_{\Eu{S}}(\mathbf{T}_2)$. 

  \item \label{link:tree:derivative:rq:item:2} Similarly,  if $\mathbf{T}_1,\ldots,\mathbf{T}_m$ are in $\Eu{T}$, and $\mathbf{\bb T}_2$ is in $\partial\Eu{T}$ such that
    $\mathbf{T}_1 \subcer \ldots \subcer \mathbf{T}_m \subcer \mathbf{\bb T}_2$, then  $\mathbf{D}^{\star}_{\Eu{S}}(\mathbf{T}_i)=\mathbf{D}^{\star}_{\Eu{S}}(\mathbf{\bb T}_2)$ and   $\omega_{\Eu{S}}(\mathbf{T}_i)=\omega_{\Eu{S}}(\mathbf{T}_2)+1$, for any $i \in \set{1,\ldots,m}$.   
  \item \label{link:tree:derivative:rq:item:3} In particular, if $\mathbf{\bb T}_1$ and $\mathbf{\bb T}_2$ are in $\partial \Eu{T}$   such that  $\mathbf{\bb T}_1 \subcer \mathbf{\bb T}_2$,
    then  $\mathbf{D}^{\star}_{\Eu{S}}(\mathbf{\bb T}_1)=\mathbf{D}^{\star}_{\Eu{S}}(\mathbf{\bb T}_2)+1$ and  $\omega_{\Eu{S}}(\mathbf{\bb T}_1)=\omega_{\Eu{S}}(\mathbf{\bb T}_2)+1$.
\end{enumerate}
\end{remark}

\newcommand \T { \mf{\bb{T}}_{2}}
\newcommand \TOne { \mf{\bb{T}}_{1}}

Thanks to Lemma~\ref{two:bound:of:Omega} and Remark~\ref{link:tree:derivative:rq},
we show in the following Lemma~\ref{distance:fixed-point-in-tree} a crucial property of the derived tree $\partial \Eu{T}$ that was behind its
introduction: if two \ces $\TOne$ and $\T$ are in $\partial \Eu{T}$ with $\TOne \subcer   \T$, then  the number of jumps between the root of $\TOne$ and $\T$ is at least one.

\newcounter{counter:distance:fixed-point-in-tree}
\setcounter{counter:distance:fixed-point-in-tree}{\value{theorem}} 

\begin{lemma}
  \label{distance:fixed-point-in-tree}
Let 
  \begin{align*}
    \mu \bb{Z}_1.\bb{T}_1(\bb Z_1)  \subcer   \T
  \end{align*}
be a  sequence in $\partial \Eu{T}$.
 Define $\bb T^{\star}_{1}(Z)$ to  be the (unique) \ce satisfying
\begin{align*}
  \bb T^{\star}_1(\T)= \bb T_{1}(\bb Z_1).
\end{align*}
We have that
\begin{align}
  \label{distance:fixed-point-in-tree:eq}
   1  \le \Pi_{Z}(\bb T^{\star}_1(Z)).   
\end{align}
\end{lemma}


\subsection{The unification of two \ces is equivalent to the unification of their unfolding}
\label{equivalent:sub:sub:sec}

We arrive at the key lemma  that will allows us to show that  unification of two \ces is $n$-equivalent to the unification of their unfolding.
We already explained  at the beginning of this Section~\ref{equiv:unif:with:unif:unfolding:sec} that, in the  particular setting where $S\combb R$ is composed of just one \ce, say $\mu Z.T(Z)$,
the purpose is to show that $\mathbf{E}$ is a fixed-point of $T(Z)$, where $\mathbf{E}$ is the unification of the unfolding of two \ces. That is, we want to show that $\mathbf{E}$ is $n$-equivalent to $T(\mathbf{E})$.

However,  if we consider  the general setting in which the fixed-point \ces in   $S \combb R$ can be nested, namely if we have a sequence  $\Eu{S}=\mu Z_1.T_1(Z_1) \subcer \cdots \subcer \mu Z_m.T_m(Z_m)$ in $S \combb R$,  
then a fixed-point variable $Z_i$ may appear in any \ce $\mu Z_j.T_{j}(Z_{j})$  for $1\le i \le j \le m$. Therefore, we need a general and inductive  way to formulate and then to show that certain fixed-point free 
\ces $\mathbf{E}_1,\ldots,\mathbf{E}_m$ (which are in the \ce that results from the unification of the unfolding of $S$ with the unfolding of $R$) are a fixed-point of
$T_1(Z_1),\ldots, T_m(Z_m)$, respectively, in the sense that $\mathbf{E}_i$ is $n_i$-equivalent to $T_i(\mathbf{E}_i)$, for $i=1,\ldots,m$, where $n_i$ is an appropriate constant. 
This general and inductive  way of formulating such requirements is achieved  thanks to the mappings  $\phi_{\mu}$ and $\widehat{\phi}_{\nu}^{\mu}$  by just  imposing that
$\phi_{\mu}\big(\mu Z_i.T_i(Z_i)\big)$  and  $\widehat{\phi}_{\nu}^{\mu}(T_i(Z_i))$ must be  $\D_{\Eu{S}}(\mu Z_i.T_i(Z_i))$-equivalent.
In particular,  $\phi_{\mu}\big(\mu Z_i.T_i(Z_i)\big)$ corresponds to $\mathbf{E}_i$, while $\widehat{\phi}_{\nu}^{\mu}(T_i(Z_i))$ corresponds to $T(\mathbf{E}_i)$ since, roughly speaking,  
$\widehat{\phi}_{\nu}^{\mu}(T_i(Z_i))$ corresponds to $T_i\big(\widehat{\phi}_{\nu}^{\mu}(Z_i)\big)$ which is $T_i(\mathbf{E}_i)$.

\begin{lemma} 
  \label{main:lemma:unif}
Let $S$ (resp. $R$)  be  a \ce  with bound   fixed-point variables $X_1,\ldots, X_s$ (resp. $Y_1,\ldots, Y_r$), and let  $n \ge 1$. 
Let  $\mathbf{s}:\{X_1,\ldots,X_s\} \to  \mathbb{N}$ and $\mathbf{r}:\{Y_1,\ldots,Y_r\} \to  \mathbb{N}$ be iteration  mappings  with $\mathbf{s}(X_i)=\mathbf{r}(Y_j)=n$, for $i=1,\ldots,s$ and $j=1,\ldots,r$.
Let $\Eu{T}$ be the fixed-point tree of $S \combb R$  rooted at  $\mu Z_1.T_1(Z_1)$. 
Let $\Eu{T}_i$ be a right maximal sub-tree of $\Eu{T}$  rooted at  $\mu Z_i.T_i(Z_i)$ yielding the unique sequence $\Eu{S}^i$:
 \begin{align*}
  \mu Z_1.T_1(Z_1)   \subcer \cdots \subcer     \mu Z_i.T_i(Z_i)
\end{align*}
 in $\Eu{T}$ and let
\begin{align*}
  \omega(i)&=\omega_{\Eu{S}^i}(\mu Z_i.T_i(Z_i)) \\
  \D(i)&=\D_{\Eu{S}^i}(\mu Z_i.T_i(Z_i)).
  \end{align*}
 Then for any $i=1,\ldots,\delta(\Eu{T})$, and any  maximal sequence 
 \begin{align*}
  \mu Z_i.T_i(Z_i)   \subcer \cdots \subcer     \mu Z_m.T_m(Z_m)
\end{align*}
  in $\Eu{T}_i$ where $ i\le m$, 
  either
\begin{enumerate}[(i)]
\item  $Z_i \in \boundv{S\combb R} \setminus \big(\boundv{S}\cup \boundv{R}\big)$ and in this case  we have that
\begin{align}
\label{distance:fixed-point-in-tree:remove:fixed-point:lemma:eq}
\phi_{\mu}\big(\mu Z_i.T_i(Z_i)\big)   & \equiv_{\D(i)}   \widehat{\phi}_{\nu}^{\mu}(T_i(Z_i)),
\end{align}

\item or $i=m$ and $Z_m \in \boundv{S}\cup \boundv{R}$, and in this case we have that 
\begin{align}
  \label{distance:fixed-point-in-tree:remove:fixed-point:lemma:eq'}
  \phi_{\mu}\big(\mu Z_m.T_m(Z_m)\big)   & \equiv_{\D(m)}   \mu Z_m.T_m(Z_m).
\end{align}
\end{enumerate}
\end{lemma}
\begin{proof} 
  The proof is by a double induction. The outer one is a  structural induction on the tree  $\partial\Eu{T}_i$. 
  \begin{description}
  \item[Outer base case $\delta(\partial\Eu{T}_i)=0$.] %
In this case consider a maximal  sequence
\begin{align*}
\mu Z_i.T_i(Z_i)  \subcer \cdots \subcer \mu Z_m.T_m(Z_m)
\end{align*}
in $\Eu{T}_i$ with $1\le i \le m$. Indeed, since $\delta(\partial\Eu{T}_i)=0$ then
\begin{align}
  \label{ind:equality:eq:1}
  \D(i)=\D({i+1})=\ldots=\D({m}).
\end{align}
 
We make an inner  structural induction on $\Eu{T}_i$.
\\\underline{Inner base case: $\delta(\Eu{T}_i)=1$.}

Since $\Eu{T}_i$ is not necessarily connected, it may contain many  maximal sequences, but each one of them  is composed of just one fixed-point \ce. Hence, consider a   maximal sequence 
  \begin{align*}
    \mu Z_m.T_m(Z_m)  
  \end{align*}
  in $\Eu{T}_i$. 
 And we need to show that
\begin{numcases}{\phi_{\mu}\big(\mu Z_m.T_m(Z_m)\big)   \equiv_{\D(m)}}
  \widehat{\phi}_{\nu}^{\mu}(T_m(Z_m)) &  if $Z_m \in  \boundv{S\combb R} \setminus \big(\boundv{S}\cup \boundv{R}\big)$  \label{distance:fixed-point-in-tree:remove:fixed-point:lemma:eq:base-case'} \\ 
  \mu  Z_m.T_m(Z_m)                   &  if $Z_m \in \boundv{S}\cup \boundv{R}.$                                          \label{distance:fixed-point-in-tree:remove:fixed-point:lemma:eq:base-case:X}
  \end{numcases}
If $Z_m \in \boundv{S}\cup \boundv{R}$ then Eq (\ref{distance:fixed-point-in-tree:remove:fixed-point:lemma:eq:base-case:X}) follows from Eq.(\ref{properties:morphism:corollary:eq:2}) of  Corollary~\ref{properties:morphism:corollary}.

If  $Z_m \in  \boundv{S\combb R} \setminus \big(\boundv{S}\cup \boundv{R}\big)$ then notice that $T_m(Z_m)$ is fixed point-free but  may contain free fixed-point variables besides  $Z_m$. Therefore  there exists a  fixed point-free \ce
$T^{\star}(Z^1,\ldots,Z^l,Z_m)$ with $l \ge 0$ and  $\set{Z^1,\ldots,Z^l} \subseteq \set{Z_1,\ldots,Z_{m}} \setminus \set{Z_m}$ such that $T_m(Z_m)=T^{\star}_m(Z^1,\ldots,Z^{l},Z_m)$.
Hence we need to show that 
\begin{align}
\label{distance:fixed-point-in-tree:remove:fixed-point:lemma:eq:base-case''}
\phi_{\mu}\big(\mu Z_m.T^\star_m(Z^1,\ldots,Z^{l},Z_m)\big)   & \equiv_{\D(m)}   \widehat{\phi}_{\nu}^{\mu}(T_m^\star(Z^1,\ldots,Z^{l},Z_m)).
\end{align}
On the one hand, it follows from Remark~\ref{mu:morphi:prop} that  the left-hand side of Eq. (\ref{distance:fixed-point-in-tree:remove:fixed-point:lemma:eq:base-case''})
can be written as
\begin{align*}
  \textrm{LHS}. (\ref{distance:fixed-point-in-tree:remove:fixed-point:lemma:eq:base-case''}) &=\phi_{\mu}\big(\mu Z_m.T_m^\star(Z^1,\ldots,Z^{l},Z_m)\big)\\
  &=T_m^\star\big(\phi_{\mu}(Z^1),\ldots,\phi_{\mu}(Z^{l}),\phi_{\mu}(Z_m)\big).
\end{align*}

On the other hand,  by the Definition~\ref{induced:morphisms:def} of $\widehat{\phi}_{\nu}^{\mu}$,  the right-hand side of Eq. (\ref{distance:fixed-point-in-tree:remove:fixed-point:lemma:eq:base-case''}) can be written as 
\begin{align*}
  \textrm{RHS}. (\ref{distance:fixed-point-in-tree:remove:fixed-point:lemma:eq:base-case''}) &=   \widehat{\phi}_{\nu}^{\mu}(T_m^\star(Z^1,\ldots,Z^l,Z_m)) \\ 
   &= T_m^\star(\widehat{\phi}_{\nu}^{\mu}(Z^1),\ldots,\widehat{\phi}_{\nu}^{\mu}(Z^l),\widehat{\phi}_{\nu}^{\mu}(Z_m))) \\
   &= T_m^\star(\phi_{\nu}^{\mu}(Z^1),\ldots,\phi_{\nu}^{\mu}(Z^l),\phi_{\nu}^{\mu}(Z_m)).
 \end{align*}
 Thus we need to show that
 \begin{align}
   \label{distance:fixed-point-in-tree:remove:fixed-point:lemma:eq:base-case'''}
   T_m^\star(\phi_{\mu}(Z^1),\ldots,\phi_{\mu}(Z^{l}),\phi_{\mu}(Z_m))
   \equiv_{\D(m)}
   T_m^\star(\phi_{\nu}^{\mu}(Z^1),\ldots,\phi_{\nu}^{\mu}(Z^l),\phi_{\nu}^{\mu}(Z_m)).
\end{align}
 From Eq.(\ref{properties:morphism:corollary:eq:1}) of Corollary~\ref{properties:morphism:corollary} we have that
 \begin{align*}
\phi_{\mu}(Z_m) \equiv_{\D(Z_m)} \phi_{\nu}^{\mu}(Z_m) && \tand&&  \phi_{\mu}(Z^j) \equiv_{\D(Z^j)} \phi_{\nu}^{\mu}(Z^j) \textrm{ for } j=1,\ldots,l.
 \end{align*}
But we know from Eq.(\ref{ind:equality:eq:1}) above  that $\D(Z_m)=\D(m)$ as well as $\D(Z_m)=\D(Z^j)$ for $j=1,\ldots,l$.
Thus Eq.(\ref{distance:fixed-point-in-tree:remove:fixed-point:lemma:eq:base-case'''}) holds  by Item (\ref{depth:position:composition:lemma:item:2}) of Lemma~\ref{depth:position:composition:lemma}. 

\underline{Inner induction step.} 
 Assume that Eq.(\ref{distance:fixed-point-in-tree:remove:fixed-point:lemma:eq}) holds for a  fixed-point sub-tree $\Eu{T}_i$  of $\Eu{T}$,
 and  we shall prove it for the (unique)  fixed-point sub-tree $\Eu{T}_{i-1}$ (of $\Eu{T}$) that contains $\Eu{T}_i$  such that  $\Eu{T}_i$  is an immediate sub-tree of  $\Eu{T}_i$.  Assume that
 $\Eu{T}_{i-1}$ is rooted at $\mu Z_{i-1}.T_{i-1}(Z_{i-1})$.

 Consider such  tree   $\Eu{T}_{i-1}$  and a  maximal sequence
\begin{align*}
 \mu Z_{i-1}.T_{i-1}(Z_{i-1})  \subcer   \mu Z_i.T_i(Z_i)  \subcer \cdots \subcer \mu Z_m.T_m(Z_m)
\end{align*}
in  $\Eu{T}_{i-1}$. We recall that we have
\begin{align}
 \label{ind:equality:eq:2}
 \D(i-1)=\ldots=\D(m).
 \end{align}

The \ce  $T_{i-1}(Z_{i-1})$ can be written in terms of its immediate fixed-point sub-\ces and  fixed-point variables in the sense that  there exist  $k \ge 1$ and $l\ge 0$ and 
  \begin{enumerate}[i.)]
  \item  a fixed-point free \ce   $T_{i-1}^{\star}(X^1 ,\ldots,X^{k+l})$  in which  each fixed-point variable $X^j$ is free,  and 
  \item \ces   $\mathbf{T}_1,\ldots,\mathbf{T}_k$  where each $\mathbf{T}_j$ is either a fixed-point \ce in $\Phi_{\mu}(S\combb R)$, and 
  \item fixed-point variables   $Z^1,\ldots,Z^l$ where  $\set{Z^1,\ldots,Z^l} \subseteq \set{Z_1,\ldots,Z_{i-1}}$,
  \end{enumerate}
   such that   $T_{i-1}(Z_{i-1})$ can be written as
 \begin{align*}
    T_{i-1}(Z_{i-1}) = T_{i-1}^{\star}(\mathbf{T}_1 ,\ldots,\mathbf{T}_{k},Z^1,\ldots,Z^l). 
 \end{align*}
 Hence,  we need to show that 
\begin{align}
\label{distance:fixed-point-in-tree:remove:fixed-point:lemma:eq:induction}
\phi_{\mu}\big(\mu Z_{i-1}.T_{i-1}^{\star}(\mathbf{T}_1 ,\ldots,\mathbf{T}_{k},Z^1,\ldots,Z^l)  \big)
& \equiv_{\D{(i-1)}}
\widehat{\phi}_{\nu}^{\mu}( T_{i-1}^{\star}(\mathbf{T}_1 ,\ldots,\mathbf{T}_{k},Z^1,\ldots,Z^l)).
\end{align}

On the one hand, it follows from Remark~\ref{mu:morphi:prop} that  the left-hand side of Eq.(\ref{distance:fixed-point-in-tree:remove:fixed-point:lemma:eq:induction})
can be written as
\begin{align*}
  \textrm{LH}.(\ref{distance:fixed-point-in-tree:remove:fixed-point:lemma:eq:induction}) &=
  \phi_{\mu}\big(\mu Z_{i-1}.T_{i-1}^{\star}(\mathbf{T}_1 ,\ldots,\mathbf{T}_{k},Z^1,\ldots,Z^l)  \big) \\
  &=T_{i-1}^{\star}\big(\phi_{\mu}(\mathbf{T}_1) ,\ldots,\phi_{\mu}(\mathbf{T}_{k}),\phi_{\mu}(Z^1),\ldots,\phi_{\mu}(Z^l)\big).
\end{align*}

On the one hand, by Definition~\ref{induced:morphisms:def} of $\widehat{\phi}_{\nu}^{\mu}$, the right-hand side of Eq.(\ref{distance:fixed-point-in-tree:remove:fixed-point:lemma:eq:induction}) can be written as 
\begin{align*}
  \textrm{RH}.(\ref{distance:fixed-point-in-tree:remove:fixed-point:lemma:eq:induction})& =
  \widehat{\phi}_{\nu}^{\mu}\big( T_{i-1}^{\star}(\mathbf{T}_1 ,\ldots,\mathbf{T}_{k},Z^1,\ldots,Z^l)\big) \\
   &=T_{i-1}^{\star}\big(\widehat{\phi}_{\nu}^{\mu}(\mathbf{T}_1) ,\ldots,\widehat{\phi}_{\nu}^{\mu}(\mathbf{T}_{k}),\phi_{\nu}^{\mu}(Z^1),\ldots,\phi_{\nu}^{\mu}(Z^l)\big).
\end{align*}

Therefore showing Eq.(\ref{distance:fixed-point-in-tree:remove:fixed-point:lemma:eq:induction}) amounts  to show that 
\begin{align}
  \label{distance:fixed-point-in-tree:remove:fixed-point:lemma:eq:induction'}
  T_{i-1}^{\star}\big(\phi_{\mu}(\mathbf{T}_1) ,\ldots,\phi_{\mu}(\mathbf{T}_{k}),\phi_{\mu}(Z^1),\ldots,\phi_{\mu}(Z^l)\big)
 \equiv_{\D(i-1)}
  T_{i-1}^{\star}\big(\widehat{\phi}_{\nu}^{\mu}(\mathbf{T}_1) ,\ldots,\widehat{\phi}_{\nu}^{\mu}(\mathbf{T}_{k}),\phi_{\nu}^{\mu}(Z^1),\ldots,\phi_{\nu}^{\mu}(Z^l)\big).
\end{align}

We recall that from Eq.(\ref{ind:equality:eq:2}), we have $\D(\mathbf{T}_j)=\D(i-1)$ for $j=1,\ldots,k$ as well as $\D(Z^j)=\D(i-1)$ for $j=1,\ldots,l$. 
Therefore from Item~\ref{depth:position:composition:lemma:item:2} of Lemma~\ref{depth:position:composition:lemma} it follows that to show Eq.(\ref{distance:fixed-point-in-tree:remove:fixed-point:lemma:eq:induction'}) it is enough to show that for any $j=1,\ldots,k$ 
\begin{align}
  \phi_{\mu}(\mathbf{T}_j)                         & \equiv_{\D(\mathbf{T}_j)}   \widehat{\phi}_{\nu}^{\mu}(\mathbf{T}_j) \label{distance:fixed-point-in-tree:remove:fixed-point:lemma:eq:induction':1} 
\end{align}
and that for any $j=1,\ldots,l$,
\begin{align}
  \phi_{\mu}(Z^j)                         & \equiv_{\D(Z^j)}   \phi_{\nu}^{\mu}(Z^j)   \label{distance:fixed-point-in-tree:remove:fixed-point:lemma:eq:induction':3} 
\end{align}

To achieve this, consider the two cases. 
\begin{itemize}
\item \underline{For Eq.(\ref{distance:fixed-point-in-tree:remove:fixed-point:lemma:eq:induction':1})}
  assume that $\mathbf{T}_{j}$ is of the form $\mathbf{T}_{j}=\mu\mathbf{Z}^j. \mathbf{T}_{j}^{\star}(\mathbf{Z}^j)$. 
  We distinguish  two cases depending on whether
  $\mathbf{Z}^j \in \boundv{S\combb R} \setminus \big(\boundv{S} \cup \boundv{R}\big)$ or $\mathbf{Z}^j \in (\boundv{S} \cup \boundv{R})$.
  For the first case we have   the sequence 
 \begin{align*}
  \mu Z_{i-1}.T_{i-1}(Z_{i-1})   \subcer \mathbf{T}_{j} \subcer \mu Z_{i+1}.T_{i+1}(Z_{i+1}) \subcer \cdots \subcer     \mu Z_m.T_m(Z_m)
\end{align*}
in $\Eu{T}$.  Let  $\Eu{T}^j$  be the maximal sub-tree of $\Eu{T}_{i-1}$ which is rooted at $\mathbf{T}_{j}$. 
  Since     $\Eu{T}^j$ is an immediate sub-tree of $\Eu{T}_{i-1}$, then  Eq. (\ref{distance:fixed-point-in-tree:remove:fixed-point:lemma:eq:induction':1}) follows from  the inner induction hypothesis.
However, for the second case where $\mathbf{Z}^j \in (\boundv{S} \cup \boundv{R})$  we have   the sequence 
 \begin{align*}
  \mu Z_{i-1}.T_{i-1}(Z_{i-1})   \subcer \mathbf{T}_{j}
\end{align*}
 in $\Eu{T}$. That is, in this case  we remind that the fixed-point \ce  $\mathbf{T}_{j}$ is either a sub-\ce of $S$ or of $R$.
 Hence, we have $\widehat{\phi}_{\nu}^{\mu}(\mathbf{T}_{j})=\mathbf{T}_{j}$.
 It follows from  the base case, i.e. Eq.(\ref{distance:fixed-point-in-tree:remove:fixed-point:lemma:eq'}) that
 \begin{align*}
   \phi_{\mu}(\mathbf{T}_{j})       & \equiv_{\D{(\mathbf{T}_j)}}  \mathbf{T}_{j}.  
  \end{align*}
\item \underline{For Eq.(\ref{distance:fixed-point-in-tree:remove:fixed-point:lemma:eq:induction':3})}, 
    it follows from  Eq.(\ref{properties:morphism:corollary:eq:1}) of  Corollary ~\ref{properties:morphism:corollary}.  
\end{itemize}
\underline{Outer induction step.} 
 Assume that Eq.(\ref{distance:fixed-point-in-tree:remove:fixed-point:lemma:eq}) holds for a  fixed-point sub-tree $\partial\Eu{T}_i$  of $\Eu{T}$,
 we shall prove it for the (unique)  fixed-point sub-tree $\partial(\Eu{T}_{i-1})$ (of $\Eu{T}$) that contains $\partial\Eu{T}_i$  such that  $\partial\Eu{T}_i$  is an immediate sub-tree of  $\partial\Eu{T}_i$.
Let  $\mu \bb Z_{\bb i-1}.\bb T_{\bb i-1}(\bb Z_{\bb i-1})$ be a root of  $\partial(\Eu{T}_{i-1})$, and let  $\mu \bb Z_{\bb i}.\bb T_{\bb i}(\bb Z_{\bb i})$ be a root of $\partial(\Eu{T}_{i})$.
 Consider such  tree   $\partial(\Eu{T}_{i-1})$  and a  maximal sequence
\begin{align*}
  \mu \bb Z_{\bb i-1}.\bb T_{\bb i-1}(\bb Z_{\bb i-1})  \subcer   \mu \bb Z_{\bb i}.\bb T_{\bb i} (\bb Z_{\bb i})  \subcer \cdots \subcer \mu \bb Z_{\bb m}.\bb T_{\bb m}(\bb Z_{\bb m})
\end{align*}
in  $\partial\Eu{T}_{i-1}$.  
Assume that  the maximal sequence in $\Eu{T}_{i-1}$ that lays between the root of $\partial\Eu{T}_{i-1}$ and the root  of $\partial(\Eu{T}_{i})$ is non-empty,
the case where it is empty can be handled similarly. Let the following be such a sequence: 
\begin{align*}
  \mu \bb Z_{\bb i-1}.\bb T_{\bb i-1}(\bb Z_{\bb i-1})  \subcer   \mu Z_p.T_p(Z_p)  \subcer  \mu Z_{p+1}.T_{p+1}(Z_{p+1}) \subcer \cdots \subcer \mu Z_q.T_q(Z_q) \subcer \mu \bb Z_{\bb i}.\bb T_{\bb i} (\bb Z_{\bb i})
\end{align*}
where $1 \le p \le q$.
In this case, by Definition~\ref{derived:tree:def} of the derived tree $\partial\Eu{T}$  we have
\begin{align}
  \D\big(  \mu \bb Z_{\bb i-1}.\bb T_{\bb i-1}(\bb Z_{\bb i-1}) \big)&=\D\big( \mu Z_p.T_p(Z_p)\big) +1   \label{ind:equality:eq:3'}\\
  \D(\mu Z_j.T_j(Z_j))&=\D(\mu Z_{j+1}.T_{j+1}(Z_{j+1})) =  \D\big(\mu \bb Z_{\bb i}.\bb T_{\bb i} (\bb Z_{\bb i})\big)  & \textrm{ for }  j=p,\ldots, q-1.  \label{ind:equality:eq:3''}
\end{align}

On the one hand, from the outer induction hypothesis we have that
\begin{align}
  \phi_{\mu} \big(\mu \bb Z_{\bb i}.\bb T_{\bb i} (\bb Z_{\bb i})\big) \equiv_{\D(\bb T_{\bb i} (\bb Z_{\bb i}))}   \widehat{\phi}_{\mu}^{\nu}\big(\bb T_{\bb i} (\bb Z_{\bb i})\big).
\end{align}

On the other hand, as far as Eq.(\ref{ind:equality:eq:3''}) holds, using the same kind of induction made in the inner  base case, we can easily show that,  for $j=1,\ldots, q-1$, we have 
\begin{align}
  \phi_{\mu} \big( \mu Z_j.T_j(Z_j) \big) \equiv_{\D(\mu Z_j.T_j(Z_j))}   \widehat{\phi}_{\mu}^{\nu}\big(T_j(Z_j)   \big).
\end{align}

Since $ \mu \bb Z_{\bb i}.\bb T_{\bb i} (\bb Z_{\bb i})$ is a sub-\ce of $\mu \bb Z_{\bb i-1}. \bb T_{\bb i-1}(\bb Z_{\bb i-1})$  then there  is a \ce $\bb T^{\star}_{i-1}(Z)$ such that $\bb T_{\bb i-1}(\bb Z_{\bb i-1})$ can be written as
\begin{align*}
\bb T_{\bb i-1}(\bb Z_{\bb i-1}) =   T^{\star}_{i-1}\big(  \mu \bb Z_{\bb i}.\bb T_{\bb i} (\bb Z_{\bb i}) \big).
\end{align*}

Since  by Eq.(\ref{ind:equality:eq:3'}) we know that   $\D\big(  \mu \bb Z_{\bb i-1}.\bb T_{\bb i-1}(\bb Z_{\bb i-1}) \big)=\D\big( \mu Z_p.T_p(Z_p)\big) +1$,
then it follow from  Item~\ref{depth:position:composition:lemma:item:1} of Lemma~\ref{depth:position:composition:lemma}  that  to show Eq.(\ref{distance:fixed-point-in-tree:remove:fixed-point:lemma:eq}),
it suffices to show that
\begin{align*}
1 \le \Pi_{Z}\big( \bb T^{\star}_{i-1}(Z) \big), 
\end{align*}  
but this was proved in Lemma~\ref{distance:fixed-point-in-tree}, see Eq.(\ref{distance:fixed-point-in-tree:eq}).
  \end{description}
\end{proof}


In the following Corollary we show that the unification of two \ces is equivalent to that of their unfolding in the particular
setting in which one of these two \ces is a fixed-point one.

\begin{corollary}
\label{main:corollary:unif-0}
Let $S$ (resp. $R$)  be  a \ce  with bound   fixed-point variables $X_1,\ldots, X_s$ (resp. $Y_1,\ldots, Y_r$) and let $n \ge 1$.
Let  $\mathbf{s}:\{X_1,\ldots,X_s\} \to  \mathbb{N}$ and $\mathbf{r}:\{Y_1,\ldots,Y_r\} \to  \mathbb{N}$ be iteration mappings with $\mathbf{s}(X_i)=\mathbf{r}(Y_j)=n$  for  $i=1,\ldots,s$ and $j=1,\ldots,r$.
If either  $S$ or  $R$  is  a fixed-point \ce then
\begin{align}
\label{main:corollary:unif-0:goal:eq}
S \combb R   \equiv_n  \ufold{S}{\mathbf{s}} \combb \ufold{R}{\mathbf{r}}  
\end{align}
Or, the following two diagrams commute where $\mycal{C}_{\mu}$ stands of the set of   fixed-point \ces.
\[\begin{tikzcd}
\mycal{C}_{\mu} \times \mycal{C} \arrow{r}{\combb} \arrow[swap]{d}{\ufold{\cdot}{\mathbf{s}}  \times \ufold{\cdot}{\mathbf{r}}} &  \mycal{C} \arrow{d}{\equiv_n} \\
\ceSetFree \times \ceSetFree  \arrow{r}{\combb} & \ceSetFree
\end{tikzcd}\quad\quad \quad
\begin{tikzcd}
\mycal{C} \times \mycal{C}_{\mu} \arrow{r}{\combb} \arrow[swap]{d}{\ufold{\cdot}{\mathbf{s}}  \times \ufold{\cdot}{\mathbf{r}}} &  \mycal{C} \arrow{d}{\equiv_n} \\
\ceSetFree \times \ceSetFree  \arrow{r}{\combb} & \ceSetFree
\end{tikzcd}
\]
\end{corollary}
\begin{proof}
  Let
\begin{align*}
  \mathbf{E}=\ufold{S}{\mathbf{s}} \combb \ufold{R}{\mathbf{r}},
\end{align*}
and assume that $S\combb R=\mu Z_1.T_1(Z_1)$ for some \ce $T_1(Z_1)$.
The key idea of the proof is to show that $\mathbf{E}$ is a fixed-point of $T_1(Z_1)$ in the sense that  $T_1(\mathbf{E})\equiv_{n}\mathbf{E}$. 
To achieve this we  take  $i=1$ in  Eq.(\ref{distance:fixed-point-in-tree:remove:fixed-point:lemma:eq}) of Lemma~\ref{main:lemma:unif}, and we get 
  \begin{align*}
\phi_{\mu}\big(\mu Z_1.T_1(Z_1)\big)   & \equiv_{\D(\mu Z_1.T_1(Z_1))}   \widehat{\phi}_{\nu}^{\mu}(T_1(Z_1)) 
  \end{align*}
  But since  $\D\big(\mu Z_1.T_1(Z_1)\big)=n$ by the Eq.(\ref{definition:omega:Omega:def:DS:1}) of  Definition~\ref{definition:omega:Omega:def} of $\D$, we get 

\begin{align}
\label{main:corollary:unif-0:proof:eq}
\phi_{\mu}\big(\mu Z_1.T_1(Z_1)\big)   & \equiv_{n}   \widehat{\phi}_{\nu}^{\mu}(T_1(Z_1)).
 \end{align}

On the one hand, by Definition~\ref{induced:morphisms:def} of $\phi_{\mu}$  together with Lemma~\ref{properties:morphism:lemma} on the properties of $\phi_{\mu}$,
it follows that  the left-hand side of Eq.(\ref{main:corollary:unif-0:proof:eq}) can be written as:
  \begin{align}
    \label{intermezzo:eq}
   \textrm{LH}.(\ref{main:corollary:unif-0:proof:eq}) = \phi_{\mu}\big(\mu Z_1.T_1(Z_1)\big) = \mathbf{E}.
  \end{align}

  On the other hand, the right-hand side of Eq.(\ref{main:corollary:unif-0:proof:eq}) can be written as:
  \begin{align*}
    \textrm{RH}.(\ref{main:corollary:unif-0:proof:eq})
    &= \widehat{\phi}_{\nu}^{\mu}(T_1(Z_1))  \\
    &= T_1( \widehat{\phi}_{\nu}^{\mu}(Z_1)) \tag{Since $Z_1$ is the only free  fixed-point variable of  $T_1(Z_1)$}\\
    &= T_1(\phi_{\nu}^{\mu}(Z_1)) \tag{Definition~\ref{induced:morphisms:def} of $\widehat{\phi}_{\nu}^\mu$ }\\
    &= T_1\big(\phi_{\mu}\big(\mu Z_1.T_1(Z_1)\big) \tag{Definition~\ref{induced:morphisms:def} of $\phi_{\nu}^\mu$} \\ 
    &= T_1(\mathbf{E}). \tag{From Eq.(\ref{intermezzo:eq})} \\ 
  \end{align*}
  Summing up, and relying on  Eq.(\ref{main:corollary:unif-0:proof:eq}), we get 
  \begin{align*}
    T_1(\mathbf{E})\equiv_{n}\mathbf{E}.
    \end{align*}
  It follows from Corollary~\ref{general-fixed-point-corollary}  that
\begin{align*}
 \mu Z_1.T_1(Z_1) \equiv_{n} \mathbf{E}.
\end{align*}
But since, by definition, we have that   $S\combb R=\mu Z_1.T_1(Z_1)$ and  $\ufold{S}{\mathbf{s}} \combb \ufold{R}{\mathbf{r}}=\mathbf{E}$,
then we get the desired result, i.e.  Eq.(\ref{main:corollary:unif-0:goal:eq}).
\end{proof}

We generalize Corollary~\ref{main:corollary:unif-0} by relaxing the assumption on the input \ces and letting them to be arbitrary instead of being  fixed-point ones.
We thus arrive at the main result of this Subsection.
\begin{proposition} 
\label{main:corollary:unif}
Let $S$ (resp. $R$)  be  a \ce  with bound   fixed-point variables $X_1,\ldots, X_s$ (resp. $Y_1,\ldots, Y_r$) and let $n \ge 1$.
Let  $\mathbf{s}:\{X_1,\ldots,X_s\} \to  \mathbb{N}$ and $\mathbf{r}:\{Y_1,\ldots,Y_r\} \to  \mathbb{N}$ be iteration  mappings with $\mathbf{s}(X_i)=\mathbf{r}(Y_j)=n$  for  $i=1,\ldots,s$ and $j=1,\ldots,r$.
Then,
\begin{align*}
S \combb R   \equiv_n  \ufold{S}{\mathbf{s}} \combb \ufold{R}{\mathbf{r}},
\end{align*}
which is illustrated by the commutative diagram below. 
\[\begin{tikzcd}
\mycal{C} \times \mycal{C} \arrow{r}{\combb} \arrow[swap]{d}{\ufold{\cdot}{\mathbf{s}}  \times \ufold{\cdot}{\mathbf{r}}} &  \mycal{C} \arrow{d}{\equiv_n} \\
\ceSetFree \times \ceSetFree  \arrow{r}{\combb} & \ceSetFree
\end{tikzcd}
\]
\end{proposition} 
\begin{proof}     
  There are  fixed-point free \ces  $S'(X^1,\ldots, X^k)$  and  $R'(Y^1,\ldots, Y^l)$, where $k\ge 1$ and $l\ge 1$,  as well as  fixed-point \ces  $\xi_1,\ldots,\xi_k$ and $\zeta_1,\ldots,\zeta_l$
  such that  $S$ and $R$ can be written as:
  \begin{align*}
    S= S'(\xi_1,\ldots,\xi_k) &&&
    R= R'(\zeta_1,\ldots,\zeta_l) 
  \end{align*}
   On the one hand, it follows  from the composition Lemma~\ref{composition:unif:lemma} that there is a fixed-point free  \ce $T(Z_1,\ldots,Z_m)$ and \ces $T_1,\ldots,T_m$, where $m\ge 1$, such that   $S \combb R$ can be written as
  \begin{align*}
    S \combb R &= T(T_1,\ldots,T_m), 
  \end{align*}
  where for any $i=1,\ldots,m$, one of the following cases holds.
\begin{enumerate}
\item  There is $j\in \set{1,\ldots,k}$  and a \ce $R^i$ that is a sub-\ce of $R$  such that
  \begin{align*}
    T_i = \xi_j \combb R^i  && \tor&&    T_i = \xi_j.    
  \end{align*}
\item   There is $j\in \set{1,\ldots,l}$  and a \ce $S^i$ that is a sub-\ce of $S$  such that
  \begin{align*}
    T_i = S^i \combb \zeta_j  && \tor&&    T_i = \zeta_j.    
  \end{align*}
  \end{enumerate}
We only discuss the first case since the second one is similar. On the other hand,
since there is a   $(\ceSet,\ceSetFree)$-quasi-simulation between $S \combb R$  and $\ufold{S}{\mathbf{s}} \combb \ufold{R}{\mathbf{r}}$  (i.e. Lemma~\ref{main:lemma:mophism:quasi})
to together  with the properties of the induced mapping $\phi_{\mu}$ (Item (\ref{properties:morphism:lemma:item:1}) of Lemma~\ref{properties:morphism:lemma}) it follows  that  $\ufold{S}{\mathbf{s}} \combb \ufold{R}{\mathbf{r}}$ can be written as
\begin{align*}
  \ufold{S}{\mathbf{s}} \combb \ufold{R}{\mathbf{r}}   &= T(\tilde{T}_1,\ldots,\tilde{T}_m)
  \end{align*} 
such that for any $i=1,\ldots,m$, we have
  \begin{align*}
 \tilde{T}_i =\phi_{\mu}(T_i) =\ufold{\xi_j}{\mathbf{s}} \combb \ufold{R^i}{\mathbf{r}}   && \tor && \tilde{T}_i= \phi_{\mu}(T_i) =\ufold{\xi_j}{\mathbf{s}}.
  \end{align*}

  If $T_i=\xi_j$  then it follows from Item (\ref{unfold:equiv:lemma:item:3}) of Lemma~\ref{unfold:equiv:lemma} that $\xi_j\equiv_{n} \ufold{\xi_j}{\mathbf{s}}$ since $\mathbf{s}(X)=n$ for any $X$ in $\set{X_1,\ldots,X_s}$.
Otherwise, if     $T_i = \xi_j \combb R^i$ then   it follows from Corollary~\ref{main:corollary:unif-0} that $ \xi_j \combb R^i \equiv_{n} \ufold{\xi_j}{\mathbf{s}} \combb \ufold{R^i}{\mathbf{r}}$.
Therefore,
  \begin{align*}
    T_i \equiv_n \tilde{T}_i, && \textrm{for } i=1,\ldots,m.
  \end{align*}
  Hence,
    \begin{align*}
   T(T_1,\ldots,T_m)  \equiv_n T(\tilde{T}_1,\ldots,\tilde{T}_m). 
    \end{align*}
Thus the desired result follows.
\end{proof}




\section{Proof of the main results}
\label{proof:main:results:section}

In this section we prove the main results of this paper stated in Section \ref{main:results:sec}. 
The  correctness of the unification and combination operations  for arbitrary  \ces will be proved in Subsection~\ref{correctness:unification:combination:proof}.
The algebraic properties of the unification and combination follow immediately from the correctness result, and   will be proved  in Subsection~\ref{algebraic:prop:section}.

\subsection{The correctness of the unification and combination}
\label{correctness:unification:combination:proof}
Now we are ready to prove the first main theorem of this paper regarding the correctness of the unification of \ces, Theorem~\ref{main:theorem:1}.
Its  proof  relies  mainly  on Proposition~\ref{main:corollary:unif} and on the correctness 
of the unification for the fixed-point free fragment of  \ces stated and proved in Proposition~\ref{main:proposition:unif:fixed-point-free}.

\setcounter{theorem}{\value{myvar-unif-theorem}}

\begin{theorem}[Correctness of the unification]
\label{main:theorem:1}
For every term $t \in \mycal{T}$ and for every \ces  $S$ and $R$  in $\ceSet$, 
we have that 
\begin{align*}
\Psi_t(S \combb R) & = \Psi_t(S) \combb \Psi_t(R).
\end{align*}
\end{theorem}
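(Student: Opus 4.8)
The plan is to reduce the general statement to the fixed-point free case already established in Lemma \ref{main:lemma:unif:fixed-point-free}, using unfolding as the bridge. Fix a term $t$ and set $n = \delta(t)$. Let $X_1,\ldots,X_s$ (resp. $Y_1,\ldots,Y_r$) be the bound fixed-point variables of $S$ (resp. $R$), and let $\mathbf{s}$ and $\mathbf{r}$ be the \emph{constant} iteration mappings sending every variable to $n$. The unfoldings $\ufold{S}{\mathbf{s}}$ and $\ufold{R}{\mathbf{r}}$ are then fixed-point free, which is precisely the regime where Lemma \ref{main:lemma:unif:fixed-point-free} applies.

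First I would replace each side of the desired identity by its unfolded counterpart at depth $n$. On the left, Corollary \ref{main:corollary:unif} gives $S \combb R \equiv_n \ufold{S}{\mathbf{s}} \combb \ufold{R}{\mathbf{r}}$; since $\delta(t) = n$, Item \emph{iii.)} of Lemma \ref{nice:prop:Psi:lemma:} converts this $n$-equivalence into the equality of position-based strategies $\Psi_t(S \combb R) = \Psi_t\big(\ufold{S}{\mathbf{s}} \combb \ufold{R}{\mathbf{r}}\big)$. On the right, Lemma \ref{unfold:equiv:lemma} yields $\ufold{S}{\mathbf{s}} \equiv_n S$ and $\ufold{R}{\mathbf{r}} \equiv_n R$ (the minimum of each constant mapping equals $n = \delta(t)$), so again by Item \emph{iii.)} of Lemma \ref{nice:prop:Psi:lemma:} we obtain $\Psi_t(\ufold{S}{\mathbf{s}}) = \Psi_t(S)$ and $\Psi_t(\ufold{R}{\mathbf{r}}) = \Psi_t(R)$.

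Then I would simply chain these facts with the fixed-point free correctness:
\begin{align*}
\Psi_t(S \combb R)
 &= \Psi_t\big(\ufold{S}{\mathbf{s}} \combb \ufold{R}{\mathbf{r}}\big) \\
 &= \Psi_t(\ufold{S}{\mathbf{s}}) \combb \Psi_t(\ufold{R}{\mathbf{r}}) \\
 &= \Psi_t(S) \combb \Psi_t(R),
\end{align*}
where the middle equality is Lemma \ref{main:lemma:unif:fixed-point-free} applied to the fixed-point free strategies $\ufold{S}{\mathbf{s}}$ and $\ufold{R}{\mathbf{r}}$, and the outer equalities are the substitutions justified in the previous paragraph.

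The genuine difficulty is not in this assembly but has already been discharged upstream: the heart of the argument is Corollary \ref{main:corollary:unif} (equivalently Lemma \ref{main:lemma:unif}), asserting that unification commutes with unfolding up to $n$-equivalence. That in turn rests on the $(\ceSet,\ceSetFree)$-morphism machinery together with the delicate $\omega$/$\Pi$ distance bookkeeping of Lemma \ref{distance:fixed-point-in-tree} and Lemma \ref{distance:fixed-point-in-tree:remove:fixed-point:lemma}, which guarantee that enough unfoldings survive along every fixed-point branch to match the depth of $t$. Granting those results, the only point requiring care at this level is the depth bookkeeping: the single choice $n = \delta(t)$ must simultaneously validate Corollary \ref{main:corollary:unif} on the left and Lemma \ref{unfold:equiv:lemma} on both summands on the right. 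This is immediate, since all three invocations are made at the same threshold $n$ and the constant mappings $\mathbf{s},\mathbf{r}$ have minimum exactly $n$.
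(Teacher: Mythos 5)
Your proposal is correct and follows essentially the same route as the paper's own proof: choose the constant iteration mappings at threshold $n=\delta(t)$, invoke Corollary \ref{main:corollary:unif} to replace $S \combb R$ by the unification of the unfoldings, apply the fixed-point free case (Lemma \ref{main:lemma:unif:fixed-point-free}), and transfer back via Lemma \ref{unfold:equiv:lemma} together with Item \emph{iii.)} of Lemma \ref{nice:prop:Psi:lemma:}. The only difference is presentational: you also flag explicitly that the depth bookkeeping is consistent across the three invocations, which the paper leaves implicit.
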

\begin{proof}

Let $n$ be the  depth  of $t$. Assume that $X_1,\ldots, X_s$ (resp. $Y_1,\ldots, Y_r$) are the (bound) fixed-point variables of $S$ (resp. $R$) 
 and  let $\mathbf{s}$ and $\mathbf{r}$ be iteration  mappings with $\mathbf{s}(X_i)=\mathbf{r}(Y_j)=n$, for  $i=1,\ldots,s$ and $j=1,\ldots,r$.
The proof follows from the commutativity of the following diagram.
\[\begin{tikzcd}
& \mycal{C} \times \mycal{C} \arrow[dddd, swap, bend right=99, "\Psi_t \times \Psi_t"]
                              \arrow{r}{\combb}
                              \arrow[swap]{dd}{\ufold{\cdot}{\mathbf{s}}  \times \ufold{\cdot}{\mathbf{r}}} &
                                                                                                           \mycal{C} \arrow{dd}{\equiv_n}
                                                                                                          \arrow[dddd, bend left=90, "\Psi_t"]
                                                                                                           \\
&                                                                                                           && \\
& \ceSetFree \times \ceSetFree  \arrow{r}{\combb} \arrow[swap]{dd}{\Psi_t \times \Psi_t} & \ceSetFree  \arrow{dd}{\Psi_t}\\
& \\                                                                                                           
& \mycal{E} \times \mycal{E}  \arrow{r}{\combb}  & \mycal{E} \\
\end{tikzcd}
\]
Indeed, it  follows from
Proposition  ~\ref{main:corollary:unif},
Proposition~\ref{main:proposition:unif:fixed-point-free},
Item (\ref{unfold:equiv:lemma:item:3}) of Lemma~\ref{unfold:equiv:lemma} + Item (\ref{item:3:nice:prop:Psi:lemma}) of  Lemma~\ref{nice:prop:Psi:lemma}, and
Item (\ref{item:3:nice:prop:Psi:lemma})  of  Lemma~\ref{nice:prop:Psi:lemma},  
respectively, that the following  diagrams commute.

\[
\begin{tikzcd}[scale=1.6]
\mycal{C} \times \mycal{C} \arrow{r}{\combb} \arrow[swap]{dd}{\ufold{\cdot}{\mathbf{s}}  \times \ufold{\cdot}{\mathbf{r}}} &  \mycal{C} \arrow{dd}{\equiv_n} \\
& \\
\ceSetFree \times \ceSetFree  \arrow{r}{\combb} & \ceSetFree
\end{tikzcd}
\;\;\;
\begin{tikzcd}
\ceSetFree \times \ceSetFree \arrow{r}{\combb} \arrow[swap]{dd}{\Psi_t \times \Psi_t} &  \ceSetFree \arrow{dd}{\Psi_t} \\
& \\
\mycal{E} \times \mycal{E}  \arrow{r}{\combb} & \mycal{E}
\end{tikzcd}
\;\;\;\;
\begin{tikzcd}
  \mycal{C} \times \mycal{C}  \arrow{dd}[swap]{\Psi_t \times \Psi_t}  \arrow{dr}{\ufold{\cdot}{\mathbf{s}}  \times \ufold{\cdot}{\mathbf{r}}} &   \\
  &  \ceSetFree \times \ceSetFree  \arrow{dl} {\Psi_t \times \Psi_t}\\
  \mycal{E} \times \mycal{E}
\end{tikzcd}
\;\;
\begin{tikzcd}
 & \mycal{C}   \arrow{dd}{\Psi_t}  \arrow{dl}[swap]{\equiv_n}    \\
    \ceSetFree  \arrow[swap]{dr} {\Psi_t}  &\\
& \mycal{E}
\end{tikzcd}
\]

We restate  these arguments  in the language of equations rather than the language of diagrams.
Let
 \begin{align*}
   \mathbf{S} = \ufold{S}{\mathbf{s}}  && \tand &&
   \mathbf{R} =\ufold{R}{\mathbf{r}}.
   \end{align*}
 
We have that 
\begin{align}
 S \combb R  &\equiv_n \mathbf{S} \combb \mathbf{R}                   & \tag{Proposition~\ref{main:corollary:unif}} \\
 \Psi_t(S \combb R) &= \Psi_t\big( \mathbf{S} \combb \mathbf{R}\big).   & \tag{Item  (\ref{item:3:nice:prop:Psi:lemma}) of  Lemma~\ref{nice:prop:Psi:lemma}}\\
 \Psi_t\big(S \combb R\big)& =\Psi_t(\mathbf{S}) \combb \Psi_t(\mathbf{R}). &\tag{Proposition~\ref{main:proposition:unif:fixed-point-free}, since  $\mathbf{S}$ and $\mathbf{R}$ are fixed-point free}
\end{align}

On the other hand, 
\begin{align*}
  \mathbf{S} &\equiv_n S &&\tand&   \mathbf{R} &\equiv_n R \tag{Item (\ref{unfold:equiv:lemma:item:3}) of Lemma~\ref{unfold:equiv:lemma}} \\
  \Psi_t(\mathbf{S})&= \Psi_t(S) && \tand&  \Psi_t(R)&= \Psi_t(\mathbf{R}). \tag{Item (\ref{item:3:nice:prop:Psi:lemma}) of  Lemma~\ref{nice:prop:Psi:lemma}}
\end{align*}
Therefore
\begin{align*}
  \Psi_t(S \combb R)  =  \Psi_t(S) \combb \Psi_t(R).
  \end{align*}
\end{proof}

We  can now state and  prove the second  main theorem of this paper on the correctness of the combination  of \ces.
In fact, the correctness of the combination follows from the correctness of the unification that we stated and proved in Theorem~\ref{main:theorem:1} above.

\begin{theorem}[Correctness of the combination]
\label{main:theorem:2}
For every term $t \in \mycal{T}$ and  for every  \ces  $S$ and $R$  in $\ceSetCan$,
we have that
\begin{align*}
\Psi_t(S \comb R)  =  \Psi_t(S) \comb \Psi_t(R).
\end{align*}
\end{theorem}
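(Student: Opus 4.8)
The plan is to reduce the statement to the already-established correctness of the unification (Theorem~\ref{main:theorem:1}) by unfolding the definition of the combination and tracking how $\Psi_t$ interacts with the left-choice constructor. Recall from Definition~\ref{combination:def} that for \ces we have $S \comb R = S \nfcombb R \oplus S \oplus R$, and from Definition~\ref{comb:posi:def} that the combination of two position-based \ces $E,E'$ equals their unification $E\combb E'$ when both are non-failing, equals the surviving one when exactly one fails, and equals $\emptylist$ when both fail.

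First I would compute the left-hand side. Applying $\Psi_t$ and using the clause for $\oplus$ in Definition~\ref{psi:def} (Item~\ref{psi:def:item:choice}) iteratively, $\Psi_t(S\comb R)=\Psi_t\big((S\nfcombb R)\oplus S\oplus R\big)$ equals the first entry of the list $\Psi_t(S\nfcombb R),\ \Psi_t(S),\ \Psi_t(R)$ that differs from $\emptylist$, and equals $\emptylist$ if all three do; the bracketing of $\oplus$ is irrelevant here since this ``first non-failing'' selection is associative. By Theorem~\ref{main:theorem:1}, $\Psi_t(S\nfcombb R)=\Psi_t(S)\combb\Psi_t(R)$, so writing $E:=\Psi_t(S)$ and $E':=\Psi_t(R)$ (both position-based \ces in $\eceSet$), the left-hand side is the first non-$\emptylist$ entry of $E\combb E',\ E,\ E'$.

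Next I would compute the right-hand side $\Psi_t(S)\comb\Psi_t(R)=E\comb E'$ directly from Definition~\ref{comb:posi:def} and match the two sides by a four-way case analysis on whether $E$ and $E'$ equal $\emptylist$. The two facts needed are: (i) $\emptylist$ is the absorbing element of $\combb$ (Proposition~\ref{main:prop:elemntary:og:prop:1}), so $E\combb E'=\emptylist$ whenever $E=\emptylist$ or $E'=\emptylist$; and (ii) the unification of two non-failing position-based \ces is itself non-failing, which is immediate from Definition~\ref{unif:posi:non:empty:def} since the relevant clause produces an explicit list $\bigand_{i}@i.(\cdots)\uand\cdots$ rather than $\emptylist$. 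With these, each case closes: when $E,E'\neq\emptylist$, by~(ii) $E\combb E'\neq\emptylist$, so the selection returns $E\combb E'=E\comb E'$; when exactly one of $E,E'$ is $\emptylist$, by~(i) the first entry $E\combb E'=\emptylist$ is skipped and the selection returns the surviving one, matching $E\comb E'$; and when both are $\emptylist$, all three entries are $\emptylist$ and both sides yield $\emptylist$.

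The argument is essentially bookkeeping once Theorem~\ref{main:theorem:1} is available; the only point demanding care is the non-degeneracy fact~(ii), namely that $\combb$ of two non-failing position-based \ces cannot collapse to $\emptylist$, since this is exactly what makes the combination clause $E\comb E'=E\combb E'$ agree with the ``first non-failing'' selection produced by the two copies of $\oplus$ on the syntactic side. I do not anticipate a genuine obstacle beyond checking this fact and keeping the four cases aligned with the clauses of Definition~\ref{comb:posi:def}.
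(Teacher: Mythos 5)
Your proof is correct, and its skeleton is the same as the paper's: unfold $S \comb R$ into $(S \combb R) \oplus S \oplus R$, push $\Psi_t$ through the left-choices, and invoke Theorem~\ref{main:theorem:1} to replace $\Psi_t(S \combb R)$ by $\Psi_t(S) \combb \Psi_t(R)$. Where you genuinely differ is the endgame. The paper treats the $\oplus$-step as a semantic equivalence (citing the $\oplus$-property of $\Psi$, Item~\ref{Properties-of-Psi:Lemma:item:2} of Lemma~\ref{nice:prop:Psi:lemma}), then re-folds $\left(\Psi_t(S)\combb\Psi_t(R)\right)\oplus\Psi_t(S)\oplus\Psi_t(R)$ into $\Psi_t(S)\comb\Psi_t(R)$ by citing ``the definition of $\comb$'' a second time, and finally upgrades equivalence to equality using the fact that equivalent position-based \ces are equal (Item~\ref{item:1:nice:prop:Psi:lemma} of Lemma~\ref{nice:prop:Psi:lemma:}). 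That re-folding step silently identifies two distinct definitions of $\comb$: the left-choice formula of Definition~\ref{combination:def} for \ces and the failure-case analysis of Definition~\ref{comb:posi:def} for position-based \ces. Your four-way case analysis, resting on the two facts you isolate --- $\emptylist$ is absorbing for $\combb$ on $\eceSet$, and the unification of two non-failing position-based \ces is non-failing --- is precisely the justification of that identification, which the paper never spells out. Moreover, because you apply the $\oplus$-clause of Definition~\ref{psi:def} (Item~\ref{psi:def:item:choice}) directly, your entire chain is an equality of position-based \ces and you never need the equivalence-to-equality conversion at the end. So the two arguments buy different things: the paper's is shorter by delegating work to its $\Psi$-lemmas, while yours is fully explicit at the level of the definitions and closes a bookkeeping gap that the paper's double use of ``Def.\ of $\comb$'' leaves open.
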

\begin{proof}
\begin{align*}
\Psi_t(S \comb R) 
&=       \Psi_t\big((S \combb R) \oplus S \oplus R\big) \tag{Def.~\ref{combination:def} of $\comb$} \\
&=  \Psi_t\big(\Psi_t(S \combb R) \oplus \Psi_t(S) \oplus \Psi_t(R)\big) \tag{Item (\ref{Properties-of-Psi:Lemma:item:2}) of Lemma~\ref{nice:prop:Psi:lemma} }  \\
&=    \Psi_t\Big(\big(\Psi_t(S) \combb \Psi_t(R)\big) \oplus \Psi_t(S) \oplus \Psi_t(R)\Big) \tag{Theorem~\ref{main:theorem:1}}  \\
&=     \Psi\big(\Psi_t(S) \comb \Psi_t(R)\big) \tag{Def.~\ref{combination:def} of $\comb$}\\
&=     \Psi_t(S) \comb \Psi_t(R). \tag{Item (\ref{Properties-of-Psi:Lemma:item:0}) of Lemma~\ref{nice:prop:Psi:lemma} since $\Psi_t(S) \comb \Psi_t(R)$ is position-based}
\end{align*}
\end{proof}

\subsection{The algebraic properties of the unification and combination}
\label{algebraic:prop:section}

Thanks to Theorems~\ref{main:theorem:1} and~\ref{main:theorem:2}, and using  mapping $\Psi$ (Definition~\ref{psi:def}),  we   can transfer    all the algebraic properties of the combination and
unification of position-based \ces  (stated in Propositions~\ref{main:prop:elemntary:og:prop:1} and~\ref{main:prop:elemntary:og:prop:2}) to \ces.

\begin{theorem}
\label{main:alg:theorem:1}
 The quotient set $\ceSetEquiv$ of  \ces  together with the unification   operation  enjoy the  following properties.
    \begin{enumerate}
    \item The neutral element of the   unification upon $\ceSetEquiv$  is $[@\E.\square]$. 
    \item The absorbing element of the unification is $[\emptylist]$.
    \item The unification  of \ces is  associative, i.e. $([S_1] \combb [S_2]) \combb [S_3] =  [S_1] \combb ([S_2] \combb [S_3])$, 
          for any $S_1,S_2,S_3 \in \ceSet$.
    \item  The unification of \ces is (non-)commutative if and only if  the operation of merging of contexts "$\sbullet$"  is (non-)commutative. 
    \item The unification of \ces is idempotent if and only if  the operation of merging of contexts  is idempotent,
          that is,  $[S] \combb  [S]= [S]$  for any $S \in \ceSet$  iff  $\tau \sbullet \tau = \tau$  for any contexts  $\tau$ in $\mycal{T}_{\square}$.
    \end{enumerate}
\end{theorem}
\begin{proof}
  We only prove the associativity property.
  To prove the associativity of the   unification   for \ces we rely on the associativity
  of the   unification   of position-based \ces
  (Proposition~\ref{main:prop:elemntary:og:prop:1}) together with the
  property of the function $\Psi_t$ (Theorems~\ref{main:theorem:1}).
  Let $S_1,S_2$ and $S_3$ be \ces in $\ceSet$.
  To prove $([S_1] \combb [S_2]) \combb [S_3] =  [S_1] \combb ([S_2] \combb [S_3])$ we shall prove $[(S_1 \combb S_2) \combb S_3] =  [S_1 \combb (S_2 \combb S_3)]$, i.e.  
  \begin{align*}
    S_1 \combb (S_2 \combb S_3) \equiv (S_1 \combb S_2) \combb S_3.
\end{align*}
  It follows from Item \emph{iii.)} of Lemma~\ref{nice:prop:Psi:lemma} that    it suffices to prove that, for any term $t \in \mycal{T}$, we have that
  \begin{align*}
    \Psi_t\big(S_1 \combb (S_2 \combb S_3)\big) =  \Psi_t\big((S_1 \combb S_2) \combb S_3\big).
  \end{align*}
  But this  follows from an  easy computation:
\begin{align*}
  \Psi_t\big(S_1 \combb (S_2 \combb S_3)\big)
  &= \Psi_t(S_1) \combb \Psi_t(S_2 \combb S_3) \tag{Theorem~\ref{main:theorem:2}}  \\
  &= \Psi_t(S_1) \combb (\Psi_t(S_2) \combb \Psi_t(S_3)) \tag{Theorem~\ref{main:theorem:2}} \\
  &=  (\Psi_t(S_1) \combb \Psi_t(S_2)) \combb \Psi_t(S_3) \tag{Proposition~\ref{main:prop:elemntary:og:prop:1}}\\
  &= \Psi_t(S_1 \combb S_2) \combb \Psi_t(S_3)  \tag{Theorem~\ref{main:theorem:2}} \\
  &= \Psi_t\big((S_1 \combb (S_2 \combb S_3)\big) \tag{Theorem~\ref{main:theorem:2}}.
\end{align*}
\end{proof}

The algebraic properties  of the combination  of \ces follow. They  inherit the properties of associativity, (non-)commutativity and idempotence from
the position-based \ces and the merging of contexts.

  \begin{theorem}
\label{main:alg:theorem:2}
 The quotient set $\ceSetEquiv$ of  \ces  together with the  combination  operation enjoy the following properties.
    \begin{enumerate}
    \item The neutral element of the   combination upon $\ceSetEquiv$   is $[\emptylist]$.
    \item The  combination of \ces is associative, i.e. $([S_1] \comb [S_2]) \comb [S_3] =  [S_1] \comb ([S_2] \comb [S_3])$, 
          for any $S_1,S_2,S_3 \in \ceSet$.
    \item The combination of \ces is (non-)commutative if and only if  the operation of merging of contexts $\sbullet$  is (non-)commutative. 
    \item The combination of \ces is idempotent if and only if  the operation of merging of contexts  is idempotent,
          that is,  $[S] \comb  [S]= [S]$  for any $S \in \ceSet$  iff  $\tau \sbullet \tau = \tau$  for any contexts  $\tau$ in $\mycal{T}_{\square}$.
    \end{enumerate}
\end{theorem}
\begin{proof}
Very similar to the proof of   Theorem~\ref{main:alg:theorem:1}.
\end{proof}

The congruence and non-degeneracy of the unification  and  combination are stated in the two following theorems, respectively.
\begin{theorem}[Congruence and non-degeneracy of the unification]
  \label{congruence:unif}
The following holds.
\begin{enumerate}
\item  The unification  of \ces is a congruence, that is,    for any \ces   ${S}_1,{S}_2, {S}$ in $\ceSet$, we have that:
\begin{align*} 
\textrm{If } {S}_1 \equiv {S}_2 &&\tthen&&  {S}_1 \nfcombb {S}  \equiv {S}_2 \nfcombb {S}  \;\tand\; {S} \nfcombb {S}_1  \equiv {S} \nfcombb {S}_2.
\end{align*}

\item The unification  is non-degenerate, that is, for any  \ces $[S]$ and $[S']$ in $\ceSetEquiv$,   we have that
\begin{align*}
    [S] \nfcombb [S']   =  [\emptylist]  &&\tiff &&  [S] =  [\emptylist] \;\tor\;  [S'] =  [\emptylist]. 
\end{align*}

\end{enumerate}
\end{theorem}
\begin{proof}
  We only prove the first Item. 
  On the one hand, if follows from Theorem~\ref{main:theorem:2} that
\begin{align*}
 \Psi_t(S_1  \nfcombb S) =  \Psi_t(S_1)  \combb  \Psi_t(S).
\end{align*}
On the other hand, since $S_1 \equiv S_2$, it follows from Item
\emph{iii.)} of Lemma~\ref{nice:prop:Psi:lemma} that
\begin{align*}
\Psi_t(S_1) = \Psi_t(S_2).
\end{align*}
Hence we get
\begin{align*}
 \Psi_t(S_1  \nfcombb S) &=  \Psi_t(S_2)  \combb  \Psi_t(S) \\
                       &= \Psi_t(S_2  \nfcombb S). \tag{Theorem~\ref{main:theorem:2}}
 \end{align*}
Again, from Item \emph{iii.)} of Lemma~\ref{nice:prop:Psi:lemma}, we
get
\begin{align*}
S_1  \nfcombb S \equiv S_2  \nfcombb S.
\end{align*}
The proof of the remaining claims is similar.
\end{proof}

\begin{theorem}[Congruence and non-degeneracy of the combination]
The following holds.
\begin{enumerate}
\item  The combination  of \ces is a congruence, that is,    for any \ces   ${S}_1,{S}_2, {S}$ in $\ceSet$, we have that:
\begin{align*} 
\textrm{If } {S}_1 \equiv {S}_2  &&\tthen && {S}_1 \comb {S}  \equiv {S}_2 \comb {S} \;\tand\; {S} \comb {S}_1  \equiv {S} \comb {S}_2.
\end{align*}

\item The combination  is non-degenerate, that is, for any  \ces $[S]$ and $[S']$ in $\ceSetEquiv$,   we have that
\begin{align*}
    [S] \comb [S']  =  [\emptylist]  &&\tiff &&   [S] =  [\emptylist] \;\textrm{and }\;   [S'] = [\emptylist].
\end{align*}

\end{enumerate}
\end{theorem}

\begin{proof}
Similar to the proof of Theorem~\ref{congruence:unif}.
\end{proof}


\section{Conclusion and future work}
\label{conclusion:sec}
We  addressed the  problem of extension and combination of
proofs encountered in the field of computer aided asymptotic  model derivation.
We introduced  a class of rewriting strategies on which the operations of unification and 
combination were defined and proved correct.  
The design of this class is inspired by the $\mu $-calculus formalism~\cite{rudimemt:mu-calculus:book} together with practical needs emerging from  asymptotic  model derivation.
  
The \ces are indeed modular  in the sense that they   navigate in the tree without modifying it,  then 
they insert contexts. This makes our  formalism flexible since it allows one to modify and enrich the navigation part and/or the insertion
part without disturbing the set-up. Besides, the ideas and techniques behind the unification  and combination of the navigation part, namely the unification of fixed-point \ces or recursion,  are generic and could be used in several applications beyond rewriting strategies as far as they incorporate recursion.
Although the \ces can be viewed as a finite algebraic representation  of infinite trees~\cite{COURCELLE-infinite-trees:83,CY91infinite_terms},   
our technique of unification and combination involving  $\mu$-terms and their unfolding is new.  
We envision consequences of these
results on the study of the syntactic (or modulo a theory)
unification and the pattern-matching of infinite trees once they are expressed as  $\mu$-terms in the same way we expressed the \ces. It follows that a rewriting
language that transforms algebraic infinite trees and incorporates the least and greatest fixed-point operators could  be elaborated.

We implemented the unification  procedure within a user specification language of mathematical expressions, proofs and extensions and their combination for asymptotic models.  We noticed  that  the size of the resulting \ces is big, and the good news is that they  contain many redundant and inaccessible  parts in the same way a graph or a transition system contains equivalent sub-parts, and a  program contains inaccessible code. This raises the question of the  minimization or reduction of \ces which remains open.
We managed recently to design an algorithm that decides whether two \ces are  semantically  equivalent by  looking at their structure. This is known as the word problem in other fields, e.g. in universal algebras~\cite{KNUTH-word-problem:univ:algebra:1970, word:problem:survey}.
Proving the correctness of this algorithm is under way. This semantic equivalence algorithm  will probably be useful for the minimization of \ces since  one can factorize the equivalent sub-parts.
This technique is similar to the techniques of   reduction of Petri nets and transition systems and event structures by the bisimulation equivalence relation~\cite{SchSid-atpn-Biss-reduction-petri-2000,tool:beh:equivalence:91,Branching-Bisi-Minim-Compos-2006,ARMASCERVANTES20161110}, and to  the reduction of graphs by internal isomorphisms, or automorphism ~\cite{hand-book:Graph:theory,core-graphs-SAT-19}.

Since the class of \ces can be viewed as $\mu$-calculus in the sense that it supplements elementary strategies  with the fixed-point operator, one can pose the hierarchy problem for it. The hierarchy problem asks whether,  for any $n \ge 1$, there exists a \ce with $n$ bound fixed-point variables, such that no  \ce with less than $n$ bound fixed-point variables    is equivalent to it. The hierarchy problem was posed  for many $\mu$-calculi~\cite{Gerwanger:mu,BELKHIR20106-mu} and it might  help in reducing  the size of the \ces, namely in minimizing the number of bound fixed-point variables.



\bibliographystyle{elsarticle-num}
\bibliography{article}

\begin{thebibliography}{10}
\expandafter\ifx\csname url\endcsname\relax
  \def\url#1{\texttt{#1}}\fi
\expandafter\ifx\csname urlprefix\endcsname\relax\def\urlprefix{URL }\fi
\expandafter\ifx\csname href\endcsname\relax
  \def\href#1#2{#2} \def\path#1{#1}\fi

\bibitem{YanBel2013}
B.~Yang, W.~Belkhir, M.~Lenczner, Computer-aided derivation of multi-scale
  models: A rewriting framework, International Journal for Multiscale
  Computational Engineering. 12~(2) (2014) 91--114.

\bibitem{belkhir2014symbolic}
W.~Belkhir, A.~Giorgetti, M.~Lenczner, A symbolic transformation language and
  its application to a multiscale method, Journal of Symbolic Computation 65
  (2014) 49--78.

\bibitem{belkhir:SYNASC:15}
W.~Belkhir, N.~Ratier, D.~D. Nguyen, B.~Yang, M.~Lenczner, F.~Zamkotsian,
  H.~Cirstea, \href{https://hal.inria.fr/hal-01243204}{{Towards an automatic
  tool for multi-scale model derivation illustrated with a micro-mirror
  array}}, in: {SYNASC} 2015, {IEEE} Computer Society, 2015, pp. 47--54.
\newline\urlprefix\url{https://hal.inria.fr/hal-01243204}

\bibitem{charalambakis2010homogenization}
N.~Charalambakis, Homogenization techniques and micromechanics. {A} survey and
  perspectives, Applied Mechanics Reviews 63~(3) (2010) 030803.

\bibitem{EuroSim11}
B.~Yang, W.~Belkhir, M.~Lenczner, A.~Giorgetti, R.~Dhara,
  \href{https://members.femto-st.fr/sites/femto-st.fr.michel-lenczner/files/content/conferences/EurosimE2011-YanBel.pdf}{Computer--aided
  multiscale model derivation for {MEMS} arrays.}, EUROSIM 2011, IEEE Computer
  Society, 6 pages. (2011).
\newline\urlprefix\url{https://members.femto-st.fr/sites/femto-st.fr.michel-lenczner/files/content/conferences/EurosimE2011-YanBel.pdf}

\bibitem{LenSmi07}
M.~Lenczner, R.~C. Smith, A two-scale model for an array of {AFM}'s cantilever
  in the static case, Mathematical and Computer Modelling 46~(5-6) (2007)
  776--805.

\bibitem{combination:2019:rudi-strategies-faillure}
W.~Belkhir, N.~Ratier, D.~D. Nguyen, M.~Lenczner,
  \href{http://arxiv.org/abs/1904.10901}{Unification and combination of
  iterative insertion strategies with rudimentary traversals and failure}, CoRR
  abs/1904.10901 (2019).
\newblock \href {http://arxiv.org/abs/1904.10901} {\path{arXiv:1904.10901}}.
\newline\urlprefix\url{http://arxiv.org/abs/1904.10901}

\bibitem{rudimemt:mu-calculus:book}
A.~Arnold, D.~Niwiński, Rudiments of $\mu$-calculus, Studies in logic and the
  foundations of mathematics, London, Amsterdam, 2001.

\bibitem{BraWal15}
J.~Bradfield, I.~Walukiewicz, The $\mu$-calculus and model-checking, in: H.~V.
  E.~Clarke, T.~Henzinger (Ed.), Handbook of Model Checking, Springer-Verlag,
  2015.

\bibitem{RewriteStrat_CHK2003}
H.~Cirstea, C.~Kirchner, L.~Liquori, B.~Wack, Rewrite strategies in the
  rewriting calculus, in: B.~Gramlich, S.~Lucas (Eds.), {3rd International
  Workshop on Reduction Strategies in Rewriting and Programming }, Vol. 86(4)
  of {ENTCS}, {E}lsevier, Valencia, Spain, 2003, pp. 18--34.

\bibitem{Luca-Action-Refinement:book}
L.~Aceto, Action Refinement in Process Algebras, Cambridge University Press,
  USA, 1992.

\bibitem{Gorrieri20011047-action-refinement}
R.~Gorrieri, A.~Rensink,
  \href{http://www.sciencedirect.com/science/article/pii/B9780444828309500345}{Chapter
  16 - action refinement}, in: J.~Bergstra, A.~Ponse, S.~Smolka (Eds.),
  Handbook of Process Algebra, Elsevier Science, Amsterdam, 2001, pp. 1047 --
  1147.
\newline\urlprefix\url{http://www.sciencedirect.com/science/article/pii/B9780444828309500345}

\bibitem{action-refinement:2020}
T.~Yavuz,
  \href{http://www.sciencedirect.com/science/article/pii/S2352220817300378}{Partial
  predicate abstraction and counter-example guided refinement}, Journal of
  Logical and Algebraic Methods in Programming 110 (2020) 100437.
\newline\urlprefix\url{http://www.sciencedirect.com/science/article/pii/S2352220817300378}

\bibitem{Ghilardi03algebraicand}
S.~Ghilardi, L.~Santocanale, Algebraic and model theoretic techniques for
  fusion decidability in modal logics, in: LPAR'03, Vol. 2850, 2003, pp.
  152--166.

\bibitem{Combining:Logics:13}
C.~Benzm{\"u}ller, \href{http://christoph-benzmueller.de/papers/C35.pdf}{A
  top-down approach to combining logics}, in: ICAART, SciTePress Digital
  Library, 2013, pp. 346--351.
\newline\urlprefix\url{http://christoph-benzmueller.de/papers/C35.pdf}

\bibitem{TAP:2015}
J.~C. Blanchette, N.~Kosmatov (Eds.),
  \href{http://dx.doi.org/10.1007/978-3-319-21215-9}{Tests and Proofs - 9th
  International Conference, {TAP} 2015}, Vol. 9154 of LNCS, Springer, 2015.
\newline\urlprefix\url{http://dx.doi.org/10.1007/978-3-319-21215-9}

\bibitem{Combining:decision:procedures:MannaZ02}
Z.~Manna, C.~G. Zarba,
  \href{http://dx.doi.org/10.1007/978-3-540-40007-3_24}{Combining decision
  procedures}, in: Formal Methods at the Crossroads. From Panacea to
  Foundational Support, 10th Anniversary Colloquium of UNU/IIST, Revised
  Papers, 2002, pp. 381--422.
\newline\urlprefix\url{http://dx.doi.org/10.1007/978-3-540-40007-3_24}

\bibitem{SHENG2014:service:composition}
Q.~Z. Sheng, X.~Qiao, A.~V. Vasilakos, C.~Szabo, S.~Bourne, X.~Xu,
  \href{http://www.sciencedirect.com/science/article/pii/S0020025514005428}{Web
  services composition: {A} decade’s overview}, Information Sciences 280
  (2014) 218 -- 238.
\newblock \href {https://doi.org/https://doi.org/10.1016/j.ins.2014.04.054}
  {\path{doi:https://doi.org/10.1016/j.ins.2014.04.054}}.
\newline\urlprefix\url{http://www.sciencedirect.com/science/article/pii/S0020025514005428}

\bibitem{JENSEN201928}
L.~Jensen, I.~Kaufmann, K.~Larsen, S.~Nielsen, J.~Srba,
  \href{http://www.sciencedirect.com/science/article/pii/S2352220818300336}{Model
  checking and synthesis for branching multi-weighted logics}, Journal of
  Logical and Algebraic Methods in Programming 105 (2019) 28 -- 46.
\newblock \href {https://doi.org/https://doi.org/10.1016/j.jlamp.2019.02.001}
  {\path{doi:https://doi.org/10.1016/j.jlamp.2019.02.001}}.
\newline\urlprefix\url{http://www.sciencedirect.com/science/article/pii/S2352220818300336}

\bibitem{kahane:grammar:unification:polar:04}
S.~Kahane, \href{https://hal.archives-ouvertes.fr/hal-00170540}{{Grammaires
  d'unification polaris{\'e}es}}, in: {TALN 2004}, F{\`e}s, Morocco, 2004.
\newline\urlprefix\url{https://hal.archives-ouvertes.fr/hal-00170540}

\bibitem{kahane-grammar:unification:05}
S.~Kahane, F.~Lareau,
  \href{https://halshs.archives-ouvertes.fr/halshs-00120407}{{Grammaire
  d'Unification Sens-Texte : modularit{\'e} et polarisation}}, {Grammaire
  d'Unification Sens-Texte : modularit{\'e} et polarisation}, 2005, pp. 23--32,
  dourdan.
\newline\urlprefix\url{https://halshs.archives-ouvertes.fr/halshs-00120407}

\bibitem{francez_wintner_2011:book-unification-grammars}
N.~Francez, S.~Wintner, Unification Grammars, Cambridge University Press, 2011.
\newblock \href {https://doi.org/10.1017/CBO9781139013574}
  {\path{doi:10.1017/CBO9781139013574}}.

\bibitem{Richard2017UnOP:outils}
S.~Richard, Un outil pour d{\'e}velopper et tester les grammaires
  d’unification polaris{\'e}es, 2017.

\bibitem{Unification-Based-Tree-Adjoining-Grammars-91}
K.~Vijay-Shanker, A.~Joshi, Unification-based tree adjoining grammars,
  Technical Reports (CIS) (03 1991).

\bibitem{yang2014contribution}
B.~Yang, Contribution to a kemel of symbolic asymptotic modeling software.,
  Ph.D. thesis, Universit{\'e} de Franche-Comt{\'e} (2014).

\bibitem{Tarski55}
A.~Tarski, \href{http://www.jstor.org/stable/2963936?origin=crossref}{A
  lattice-theoretical fixpoint theorem and its applications.}, The Journal of
  Symbolic Logic 5~(4) (1955) 370.
\newline\urlprefix\url{http://www.jstor.org/stable/2963936?origin=crossref}

\bibitem{eggan1963}
L.~C. Eggan, \href{https://doi.org/10.1307/mmj/1028998975}{Transition graphs
  and the star-height of regular events.}, Michigan Math. J. 10~(4) (1963)
  385--397.
\newblock \href {https://doi.org/10.1307/mmj/1028998975}
  {\path{doi:10.1307/mmj/1028998975}}.
\newline\urlprefix\url{https://doi.org/10.1307/mmj/1028998975}

\bibitem{Courcelle84}
J.~Braquelaire, B.~Courcelle,
  \href{http://www.sciencedirect.com/science/article/pii/030439758490063X}{The
  solutions of two star-height problems for regular trees}, Theoretical
  Computer Science 30~(2) (1984) 205 -- 239.
\newblock \href {https://doi.org/https://doi.org/10.1016/0304-3975(84)90063-X}
  {\path{doi:https://doi.org/10.1016/0304-3975(84)90063-X}}.
\newline\urlprefix\url{http://www.sciencedirect.com/science/article/pii/030439758490063X}

\bibitem{COURCELLE-infinite-trees:83}
B.~Courcelle,
  \href{http://www.sciencedirect.com/science/article/pii/0304397583900592}{Fundamental
  properties of infinite trees}, TCS 25~(2) (1983) 95 -- 169.
\newblock \href
  {https://doi.org/http://dx.doi.org/10.1016/0304-3975(83)90059-2}
  {\path{doi:http://dx.doi.org/10.1016/0304-3975(83)90059-2}}.
\newline\urlprefix\url{http://www.sciencedirect.com/science/article/pii/0304397583900592}

\bibitem{CY91infinite_terms}
Y.~Chen, M.~O'Donnell,
  \href{http://dx.doi.org/10.1007/3-540-54317-1-84}{Infinite terms and infinite
  rewritings}, in: S.~Kaplan, M.~Okada (Eds.), Conditional and Typed Rewriting
  Systems, Vol. 516 of Lecture Notes in Computer Science, Springer Berlin /
  Heidelberg, 1991, pp. 115--126, 10.1007/3-540-54317-1-84.
\newline\urlprefix\url{http://dx.doi.org/10.1007/3-540-54317-1-84}

\bibitem{KNUTH-word-problem:univ:algebra:1970}
D.~E. Knuth, P.~B. Bendix,
  \href{http://www.sciencedirect.com/science/article/pii/B978008012975450028X}{Simple
  word problems in {U}niversal {A}lgebras}, in: J.~LEECH (Ed.), Computational
  {P}roblems in {A}bstract {A}lgebra, Pergamon, 1970, pp. 263 -- 297.
\newblock \href
  {https://doi.org/https://doi.org/10.1016/B978-0-08-012975-4.50028-X}
  {\path{doi:https://doi.org/10.1016/B978-0-08-012975-4.50028-X}}.
\newline\urlprefix\url{http://www.sciencedirect.com/science/article/pii/B978008012975450028X}

\bibitem{word:problem:survey}
S.~Crvenković, \href{http://www.jstor.org/stable/43999230}{{W}ord {P}roblems
  for {V}arieties of {A}lgebras ({A} {S}urvey)}, Filomat 9~(3) (1995) 427--448.
\newline\urlprefix\url{http://www.jstor.org/stable/43999230}

\bibitem{SchSid-atpn-Biss-reduction-petri-2000}
{\relax Ph}.~Schnoebelen, N.~Sidorova,
  \href{http://www.lsv.ens-cachan.fr/Publis/PAPERS/PS/SchSid-atpn2000.ps}{Bisimulation
  and the reduction of {P}etri nets}, in: M.~Nielsen, D.~Simpson (Eds.),
  {P}roceedings of the 21st {I}nternational {C}onference on {A}pplications and
  {T}heory of {P}etri {N}ets ({ICATPN} 2000), Vol. 1825 of Lecture Notes in
  Computer Science, Springer, \AA rhus, Denmark, 2000, pp. 409--423.
\newline\urlprefix\url{http://www.lsv.ens-cachan.fr/Publis/PAPERS/PS/SchSid-atpn2000.ps}

\bibitem{tool:beh:equivalence:91}
J.-C. Fernandez, L.~Mounier, A tool set for deciding behavioral equivalences,
  in: J.~C.~M. Baeten, J.~F. Groote (Eds.), CONCUR '91, Springer Berlin
  Heidelberg, Berlin, Heidelberg, 1991, pp. 23--42.

\bibitem{Branching-Bisi-Minim-Compos-2006}
Y.-P. Cheng, H.-Y. Wang, Y.-R. Cheng, On-the-fly branching bisimulation
  minimization for compositional analysis, in: O.~H. Ibarra, H.-C. Yen (Eds.),
  Implementation and Application of Automata, Springer Berlin Heidelberg,
  Berlin, Heidelberg, 2006, pp. 219--229.

\bibitem{ARMASCERVANTES20161110}
A.~Armas-Cervantes, P.~Baldan, L.~García-Bañuelos,
  \href{http://www.sciencedirect.com/science/article/pii/S2352220815001091}{Reduction
  of event structures under history preserving bisimulation}, Journal of
  Logical and Algebraic Methods in Programming 85~(6) (2016) 1110 -- 1130, nWPT
  2013.
\newblock \href {https://doi.org/https://doi.org/10.1016/j.jlamp.2015.10.004}
  {\path{doi:https://doi.org/10.1016/j.jlamp.2015.10.004}}.
\newline\urlprefix\url{http://www.sciencedirect.com/science/article/pii/S2352220815001091}

\bibitem{hand-book:Graph:theory}
R.~L. Graham, M.~Gr\"{o}tschel, L.~Lov\'{a}sz (Eds.), Handbook of Combinatorics
  (Vols. 1 \& 2), MIT Press, Cambridge, MA, USA, 1996.

\bibitem{core-graphs-SAT-19}
B.~König, M.~Nederkorn, D.~Nolte,
  \href{http://www.sciencedirect.com/science/article/pii/S2352220818301561}{Cores:
  A tool for computing core graphs via {SAT/SMT} solvers}, Journal of Logical
  and Algebraic Methods in Programming 109 (2019) 100484.
\newblock \href {https://doi.org/https://doi.org/10.1016/j.jlamp.2019.100484}
  {\path{doi:https://doi.org/10.1016/j.jlamp.2019.100484}}.
\newline\urlprefix\url{http://www.sciencedirect.com/science/article/pii/S2352220818301561}

\bibitem{Gerwanger:mu}
D.~Berwanger, E.~Gr{\"a}del, G.~Lenzi, On the variable hierarchy of the modal
  $\mu$-calculus, in: J.~Bradfield (Ed.), Computer Science Logic, Springer
  Berlin Heidelberg, Berlin, Heidelberg, 2002, pp. 352--366.

\bibitem{BELKHIR20106-mu}
W.~Belkhir, L.~Santocanale,
  \href{http://www.sciencedirect.com/science/article/pii/S0168007209001572}{The
  variable hierarchy for the games $\mu$-calculus}, Annals of Pure and Applied
  Logic 161~(5) (2010) 690 -- 707, the Third workshop on Games for Logic and
  Programming Languages (GaLoP).
\newblock \href {https://doi.org/https://doi.org/10.1016/j.apal.2009.07.015}
  {\path{doi:https://doi.org/10.1016/j.apal.2009.07.015}}.
\newline\urlprefix\url{http://www.sciencedirect.com/science/article/pii/S0168007209001572}

\end{thebibliography}

\newpage
\section*{Appendix: proofs of Lemmas}

\addcontentsline{toc}{section}{Appendix: proofs of Lemmas}
\renewcommand\thesubsection{\textbf{\Alph{subsection}}}

\setcounter{theorem}{\value{myvar-fact-set-theory}}

\subsection{Proofs for Section~\ref{proof:correction:fixed-point-free:sec}}
\begin{fact} 
\label{set:theoretic:fact:appendix}
Let $I',J',J''$ be sets.
Then, $(I' \cap J'')\cup (I' \setminus  (J' \cup J''))=I' \setminus  J'$.
\end{fact}
\begin{proof}
\begin{align*}
(I' \cap J'')\cup (I' \setminus  (J' \cup J''))
& = \set{x\gvert x \in I' \tand x\in J''}  \cup \set{x \gvert x\in I' \tand x \notin J' \cup J''} \\
& = \set{x\gvert x \in I' \tand x\in J''}  \cup \set{x \gvert x\in I' \tand x \notin J' \tand x \notin   J''} \\
& = \set{x\gvert (x \in I' \tand x\in J'')  \tor  (x\in I' \tand x \notin J' \tand x \notin   J'')} \\
& = \set{x\gvert x \in I' \tand (x\in J''  \tor x \notin J' \tor x \notin  J'')} \\
& = \set{x\gvert x \in I' \tand   x \notin J'  }\\
& = I' \setminus  J'.
\end{align*}
\end{proof}

\subsection{Proofs for Section~\ref{correction-combination-definitions:section}}

\setcounter{theorem}{\value{counter:unfold:equiv:lemma}}

\begin{lemma}
\label{unfold:equiv:lemma:appendix}
Let $S$ be a \ce with (bound) fixed-point variables $X_1,\ldots,X_s$ and  let $\mathbf{s}:\{X_1,\ldots,X_s\} \to  \mathbb{N}$  be an iteration  mapping.
\begin{enumerate}[{(i)}]
\item \label{unfold:equiv:lemma:item:4:appendix} If $S$ if a fixed-point \ce, say $\mu X.S'(X)$ with $X \in \{X_1,\ldots,X_s\}$, then there exists a fixed-point free \ce $\tilde{S}(X^1,\ldots,X^m)$ with $m\ge 1$, and \ces $S_1,\ldots,S_{m-1},S_m(X)$ such that for any $n\ge 1$,
  \begin{align}
    \mu^n X.S'(X)                     &=\tilde{S}\Big(S_1,\ldots,S_{m-1}, S_m\big(\mu^{n-1} X.S'(X)\big)\Big)  \label{unfold:equiv:lemma:eq:1:appendix}\\
      \ufold{\mu X.S'(X)}{\mathbf{s}} &=\tilde{S}\Big(\ufold{S_1}{\mathbf{s}},\ldots,\ufold{S_{m-1}}{\mathbf{s}},\bufold{S_m(\mu X.S'(X))}{\mathbf{s}'}\Big) \label{unfold:equiv:lemma:eq:2:appendix}
  \end{align}
where $\mathbf{s}'$ is the iteration mapping defined on $\set{X_1,\ldots,X_s}$ by  $\mathbf{s}'(X)=\mathbf{s}(X)-1$ and $\mathbf{s}'(X')=\mathbf{s}(X')$ for  $X'\neq X$.
\item \label{unfold:equiv:lemma:item:3:appendix} If   $\mathbf{m}=\min\set{\mathbf{s}(X_1),\ldots,\mathbf{s}(X_s)}$, then $S \equiv_{\mathbf{m}} \ufold{S}{\mathbf{s}}$. 
\end{enumerate}
\end{lemma}  
\begin{proof} 
  For Item (\ref{unfold:equiv:lemma:item:4:appendix}), indeed $S'(X)$ can be written in terms of its immediate fixed-point sub-\ces where $X$ appears free in one of them since $X$ appears once in $S'(X)$.  That is, there exists a fixed-point free \ce  $\tilde{S}(X^1,\ldots,X^m)$ with $m\ge 1$, and (fixed-point) \ces $S_1,\ldots,S_{m-1},S_m(X)$ such that $S'(X)$ can be written as $S'(X)=\tilde{S}(S_1,\ldots,S_{m-1},S_m(X))$.
  To show Eq.(\ref{unfold:equiv:lemma:eq:1:appendix})  we rely on the fact that  $\tilde{S}(X^1,\ldots,X^m)$ is fixed-point free and 
  on  Definition~\ref{ufold:def}   of unfolding together with  a  simple structural induction on $\tilde{S}(X^1,\ldots,X^m)$. The computations are straightforward and we don't make them.
  To show Eq.(\ref{unfold:equiv:lemma:eq:2:appendix}),
  let $\mathbf{\tilde{s}}$ be the iteration mapping defined  on $\set{X_1,\ldots,X_s}$ as  the restriction of $\mathbf{s}$ on  $\set{X_1,\ldots,X_s}\setminus\set{X}$.
  Since $\tilde{S}(X^1,\ldots,X^m)$ is fixed-point free, then
  by the definition~\ref{ufold:def}   of unfolding and making use of Eq.(\ref{unfold:equiv:lemma:eq:1:appendix})  we get 
\begin{align}
    \ufold{\mu X.S'(X)}{\mathbf{s}}  
    & =\bufold{\mu X.\tilde{S}(S_1,\ldots,S_{m-1},S_m(X))}{\mathbf{s}}  \notag   \\
    &=\mu^{\mathbf{s}(X)} X.\bufold{\tilde{S}(S_1,\ldots,S_{m-1},S_m(X))}{\mathbf{s}} \notag   \\
    &=\mu^{\mathbf{s}(X)} X.\tilde{S}\Big(\ufold{S_1}{\mathbf{s}},\ldots,\ufold{S_{m-1}}{\mathbf{s}},\bufold{S_m(X)}{\mathbf{s}}\Big)   \notag \\
    &=\mu^{\mathbf{s}(X)} X.\tilde{S}\Big(\ufold{S_1}{\mathbf{s}},\ldots,\ufold{S_{m-1}}{\mathbf{s}},\bufold{S_m(X)}{\mathbf{\tilde{s}}}\Big)  \notag\\
    &=\tilde{S}\Big(\ufold{S_1}{\mathbf{s}},\ldots,\ufold{S_{m-1}}{\mathbf{s}},\bbufold{S_m\big(\mu^{\mathbf{s}(X)-1} X.S'(X)\big)}{\mathbf{\tilde{s}}}\Big) \tag{By Eq.(\ref{unfold:equiv:lemma:eq:1:appendix})}\\
    &=\tilde{S}\Big(\ufold{S_1}{\mathbf{s}},\ldots,\ufold{S_{m-1}}{\mathbf{s}},\bufold{S_m(\mu X.S'(X))}{\mathbf{s}'}\Big). \notag
  \end{align}
To show the Item (\ref{unfold:equiv:lemma:item:3:appendix}) we next  generalize the idea that a \ce $\mu^n X.S'(X)$ could be written as $S'(S'(\cdots (\emptylist))$ where the number of jumps between its root and $\emptylist$ is at least $n$, as well as the fact that $\mu X.S'(X)$ could be written as  $S'(S'(\cdots (\mu X.S'(X))))$.
Technically, we rely on Eq.(\ref{depth:position:composition:lemma:eq'}) and  we shall show that there exist \ces $\xi_1,\ldots,\xi_m$ and a fixed-point free \ce $T(X^1,\ldots,X^m)$ and an unravelling $\unravel{\cdot}$ of $S$,
such that   $\unravel{S}$ and $\rho(S)$ can be written as
\begin{align}
  \unravel{S}            &= T(\xi_1,\ldots,\xi_m)  \label{unfold:equiv:lemma:item:3:eq:1} \\
  \ufold{S}{\mathbf s}  &= T(\emptylist,\ldots,\emptylist)  \label{unfold:equiv:lemma:item:3:eq:2}
\end{align}
such that
\begin{align}
\min\bset{\Pi_{X^i}\big(T(X^1,\ldots,X^m) \big) \gvert i=1,\ldots,m} \ge \mathbf{m}.  \label{unfold:equiv:lemma:item:3:eq:3}
\end{align}
We make a double induction: the outer one being on  $\Im(\mathbf{s})\uberEq{def}(\mathbf{s}(X_1),\ldots,\mathbf{s}(X_s))$ with the lexicographic order, and the inner one being on the number of nested fixed-point sub-\ces of  $S$, i.e. on $\mathbf{h}(S)$ the star height of $S$.
The outer base case  when $\Im(\mathbf{s})=(0,\ldots,0)$  holds trivially since in this case the set of terms of depth $0$ is empty.
For the outer induction step,  we assume that the claim  holds for $\mathbf{s}'$ and we shall prove it for any $\mathbf{s}$ with $\Im(\mathbf{s})=\Im(\mathbf{s}')+(b_1,\ldots,b_s)$ where  there is $i\in \set{1,\ldots,s}$
such that $b_i=1$ and $b_j=0$ for any $i \neq j$. We make an inner induction on $\mathbf{h}(S)$.
The inner base case $\mathbf{h}(S)=0$  holds trivially since in this case $S$ is fixed-point free because the unfolding of $S$ is  $S$.
For the inner induction step we assume that the claim holds for a \ce $S'$ and we shall prove it for any $S$ with $\mathbf{h}(S)= \mathbf{h}(S')+1$.
We only discuss the case when $S$ if a fixed-point \ce, say $S=\mu X.S'(X)$, since the case when $S$ is of the form $S=\tilde{S}(\xi_1,\ldots,\xi_k)$, for    a fixed-point free \ce $\tilde{S}(X^1,\ldots,X^k)$ and  a fixed-point \ces
$\xi_1,\ldots,\xi_k$ with $k \ge 1$, does not provide difficulties since it is easily reducible to the case under discussion, because
$\bufold{\tilde{S}(\xi_1,\ldots,\xi_k)}{\mathbf{s}}=\tilde{S}\big(\ufold{\xi_1}{\mathbf s},\ldots,\ufold{\xi_k}{\mathbf s}\big)$ .  We rely on the fact that $S'(X)$ can be written as $S'(X)=\tilde{S}(S_1,\ldots,S_{m-1},S_m(X))$, for 
fixed-point \ces $S_1,\ldots,S_{m-1},S_m(X)$   in $\widetilde{\Phi}_{\mu}(S)$,  $\tilde{S}(X^1,\ldots,X^m)$ being a fixed-point free \ce.
From  Eq.(\ref{unfold:equiv:lemma:eq:1:appendix}) above  we have that $\ufold{S}{\mathbf{s}}=\ufold{\mu X.S'(X)}{\mathbf{s}}=\tilde{S}\Big(\ufold{S_1}{\mathbf{s}},\ldots,\ufold{S_{m-1}}{\mathbf{s}},\bufold{S_m(\mu X.S'(X))}{\mathbf{s}'}\Big)$, where  $\mathbf{s}'$ is the iteration mapping defined on $\set{X_1,\ldots,X_s}$ by  $\mathbf{s}'(X)=\mathbf{s}(X)-1$ and $\mathbf{s}'(X')=\mathbf{s}(X')$ for  $X'\neq X$.

Therefore, we have that
\begin{align*}
  S                     &= \mu X.\tilde{S}(S_1,\ldots,S_{m-1},S_m(X))  \\
    \unravel{S}         &\uberEq{def}  \tilde{S}\big(S_1,\ldots,S_{m-1},S_m(\mu X.S'(X))\big) \\
  \ufold{S}{\mathbf s}  &= \tilde{S}\Big(\ufold{S_1}{\mathbf{s}},\ldots,\ufold{S_{m-1}}{\mathbf{s}},\bufold{S_m(\mu X.S'(X))}{\mathbf{s}'}. 
\end{align*}
On the one hand,  then it follows from inner  induction hypothesis that the claims (\ref{unfold:equiv:lemma:item:3:eq:1}), (\ref{unfold:equiv:lemma:item:3:eq:2}) and (\ref{unfold:equiv:lemma:item:3:eq:3})
hold for  $S_i$ with respect to $\ufold{S_i}{\mathbf{s}}$  for $i=1,\ldots,m-1$, since $\mathbf{h}(S_i) < \mathbf{h}(S)$. 
On the other hand, since $\mathbf{s}'(X)=\mathbf{s}(X)-1$ and $\mathbf{s}'(X')=X'$ for any $X'\neq X$, then it follows from the outer induction hypothesis  that there is a fixed-point free \ce
$\tilde{S}_k(Y^1,\ldots,Y^k)$ and \ces $\zeta_1,\ldots,\zeta_k$ such that

\begin{align}
  \bunravel{S_m(\mu X.S'(X))}            &= \tilde{S}_m(\zeta_1,\ldots,\zeta_k)  \label{unfold:equiv:lemma:item:3:eq:1'} \\
  \ufold{S_m(\mu X.S'(X))}{\mathbf s'}  &= \tilde{S}_m(\emptylist,\ldots,\emptylist)  \label{unfold:equiv:lemma:item:3:eq:2'}
\end{align}
such that
\begin{align}
  \min\bset{\Pi_{Y^i}\big(\tilde{S}_m(Y^1,\ldots,Y^k) \big) \gvert i=1,\ldots,k} \ge \mathbf{m}'  \label{unfold:equiv:lemma:item:3:eq:3'}
\end{align}

 where $\mathbf{m'}=\min\set{\mathbf{s}'(X_i) \gvert i=1,\ldots,s}$.
If $\mathbf{m} > \mathbf{s}(X)$,  then  we are done since in this case $\mathbf{m}'=\mathbf{m}$.
Otherwise, if $\mathbf{m} = \mathbf{s}(X)$ then $\mathbf{m}' = \mathbf{s}(X)-1$.  But since  $\mu X.S'(X)$ is monotonic then $\Pi_{X}(S'(X))\ge 1$.
That is, there is at least one jump between the root of  $S'(X)$ and $X$.
This jump is either  between the root of $\tilde{S}(S_1,\ldots,S_{m-1},X^m)$ and $X^{m}$, i.e.  $\Pi_{X^m}\big(\tilde{S}(S_1,\ldots,S_{m-1},X^m)\big)\ge 1$  and in this case we are done;
or between the root of $S_m(X)$ and $X$, i.e.  $\Pi_{X}\big(S_m(X)\big)\ge 1$  and in this case we can assume without loss of generality that $X$ is an immediate sub-\ce of $S_m(X)$,
say $\mu X.S'(X)=\zeta_k$, and thus we get the desired result since $\Pi_{X^m}\big(T(X^1,\ldots,X^m) \big) \le  1+ \Pi_{Y^k}\big(T(X^1,\ldots,S_m(Y^k)) \big)$.
\end{proof}


\setcounter{theorem}{\value{counter:composition:unif:lemma}}

\begin{lemma}[Composition Lemma] 
\label{composition:unif:lemma:appendix}
Let $S$ and $R$ be \ces.
Assume that  there are  fixed-point free \ces  $S'(X_1,\ldots, X_k)$  and  $R'(Y_1,\ldots, Y_l)$, where $k\ge 1$ and $l\ge 1$, and
 \ces  $\xi_1,\ldots,\xi_k$  where $\xi_i \in \Phi(S)$, and
 \ces  $\zeta_1,\ldots,\zeta_l$ where $\zeta_i \in \Phi(R)$, such that  $S$ and $R$ can be written as:
  \begin{align*}
    S= S'(\xi_1,\ldots,\xi_k) &&&
    R= R'(\zeta_1,\ldots,\zeta_l). 
  \end{align*}

Then, there is a fixed-point free \ce $T(Z_1,\ldots,Z_m)$  and \ces $T_1,\ldots,T_m$, where $m \ge 1$, such that 
  \begin{align*}
    S \combb R &= T(T_1,\ldots,T_m)
  \end{align*}
  where for any $i=1,\ldots,m$, there is an alternative between the two following choices.
  
  \begin{enumerate}[(a)]
  \item \label{composition:unif:lemma:item:2:a:appendix} There are $j \in \set{1,\ldots,k}$, a \ce  $R^i(Y^1,\ldots, Y^{l'})$ that is  a sub-\ce of $R'(Y_1,\ldots, Y_l)$ with $l'\le l$, and a set of \ces
    $\set{\zeta^1,\ldots,\zeta^{l'}} \subseteq \set{\zeta_1,\ldots,\zeta_{l}}$   such that 
    \begin{align}
      \label{composition:lemma:prop:1:appendix}
    T_i = \xi_j \combb R^i(\zeta^1,\ldots, \zeta^{l'})   &&\tor &&      T_i = \xi_j.
  \end{align}
\item \label{composition:unif:lemma:item:2:b:appendix} There are $j \in \set{1,\ldots,l}$, a \ce  $S^i(X^1,\ldots, X^{k'})$ that is  a sub-\ce of $S'(X_1,\ldots, X_k)$ with $k'\le l$, and a set of \ces
    $\set{\xi^1,\ldots,\xi^{k'}} \subseteq \set{\xi_1,\ldots,\xi_k}$   such that 

  \begin{align}
    \label{composition:lemma:prop:2:appendix}
    T_i = S^i(\xi^1,\ldots,\xi^{k'}) \combb \zeta_j    &&\tor &&      T_i = \zeta_j.
      \end{align}
  \end{enumerate}
\end{lemma}
\begin{proof} 
The proof is by structural induction on the  fixed-point free \ces  $S'(X^1,\ldots, X^k)$  and  $R'(Y^1,\ldots, Y^l)$.
  The base case is when $k=l=1$ and $S'(X_1)=X_1$ and $R'(Y_1)=Y_1$. In this case we have $S'(\xi_1)=\xi_1$ and $R'(\zeta_1)=\zeta_1$. The result is obvious since $S \combb R=\xi_1 \combb \zeta_1$.
  For the  induction step assume that the claim holds for some \ces $S''$ and $R''$, and we shall show it for any $S$ and $R$ such that either
  \emph{(i)} $S''$ is an immediate sub-\ce of $S$ and $R''=R$, or
  \emph{(ii)} $S=S''$ and $R''$ is an immediate sub-\ce of $R$, or
  \emph{(iii)} $S''$ (resp. $R''$) is an immediate sub-\ce of $S$ (resp. $R$).
  The proof  is not hard and involves  straightforward computations.  We only elucidate the case when $S$ is a pattern-matching and $R$ is arbitrary, and the case when both $S$ and $R$ are $\most$ \ces. The remaining cases fall into one of these two.\\
$\bullet$ If $S'(X_1,\ldots, X_k) = u;S''(X^1,\ldots, X^k)$ and $R$ is arbitrary, then in this case
  \begin{align*}
    S\combb R &= S'(\xi_1,\ldots,\xi_k) \combb  R'(\zeta_1,\ldots,\zeta_l)  \notag \\
    &= u;\big(S''(\xi_1,\ldots,\xi_k) \combb R'(\zeta_1,\ldots,\zeta_l)\big).  \notag \\
  \end{align*}
  From the induction hypothesis it follows  that there is a fixed-point free \ce $T'(Z_1,\ldots,Z_m)$ and \ces $T'_1,\ldots,T'_m$ with the right properties (\ref{composition:lemma:prop:1:appendix}) and (\ref{composition:lemma:prop:2:appendix}) such that $S''(\xi_1,\ldots, \xi_k)\combb R'(\zeta_1,\ldots, \zeta_l)=T'(T'_1,\ldots,T'_m)$.
  By  letting $T(Z_1,\ldots,Z_m)= u ; T'(Z_1,\ldots,Z_m)$ we get the desired result.
\\
$\bullet$ If $S'(X_1,\ldots, X_k)=\most(S''(X_1,\ldots, X_k))$ and $R(X_1,\ldots, X_k)=\most(R''(X_1,\ldots, X_k))$, then

 \begin{align}
   \label{most:morphism:eq:1}
   S \combb R   &= S'(\xi_1,\ldots, \xi_k) \combb R'(\zeta_1,\ldots, \zeta_l)\notag \\
                &= \most\big(S''(\xi_1,\ldots, \xi_k)\big)  \combb \most\big(R''(\zeta_1,\ldots, \zeta_l)\big) \notag \\
                &= \mathbf{If} \,\Big(\most\big(S''(\xi_1,\ldots, \xi_k)\big)\, \& \,\most{\big(R''(\zeta_1,\ldots, \zeta_l)\big)}\Big) \notag  \\
                & \;\;\;\;\;\mathbf{Then}\;            \most\Big(\big(S''(\xi_1,\ldots, \xi_k)\combb R''(\zeta_1,\ldots, \zeta_l)\big)  \oplus S''(\xi_1,\ldots, \xi_k) \oplus R''(\zeta_1,\ldots, \zeta_l) \Big).     \tag{Rule~\ref{most:ext:1}} 
 \end{align}
 On the one hand, it follows from the induction hypothesis that there is a fixed-point free \ce $T'(Z'_1,\ldots,Z'_m)$ and \ces $T'_1,\ldots,T'_m$ with the right properties (\ref{composition:lemma:prop:1:appendix}) and (\ref{composition:lemma:prop:2:appendix}) such that $S''(\xi_1,\ldots, \xi_k)\combb R''(\zeta_1,\ldots, \zeta_l)=T'(T'_1,\ldots,T'_m)$.
 One the other hand, let $T(Z_1^1,\ldots,Z_k^1,Z^2_1,\ldots,Z^2_l,Z'_1,\ldots,Z'_m,Z_1^3,\ldots,Z_k^3,Z^4_1,\ldots,Z^4_l)$  the fixed-point free \ce with free variables
 $Z_1^1,\ldots,Z_k^1,Z^2_1,\ldots,Z^2_l,Z'_1,\ldots,Z'_m,Z_1^3,\ldots,Z_k^3,Z^4_1,\ldots,Z^4_l$ defined as follows:
   \begin{multline*}
  T(Z_1^1,\ldots,Z_k^1,
   Z^2_1,\ldots,Z^2_l,
   Z'_1,\ldots,Z'_m,
   Z_1^3,\ldots,Z_k^3, 
   Z^4_1,\ldots,Z^4_l)= \\
      \begin{split}
      &\mathbf{If} \big(\most(S''(Z_1^1,\ldots,Z_k^1)) \,\&\, \most(R''(Z_1^2,\ldots,Z^2_l))\big) \\
      &\mathbf{Then}\; \most\big(T'(Z'_1,\ldots,Z'_m)  \oplus S''( Z_1^3,\ldots,Z_k^3) \oplus R''(Z^4_1,\ldots,Z^4_l)\big)
\end{split}
   \end{multline*}
   Let $T_i^1,T_i^3,T_j^2, T_j^4$ be the \ces defined by  
   \begin{align*}
     T_i^1&=  T_i^3 = \xi_i, & \textrm{ for } i=1,\ldots, k  \\
     T_j^2&=  T_j^4 = \zeta_j, & \textrm{ for } j=1,\ldots, l
   \end{align*}
   Therefore, $S \combb R$ can be written as
   \begin{align*}
     S \combb R =   T(T_1^1,\ldots,T_k^1,
     T^2_1,\ldots,T^2_l,
     T'_1,\ldots,T'_m,
     T_1^3,\ldots,T_k^3, 
     T^4_1,\ldots,T^4_l)
   \end{align*}
   which satisfies the properties (\ref{composition:lemma:prop:1:appendix}) and (\ref{composition:lemma:prop:2:appendix}). 
\end{proof}   

\subsection{Proofs for Section~\ref{correction:unif:general:setting:unfold:sec}}

\setcounter{theorem}{\value{counter:comparing:unif:unfolding:lemma}}

\begin{lemma}
\label{comparing:unif:unfolding:lemma:appendix}
There exist fixed-point free \ces $T_1,\ldots,T_m, T(Z_1,\ldots,Z_m)$, where each $Z_i$ is a free fixed-point variable and $m\ge 1$, such that
$\ufold{S}{\mathbf{s}_1}\combb \ufold{R}{\mathbf{r}_1}$
and $ \ufold{S}{\mathbf{s}_2}\combb \ufold{R}{\mathbf{r}_2}$ can be written as  
\begin{align*}
  \ufold{S}{\mathbf{s}_1}\combb \ufold{R}{\mathbf{r}_1} &= T(T_1,\ldots,T_m)              \\
 \ufold{S}{\mathbf{s}_2}\combb \ufold{R}{\mathbf{r}_2} &=  T(\emptylist,\ldots,\emptylist).
\end{align*}
\end{lemma}
\begin{proof} 
  The proof is by induction on $\Gamma(S,R,\mathbf{s}_1,\mathbf{s}_2,\mathbf{r}_1,\mathbf{r}_2) \uberEq{def}\big(\delta(\ufold{S}{\mathbf{s}_1}),\delta(\ufold{S}{\mathbf{s}_2}), \delta(\ufold{R}{\mathbf{r}_1}),\delta(\ufold{R}{\mathbf{r}_2}) \big)$.
  The base case is when $\Gamma(S,R,\mathbf{s}_1,\mathbf{s}_2,\mathbf{r}_1,\mathbf{r}_2)=(0,0,0,0)$, i.e. $S$ and $R$ are either $\emptylist$ or $@\varepsilon.\tau$. This case is trivial.
  For the induction step, assume that the claim holds for \ces $\tilde{S},\tilde{R}$ and iteration mappings $\mathbf{\tilde{s}}_1, \mathbf{\tilde{s}}_2, \mathbf{\tilde{r}}_1,  \mathbf{\tilde{r}}_2$,
  and we shall prove it for any  \ces ${S},{R}$ and iteration mappings $\mathbf{{s}}_1, \mathbf{{s}}_2, \mathbf{{r}}_1,  \mathbf{{r}}_2$ where
  $\Gamma(S,R,\mathbf{s}_1,\mathbf{s}_2,\mathbf{r}_1,\mathbf{r}_2)=\Gamma(\tilde{S},\tilde{R},\mathbf{\tilde{s}}_1,\mathbf{\tilde{s}}_2,\mathbf{\tilde{r}}_1,\mathbf{\tilde{r}}_2) +(1,1,0,0)$,
  or   $\Gamma(S,R,\mathbf{s}_1,\mathbf{s}_2,\mathbf{r}_1,\mathbf{r}_2)=\Gamma(\tilde{S},\tilde{R},\mathbf{\tilde{s}}_1,\mathbf{\tilde{s}}_2,\mathbf{\tilde{r}}_1,\mathbf{\tilde{r}}_2) +(0,0,1,1)$, or
    $\Gamma(S,R,\mathbf{s}_1,\mathbf{s}_2,\mathbf{r}_1,\mathbf{r}_2)=\Gamma(\tilde{S},\tilde{R},\mathbf{\tilde{s}}_1,\mathbf{\tilde{s}}_2,\mathbf{\tilde{r}}_1,\mathbf{\tilde{r}}_2) +(1,1,1,1)$.
  We only discuss the cases when $\delta(\ufold{S}{\mathbf{s}_2}) \ge 1$ (and hence  $\delta(\ufold{S}{\mathbf{s}_1}) \ge 1$ since $\mathbf{s}_1 \ge \mathbf{s}_2$)
  and $\delta(\ufold{R}{\mathbf{r}_2}) \ge 1$ (and hence $\delta(\ufold{R}{\mathbf{r}_1}) \ge 1$ since  $\mathbf{r}_1 \ge \mathbf{r}_2$),
  because the cases when $\delta(\ufold{S}{\mathbf{s}_2}) = 0$ or $\delta(\ufold{S}{\mathbf{s}_2}) = 0$ (but not both)  are just a particular case of the   general case that follows, and  can be handled similarly
  using the composition Lemma ~\ref{composition:unif:lemma}. 
  We distinguish two cases depending on $S$ and $R$.\\
  \underline{First case.} If neither $S$ nor $R$ is a fixed-point \ce, then there exist  fixed-point free \ces $S'(X^1,\ldots,X^k)$ and  $R'(Y^1,\ldots,Y^l)$ and \ces $S_1,\ldots,S_k$ and   $R_1,\ldots,R_l$, where each $S_i$ (resp. $R_i$) is an immediate sub-\ce of $S$ (resp. $R$), i.e. $\delta(S'(X^1,\ldots,X^k))=1$  (resp. $\delta(R'(Y^1,\ldots,Y^l))=1$),  such that $S$ and $R$ can be written as:
  \begin{align*}
    S &= S'(S_1,\ldots,S_k) \\
    R &= R'(R_1,\ldots,R_l).
  \end{align*}
  Hence,
  \begin{align}
    \label{eq:14:local}
    \begin{cases}
    \ufold{S}{\mathbf{s}_1} &= S'(\ufold{S_1}{\mathbf{s}_1},\ldots,\ufold{S_k}{\mathbf{s}_1}) \\
    \ufold{R}{\mathbf{r}_1} &= R'(\ufold{R_1}{\mathbf{r}_1},\ldots,\ufold{R_l}{\mathbf{r}_1}),
    \end{cases}
    &&\tand&&
    \begin{cases}
    \ufold{S}{\mathbf{s}_2} &= S'(\ufold{S_1}{\mathbf{s}_2},\ldots,\ufold{S_k}{\mathbf{s}_2}) \\
    \ufold{R}{\mathbf{r}_2} &= R'(\ufold{R_1}{\mathbf{r}_2},\ldots,\ufold{R_l}{\mathbf{r}_2}).
    \end{cases}
    \end{align}
  It follows from the composition Lemma~\ref{composition:unif:lemma} that there exist  a fixed-point free \ce $T(Z_1,\ldots,Z_m)$, and \ces  $T^1_1,\ldots,T^1_m$ and $T^2_1,\ldots,T^2_m$  such that
  \begin{align*}
    \ufold{S}{\mathbf{s}_1} \combb  \ufold{R}{\mathbf{r}_1} &= T(T^1_1,\ldots,T^1_m) \\
    \ufold{S}{\mathbf{s}_2} \combb  \ufold{S}{\mathbf{s}_2} &= T(T^2_1,\ldots,T^2_m),
  \end{align*}
    where the Item (\ref{composition:unif:lemma:item:2:a:appendix})  or  (\ref{composition:unif:lemma:item:2:b:appendix}) holds. We only discuss the first possibility (since the second is symmetric)
  according to which, for any $i=1,\ldots,m$    there is $j \in \set{1,\ldots,k}$, and a \ce  $R^i(Y_1,\ldots, Y_{l'})$ that is  a sub-\ce of $R'(Y_1,\ldots, Y_l)$ with $l'\le l$, and a set of \ces
    $\set{R^1_1,\ldots,R^1_{l'}} \subseteq \set{R_1,\ldots,R_{l}}$   such that 
    \begin{align}
      T^1_i = \ufold{S_j}{\mathbf{s}_1} \combb R^i(R^1_1,\ldots,R^1_{l'})   &&\tor &&      T^1_i = \ufold{S_j}{\mathbf{s}_1} \\
      T^2_i = \ufold{S_j}{\mathbf{s}_1} \combb R^i(R^2_1,\ldots,R^2_{l'})   &&\tor &&      T^2_i = \ufold{S_j}{\mathbf{s}_2}.
    \end{align}
    If  $T^1_i = \ufold{S_j}{\mathbf{s}_1}$ and hence $T^2_i = \ufold{S_j}{\mathbf{s}_2}$, then the claim follows from Remark~\ref{double:ufold:rq}. 
    Otherwise, the claim follows from the induction hypothesis. 
\\
\underline{Second case.} If $S$ if a fixed-point \ce, say $\mu X.\tilde{S}(X)$, then we distinguish two cases.
If $\mathbf{s}_2(X)=0$ then the claim holds trivially since in this case $\ufold{S}{\mathbf{s}_2}=\emptylist$. If $\mathbf{s}_2(X) >0$ and therefore $\mathbf{s}_1(X) >0$ since $\mathbf{s}_1 \ge \mathbf{s}_2$, then it follows from Eq.(\ref{unfold:equiv:lemma:eq:2:appendix}) of Lemma~\ref{unfold:equiv:lemma}, that there exists a fixed-point free \ce $S'(X^1,\ldots,X^k)$ and \ces $S_1,\ldots,S_{k-1},S_k(X)$, where $\delta(S'(X^1,\ldots,X^k))=0$, such that $\ufold{\mu X.\tilde{S}(X)}{\mathbf{s}_1}$ and
 $\ufold{\mu X.\tilde{S}(X)}{\mathbf{s}_2}$ can  be written as
  \begin{align*}
    \ufold{\mu X.\tilde{S}(X)}{\mathbf{s}_1} &=S'\Big(\ufold{S_1}{\mathbf{s}_1},\ldots,\ufold{S_{k-1}}{\mathbf{s}_1},\bufold{S_k(\mu X.\tilde{S}(X))}{\mathbf{s}'_1}\Big)\\
    \ufold{\mu X.\tilde{S}(X)}{\mathbf{s}_2} &=S'\Big(\ufold{S_1}{\mathbf{s}_2},\ldots,\ufold{S_{k-1}}{\mathbf{s}_2},\bufold{S_k(\mu X.\tilde{S}(X))}{\mathbf{s}'_2}\Big) 
  \end{align*}
  where $\mathbf{s}'_v(X)=\mathbf{s}'_v(X)-1$ and $\mathbf{s}'_v(X')=\mathbf{s}_v(X')$ for $X' \neq X$, for $v=1,2$.
  Thus the  reasoning is very similar to the one made in the first case, more precisely  it is done by taking Eq.(\ref{eq:14:local}) in which  we replace  $\ufold{S_k}{\mathbf{s}_v}$ 
  by  $\bufold{S_k(\mu X.\tilde{S}(X))}{\mathbf{s}'_v}$,  for  $v=1,2$.  Thus the induction hypothesis can be applied as well.
\end{proof}


\setcounter{theorem}{\value{counter:main:lemma:mophism:quasi}}

\begin{lemma}
  \label{main:lemma:mophism:quasi:appendix}
  Let $S$ and $R$ be  \ces with  bound  fixed-point variables $\boundv{S}=\set{ X_1,\ldots,X_s}$ and $\boundv{R}=\set{Y_1,\ldots,Y_r}$.
  Let $\Eu{M} \in \mathfrak{M}(S,R)$ be a memory  with respect  to $S$ and $R$.
  Let  $\mathbf{s}:\{X_1,\ldots,X_s\} \to  \mathbb{N}$ and $\mathbf{r}:\{X_1,\ldots,X_r\} \to  \mathbb{N}$ be iteration mappings.
There is a $(\ceSet,\ceSetFree)$-quasi-simulation $\morphyy$  between   $\NF(\tuple{S,R,\Eu{M}})$ and  $\NF(\tuple{\ufold{S}{\mathbf{s}},\ufold{R}{\mathbf{r}},\emptyset})$.
In particular, the following diagram commutes.
\[\begin{tikzcd}
\mycal{C} \times \mycal{C} \arrow{r}{\combb} \arrow[swap]{d}{\ufold{\cdot}{\mathbf{s}}  \times \ufold{\cdot}{\mathbf{r}}} &  \mycal{C} \arrow[dash]{d}{\morphyy} \\
\ceSetFree \times \ceSetFree  \arrow{r}{\combb} & \ceSetFree
\end{tikzcd}
\]
\end{lemma}
\begin{proof}
We  make of use of Lemma~\ref{composition:unif:lemma}.
The proof is by structural induction on $\ufold{S}{\mathbf s}$ and $\ufold{\mathbf r}{\mathbf r}$,  according to which the $(\ceSet,\ceSetFree)$-simulation  $\morphyy$ will be inductively constructed.
The base case holds trivially. For the induction step we assume that the claim holds for \ces   $\ufold{S''}{\mathbf s''}$ and $\ufold{R''}{\mathbf r''}$ and we shall prove  for any \ces $\ufold{S}{\mathbf s}$ and $\ufold{R}{\mathbf r}$
such that either
\emph{(i)}  $\ufold{S''}{\mathbf s''}$ is an immediate sub-\ce  of  $\ufold{S}{\mathbf s}$ and $R''=R$ and $\mathbf{r}''=\mathbf{\mathbf r}$, or
\emph{(ii)} $\ufold{R''}{\mathbf r''}$ is an immediate sub-\ce  of  $\ufold{S}{\mathbf r}$ and $S''=S$ and $\mathbf{s}''=\mathbf{\mathbf s}$, or
\emph{(iii)}  $\ufold{S''}{\mathbf s''}$ (resp. $\ufold{R''}{\mathbf r''}$) is an immediate sub-\ce  of  $\ufold{S}{\mathbf s}$ (resp. $\ufold{S}{\mathbf r}$).
We   distinguish three cases depending on $S$ and $R$:
\begin{enumerate}
\item If $S$ and $R$ are fixed-point free, then this case is trivial since $\ufold{S}{\mathbf{s}}=S$ and $\ufold{R}{\mathbf{r}}=R$.
\item \label{case:2:lemma:unif:morphism:appendix} If $S$ and $R$ are of the form $S=S'(S_1,\ldots,S_k)$ and $R=R'(R_1,\ldots,R_l)$ for fixed-point free \ces $S'(X_1,\ldots,X_k)$ and $R'(Y_1,\ldots,Y_l)$,
  i.e. $S=u;S'$ or  $S=S'\oplus S''$  or  $S=\most(S')$ or  $S=\tifthen{S'}{S''}$ or   $S=\bigand_{i=1,k}@p_i.S_i$ and similarly for $R$,  then the result follows immediately from Lemma~\ref{composition:unif:lemma} since in these cases
  $\ufold{S'(S_1,\ldots,S_k)}{\mathbf{s}} = S'\big(\ufold{S_1}{\mathbf{s}},\ldots,\ufold{S_k}{\mathbf{s}}\big)$ and   $\ufold{R'(R_1,\ldots,R_l)}{\mathbf{r}} = R'\big(\ufold{R_1}{\mathbf{r}},\ldots,\ufold{R_l}{\mathbf{r}}\big)$,
  since the induction hypothesis can be applied on each $\ufold{S_i}{\mathbf{r}}$ and $\ufold{R_j}{\mathbf{r}}$, for $i\in\set{1,\ldots,k}$ and $j\in\set{1,\ldots,l}$.
   
\item \label{case:3:lemma:unif:morphism:appendix} If $S$ is fixed-point $S=\mu X.S'(X)$, with $X \in \set{X_1,\ldots,X_s}$, then  $S$ is replaced by $S'(S)$ in the unification, and thus  we reduce this case to the case~\ref{case:2:lemma:unif:morphism:appendix} above as follows: 
\begin{align*}
      \tuple{S,R,\Eu{M}} =\tuple{\mu X.S'(X),R,\Eu{M}}  \reduces  
\begin{cases} 
  \mu Z. \tuple{S'(S),R,\Eu{M}'} & \tif (S,R,\cdot) \notin \Eu{M}\\
   Z                                & \tif (S,R,Z) \in \Eu{M}
\end{cases}
\end{align*}
  where $Z = \fresh{S,R}$  and $\Eu{M}' = \Eu{M} \cup \set{(S,R,Z)}$,  and 
\begin{align*}
    \tuple{\ufold{S}{\mathbf{s}},\ufold{R}{\mathbf{r}},\emptyset}  & = \tuple{\ufold{\mu X. S'(X)}{\mathbf{s}},\ufold{R}{\mathbf{r}},\emptyset}  \\
& = 
\begin{cases} 
 \tuple{\emptylist,\ufold{R}{\mathbf{r}},\emptyset}  & \tif \mathbf{s}(X)=0\\
 \tuple{\ufold{\mu X.S'(X)}{\mathbf{s}},\ufold{R}{\mathbf{r}},\emptyset}   & \tif   \mathbf{s}(X)>0
\end{cases} \\
& = 
\begin{cases} 
\emptylist  & \tif \mathbf{s}(X)=0\\
 \tuple{\ufold{\mu X.S'(X)}{\mathbf{s}},\ufold{R}{\mathbf{r}},\emptyset}   & \tif   \mathbf{s}(X)>0.
\end{cases}
\end{align*}
If $\mathbf{s}(X)=0$ then this case is trivial since there is by definition a \qsim between any fixed-point \ce and $\emptylist$, as well as between any fixed-point variable $Z$ and $\emptylist$.
If $\mathbf{s}(X)>0$ and $(S,R,Z) \in \Eu{M}$ then there is by definition a \qsim  between $Z$ and $\NF\big(\tuple{\ufold{\mu X.S'(X)}{\mathbf{s}}\big),\ufold{R}{\mathbf{r}},\emptyset}\big)$.
If $\mathbf{s}(X)>0$ and $(S,R,\cdot) \notin \Eu{M}$ then it follows from Eq. (\ref{unfold:equiv:lemma:eq:2:appendix}) of Lemma~\ref{unfold:equiv:lemma} that there exist a  fixed-point free \ce $\tilde{S}(X^1,\ldots,X^m)$ and \ces $S_1,\ldots,S_{m-1},S_m(X)$,  with  $m \ge 1$, such that 
$S'(X)$ can be written as $S'(X)=\tilde{S}(S_1,\ldots,S_{m-1},S_m(X))$. On the one hand, $S'(S)=\tilde{S}(S_1,\ldots,S_{m-1},S_m(S))$. On the other hand,
\begin{align*}
  \ufold{\mu X.S'(X)}{\mathbf{s}}=\tilde{S}\Big(\ufold{S_1}{\mathbf{s}},\ldots,\ufold{S_{m-1}}{\mathbf{s}},\bufold{S_m(\mu X.S'(X))}{\mathbf{s}'}\Big),
\end{align*}
where $\mathbf{s}'(X)=\mathbf{s}(X)-1$ and $\mathbf{s}'(X')=\mathbf{s}(X')$ for  $X'\neq X$.
This brings us back to the case~\ref{case:2:lemma:unif:morphism:appendix}  above in which the induction hypothesis can be applied on each $\ufold{S_i}{\mathbf{s}}$, for  $i=1,\ldots,m-1$ and for $\bufold{S_m(\mu X.S'(X))}{\mathbf{s}'}$.
\end{enumerate}
\end{proof}


\setcounter{theorem}{\value{counter:properties:morphism:lemma}}

\begin{lemma}
\label{properties:morphism:lemma:appendix}
Let $S$ and $R$ be  \ces with  bound  fixed-point variables $\boundv{S}=\set{ X_1,\ldots,X_s}$ and $\boundv{R}=\set{Y_1,\ldots,Y_r}$.
Let  $\mathbf{s}:\{X_1,\ldots,X_s\} \to  \mathbb{N}$ and $\mathbf{r}:\{X_1,\ldots,X_r\} \to  \mathbb{N}$ be iteration mappings.
Let $\phi_\mu$  be the mapping induced by the  $(\ceSet,\ceSetFree)$-quasi-simulation $\morphyy$ between $S \combb R$ and   $\ufold{S}{\mathbf{s}} \combb \ufold{R}{\mathbf{r}}$
constructed in the proof of Lemma~\ref{main:lemma:mophism:quasi}.
The mapping  $\phi_\mu$ enjoys the following properties.
\begin{enumerate}
\item \label{properties:morphism:lemma:item:1:appendix} 
  For any  fixed-point \ce $\mathbf{T}$  in   $S \combb R$, there exist  \ces $\mu X'.S'(X')$ and $R'$, mappings $\mathbf{s}':\{X_1,\ldots,X_s\} \to  \mathbb{N}$ and $\mathbf{r}':\{Y_1,\ldots,Y_r\} \to  \mathbb{N}$, and a memory $\Eu{M}'$ such that one of the four following cases holds.
  \begin{enumerate}
  \item \label{properties:morphism:lemma:item:1:1:appendix}   $\mathbf{T}=\NF\big(\tuple{\mu X'.S'(X),R',\Eu{M}'}\big)$ and 
                                                    $\phi_{\mu}(\mathbf{T})= \big(\ufold{\mu X'.S'(X')}{\mathbf{s}'} \combb \ufold{R'}{\mathbf{r}'}\big)$. 

  \item \label{properties:morphism:lemma:item:1:2:appendix}   $\mathbf{T}=\NF\big(\tuple{R',\mu X'.S'(X'),\Eu{M}'}\big)$ and
    $\phi_{\mu}(\mathbf{T})=\big(\ufold{R'}{\mathbf{r}'} \combb \ufold{\mu X'.S'(X')}{\mathbf{s}'}\big)$. 

   \item \label{properties:morphism:lemma:item:1:3:appendix}   $\mathbf{T}=\mu X'.S'(X')$, with $X' \in \set{X_1,\ldots,X_s}$  and $\mu X'.S'(X') \in \Phi_{\mu}(S)$, and 
        $\phi_{\mu}(\mathbf{T})= \ufold{\mu X'.S'(X')}{\mathbf{s}'}$.   

   \item \label{properties:morphism:lemma:item:1:4:appendix}   $\mathbf{T}=\mu X'.S'(X')$, with $X' \in \set{Y_1,\ldots,Y_s}$  and $\mu X'.S'(X') \in \Phi_{\mu}(R)$, and 
        $\phi_{\mu}(\mathbf{T})= \ufold{\mu X'.S'(X')}{\mathbf{r}'}$.   
  \end{enumerate}

\item \label{properties:morphism:lemma:item:2:appendix} For any   fixed-point sequence  
  \begin{align*}
     \mathbf{T}_1 \subcer \cdots \subcer  \mathbf{T}_m 
   \end{align*}
  in     $\Eu{T}(S \combb R)$ with $m \ge 1$ and   for any $i=1,\ldots,m$, there are  iteration  mappings   $\mathbf{s}_i:\{X_1,\ldots,X_s\} \to  \mathbb{N}$ and $\mathbf{r}_i:\{Y_1,\ldots,Y_r\} \to  \mathbb{N}$, such that
  one of the following two cases holds:
  \begin{enumerate}
  \item \label{case:a:appendix} There is a  \ce $S_i(X^i) \in \Phi(S)$ with $X^i \in \set{ X_1,\ldots,X_s}$, and a \ce $R_i \in \Phi(R)$ such that 
  \begin{align*}
    \phi_{\mu}(\mathbf{T}_i)= \ufold{\mu X^i.S_i(X^i)}{\mathbf{s}_i} \combb \ufold{R_i}{\mathbf{r}_i},
  \end{align*}
   and  for $i=1,\ldots,m-1$ and for any  $X \in \set{ X_1,\ldots,X_s}$ and any $Y \in \set{ Y_1,\ldots,Y_r}$,  we have that
 \begin{align}
   \label{iteration:X:eq:appendix}
   \mathbf{s}_{i+1}(X) = 
    \begin{cases}
      \mathbf{s}_{i}(X), & \tif X\neq X^i\\
      \mathbf{s}_{i}(X^i)-1, & \tif X = X^i
    \end{cases} && \tand &&     \mathbf{r}_{i+1}(Y) =  \mathbf{r}_{i}(Y) 
 \end{align}
  \item \label{case:b:appendix} There is a \ce  $S_i \in \Phi(S)$, and  a  \ce $R_i(Y^i) \in \Phi(R)$ with $Y^i \in \set{ Y_1,\ldots,Y_r}$, such that 
  \begin{align*}
    \phi_{\mu}(\mathbf{T}_i)=  \ufold{S_i}{\mathbf{s}_i} \combb \ufold{\mu Y^i.R_i(Y^i)}{\mathbf{r}_i}, 
      \end{align*}
\end{enumerate}
   and  for $i=1,\ldots,m-1$ and for any  $X \in \set{ X_1,\ldots,X_s}$ and any $Y \in \set{ Y_1,\ldots,Y_r}$,  we have that
\begin{align}
  \label{iteration:Y:eq:appendix}
   \mathbf{s}_{i+1}(X) =  \mathbf{s}_{i}(X) 
   && \tand &&  
   \mathbf{r}_{i+1}(Y) = 
    \begin{cases}
      \mathbf{r}_{i}(Y), & \tif Y\neq Y^i\\
      \mathbf{r}_{i}(Y^i)-1, & \tif Y = Y^i
    \end{cases}
  \end{align}
 
\end{enumerate}
\end{lemma} 
\begin{proof} 
  \begin{enumerate}
  \item For Item~\ref{properties:morphism:lemma:item:1:appendix}, we just replaced the  $(\ceSet,\ceSetFree)$-quasi-simulation relation  $\morphyy$ of Lemma~\ref{properties:quasi:simulation:lemma} by its induced mapping $\phi_\mu$.

\item  For  Item~\ref{properties:morphism:lemma:item:2:appendix}, we consider a fixed-point \ce $\mathbf{T}_{i}$ in $\Eu{T}(S\combb R)$ with $i\in\set{1,\ldots,m-1}$,
   and we shall see that  Eq.(\ref{iteration:X:eq:appendix}) and Eq.(\ref{iteration:Y:eq:appendix}) hold for
   any \ce $\mathbf{T}$ in $\Eu{T}(S\combb R)$ such that 
  \begin{align*}
         \mathbf{T}_{i} \subcer  \mathbf{T}.
  \end{align*}

  For this purpose, assume that 
  \begin{align}
    \label{assumption:X}
    \phi_{\mu}(\mathbf{T}_i)  &= \ufold{\mu X^i.S_i(X^i)}{\mathbf{s}_i} \combb \ufold{R_i}{\mathbf{r}_i}  
    \end{align}
    since the case when
    \begin{align}
      \label{assumption:Y}
    \phi_{\mu}(\mathbf{T}_i)  &=  \ufold{S_i}{\mathbf{s}_i}  \combb  \ufold{\mu Y^i.R_i(Y^i)}{\mathbf{r}_i} 
      \end{align}
      can be handled similarly. 
  It follows from Eq.(\ref{unfold:equiv:lemma:eq:2:appendix}) of Lemma~\ref{unfold:equiv:lemma} that there is a \ce $\tilde{S}$  such that   $\ufold{\mu X^i.S_i(X^i)}{\mathbf{s}_i}$  can be written as
  \begin{align}
    \label{iteration:X:eq:appendix:1}
    \ufold{\mu X^i.S_i(X^i)}{\mathbf{s}_i} =  \ufold{\tilde{S}}{\mathbf{s}'_i}  &&\textrm{ where } &&
    \mathbf{s}'_i(X)=\begin{cases}
                           \mathbf{s}_i(X)  & \tif X\neq X^i \\
                           \mathbf{s}_i(X)-1  & \tif X=X^i.
                         \end{cases}
  \end{align}
  On the other hand, notice that there is a fixed-point free \ce $\hat{S}^i(X_1,\ldots,X_k)$ (resp. $\hat{R}^i(Y_1,\ldots,Y_l)$) with $k\ge 1$ (resp. $l\ge 1$),
  and fixed-point \ces $\xi_1,\ldots,\xi_k$ (resp. $\zeta_1,\ldots,\zeta_k$) each one is in $\Phi_{\mu}(S)$ (resp.  $\Phi_{\mu}(R)$), such that
  $\ufold{\tilde{S}}{\mathbf{s}'_i}$ (resp. $\ufold{R_{i+1}}{\mathbf{r}_i}$) can be written as 
  \begin{align*}
    \ufold{\tilde{S}}{\mathbf{s}'_i}  &= \hat{S}^i(\ufold{\xi_1}{\mathbf{s}'_i},\ldots,\ufold{\xi_k}{\mathbf{s}'_i}) \\
        \ufold{R_{i+1}}{\mathbf{r}_i} & = \hat{R}^i(\ufold{\zeta_1}{\mathbf{r}_i},\ldots,\ufold{\zeta_l}{\mathbf{r}_i}).
    \end{align*}

  It follows from the composition Lemma~\ref{composition:unif:lemma} that there is a fixed-point free \ce  $T_i(Z_1,\ldots,Z_m)$ such that
  \begin{align*}
    \phi_{\mu}(\mathbf{T}_i)  &= \ufold{\mu X^i.S_i(X^i)}{\mathbf{s}_i} \combb \ufold{R_i}{\mathbf{r}_i} \\
    &= \hat{S}^i(\ufold{\xi_1}{\mathbf{s}'_i},\ldots,\ufold{\xi_k}{\mathbf{s}'_i}) \combb \hat{R}^i(\ufold{\zeta_1}{\mathbf{r}_i},\ldots,\ufold{\zeta_l}{\mathbf{r}_i})  \\
    &=T(\alpha_1,\ldots,\alpha_m)     
  \end{align*}
  such that  for any $v=1,\ldots,m$, one the following two cases holds.
  \begin{enumerate}
  \item
There is $w \in \set{1,\ldots,k}$, and a \ce $\tilde{R}^v(Y^1,\ldots,Y^{l'})$ that is a sub-\ce of  $\hat{R}^i(Y_1,\ldots,Y_l)$ with $l'\le l$,
     and a set $\set{\zeta^1,\ldots,\zeta^{l'}} \subseteq \set{\zeta_1,\ldots,\zeta_l}$ such that
    \begin{align*}
      \alpha_v= \ufold{\xi_w}{\mathbf{s}'_i}  \combb \tilde{R}^v(\ufold{\zeta^1}{\mathbf{r}_i},\ldots,\ufold{\zeta^{l'}}{\mathbf{r}_i}) &&\tor&&    \alpha_v= \ufold{\xi_w}{\mathbf{s}'_i}.
    \end{align*}
    But since $\tilde{R}^v(\ufold{\zeta^1}{\mathbf{r}_i},\ldots,\ufold{\zeta^{l'}}{\mathbf{r}_i})=\ufold{\tilde{R}^v(\zeta^1,\ldots,\zeta^{l'})}{\mathbf{r}_i}$,
    $\mathbf{T}_i \subcer \phi_{\mu}^{-1}(\alpha_v)$, and 
    the iteration mappings $s_i'$ and $s_i$ satisfy Eq.(\ref{iteration:X:eq:appendix:1}), then we get Eq.(\ref{iteration:X:eq:appendix}).

  \item
        There is $w \in \set{1,\ldots,k}$, and a \ce $\tilde{S}^v(X^1,\ldots,X^{k'})$ that is a sub-\ce of  $\hat{S}^i(X_1,\ldots,X_k)$ with $k'\le k$,
        and a set $\set{\xi^1,\ldots,\xi^{k'}} \subseteq \set{\xi_1,\ldots,\xi_k}$ such that
    \begin{align*}
      \alpha_v=   \tilde{S}^v(\ufold{\xi^1}{\mathbf{s}'_i},\ldots,\ufold{\xi^{k'}}{\mathbf{s}'_i}) \combb \ufold{\zeta_w}{\mathbf{r}_i} &&\tor&&    \alpha_v= \ufold{\zeta_w}{\mathbf{r}_i}.
    \end{align*}
    But since $\tilde{R}^v(\ufold{\zeta^1}{\mathbf{r}_i},\ldots,\ufold{\zeta^{l'}}{\mathbf{r}_i})=\ufold{\tilde{R}^v(\zeta^1,\ldots,\zeta^{l'})}{\mathbf{r}_i}$ and
    the we have $\mathbf{T}_i \subcer \phi_{\mu}^{-1}(\alpha_v)$, and 
    the iteration mappings $s_i'$ and $s_i$ satisfy Eq.(\ref{iteration:X:eq:appendix:1}), then we get Eq.(\ref{iteration:X:eq:appendix}).
  \end{enumerate}
  In summary, we  assumed  that $\phi_{\mu}(\mathbf{T}_i)$  satisfies   Eq.(\ref{assumption:X}) and we get  Eq.(\ref{iteration:X:eq:appendix}).
  However  if we  assume   that $\phi_{\mu}(\mathbf{T}_i)$  satisfies   Eq.(\ref{assumption:Y}) then  we get  Eq.(\ref{iteration:Y:eq:appendix}) by similar arguments.
 \end{enumerate}
 \end{proof} 


\subsection{Proofs for Section~\ref{equiv:unif:with:unif:unfolding:sec}}

Before proving  Lemma~\ref{two:bound:of:Omega}, we want to get a certain  fixed-point \ce from each  $\mathbf{T}_i$ of the sequence $\Eu{S}$.
More precisely, notice that for any $i\in \set{1,\ldots,m}$, one of the following situations holds.
\begin{enumerate}[i.)]
\item If $\mathbf{T}_i=\NF\tuple{S_i,R_i,\Eu{M}_i}$, then either $S_i$ is a fixed-point \ce regardless of $R_i$ that could be a fixed-point \ce as well,
  or   $R_i$ is a fixed-point \ce and $S_i$ is not.
  In the first case we  want to get $S_i$, and in the second we want to get $R_i$.
\item   Otherwise, if  $\mathbf{T}_i$ is a fixed-point sub-\ce of $S$ or $R$, then we want to get $\mathbf{T}_i$.
\end{enumerate}
The formal definition follows.
\begin{definition}
For any $i\in \set{1,\ldots,m}$, we define
  \begin{align*}
    \nbrr{(\mathbf{T}_i)} = \begin{cases}
      S_i & \tif \mathbf{T}_i =\NF\tuple{S_i,R_i,\Eu{M}_i} \tand S_i \in \Phi_{\mu}(S) \\
      R_i & \tif \mathbf{T}_i =\NF\tuple{S_i,R_i,\Eu{M}_i} \tand R_i \in \Phi_{\mu}(R) \\
      \mathbf{T}_i & \tif \mathbf{T}_i \in   \widetilde{\Phi}_{\mu}(S) \cup \widetilde{\Phi}_{\mu}(R). 
    \end{cases}
  \end{align*}
\end{definition}

We need the following simple Fact.
\begin{fact}
  \label{set:max:fact}
  For any finite sets $A,A',B,B' \subset \N$,
  \begin{enumerate}[(i)]
  \item if  $\max(A) \le \max(A')$ and if $\max(B) \le \max(B')$ then $\max(A \cup B) \le \max( A' \cup B')$.
  \item \label{set:max:fact:item:2} Therefore, to show that $\max(A \cup B) \le \max( A' \cup B')$, it suffices to show that    $\max(A) \le \max(A')$ and   $\max(B) \le \max(B')$.
  \end{enumerate}
\end{fact}

\setcounter{theorem}{\value{counter:two:bound:of:Omega}}

\begin{lemma}
  \label{two:bound:of:Omega:appendix}
For any left-maximal sequence 
  \begin{align*}
   \mathbf{T}_1  \subcer \cdots \subcer  \mathbf{T}_m  
  \end{align*}
 in $\Eu{T}$ with $m\ge 2$, and  for any $p$ and $q$ where $1 \le p <q \le m$,
\begin{enumerate}
\item If for $i=1,\ldots,q$, there are \ces $S_i \in \widetilde{\Phi}(S)$ and $R_i \in \widetilde{\Phi}(R)$, and iteration mappings \\$\mathbf{s}_i:\set{X_1,\ldots,X_s} \to \N$ and $\mathbf{r}_i:\set{Y_1,\ldots,Y_s} \to \N$ such that 
  \begin{align*}
  \phi_{\mu}(\mathbf{T}_i)& = \ufold{S_i}{\mathbf{s}_i} \combb  \ufold{R_i}{\mathbf{r}_i}
\end{align*}
then
\begin{align}
       \omega(\mathbf{T}_q) &\in \set{D^{\star}\big((\mathbf{s}_1,\mathbf{r}_1),(\mathbf{s}_q,\mathbf{r}_q)\big), D^{\star}\big((\mathbf{s}_1,\mathbf{r}_1),(\mathbf{s}_q,\mathbf{r}_q)\big)-1}.       \label{two:bound:of:Omega:eq:1:appendix}
  \end{align}

\item If  there is a \ce $\xi_m\in \widetilde{\Phi}(S) \cup \widetilde{\Phi}(R)$ and an iteration mapping $\mathbf{s}_m$ such that
\begin{align*}
  \phi_{\mu}(\mathbf{T}_m)& = \ufold{\xi_m}{\mathbf{s}_m} 
\end{align*}
then
\begin{align}
    \label{two:bound:of:Omega:eq:3:appendix}
    \min\set{\mathbf{s}_m(X) \gvert X \in \dom(\mathbf{s}_m)} \ge \D(\mathbf{T}_m) .
  \end{align}
\end{enumerate}
\end{lemma}    
\begin{proof}  
\begin{enumerate}
\item Since $\omega(\mathbf{T}_q)=n-\Omega^{\#}(\mathbf{T}_1,\mathbf{T}_q)$, showing Eq.(\ref{two:bound:of:Omega:eq:1:appendix}) amounts to show
\begin{align}
  \label{two:bound:of:Omega:eq:1-1}
   n-\Omega^{\#}(\mathbf{T}_1,\mathbf{T}_q) &\in \set{D^{\star}\big((\mathbf{s}_1,\mathbf{r}_1),(\mathbf{s}_q,\mathbf{r}_q)\big),  D^{\star}\big((\mathbf{s}_1,\mathbf{r}_1),(\mathbf{s}_q,\mathbf{r}_q)\big)-1}.
\end{align}  
Since   $\mathbf{s}_1 \ge \mathbf{s}_q$ and  $\mathbf{r}_1 \ge \mathbf{r}_q$,  then  for  $v=1,\ldots,s$  there exist positive numbers  $\alpha_v^q$ where $n-\alpha_v^q \ge 0$,
and  for $w=1,\ldots,r$,  there exist positive numbers $\beta_w^q$ where $n -\beta_w^q \ge 0$,
such that the iteration mappings  $\mathbf{s}_q, \mathbf{r}_q$ can be  written as  
\begin{align*}
  \begin{cases}
    \mathbf{s}_q(X_v) &= n- \alpha_v^q \\
  \mathbf{r}_q(Y_w) &= n-  \beta_w^q.
  \end{cases}
  \end{align*}
On the one hand,  from the Definition~\ref{distance:iteration:mapping:def} of $d^\star$ and $D^\star$, we get
\begin{align*}
  d^{\star}(\mathbf{s}_1,\mathbf{s}_q) &=
   \begin{cases} \min\set{\mathbf{s}_q(X_v) \gvert \mathbf{s}_q(X_v) \neq \mathbf{s}_1(X_v) \textrm{ for } v=1,\ldots,s}  &\tif \mathbf{s}_1 > \mathbf{s}_q \\
    \infty & \tif \mathbf{s}_1=\mathbf{s}_q
   \end{cases}\\
   &=\begin{cases} \min\set{n-\alpha_v^q \gvert n-\alpha_v^q \neq  n  \textrm{ for } v=1,\ldots,s}  &\tif \mathbf{s}_1 > \mathbf{s}_q \\
    \infty & \tif \mathbf{s}_1=\mathbf{s}_q
   \end{cases} \\
      &=\begin{cases} n-\max\set{\alpha_v^q  \gvert   \textrm{ for } v=1,\ldots,s}  &\tif \mathbf{s}_1 > \mathbf{s}_q \\
    \infty & \tif \mathbf{s}_1=\mathbf{s}_q.
    \end{cases}
\end{align*}
Similarly
\begin{align*}
  d^{\star}(\mathbf{r}_1,\mathbf{r}_q) &=
  \begin{cases} n-\max\set{\beta_w^q \gvert \textrm{ for } w=1,\ldots,r}  &\tif \mathbf{r}_1 > \mathbf{r}_q \\
    \infty & \tif \mathbf{r}_1=\mathbf{r}_q.
    \end{cases}
\end{align*}

Let 
\begin{align*}
  \mathbf{m}_{S} &= \max\set{\alpha_v^q \gvert   \textrm{ for } v=1,\ldots,s}\\
  \mathbf{m}_{R}&=  \max\set{\beta_w^q \gvert    \textrm{ for } w=1,\ldots,r}. 
\end{align*}
Hence
\begin{align*}
  D^{\star}\big((\mathbf{s}_1,\mathbf{s}_q), d^{\star}(\mathbf{r}_1,\mathbf{r}_q)\big)
         &= \min( d^{\star}(\mathbf{s}_1,\mathbf{s}_q), d^{\star}(\mathbf{r}_1,\mathbf{r}_q) ) \\
          & = \begin{cases}
               \min(n-\mathbf{m}_S,n-\mathbf{m}_R)   &\tif  \mathbf{s}_q>\mathbf{s}_1 \tand \mathbf{r}_q>\mathbf{r}_1\\
               n-\mathbf{m}_S &     \tif  \mathbf{s}_q>\mathbf{s}_1 \tand \mathbf{r}_q=\mathbf{r}_1   \\
               n-\mathbf{m}_R &     \tif  \mathbf{s}_q=\mathbf{s}_1 \tand \mathbf{r}_q>\mathbf{r}_1   
  \end{cases}\\
            & = \begin{cases}
               n-\max(\mathbf{m}_S,\mathbf{m}_R)   &\tif  \mathbf{s}_q>\mathbf{s}_1 \tand \mathbf{r}_q>\mathbf{r}_1\\
               n-\mathbf{m}_S &     \tif  \mathbf{s}_q>\mathbf{s}_1 \tand \mathbf{r}_q=\mathbf{r}_1   \\
               n-\mathbf{m}_R &     \tif  \mathbf{s}_q=\mathbf{s}_1 \tand \mathbf{r}_q>\mathbf{r}_1   
             \end{cases}\\
\end{align*}

On the other hand,  since $\mathbf{T}_i=\NF(\tuple{S_i,R_i,\Eu{M}_i})$, for $i=1,\ldots, q$, then  consider the sequence of tuples
\begin{align*}
  \Eu{S}_q=\tuple{S_1,R_1,\Eu{M}_1},\ldots,\tuple{S_q,R_q,\Eu{M}_q}
\end{align*}
and recall  the  definition  of $\Omega^{\#}(\mathbf{T}_1,\mathbf{T}_q)$ from Eq.(\ref{definition:omega:Omega:def:Omega:1})  of Definition~\ref{definition:omega:Omega:def}:
\begin{align*}
\Omega^{\#}(\mathbf{T}_1,\mathbf{T}_q)=\max\set{\nbr_{\Eu{S}_q}(S_i), \nbr_{\Eu{S}_q}(R_i) \gvert S_i \in \widetilde{\Phi}_{\mu}(S), R_i \in \widetilde{\Phi}_{\mu}(R),  i=1,\ldots,q}.
\end{align*}

We distinguish three cases depending on the iteration mappings $\mathbf{s}_1, \mathbf{s}_q, \mathbf{r}_1,\mathbf{r}_q$.
\begin{itemize}
\item If $\mathbf{s}_q=\mathbf{s}_1$ then $d^{\star}(\mathbf{s}_1,\mathbf{s}_q)=\infty$, and $\mathbf{r}_q>\mathbf{r}_1$ and hence $d^{\star}(\mathbf{r}_1,\mathbf{r}_q) =n-\mathbf{m}_R$.
  In this case \\ $\max\set{\nbr_{\Eu{S}_q}(S_i)\gvert  S_i \in \widetilde{\Phi}_{\mu}(S),  i=1,\ldots,q} \in \set{0,1}$ and $\max\set{\nbr_{\Eu{S}_q}(R_i)\gvert  R_i \in \widetilde{\Phi}_{\mu}(R),  i=1,\ldots,q} \ge 1$, and hence $\Omega^{\#}(\mathbf{T}_1,\mathbf{T}_q)=\max\set{\nbr_{\Eu{S}_q}(R_i)\gvert  R_i \in \widetilde{\Phi}_{\mu}(R),  i=1,\ldots,q}$. Therefore, in this case showing Eq.(\ref{two:bound:of:Omega:eq:1-1})
  amounts  to show that
  \begin{align*}
      n-\max\set{\nbr_{\Eu{S}_q}(R_i) \gvert  R_i \in \widetilde{\Phi}_{\mu}(R),  i=1,\ldots,q} &\in \set{ n-\mathbf{m}_{R}, n-\mathbf{m}_{R}-1}, i.e. \\
      \max\set{\nbr_{\Eu{S}_q}(R_i) \gvert  R_i \in \widetilde{\Phi}_{\mu}(R),  i=1,\ldots,q} &\in \set{  \mathbf{m}_{R},  \mathbf{m}_{R}+1}.
  \end{align*}

\item If $\mathbf{r}_q=\mathbf{r}_1$ then $d^{\star}(\mathbf{r}_1,\mathbf{r}_q)=\infty$, and $\mathbf{s}_q>\mathbf{s}_1$ and hence $d^{\star}(\mathbf{s}_1,\mathbf{s}_q) = n-\mathbf{m}_S$.
  With similar reasoning, in this case we need to show that
  \begin{align*} \max\set{\nbr_{\Eu{S}_q}(S_i)\gvert  S_i \in \widetilde{\Phi}_{\mu}(S),  i=1,\ldots,q} \in \set{\mathbf{m}_{S}, \mathbf{m}_{S}+1}. \end{align*}

\item If $\mathbf{s}_q>\mathbf{s}_1$ and  $\mathbf{r}_q>\mathbf{r}_1$ then $d^{\star}(\mathbf{s}_1,\mathbf{s}_q) =n-\mathbf{m}_S$ and $d^{\star}(\mathbf{r}_1,\mathbf{r}_q) =n-\mathbf{m}_R$.
  In this case showing Eq.(\ref{two:bound:of:Omega:eq:1-1}) amounts to show
  \begin{align*}
   n-  \max\big\{\nbr_{\Eu{S}_q}(S_i), \nbr_{\Eu{S}_q}(R_i) \gvert S_i \in \widetilde{\Phi}_{\mu}(S),& R_i \in \widetilde{\Phi}_{\mu}(R),  i=1,\ldots,q\big\}  \in \\ & \set{n-\max(\mathbf{m}_S,\mathbf{m}_R),n-\max(\mathbf{m}_S,\mathbf{m}_R)-1}.
  \end{align*}
  That is,
    \begin{align*}
   \max\big\{\nbr_{\Eu{S}_q}(S_i), \nbr_{\Eu{S}_q}(R_i) \gvert S_i \in \widetilde{\Phi}_{\mu}(S),& R_i \in \widetilde{\Phi}_{\mu}(R),  i\in[1,q]\big\}  \in  \set{\max(\mathbf{m}_S,\mathbf{m}_R),\max(\mathbf{m}_S,\mathbf{m}_R)+1}.
  \end{align*}
  It follows from Item (\ref{set:max:fact:item:2}) of Fact~\ref{set:max:fact} that to show  Eq.(\ref{two:bound:of:Omega:eq:1-1}) it suffices to show that
  \begin{align*}
    \begin{cases}
     \max\set{\nbr_{\Eu{S}_q}(S_i)\gvert S_i \in \widetilde{\Phi}_{\mu}(S), i=1,\ldots,q} &\in  \set{\mathbf{m}_{S},\mathbf{m}_{S}+1} \\
     \max\set{\nbr_{\Eu{S}_q}(R_j)\gvert S_i \in \widetilde{\Phi}_{\mu}(R), i=1,\ldots,q} &\in  \set{\mathbf{m}_{R},\mathbf{m}_{R}+1}.
    \end{cases}
    \end{align*}

\end{itemize}

Summing up these three cases, to show  Eq.(\ref{two:bound:of:Omega:eq:1-1}) it suffices to assume that   $\mathbf{s}_q>\mathbf{s}_1$ and to show 
\begin{align}
 \label{two:bound:of:Omega:eq:final}
  \max\set{\nbr_{\Eu{S}_q}(S_i)\gvert S_i \in \widetilde{\Phi}_{\mu}(S), i=1,\ldots,q} &\in  \set{\mathbf{m}_{S},\mathbf{m}_{S}+1} 
\end{align}


Let $\xi \in \widetilde{\Phi}_{\mu}(S) \cap \set{S_1,\ldots,S_q}$ be a fixed-point  \ce. Indeed,  $\xi$    appears  $\nbr_{\Eu{S}_q}(\xi)$ times in $\Eu{S}_{q}$ and let $\tilde{q}$ be the greatest $i \in \set{1,\ldots,q}$
such that $\xi =S_i$. Since $\xi$ is by definition  a fixed-point \ce, then it can be written as $\xi = \mu X_{\tilde{v}}.\tilde{S}(X_{\tilde{v}})$, for some $\tilde{v} \in \set{1,\ldots,s}$ and for some \ce
$\tilde{S}(X_{\tilde{v}}) \in \widetilde{\Phi}(S)$.
To show Eq.(\ref{two:bound:of:Omega:eq:final}), is suffices to show that either
\begin{enumerate}[(i)]
\item \label{prof:link:Omega:d:item:1} $\tilde{q}=q$ and in this case $\nbr_{\Eu{S}_q}(\xi)=\alpha^q_{\tilde{v}}+1$, or
\item  \label{prof:link:Omega:d:item:2} $\tilde{q}\neq q$ and in this case  $\nbr_{\Eu{S}_q}(\xi)=\alpha^q_{\tilde{v}}$.
\end{enumerate}
The proof is by induction on $q$.
For the base case  $q=1$, we claim that $S_1$ is a fixed-point \ce  because otherwise $\mathbf{s}_1=\mathbf{s}_{2}$ which contradicts the assumption  $\mathbf{s}_{2} > \mathbf{s}_1$.
Hence let $S_1=\xi$. Recall  that $\mathbf{s}_1(X_v)=n$  for $v=1,\ldots,s$. In this case it  follows from  Eq.(\ref{iteration:X:eq:appendix})  of Item~\ref{properties:morphism:lemma:item:2:appendix} of Lemma~\ref{properties:morphism:lemma}  that
$\mathbf{s}_{2}(X_{\tilde{v}})=\mathbf{s}_{1}(X_{\tilde{v}})-1=n-1$ and that  $\mathbf{s}_{2}(X_{v})=\mathbf{s}_{1}(X_{v})$ for any $v\in \set{1,\ldots,s} \setminus \set{\tilde{v}}$.
That is, $\alpha^{2}_{\tilde{v}} =1$, and $\alpha^{2}_{v} = 0$ for any $v \neq \tilde{v}$.
\begin{enumerate}[(i)]
\item \label{prof:link:Omega:d:item:1'} If $S_{2}=\xi$, i.e. $\tilde{q}=q=2$,  then $\nbr_{\Eu{S}_q}(\xi)=2=\alpha_{\tilde{v}}^2 +1$.
\item  \label{prof:link:Omega:d:item:2'} If $S_{2}\neq\xi$, i.e.  $\tilde{q}=1 \neq q=2$, then  in this case  $\nbr_{\Eu{S}_q}(\xi)=1=\alpha^2_{\tilde{v}}$.
\end{enumerate}
For the induction step assume that the claim holds for  $q$ and let us prove it for $q+1$. 

\begin{enumerate}[(1)]
\item \label{prof:link:Omega:d:item:1''} If $S_{q}=\xi$,   then by the induction hypothesis $\nbr_{\Eu{S}_q}(\xi)=\alpha_{\tilde{v}}^q +1$. Besides, from Eq.(\ref{iteration:X:eq:appendix})  of Item~\ref{properties:morphism:lemma:item:2:appendix} of Lemma~\ref{properties:morphism:lemma} we have that $\alpha_{\tilde{v}}^{q+1}=\alpha_{\tilde{v}}^q +1$.  
\begin{enumerate}[(i)]
\item If $S_{q+1}=\xi$, i.e. $\tilde{q}=q+1$, then in this case we have $\nbr_{\Eu{S}_{q+1}}(\xi)=\nbr_{\Eu{S}_{q}}(\xi)+1=(\alpha_{\tilde{v}}^q +1)+1=\alpha_{\tilde{v}}^{q+1} +1$.     
\item If $S_{q+1}\neq \xi$, i.e. $\tilde{q}=q$, then in this case we have $\nbr_{\Eu{S}_{q+1}}(\xi)=\nbr_{\Eu{S}_{q}}(\xi)=\alpha_{\tilde{v}}^q +1=\alpha_{\tilde{v}}^{q+1}$.     
\end{enumerate}
\item  \label{prof:link:Omega:d:item:2''} If $S_{q}\neq \xi$,   then by the induction hypothesis $\nbr_{\Eu{S}_q}(\xi)=\alpha_{\tilde{v}}^q$. Besides, from Eq.(\ref{iteration:X:eq:appendix})  of Item~\ref{properties:morphism:lemma:item:2:appendix} of Lemma~\ref{properties:morphism:lemma} we have that $\alpha_{\tilde{v}}^{q+1}=\alpha_{\tilde{v}}^q$.  
\begin{enumerate}[(i)]
\item If $S_{q+1}=\xi$, i.e. $\tilde{q}=q+1$, then in this case we have $\nbr_{\Eu{S}_{q+1}}(\xi)=\nbr_{\Eu{S}_{q}}(\xi)+1=\alpha_{\tilde{v}}^q +1=\alpha_{\tilde{v}}^{q+1} +1$.     
\item If $S_{q+1}\neq \xi$, i.e. $\tilde{q}\neq q$ and $\tilde{q}\neq q+1$, then in this case we have $\nbr_{\Eu{S}_{q+1}}(\xi)=\nbr_{\Eu{S}_{q}}(\xi)=\alpha_{\tilde{v}}^q =\alpha_{\tilde{v}}^{q+1}$.     
\end{enumerate}
\end{enumerate}

\item  To show Eq.(\ref{two:bound:of:Omega:eq:3:appendix}), assume that $\xi_m \in \widetilde{\Phi}(S)$, the case where $\xi_m \in \widetilde{\Phi}(R)$ is similar. Let
  \begin{align*} \mathbf{m}=\min\set{\mathbf{s}_m(X) \gvert X \in \set{X_1,\ldots,X_s}}. \end{align*}
  Recall from Eq.(\ref{definition:omega:Omega:def:DS:2}) of  Definition~\ref{definition:omega:Omega:def}  of $\D$ that 
  \begin{align*}
         \D(\mathbf{T}_m)   &= \begin{cases}
           n & \tif m=1  \\
           \min\big\{\D(\mathbf{T}_{m-1}), d^{\star}(\mathbf{s}_1,\mathbf{s}_m)\big\} & \tif m>1.
         \end{cases}
  \end{align*}
  We want to show that $\mathbf{m} \ge \D(\mathbf{T}_{m-1})$.
  If $m=1$ then $\mathbf{m}=n=\D(\mathbf{T}_m)$, hence   the claim trivially holds.
  If $m>1$ then $\D(\mathbf{T}_m)= \min\big\{\D(\mathbf{T}_{m-1}), d^{\star}(\mathbf{s}_1,\mathbf{s}_m)\big\}$ and in this case we distinguish two cases depending on $\mathbf{s}_m$.
  If $\mathbf{s}_m=\mathbf{s}_1$ then $\mathbf{m}=n$,  $ \D(\mathbf{T}_{m-1})<n$ and  $d^{\star}(\mathbf{s}_1,\mathbf{s}_m)=\infty$, 
  thus  $ \D(\mathbf{T}_m)  = \D(\mathbf{T}_{m-1}) < n =\mathbf{m}$. Therefore,  $\mathbf{m} \ge \D(\mathbf{T}_{m})$.
  If $\mathbf{s}_m < \mathbf{s}_1$ then  in this case $\mathbf{m}= d^{\star}(\mathbf{s}_1,\mathbf{s}_m)$, which is obviously greater or equal to  $\min\big\{\D(\mathbf{T}_{m-1}), d^{\star}(\mathbf{s}_1,\mathbf{s}_m)\big\}$.

\end{enumerate}
\end{proof}  


\setcounter{theorem}{\value{counter:distance:fixed-point-in-tree}}

\begin{lemma} 
  \label{distance:fixed-point-in-tree:appendix}
Let 
  \begin{align*}
    \mu \bb{Z}_1.\bb{T}_1(\bb Z_1)  \subcer   \T
  \end{align*}
be a  sequence in $\partial \Eu{T}$.
 Define $\bb T^{\star}_{1}(Z)$ to  be the (unique) \ce satisfying
\begin{align*}
  \bb T^{\star}_1(\T)= \bb T_{1}(\bb Z_1).
\end{align*}
We have that
\begin{align}
  \label{distance:fixed-point-in-tree:eq:appendix}
   1  \le \Pi_{Z}(\bb T^{\star}_1(Z)).   
\end{align}
\end{lemma}
\begin{proof}
  The idea of the proof is to show that either   \emph{(i)} both $\mu \bb{Z}_1.\bb{T}_1(\bb Z_1)$ and  $\T$  result from the unification of the same \ce with another \ce,
  i.e. $\nbrr{(\mu \bb{Z}_1.\bb{T}_1(\bb Z_1))}=\nbrr{(\T)}$ and in this case we know from Lemma~\ref{position:cross:in:generated:ces:lemma} that there is at least one jump between the root of  $\mu \bb{Z}_1.\bb{T}_1(\bb Z_1)$ and $\T$.
  Or,
  \emph{(ii)} $\mu \bb{Z}_1.\bb{T}_1(\bb Z_1)$ and  $\T$  result from the unification of different \ces, i.e.  $\nbrr{(\mu \bb{Z}_1.\bb{T}_1(\bb Z_1))}=\nbrr{(\T)}$. In this case
  there must be  a \ce ${{T}}$  that lies in $\Eu{T}$ between $\mu \bb{Z}_1.\bb{T}_1(\bb Z_1)$ and  $\T$ such that    $\nbrr{({{T}})}=\nbrr{(\T)}$.
  Hence there is at least one jump between the root of ${{T}}$ and $\T$, and therefore  there is at least one jump between the root of  $\mu \bb{Z}_1.\bb{T}_1(\bb Z_1)$ and $\T$.
   We need the two following claims, where Claim~\ref{claim:1} is used to prove Claim~\ref{claim:2} which will be used to prove  this Lemma.
  \begin{claim}
    \label{claim:1} Consider a sequence  $\Eu{S}_m$: $T_1 \subcer \ldots \subcer T_{m}$ in $\Eu{T}$  where $T_1$ is the root of $\Eu{T}$, with $m \ge 1$. 
    For any $q=1,\ldots,m$, if there are two \ces $M,M'$ such that $\Omega^{\#}(T_1,T_q)=\nbr_{\Eu{S}_q}{(M)}=\nbr_{\Eu{S}_q}{(M')}$ then  $T_q$ is not in $\partial \Eu{T}$.
  \end{claim}
  \begin{proof}
    Assume that there are only two \ces $M$ and $M'$ such that $\Omega^{\#}_{\Eu{s}}{(T_1,T_q)}=\nbr_{\Eu{S}_q}{(M)}=\nbr_{\Eu{S}_q}{(M')}$. The case where there are more than two can be handled similarly.
    Indeed, there is a \ce $T_p$ in $\Eu{S}_q$ on which the number of occurences   of $M$ (resp. or $M'$) has reached the maximum while that of $M'$ (resp. $M$) did  not.
    More precisely, there is $p<q$ such that either 
	$\Omega^{\#}(T_1,T_q)=\Omega^{\#}(T_1,T_p) = \nbr_{\Eu{S}_q}{(M)}=\nbr_{\Eu{S}_q}{(M')}+1$ or $\Omega^{\#}(T_1,T_q)=\Omega^{\#}(T_1,T_p) = \nbr_{\Eu{S}_q}{(M')}=\nbr_{\Eu{S}_q}{(M)}+1$.
	Assume that
    the first case holds since the second case can be handled similarly.
    Recall that $\omega(T_q)=\omega(T_p)$ since $\Omega^{\#}(T_1,T_q)=\Omega^{\#}(T_1,T_p)$.
    Towards a contradiction: assume that  $T_q$ is in  $\partial \Eu{T}$.
    If $T_p \in \partial\Eu{T}$  then  by Item (\ref{link:tree:derivative:rq:item:3}) of Remark~\ref{link:tree:derivative:rq}  we have  $\omega(T_q)=\omega(T_p)+1$, which is a contradiction.
    If $T_p$ is not in  $\partial\Eu{T}$ then by Item (\ref{link:tree:derivative:rq:item:2}) of Remark~\ref{link:tree:derivative:rq}  we have  $\omega(T_q)=\omega(T_p)+1$, which is a contradiction.
    This ends the proof of Claim~\ref{claim:1}.
  \end{proof}
  \begin{claim} \label{claim:2}
    Let $\TOne$ and  $\T$  be two \ces in $\partial\Eu{T}$ where $\TOne  \subcer \T$. If $\nbrr{(\TOne)}  \neq \nbrr{(\T)}$ then the sequence in $\Eu{T}$ that lies between $\TOne$ and $\T$ is not empty,
    and   there exists  a \ce ${{T}}$ in this sequence  such that  $\nbrr{({{T}})}=\nbrr{(\T)}$.
  \end{claim}
  \begin{proof}
    Assume that $\nbrr{(\TOne)}=M_1 $ and $\nbrr{(\T)}=M_2$. Let $\Eu{S}_1$ (resp. $\Eu{S}_2$) be the sequence in $\Eu{T}$ from the root of $\Eu{T}$ to $\TOne$ (resp. to $\T$).
    By  Item (\ref{link:tree:derivative:rq:item:3}) of Remark~\ref{link:tree:derivative:rq}
    we have $\nbr_{\Eu{S}_2}{(M_2)}=\nbr_{\Eu{S}_1}{(M_1)}+1$.   However, either there is at least one \ce, say ${{T}}$, in $\Eu{T}$  in the sequence between $\TOne$ and $\T$ such that $\nbrr{({{T}})}=\nbrr{(\T)}$, or
    $\nbr_{\Eu{S}_1}{(M_2)}=\nbr_{\Eu{S}_1}{(M_1)}$. But this second possibility is not possible, since otherwise, by  Claim~\ref{claim:1}  we would have had that $\TOne$  is not in $\partial{(\Eu{T})}$ which contradicts
    the assumption of the current claim.     This ends the proof of Claim~\ref{claim:2}.  
  \end{proof}

  To prove  Lemma~\ref{distance:fixed-point-in-tree}  we distinguish two cases depending whether the sequence  in $\Eu{T}$ that lies between $\mu \bb{Z}_1.\bb{T}_1(\bb Z_1)$ and $\T$ is empty or not.
  Let   $\Eu{S}'$ be such sequence.
  If  $\Eu{S}'$  is empty, then if follows from Claim~\ref{claim:2} that  $\nbrr{(\mu \bb{Z}_1.\bb{T}_1(\bb Z_1))}=\nbrr{(\T)}$.
  This means that during the unification process, the same \ce which is a sub-\ce of $S$ or $R$  appeared twice, which implies that  there is a position jump between $T^{\star}_1(Z)$ and $Z$.
  Or more formally, it  follows from Lemma~\ref{position:cross:in:generated:ces:lemma} that  $1 \le \Pi_{Z}(\bb T^{\star}_1(Z))$.
  Otherwise,  if the  sequence  $\Eu{S}'$ is not empty then from Claim ~\ref{claim:2} it follows that there
  is a \ce $\zeta$ in $\Eu{S}'$ such that $\nbrr{(\zeta)}=\nbrr{(\mf{\bb T}_2)}$. Since $\mf{\bb T}_2$ is a sub-\ce of $\zeta$, then there is a unique \ce $\zeta^{\star}(Z)$ such that $\zeta^{\star}(\mf{\bb T}_2)=\zeta$.
  Thus by using the same Lemma~\ref{position:cross:in:generated:ces:lemma} we deduce that $1 \le \Pi_{Z}(\zeta^{\star})$. But since $\zeta$ is a sub-\ce of $\bb T^{\star}_1(\bb T_2)$, then $\zeta^\star(Z)$ is a sub-\ce of $\bb T^{\star}_1(Z)$ and hence   $1 \le \Pi_{Z}(\bb T^{\star}_1(Z))$ as well.
  This ends the proof of Lemma~\ref{distance:fixed-point-in-tree}.
\end{proof}


\end{document}